\providecommand{\tabularnewline}{\\}
\numberwithin{equation}{section}
\numberwithin{figure}{section}
\theoremstyle{plain}
\newtheorem{thm}{\protect\theoremname}
  \theoremstyle{definition}
  \newtheorem{defn}[thm]{\protect\definitionname}
  \theoremstyle{plain}
  \newtheorem{lem}[thm]{\protect\lemmaname}
  \theoremstyle{definition}
  \newtheorem{example}[thm]{\protect\examplename}
  \theoremstyle{plain}
  \newtheorem{cor}[thm]{\protect\corollaryname}
  \theoremstyle{remark}
  \newtheorem*{rem*}{\protect\remarkname}
  \theoremstyle{plain}
  \newtheorem{fact}[thm]{\protect\factname}
  \theoremstyle{remark}
  \newtheorem*{acknowledgement*}{\protect\acknowledgementname}
  \providecommand{\acknowledgementname}{Acknowledgement}
  \providecommand{\corollaryname}{Corollary}
  \providecommand{\definitionname}{Definition}
  \providecommand{\examplename}{Example}
  \providecommand{\factname}{Fact}
  \providecommand{\lemmaname}{Lemma}
  \providecommand{\remarkname}{Remark}
\providecommand{\theoremname}{Theorem}
\begin{document}
\global\long\def\defeq{\stackrel{\mathrm{{\scriptscriptstyle def}}}{=}}
\global\long\def\norm#1{\left\Vert #1\right\Vert }
\global\long\def\R{\mathbb{R}}
 \global\long\def\Rn{\mathbb{R}^{n}}
\global\long\def\tr{\mathrm{Tr}}
\global\long\def\diag{\mathrm{diag}}
\global\long\def\Diag{\mathrm{Diag}}
\global\long\def\C{\mathbb{C}}
 \global\long\def\E{\mathbb{E}}
\global\long\def\vol{\mathrm{vol}}
\global\long\def\argmax{\mathrm{argmax}}

\global\long\def\ham{\mathrm{Ham}}
\global\long\def\e#1{ \exp\left(#1\right)}
\global\long\def\Var{\mathrm{Var}}
\global\long\def\dint{{\displaystyle \int}}
\global\long\def\step{\delta}
\global\long\def\Ric{\mathrm{Ric}}
\global\long\def\P{\mathbb{P}}

\title{Convergence Rate of Riemannian Hamiltonian Monte Carlo and Faster
Polytope Volume Computation}

\author{Yin Tat Lee\thanks{University of Washington and Microsoft Research, yintat@uw.edu},
Santosh S. Vempala\thanks{Georgia Tech, vempala@gatech.edu}}

\maketitle
\begin{abstract}
We give the first rigorous proof of the convergence of Riemannian
Hamiltonian Monte Carlo, a general (and practical) method for sampling
Gibbs distributions. Our analysis shows that the rate of convergence
is bounded in terms of natural smoothness parameters of an associated
Riemannian manifold. We then apply the method with the manifold defined
by the log barrier function to the problems of (1) uniformly sampling
a polytope and (2) computing its volume, the latter by extending Gaussian
cooling to the manifold setting. In both cases, the total number of
steps needed is $O^{*}(mn^{\frac{2}{3}})$, improving
the state of the art. A key ingredient of our analysis is a proof
of an analog of the KLS conjecture for Gibbs distributions over manifolds. 
\end{abstract}
\tableofcontents{}\pagebreak{}
\section{Introduction}

Hamiltonian dynamics provide an elegant alternative to Newtonian mechanics.
The Hamiltonian $H$, which captures jointly the potential and kinetic
energy of a particle, is a function of its position and velocity.
First-order differential equations describe the change in both. 
\begin{align*}
\frac{dx}{dt} & =\frac{\partial H(x,v)}{\partial v},\\
\frac{dv}{dt} & =-\frac{\partial H(x,v)}{\partial x}.
\end{align*}
As we review in Section \ref{sec:Basics-of-Hamiltonian}, these equations
preserve the Hamiltonian $H$.

Riemannian Hamiltonian Monte Carlo (or RHMC) \cite{Neal1996,Neal2012}\cite{Girolami2009,Girolami2011}
is a Markov Chain Monte Carlo method for sampling from a desired distribution.
The target distribution is encoded in the definition of the Hamiltonian.
Each step of the method consists of the following: At a current point
$x$,
\begin{enumerate}
\item Pick a random velocity $y$ according to a local distribution defined
by $x$ (in the simplest setting, this is the standard Gaussian distribution
for every $x$).
\item Move along the Hamiltonian curve defined by Hamiltonian dynamics at
$(x,y)$ for time (distance) $\step$.
\end{enumerate}
For a suitable choice of $H$, the marginal distribution of the current
point $x$ approaches the desired target distribution. Conceptually,
the main advantage of RHMC is that it does not require a Metropolis
filter (as in the Metropolis-Hastings method) and its step sizes are
therefore not severely limited even in high dimension.

Over the past two decades, RHMC has become very popular in Statistics
and Machine Learning, being applied to Bayesian learning, to evaluate
expectations and likelihood of large models by sampling from the appropriate
Gibbs distribution, etc. It has been reported to significantly outperform
other known methods \cite{Betancourt2017,Neal2012} and much effort
has been made to make each step efficient by the use of numerical
methods for solving ODEs.

In spite of all these developments and the remarkable empirical popularity
of RHMC, analyzing its rate of convergence and thus rigorously explaining
its success has remained an open question. 

\subsection{Results}

In this paper, we analyze the mixing rate of Hamiltonian Monte Carlo
for a general function $f$ as a Gibbs sampler, i.e., to generate
samples from the density proportional to $e^{-f(x)}$. The corresponding
Hamiltonian is $H(x,v)=f(x)+\frac{1}{2}\log((2\pi)^{n}\det g(x))+\frac{1}{2}v^{T}g(x)^{-1}v$
for some metric $g$. We show that for $x$ in a compact manifold,
the conductance of the Markov chain is bounded in terms of a few parameters
of the metric $g$ and the function $f$. The parameters and resulting
bounds are given in Corollary \ref{cor:gen-convergence_simplified}
and Theorem \ref{thm:gen-convergence}. Roughly speaking, the guarantee
says that Hamiltonian Monte Carlo mixes in polynomial time for smooth
Hamiltonians. We note that these guarantees use only the smoothness
and Cheeger constant (expansion) of the function, without any convexity
type assumptions. Thus, they might provide insight in nonconvex settings
where (R)HMC is often applied.

We then focus on logconcave densities in $\R^{n}$, i.e., $f(x)$
is a convex function. This class of functions appears naturally in
many contexts and is known to be sampleable in polynomial-time given
access to a function value oracle. For logconcave densities, the current
fastest sampling algorithms use $n^{4}$ function calls, even for
uniform sampling \cite{LV06,LV07}, and $n^{2.5}$ oracle calls given
a warm start after appropriate rounding (linear transformation) \cite{LeeV16KLS}.
In the prototypical setting of uniform sampling from a polytope $Ax\ge b$,
with $m$ inequalities, the general complexity is no better, with
each function evaluation taking $O(mn)$ arithmetic operations, for
an overall complexity of $n^{4}\cdot mn=mn^{5}$ in the worst case
and $n^{2.5}\cdot mn$ after rounding from a warm start. The work
of Kannan and Narayanan \cite{KanNar2009} gives an algorithm of complexity
$mn^{2}\cdot mn^{\omega-1}$ from an arbitrary start and $mn\cdot mn^{\omega-1}$
from a warm start (here $\omega$ is the matrix multiplication exponent),
which is better than the general case when the number of facets $m$
is not too large. This was recently improved to $mn^{0.75}\cdot mn^{\omega-1}$
from a warm start \cite{LeeV16}; the subquadratic complexity for
the number of steps is significant since all known general oracle
methods cannot below a quadratic number of steps. The leading algorithms
and their guarantees are summarized in Table \ref{tab:sampling}.
\begin{table}[h]
\centering{}%
\begin{tabular}{|c|l|c|c|}
\hline 
Year  & Algorithm & Steps & Cost per step\tabularnewline
\hline 
\hline 
1997 \cite{KLS97} & Ball walk\textsuperscript{\#} & $n^{3}$ & $mn$\tabularnewline
\hline 
2003 \cite{LV3} & Hit-and-run\textsuperscript{\#} & $n^{3}$ & $mn$\tabularnewline
\hline 
2009 \cite{KanNar2009} & Dikin walk & $mn$ & $mn^{\omega-1}$\tabularnewline
\hline 
2016 \cite{LeeV16} & Geodesic walk & $mn^{\frac{3}{4}}$ & $mn^{\omega-1}$\tabularnewline
\hline 
2016 \cite{LeeV17KLS} & Ball walk\textsuperscript{\#} & $n^{2.5}$ & $mn$\tabularnewline
\hline 
\textcolor{red}{This paper} & \textcolor{red}{RHMC} & \textcolor{red}{$mn^{\frac{2}{3}}$} & \textcolor{red}{$mn^{\omega-1}$}\tabularnewline
\hline 
\end{tabular}\caption{\label{tab:sampling}The complexity of uniform polytope sampling from
a warm start, where each step of every algorithm uses $\widetilde{O}(n)$
bit of randomness. The entries marked \protect\textsuperscript{\#}
are for general convex bodies presented by oracles, while the rest
are for polytopes.}
\end{table}

In this paper, using RHMC, we improve the complexity of sampling polytopes.
In fact we do this for a general family of Gibbs distributions, of
the form $e^{-\alpha\phi(x)}$ where $\phi(x)$ is a convex function
over a polytope. When $\phi(x)$ is the standard logarithmic barrier
function and $g(x)$ is its Hessian, we get a sampling method that
mixes in only $n^{\frac{1}{6}}m^{\frac{1}{2}}+\frac{n^{\frac{1}{5}}m^{\frac{2}{5}}}{\alpha^{\frac{1}{5}}+m^{-\frac{1}{5}}}+\frac{n^{\frac{2}{3}}}{\alpha+m^{-1}}$
steps from a warm start! When $\alpha=1/m$, the resulting distribution
is very close to uniform over the polytope. 

\begin{restatable}{thm}{logbarrier}

\label{thm:logbarrier}Let $\phi$ be the logarithmic barrier for
a polytope $M$ with $m$ constraints and $n$ variables. Hamiltonian
Monte Carlo applied to the function $f=\exp(-\alpha\phi(x))$ and
the metric given by $\nabla^{2}\phi$ with appropriate step size mixes
in 
\[
\widetilde{O}\left(\frac{n^{\frac{2}{3}}}{\alpha+m^{-1}}+\frac{m^{\frac{1}{3}}n^{\frac{1}{3}}}{\alpha^{\frac{1}{3}}+m^{-\frac{1}{3}}}+m^{\frac{1}{2}}n^{\frac{1}{6}}\right)
\]
steps where each step is the solution of a Hamiltonian ODE.

\end{restatable}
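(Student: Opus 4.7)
The plan is to specialize the general convergence theorem (Corollary \ref{cor:gen-convergence_simplified}) to the setting $g = \nabla^{2}\phi$, $f = \exp(-\alpha\phi)$. That theorem reduces the mixing analysis to two kinds of estimates: (i) smoothness parameters of the Hamiltonian flow, which cap the step size $\delta$ allowing good one-step overlap of the Markov chain, and (ii) an isoperimetric (Cheeger) constant $\psi$ for the stationary density on the Riemannian manifold $(M,g)$. Plugging these into the general estimate produces a mixing-time bound of the form $\widetilde{O}(1/(\delta\psi)^{2})$, with several independent upper bounds on $\delta$. Choosing $\delta$ as the minimum of these bounds, then squaring and summing reciprocals, yields the three additive terms of the claim.

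For the smoothness side, I would exploit that $\phi$ is a self-concordant barrier with parameter $m$, which yields affine-invariant control on the derivatives of $\phi$, $g$, and $g^{-1}$ in the local Riemannian norm induced by $g$. These bounds translate into bounds on the three constituents of the Hamiltonian $H(x,v) = f(x) + \tfrac{1}{2}\log((2\pi)^{n}\det g(x)) + \tfrac{1}{2}v^{T}g(x)^{-1}v$: the potential $\alpha\phi$ (scaled by $\alpha$), the volume term $\tfrac{1}{2}\log\det g$ (purely geometric, giving rise to the Ricci-type terms), and the kinetic term $\tfrac{1}{2}v^{T}g^{-1}v$. I would then propagate these estimates along a Hamiltonian trajectory of length $\delta$ to control the total-variation effect of (a) redrawing the starting Gaussian velocity and (b) slightly perturbing the starting point. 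The resulting conditions produce three independent upper bounds on $\delta$, one per constituent, responsible for the three terms of the stated mixing bound; in particular, the $\alpha$-free term $m^{1/2}n^{1/6}$ should come from the geometric/volume term alone, while the two $\alpha$-dependent terms come from the interaction of the potential with the flow.

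For the isoperimetry side, I would invoke the KLS-type inequality for Gibbs distributions on manifolds proved elsewhere in the paper (advertised in the abstract). Applied to $(M,\nabla^{2}\phi)$ with the logconcave density proportional to $e^{-\alpha\phi}$, this gives a Cheeger constant $\psi = \widetilde{\Omega}(1)$ in the local norm: the log-barrier geometry already has $\Theta(1)$ inradius (this is what drives the Dikin walk analysis), and the logconcave factor $e^{-\alpha\phi}$ only strengthens concentration. Thus $\psi$ contributes only polylog factors to the mixing bound and does not affect the exponents in $n$, $m$, or $\alpha$.

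The main obstacle I expect is the smoothness analysis: tightly bounding third- and fourth-order derivatives of $\log\det\nabla^{2}\phi$ and of $\alpha\phi$ jointly along a Hamiltonian trajectory, and disentangling the contributions of the potential, volume, and kinetic terms so that the three $\delta$-constraints can be optimized separately rather than being folded into one generic smoothness constant. This separation is precisely what produces the sharp exponents $2/3$, $1/3$, $1/6$ and the asymmetric combinations $\alpha+m^{-1}$, $\alpha^{1/3}+m^{-1/3}$; losing it would collapse the three terms into a single, strictly weaker bound.
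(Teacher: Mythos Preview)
Your plan has a genuine gap on the isoperimetry side that would prevent you from recovering the stated bound. You claim that the manifold KLS-type inequality yields a Cheeger constant $\psi=\widetilde{\Omega}(1)$ in the local norm, and conclude that $\psi$ contributes only polylog factors. This is not correct for the log-barrier metric. The paper's Lemma~\ref{lem:n-d-gaussian-manifold-isoperimetry} gives $\psi=\Omega(\sqrt{\alpha})$ for the Gibbs density $e^{-\alpha\phi}$, and the best $\alpha$-independent lower bound (from \cite{LeeV17geodesic}) is $\psi=\Omega(m^{-1/2})$. Thus $\psi^{-2}=\Theta\bigl(1/(\alpha+m^{-1})\bigr)$, and this factor is precisely where the denominators $\alpha+m^{-1}$ and $\alpha^{1/3}+m^{-1/3}$ in the theorem come from. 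In the paper's proof the mixing time is $\widetilde{O}\bigl(\psi^{-2}\delta^{-2}\bigr)$ with $\delta^{-2}=\max\bigl(n^{2/3},\,\alpha^{2/3}m^{1/3}n^{1/3},\,\alpha\,m^{1/2}n^{1/6}\bigr)$; dividing each of these by $\alpha+m^{-1}$ and simplifying is what produces the three displayed terms. In particular, the ``$\alpha$-free'' term $m^{1/2}n^{1/6}$ does \emph{not} arise because the corresponding $\delta$-constraint is $\alpha$-free (it is not: all three constraints involve $\alpha$ through $M_{1}=n+\alpha^{2}m$), but because the factor $\alpha$ in that numerator is cancelled by the $\alpha$ in $\psi^{-2}$.

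A secondary issue: you propose to specialize Corollary~\ref{cor:gen-convergence_simplified}, but the paper explicitly remarks (opening of Section~\ref{sec:Convergence_imporved}) that this simplified result, which uses only $R_{0},R_{1}$, yields a polynomial bound that is no better than prior methods. The exponents $2/3,1/3,1/6$ require the refined Theorem~\ref{thm:gen-convergence}, which introduces the parameters $R_{2},R_{3}$ (Definitions~\ref{def:R2} and~\ref{def:R3}) and a carefully chosen auxiliary function $\ell$ tracking $\|s_{\gamma'}\|_{2},\|s_{\gamma'}\|_{4},\|s_{\gamma'}\|_{\infty}$ along the trajectory. The three $\delta$-constraints then come from the conditions $\delta^{2}\le 1/R_{1}$, $\delta^{5}\le \ell_{0}/(R_{1}^{2}\ell_{1})$, and $\delta^{3}R_{2}+\delta^{2}R_{3}\le 1$, with the log-barrier-specific values computed in Lemmas~\ref{lem:total_ricci2}--\ref{lem:log_D2} and assembled in Lemma~\ref{lem:one-step}; they do not correspond to a clean potential/volume/kinetic split.
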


In recent independent work, Mangoubi and Smith \cite{mangoubi2017rapid}
analyze Euclidean HMC in the oracle setting, i.e., assuming an oracle
for evaluating $\phi$. Their analysis formally gives a dimension-independent
convergence rate based on certain regularity assumptions such as strong convexity and smoothness of the Hamiltonian $H$.
Unfortunately, these assumptions do not hold for the polytope sampling problem.

An important application of sampling is integration. The complexity
of integration for general logconcave functions is also $n^{4}$ oracle
calls. For polytopes, the most natural questions is computing its
volume. For this problem, the current best complexity is $n^{4}\cdot mn$,
where the factor of $O(mn)$ is the complexity of checking membership
in a polytope. Thus, even for explicitly specified polytopes, the
complexity of estimating the volume from previous work is asymptotically
the same as that for a general convex body given by a membership oracle.
Here we obtain a volume algorithm with complexity $mn^{\frac{2}{3}}\cdot mn^{\omega-1}$,
improving substantially on previous algorithms. The volume algorithm
is based using Hamiltonian Monte Carlo for sampling from a sequence
of Gibbs distributions over polytopes. We remark that in the case
when $m=O(n)$\footnote{We suspect that the LS barrier \cite{lee2014path} might be used to
get a faster algorithm even in the regime even if $m$ is sub-exponential.
However, our proof requires a delicate estimate of the fourth derivative
of the barrier functions. Therefore, such a result either requires
a new proof or a unpleasantly long version of the current proof.}, the final complexity is $o(n^{4})$ arithmetic operations, improving
by more than a \emph{quadratic factor} in the dimension over the previous
best complexity of $\tilde{O}(n^{6})$ operations for arbitrary polytopes.
These results and prior developments are given in Table \ref{tab:volume}. 

\begin{table}[h]
\centering{}%
\begin{tabular}{|c|l|c|c|}
\hline 
Year  & Algorithm & Steps & Cost per step\tabularnewline
\hline 
\hline 
1989 \cite{DyerFK89} & DFK & $n^{23}$ & $mn$\tabularnewline
\hline 
1989-93 \cite{LS90,DyerF90,ApplegateK91,LS92,LS93} & many improvements & $n^{7}$ & $mn$\tabularnewline
\hline 
1997 \cite{KLS97} & DFK, Speedy walk, isotropy & $n^{5}$ & $mn$\tabularnewline
\hline 
2003 \cite{LV2} & Annealing, hit-and-run & $n^{4}$ & $mn$\tabularnewline
\hline 
2015 \cite{CV2015} & Gaussian Cooling\textsuperscript{{*}} & $n^{3}$ & $mn$\tabularnewline
\hline 
\textcolor{red}{This paper} & \textcolor{red}{RHMC + Gaussian Cooling} & \textcolor{red}{$mn^{\frac{2}{3}}$} & \textcolor{red}{$mn^{\omega-1}$}\tabularnewline
\hline 
\end{tabular}\caption{\label{tab:volume} The complexity of volume estimation, each step
uses $\widetilde{O}(n)$ bit of randomness, all except the last for
general convex bodies (the result marked \protect\textsuperscript{{*}}
is for well-rounded convex bodies). The current paper applies to general
polytopes, and is the first improvement utilizing their structure. }
\end{table}

\begin{restatable}{thm}{volume}
\label{thm:volume}For any polytope $P=\left\{ x:\,Ax\ge b\right\} $
with $m$ constraints and $n$ variables, and any $\varepsilon>0$,
the Hamiltonian volume algorithm estimates the volume of $P$ to within
$1\pm\varepsilon$ multiplicative factor using $\widetilde{O}\left(mn^{\frac{2}{3}}\varepsilon^{-2}\right)$
steps where each step consists of solving a first-order ODE and takes
time $\widetilde{O}\left(mn^{\omega-1}L^{O(1)}\log^{O(1)}\frac{1}{\varepsilon}\right)$
and $L$ is the bit complexity\footnote{$L=\log(m+d_{\max}+\norm b_{\infty})$ where $d_{\max}$ is the largest
absolute value of the determinant of a square sub-matrix of $A$.} of the polytope.
\end{restatable}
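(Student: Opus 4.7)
The plan is to reduce volume estimation to a telescoping product of partition-function ratios along a cooling schedule in the parameter $\alpha$, and estimate each ratio using RHMC samples whose mixing time is controlled by Theorem \ref{thm:logbarrier}. Put $\pi_\alpha(x)\propto e^{-\alpha\phi(x)}\mathbf{1}_P(x)$ and $Z(\alpha)=\int_P e^{-\alpha\phi(x)}\,dx$, pick a decreasing sequence $\alpha_0>\alpha_1>\cdots>\alpha_K=1/m$, and write
\[
\vol(P)\;=\;\frac{\vol(P)}{Z(\alpha_K)}\cdot Z(\alpha_0)\cdot\prod_{i=0}^{K-1}\frac{Z(\alpha_{i+1})}{Z(\alpha_i)}.
\]
The value $\alpha_0$ is chosen large enough that $\pi_{\alpha_0}$ concentrates near the analytic center $x^\star$ of $P$, so that $Z(\alpha_0)$ can be computed analytically by a Gaussian approximation using $\phi(x)\approx\phi(x^\star)+\tfrac{1}{2}(x-x^\star)^T\nabla^2\phi(x^\star)(x-x^\star)$. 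At the cold end $\alpha_K=1/m$, the measure $\pi_{1/m}$ is within $(1\pm\varepsilon)$ of uniform on $P$ after removing an $O(\varepsilon/m)$-thin shell near the boundary, so the leading factor reduces to a ratio that is bounded analytically.

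Each intermediate ratio is the expectation
\[
\frac{Z(\alpha_{i+1})}{Z(\alpha_i)}\;=\;\E_{\pi_{\alpha_i}}\!\left[e^{-(\alpha_{i+1}-\alpha_i)\phi(x)}\right],
\]
which we estimate from RHMC samples drawn from $\pi_{\alpha_i}$, initializing from the previous phase's output as a warm start. Adapting the Gaussian cooling framework of \cite{CV2015} from Gaussians to this Gibbs family, we calibrate the gaps $\alpha_i-\alpha_{i+1}$ so that the second moment of each estimator is $1+O(1/T_i)$ when $T_i$ samples are used in phase $i$, and so that the $\chi^2$-distance between consecutive $\pi_{\alpha_i}$ is $O(1)$ (preserving the warm-start hypothesis required to apply Theorem \ref{thm:logbarrier}). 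Concentration of $\phi$ under $\pi_\alpha$, which is exactly what the KLS-style isoperimetric bound for Gibbs measures on the log-barrier manifold (proved earlier in the paper) provides, yields a schedule with $K=\widetilde{O}(\sqrt{n})$ phases and a total sample budget of $\widetilde{O}(\varepsilon^{-2})$ across all phases.

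Multiplying sample counts by per-step mixing, the cost is dominated by the coldest phases near $\alpha\sim 1/m$, where Theorem \ref{thm:logbarrier} gives mixing $\widetilde{O}(mn^{2/3})$; hotter phases have mixing that scales like $\widetilde{O}(n^{2/3}/\alpha)$ but take proportionally fewer samples, and summing over the schedule produces a total of $\widetilde{O}(mn^{2/3}\varepsilon^{-2})$ RHMC steps. Each step approximately integrates a Hamiltonian ODE whose vector field involves $(\nabla^2\phi)^{-1}$ and $\nabla\log\det(\nabla^2\phi)$; implementing this with a high-order symplectic integrator and standard sparse linear algebra on the Hessian of the log barrier costs $\widetilde{O}(mn^{\omega-1})$ arithmetic operations per step, and the factor $L^{O(1)}\log^{O(1)}(1/\varepsilon)$ accounts for the precision needed to preserve the continuous RHMC mixing guarantee against discretization error.

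The main obstacles are (i) calibrating the cooling schedule so that the per-phase variance bound is consistent with the sample budget dictated by the mixing time, which is where the isoperimetric/concentration result for Gibbs measures from earlier in the paper is essential; (ii) simultaneously controlling the warm-start parameter from phase to phase, since Theorem \ref{thm:logbarrier} requires one, so the schedule must balance variance and $\chi^2$-distance against each other; and (iii) handling the numerical ODE integration and boundary shell carefully so that sampling error, discretization error, and volume approximation error together remain within the target $(1\pm\varepsilon)$, which is what ultimately determines the bit-complexity factor $L^{O(1)}\log^{O(1)}(1/\varepsilon)$.
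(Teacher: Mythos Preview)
Your framework is exactly the paper's Gaussian-cooling-on-manifolds approach: telescope $Z(\alpha_{i+1})/Z(\alpha_i)$, approximate $Z(\alpha_0)$ by a Gaussian at the analytic center (Lemma~\ref{lem:boundary_case}), estimate each ratio from RHMC samples warm-started from the previous phase, and bound mixing via Theorem~\ref{thm:logbarrier}. Two quantitative steps, however, would not go through as stated.

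First, terminating at $\alpha_K=1/m$ does not make $\pi_{\alpha_K}$ close to uniform. Lemma~\ref{lem:boundary_case} gives $\bigl|\ln Z(\alpha)-\ln\bigl(e^{-\alpha\phi(x^*)}\vol(P)\bigr)\bigr|=O\bigl(\alpha\vartheta\ln(n/(\alpha\vartheta))\bigr)$, which at $\alpha=1/m$ (with $\vartheta=m$ for the log barrier) is $\Theta(\log n)$, not $\varepsilon$; the ``remove an $O(\varepsilon/m)$ shell'' heuristic is not what controls this error. You must cool to $\alpha=\Theta\bigl(\varepsilon/(m\log(nm/\varepsilon))\bigr)$, i.e.\ $\sigma^2=\Theta\bigl((\vartheta/\varepsilon)\log(n\vartheta/\varepsilon)\bigr)$, as the algorithm in Section~\ref{sec:cooling} does.

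Second, the accounting ``$K=\widetilde O(\sqrt n)$ phases with total sample budget $\widetilde O(\varepsilon^{-2})$'' is internally inconsistent: with $K$ phases each of $O(1)$ relative variance (Lemma~\ref{lem:cooling_step}), the product estimator needs $\Theta(K/\varepsilon^{2})$ samples \emph{per phase}, hence $\widetilde O(K^2/\varepsilon^2)=\widetilde O(n/\varepsilon^{2})$ samples total. The paper's schedule in fact has two regimes: step $(1+1/\sqrt n)$ with $k_i=\Theta(\sqrt n/\varepsilon^2)$ samples when $\sigma^2\le\vartheta/n$, and step $(1+\sigma/\sqrt\vartheta)$ with $k_i=\Theta\bigl((\sqrt\vartheta/\sigma+1)/\varepsilon^2\bigr)$ otherwise. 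Each regime separately contributes $\widetilde O(mn^{2/3}\varepsilon^{-2})$ RHMC steps, so the total is not ``dominated by the coldest phases.'' Relatedly, in the hot regime $\alpha\gtrsim n/m$ the mixing bound from Theorem~\ref{thm:logbarrier} is $\widetilde O(mn^{-1/3})$ (the term $m^{1/2}n^{1/6}$ takes over), not $\widetilde O(n^{2/3}/\alpha)$. None of this changes the strategy, but you need the two-regime schedule and the correct sample counts to make the arithmetic land on $\widetilde O(mn^{2/3}\varepsilon^{-2})$.
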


A key ingredient in the analysis of RHMC is a new isoperimetric inequality
for Gibbs distributions over manifolds. This inequality can be seen
as an evidence of a manifold version of the KLS hyperplane conjecture.
For the family of Gibbs distributions induced by convex functions
with convex Hessians, the expansion is within a constant factor of
that of a hyperplane cut. This result might be of independent interest.

\subsection{Approach and contributions}

Traditional methods to sample from distributions in $\R^{n}$ are
based on random walks that take straight line steps (grid walk, ball
walk, hit-and-run). While this leads to polynomial-time convergence
for logconcave distributions, the length of each step has to be small
due to boundary effects, and a Metropolis filter (rejection sampling)
has to be applied to ensure the limiting distribution is the desired
one. These walks cannot afford a step of length greater than $\delta=O\left(\frac{1}{\sqrt{n}}\right)$
for a distribution in isotropic position, and take a quadratic number
of steps even for the hypercube. The Dikin walk for polytopes \cite{KanNar2009},
which explicitly takes into account the boundary of polytope at each
step, has a varying step size, but still runs into similar issues
and the bound on its convergence rate is $O(mn)$ for a polytope with
$m$ facets. 

In a recent paper \cite{LeeV16}, we introduced the geodesic walk.
Rather than using straight lines in Euclidean space, each step of
the walk is along a geodesic (locally shortest path) of a Riemannian
metric. More precisely, each step first makes a deterministic move
depending on the current point (drift), then moves along a geodesic
in a random initial direction and finally uses a Metropolis filter.
Each step can be computed by solving a first-order ODE. Due to the
combination of drift and geodesic, the local $1$-step distributions
are smoother than that of the Dikin walk and larger steps can be taken
while keeping a bounded rejection probability for the filter. For
sampling polytopes, the manifold/metric defined by the standard log
barrier gives a convergence rate of $mn^{\frac{3}{4}}$, going below
the quadratic (or higher) bound of all previous sampling methods. 

One major difficulty with geodesic walk is ensuring the stationary
distribution is uniform. For high dimensional problems, this necessitates
taking a sufficiently small step size and then rejecting some samples
according to the desired transition probabilities according to Metropolis
filter. Unfortunately, computing these transition probabilities can
be very expensive. For the geodesic walk, it entails solving an $n\times n$
size matrix ODE.

Hamiltonian Monte Carlo bears some similarity to the geodesic walk
\textemdash{} each step is a random (non-linear) curve. But the Hamiltonian-preserving
nature of the process obviates the most expensive ingredient, Metropolis
filter. Due to this, the step size can be made longer, and as a result
we obtain a faster sampling algorithm for polytopes that mixes in
$mn^{\frac{2}{3}}$ steps (the per-step complexity remains essentially
the same, needing the solution of an ODE). 

To get a faster algorithm for volume computation, we extends the analysis
to a general family of Gibbs distributions, including $f(x)=e^{-\alpha\phi(x)}$
where $\phi(x)$ is the standard log-barrier and $\alpha>0$. We show
that the smoothness we need for the sampling corresponding to a variant
of self-concordance defined in Definition \ref{def:self_concordance}.
Furthermore, we establish an isoperimetric inequality for this class
of functions. This can be viewed as an extension of the KLS hyperplane
conjecture from Euclidean to Riemannian metrics (the analogous case
in Euclidean space to what we prove here is the isoperimetry of the
Gaussian density function multiplied by any logconcave function, a
case for which the KLS conjecture holds). The mixing rate for this
family of functions is $sublinear$ for $\alpha=\Omega(1)$. 

Finally, we study the Gaussian Cooling schedule of \cite{CV2015}.
We show that in the manifold setting, the Gaussian distribution $e^{-\norm x^{2}/2}$
can be replaced by $e^{-\alpha\phi(x)}$. Moreover, the speed of Gaussian
Cooling depends on the ``thin-shell'' constant of the manifold and
classical self-concordance of $\phi$. 

Combining all of these ideas, we obtain a faster algorithm for polytope
volume computation. The resulting complexity of polytope volume computation
is the same as that of sampling uniformly from a warm start: $mn^{\frac{2}{3}}$
steps. To illustrate the improvement, for polytopes with $m=O(n)$
facets, the new bound is $n^{\frac{5}{3}}$ while the previous best
bound was $n^{4}$.

\subsection{Practicality }

From the experiments, the ball walk/hit-and-run seem to mix in $n^{2}$
steps, the geodesic walk seems to mix in sublinear number of steps
(due to the Metropolis filter bottleneck) and RHMC seems to mix in
only polylogarithmic number of steps. One advantage of RHMC compared
to the geodesic walk is that it does not require the expensive Metropolis
filter that involves solving $n\times n$ matrix ODEs. In the future,
we plan to do an empirical comparison study of different sampling
algorithms. We are hopeful that using RHMC we might finally be able
to sample from polytopes in millions of dimensions after more than
three decades of research on this topic!

\subsection{Notation }

Throughout the paper, we use lowercase letter for vectors and vector
fields and uppercase letter for matrices and tensors. We use $e_{k}$
to denote coordinate vectors. We use $\frac{d}{dt}$ for the usual
derivative, e.g. $\frac{df(c(t))}{dt}$ is the derivative of some
function $f$ along a curve $c$ parametrized by $t$, we use $\frac{\partial}{\partial v}$
for the usual partial derivative. We use $D^{k}f(x)[v_{1},v_{2},\cdots,v_{k}]$
for the $k^{th}$ directional derivative of $f$ at $x$ along $v_{1},v_{2},\cdots,v_{k}$.
We use $\nabla$ for the usual gradient and the connection (manifold
derivative, defined in Section \ref{sec:RG} which takes into account
the local metric), $D_{v}$ for the directional derivative of a vector
with respect to the vector (or vector field) $v$ (again, defined
in Section \ref{sec:RG}), and $D_{t}$ if the curve $v(t)$ is clear
from the context. We use $g$ for the local metric. Given a point
$x\in M$, $g$ is a matrix with entries $g_{ij}.$ Its inverse has
entries $g^{ij}.$ Also, $n$ is the dimension, $m$ the number of
inequalities. We use $d_{TV}$ for the total variation (or $L_{1}$)
distance between two distributions.

\subsection{Organization}

In Section \ref{sec:Basics-of-Hamiltonian}, we define the Riemannian
Hamiltonian Monte Carlo and study its basic properties such as time-reversibility.
In Section \ref{sec:Convergence}, we give the the first convergence
rate analysis of RHMC. However, the convergence rate is weak for the
sampling applications (it is polynomial, but not better than previous
methods). In Section \ref{sec:Convergence_imporved}, we introduce
more parameters and use them to get a tighter analysis of RHMC. In
Section \ref{sec:KLS}, we study the isoperimetric constant of $f(x)=e^{-\alpha\phi(x)}$
under the metric $\nabla^{2}\phi(x)$. In Section \ref{sec:cooling},
we study the generalized Gaussian Cooling schedule and its relation
to the thin-shell constant. Finally, in Section \ref{sec:Logarithmic},
we compute the parameters we need for the log barrier function.
\pagebreak{}
\section{\label{sec:Basics-of-Hamiltonian}Basics of Hamiltonian Monte Carlo}

In this section, we define the Hamiltonian Monte Carlo method for
sampling from a general distribution $e^{-H(x,y)}$. Hamiltonian Monte
Carlo uses curves instead of straight lines and this makes the walk
time-reversible even if the target distribution is not uniform, with
no need for a rejection sampling step. In contrast, classical approaches
such as the ball walk require an explicit rejection step to converge
to a desired stationary distribution.
\begin{defn}
Given a continuous, twice-differentiable function $H:\mathcal{M}\times\Rn\subset\R^{n}\times\R^{n}\rightarrow\R$
(called the \emph{Hamiltonian}, which often corresponds to the total
energy of a system) where $\mathcal{M}$ is the $x$ domain of $H$,
we say $(x(t),y(t))$ follows a \emph{Hamiltonian curve} if it satisfies
the \emph{Hamiltonian equations}

\begin{align}
\frac{dx}{dt} & =\frac{\partial H(x,y)}{\partial y},\nonumber \\
\frac{dy}{dt} & =-\frac{\partial H(x,y)}{\partial x}.\label{eq:ham}
\end{align}
We define the map $T_{\step}(x,y)\defeq(x(\step),y(\step))$ where
the $(x(t),y(t))$ follows the Hamiltonian curve with the initial
condition $(x(0),y(0))=(x,y).$
\end{defn}

Hamiltonian Monte Carlo is the result of a sequence of randomly generated
Hamiltonian curves.

\begin{algorithm2e}

\caption{Hamiltonian Monte Carlo}

\SetAlgoLined

\textbf{Input:} some initial point $x^{(1)}\in\mathcal{M}$.

\For{$i=1,2,\cdots,T$}{

Sample $y^{(k+\frac{1}{2})}$ according to $e^{-H(x^{(k)},y)}/\pi(x^{(k)})$
where $\pi(x)=\int_{\Rn}e^{-H(x,y)}dy$.

With probability $\frac{1}{2}$, set $(x^{(k+1)},y^{(k+1)})=T_{\step}(x^{(k)},y^{(k+\frac{1}{2})})$. 

Otherwise, $(x^{(k+1)},y^{(k+1)})=T_{-\step}(x^{(k)},y^{(k+\frac{1}{2})})$.

}

\textbf{Output:} $(x^{(T+1)},y^{(T+1)})$.

\end{algorithm2e}
\begin{lem}[Energy Conservation]
\label{lem:ham_eng_pre}For any Hamiltonian curve $(x(t),y(t))$,
we have that
\[
\frac{d}{dt}H(x(t),y(t))=0.
\]
\end{lem}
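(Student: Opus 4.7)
The plan is to prove this by direct application of the chain rule, followed by substitution of the Hamiltonian equations \eqref{eq:ham} into the resulting expression. Since $H$ is twice-differentiable in $(x,y)$ and $(x(t),y(t))$ is a smooth curve, the chain rule gives
\[
\frac{d}{dt}H(x(t),y(t)) = \left\langle \frac{\partial H}{\partial x}(x,y),\, \frac{dx}{dt}\right\rangle + \left\langle \frac{\partial H}{\partial y}(x,y),\, \frac{dy}{dt}\right\rangle.
\]
Then I would substitute $\frac{dx}{dt} = \frac{\partial H}{\partial y}$ and $\frac{dy}{dt} = -\frac{\partial H}{\partial x}$ from the defining Hamiltonian equations. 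The two resulting inner products are $\langle \partial_x H, \partial_y H\rangle$ and $-\langle \partial_y H, \partial_x H\rangle$, which cancel by symmetry of the Euclidean inner product.

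There is no real obstacle here; the result is purely a consequence of the antisymmetric coupling between the evolution of $x$ and $y$ built into the Hamiltonian equations. The only thing worth noting is that we need smoothness sufficient for the chain rule to apply, which is guaranteed by the twice-differentiability assumption on $H$ stated in the definition preceding the lemma. This conservation property is precisely what lets RHMC avoid a Metropolis filter in subsequent sections, so the brevity of the proof belies its importance in the global argument.
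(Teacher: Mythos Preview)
Your proof is correct and essentially identical to the paper's: apply the chain rule, substitute the Hamiltonian equations, and observe that the two terms cancel.
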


\begin{proof}
Note that
\[
\frac{d}{dt}H(x(t),y(t))=\frac{\partial H}{\partial x}\frac{dx}{dt}+\frac{\partial H}{\partial y}\frac{dy}{dt}=\frac{\partial H}{\partial x}\frac{\partial H}{\partial y}-\frac{\partial H}{\partial y}\frac{\partial H}{\partial x}=0.
\]
\end{proof}
\begin{lem}[Measure Preservation]
\label{lem:ham_measure_pre}For any $t\geq0$, we have that
\[
\det\left(DT_{t}(x,y)\right)=1
\]
where $DT_{t}(x,y)$ is the Jacobian of the map $T_{t}$ at the point
$(x,y)$.
\end{lem}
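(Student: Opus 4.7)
The plan is to prove this as an instance of Liouville's theorem, which is the standard reason Hamiltonian flows are measure-preserving. Write $z = (x,y) \in \R^{2n}$ and let
\[
F(z) \defeq \left(\frac{\partial H}{\partial y}(x,y),\,-\frac{\partial H}{\partial x}(x,y)\right)
\]
be the Hamiltonian vector field, so that \eqref{eq:ham} reads $\frac{dz}{dt} = F(z)$. Then $T_t$ is the time-$t$ flow of $F$, and the Jacobian $J(t) \defeq DT_t(z)$ satisfies the variational ODE
\[
\frac{d}{dt}J(t) = DF\!\left(T_t(z)\right) \cdot J(t),\qquad J(0) = I.
\]

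The key step is to compute $\tr(DF)$, i.e.\ the divergence of the Hamiltonian vector field. Because $H$ is twice continuously differentiable, mixed partial derivatives commute, and so
\[
\operatorname{div} F(z) \;=\; \sum_{i=1}^{n}\frac{\partial}{\partial x_i}\!\left(\frac{\partial H}{\partial y_i}\right) + \sum_{i=1}^{n}\frac{\partial}{\partial y_i}\!\left(-\frac{\partial H}{\partial x_i}\right) \;=\; 0.
\]
This cancellation is the entire reason the phase-space volume is preserved.

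To finish, I would apply Jacobi's formula $\frac{d}{dt}\det J(t) = \det(J(t))\cdot \tr\!\bigl(J(t)^{-1}\frac{d}{dt}J(t)\bigr)$, which combined with the variational equation gives
\[
\frac{d}{dt}\det J(t) \;=\; \det(J(t)) \cdot \tr\bigl(DF(T_t(z))\bigr) \;=\; 0.
\]
Since $J(0) = I$ gives $\det J(0) = 1$, it follows that $\det J(t) \equiv 1$ for all $t \geq 0$.

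The only subtlety is purely technical: one must know $J(t)$ stays invertible so that $J(t)^{-1}$ appears legitimately in Jacobi's formula, but this is automatic from the existence and smoothness of the inverse flow $T_{-t}$, or one may instead argue by computing $\frac{d}{dt}\log|\det J(t)|$ along any interval where $\det J(t)\neq 0$ and invoking continuity from $J(0)=I$. There is no serious obstacle; the lemma is a direct consequence of the antisymmetric structure of the Hamiltonian equations.
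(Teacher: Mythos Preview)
Your proof is correct and is essentially the same as the paper's: both derive the variational equation $J'(t)=DF(T_t(z))J(t)$, observe that the coefficient matrix has trace zero (the paper writes out the explicit $2\times 2$ block form of $DF$ with diagonal blocks $\partial^2 H/\partial y\partial x$ and $-\partial^2 H/\partial x\partial y$, while you phrase this as $\operatorname{div}F=0$), and then use Jacobi's formula to conclude $\det J(t)\equiv 1$. Your framing as Liouville's theorem is more concise but the content is identical.
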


\begin{proof}
Let $(x(t,s),y(t,s))$ be a family of Hamiltonian curves given by
$T_{t}(x+sd_{x},y+sd_{y})$. We write 
\[
u(t)=\frac{\partial}{\partial s}x(t,s)|_{s=0}\,,\,v(t)=\frac{\partial}{\partial s}y(t,s)|_{s=0}.
\]
By differentiating the Hamiltonian equations (\ref{eq:ham}) w.r.t.
$s$, we have that
\begin{align*}
\frac{du}{dt} & =\frac{\partial^{2}H(x,y)}{\partial y\partial x}u+\frac{\partial^{2}H(x,y)}{\partial y\partial y}v,\\
\frac{dv}{dt} & =-\frac{\partial^{2}H(x,y)}{\partial x\partial x}u-\frac{\partial^{2}H(x,y)}{\partial x\partial y}v,\\
(u(0),v(0)) & =(d_{x},d_{y}).
\end{align*}
This can be captured by the following matrix ODE
\begin{align*}
\frac{d\Phi}{dt} & =\left(\begin{array}{cc}
\frac{\partial^{2}H(x(t),y(t))}{\partial y\partial x} & \frac{\partial^{2}H(x(t),y(t))}{\partial y\partial y}\\
-\frac{\partial^{2}H(x(t),y(t))}{\partial x\partial x} & -\frac{\partial^{2}H(x(t),y(t))}{\partial x\partial y}
\end{array}\right)\Phi(t)\\
\Phi(0) & =I
\end{align*}
using the equation
\[
DT_{t}(x,y)\left(\begin{array}{c}
d_{x}\\
d_{y}
\end{array}\right)=\left(\begin{array}{c}
u(t)\\
v(t)
\end{array}\right)=\Phi(t)\left(\begin{array}{c}
d_{x}\\
d_{y}
\end{array}\right).
\]
Therefore, $DT_{t}(x,y)=\Phi(t)$. Next, we observe that
\[
\frac{d}{dt}\log\det\Phi(t)=\tr\left(\Phi(t)^{-1}\frac{d}{dt}\Phi(t)\right)=\tr\left(\begin{array}{cc}
\frac{\partial^{2}H(x(t),y(t))}{\partial y\partial x} & \frac{\partial^{2}H(x(t),y(t))}{\partial y\partial y}\\
-\frac{\partial^{2}H(x(t),y(t))}{\partial x\partial x} & -\frac{\partial^{2}H(x(t),y(t))}{\partial x\partial y}
\end{array}\right)=0.
\]
Hence,
\[
\det\Phi(t)=\det\Phi(0)=1.
\]
\end{proof}
Using the previous two lemmas, we now show that Hamiltonian Monte
Carlo indeed converges to the desired distribution. 
\begin{lem}[Time reversibility]
\label{lem:time_reversibility}Let $p_{x}(x')$ denote the probability
density of one step of the Hamiltonian Monte Carlo starting at $x$.
We have that\textup{
\[
\pi(x)p_{x}(x')=\pi(x')p_{x'}(x)
\]
for almost everywhere in $x$ and $x'$ where $\pi(x)=\int_{\Rn}e^{-H(x,y)}dy$.}
\end{lem}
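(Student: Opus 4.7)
The cleanest route is to prove detailed balance in its integrated (test-function) form, namely that for every bounded measurable $F: \mathcal{M}\times \mathcal{M}\to \R$,
\[
\iint F(x,x')\,\pi(x)\,p_x(x')\,dx\,dx' \;=\; \iint F(x',x)\,\pi(x)\,p_x(x')\,dx\,dx',
\]
which is equivalent to the pointwise statement up to null sets and avoids the nuisance of writing $p_x(x')$ as a density directly (because it is defined implicitly through the $n$-dimensional momentum variable $y$).

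First I would expand the left side using the algorithm. The joint density of $(x,y)$ under $x\sim \pi/Z$ and $y\sim e^{-H(x,y)}/\pi(x)$ is precisely $e^{-H(x,y)}/Z$; conditional on this $y$, the next point is $T_{\step}(x,y)$ with probability $\frac12$ and $T_{-\step}(x,y)$ with probability $\frac12$. Hence the left side equals
\[
\frac{1}{2Z}\iint e^{-H(x,y)}\Bigl[F\bigl(x,\pi_{1}(T_{\step}(x,y))\bigr)+F\bigl(x,\pi_{1}(T_{-\step}(x,y))\bigr)\Bigr]\,dy\,dx,
\]
where $\pi_{1}$ denotes projection onto the $x$-coordinate.

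Next I would apply the change of variables $(x',y')=T_{\step}(x,y)$ to the first term and $(x',y')=T_{-\step}(x,y)$ to the second. Lemma~\ref{lem:ham_measure_pre} gives $dx'\,dy' = dx\,dy$, and Lemma~\ref{lem:ham_eng_pre} gives $H(x',y')=H(x,y)$, so the exponential factor is preserved. Since $T_{\step}$ and $T_{-\step}$ are inverses, the new $x$ becomes $\pi_{1}(T_{\mp\step}(x',y'))$. Relabeling $(x',y')\leftrightarrow(x,y)$, the integral becomes
\[
\frac{1}{2Z}\iint e^{-H(x,y)}\Bigl[F\bigl(\pi_{1}(T_{-\step}(x,y)),x\bigr)+F\bigl(\pi_{1}(T_{\step}(x,y)),x\bigr)\Bigr]\,dy\,dx,
\]
which is exactly the expansion of the right side with $F(x,x')$ replaced by $F(x',x)$. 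This is where the random sign of $\step$ is essential: it is what pairs the forward flow used in the first expansion with the backward flow produced by the change of variables.

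I expect no real obstacle, only bookkeeping: one must be careful that $T_{\pm\step}$ is defined as a map on the full phase space $\R^{n}\times\R^{n}$ (so the change of variables is a genuine bijection of $\R^{2n}$ with unit Jacobian, rather than a map of the marginal $x$-variable with some nontrivial Jacobian), and that the identity $\pi(x)\cdot e^{-H(x,y)}/\pi(x)=e^{-H(x,y)}$ is what makes the prefactor $\pi$ disappear before the change of variables, so energy conservation can absorb the whole integrand. Obtaining the pointwise identity $\pi(x)p_{x}(x')=\pi(x')p_{x'}(x)$ from the integrated identity is then standard (almost-everywhere) measure-theoretic cleanup.
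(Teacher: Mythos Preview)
Your proof is correct and uses the same two ingredients as the paper (energy conservation, Lemma~\ref{lem:ham_eng_pre}, and measure preservation, Lemma~\ref{lem:ham_measure_pre}), but the route is genuinely different. The paper works \emph{pointwise}: it fixes $x,x'$, writes $\pi(x)p_{x}(x')$ as a sum over the preimage sets $V_{\pm}=\{y:F_{\pm\step}^{x}(y)=x'\}$ with the Jacobian factors $|\det DF_{\pm\step}^{x}(y)|^{-1}$ of the \emph{projected} map $F_{\step}^{x}(y)=\pi_{1}(T_{\step}(x,y))$, invokes Sard's theorem to discard the degenerate set, and then does a block-inverse computation on $DT_{\step}=\begin{pmatrix}A&B\\C&D\end{pmatrix}$ to show $|\det DF_{-\step}^{x'}(y')|=|\det B|/|\det DT_{\step}|=|\det DF_{\step}^{x}(y)|$. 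You instead stay on the full phase space and change variables in the $2n$-dimensional integral, so the Jacobian is simply $1$ and no block-matrix identity or Sard argument is needed; the pointwise statement then follows from the integrated one. Your approach is shorter and avoids the technical detour through the projected Jacobians; the paper's approach, on the other hand, yields the explicit density formula for $p_{x}(x')$ as a sum over preimages, which is closer in spirit to the later Lemma~\ref{lem:prob_formula}.
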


\begin{proof}
Fix $x$ and $x'$. Let $F_{\step}^{x}(y)$ be the $x$ component
of $T_{\step}(x,y)$. Let $V_{+}=\{y:\text{ }F_{\step}^{x}(y)=x'\}$
and $V_{-}=\{y:\text{ }F_{-\step}^{x}(x)=x')\}$. Then, 
\[
\pi(x)p_{x}(x')=\frac{1}{2}\int_{y\in V_{+}}\frac{e^{-H(x,y)}}{\left|\det\left(DF_{\step}^{x}(y)\right)\right|}+\frac{1}{2}\int_{y\in V_{-}}\frac{e^{-H(x,y)}}{\left|\det\left(DF_{-\step}^{x}(y)\right)\right|}.
\]
We note that this formula assumed that $DF_{\step}^{x}$ is invertible.
Sard's theorem showed that $F_{\step}^{x}(N)$ is measure $0$ where
$N\defeq\{y:\,DF_{s}^{x}(y)\text{ is not invertible}\}$. Therefore,
the formula is correct except for a measure zero subset.

By reversing time for the Hamiltonian curve, we have that for the
same $V_{\pm}$, 
\begin{equation}
\pi(x')p_{x'}(x)=\frac{1}{2}\int_{y\in V_{+}}\frac{e^{-H(x',y')}}{\left|\det\left(DF_{-\step}^{x'}(y')\right)\right|}+\frac{1}{2}\int_{y\in V_{-}}\frac{e^{-H(x',y')}}{\left|\det\left(DF_{\step}^{x'}(y')\right)\right|}\label{eq:ham_transit}
\end{equation}
where $y'$ denotes the $y$ component of $T_{\step}(x,y)$ and $T_{-\step}(x,y)$
in the first and second sum respectively.

We compare the first terms in both equations. Let $DT_{\step}(x,y)=\left(\begin{array}{cc}
A & B\\
C & D
\end{array}\right)$. Since $T_{\step}\circ T_{-\step}=I$ and $T_{\step}(x,y)=(x',y')$,
the inverse function theorem shows that $DT_{-\step}(x',y')$ is the
inverse map of $DT_{\step}(x,y)$. Hence, we have that
\[
DT_{-\step}(x',y')=\left(\begin{array}{cc}
A & B\\
C & D
\end{array}\right)^{-1}=\left(\begin{array}{cc}
\cdots & -A^{-1}B(D-CA^{-1}B)^{-1}\\
\cdots & \cdots
\end{array}\right).
\]
Therefore, we have that $F_{\step}^{x}(y)=B$ and $F_{-\step}^{x'}(y')=-A^{-1}B(D-CA^{-1}B)^{-1}$.
Hence, we have that
\begin{align*}
\left|\det\left(DF_{-\step}^{x'}(y')\right)\right| & =\left|\det A^{-1}\det B\det\left(D-CA^{-1}B\right)^{-1}\right|=\frac{\left|\det B\right|}{\left|\det\left(\begin{array}{cc}
A & B\\
C & D
\end{array}\right)\right|}.
\end{align*}
Using that $\det\left(DT_{t}(x,y)\right)=\det\left(\begin{array}{cc}
A & B\\
C & D
\end{array}\right)=1$ (Lemma \ref{lem:ham_measure_pre}), we have that
\[
\left|\det\left(DF_{-\step}^{x'}(y')\right)\right|=\left|\det\left(DF_{\step}^{x}(y)\right)\right|.
\]
Hence, we have that
\begin{align*}
\frac{1}{2}\int_{y\in V_{+}}\frac{e^{-H(x,y)}}{\left|\det\left(DF_{\step}^{x}(y)\right)\right|} & =\frac{1}{2}\int_{y\in V_{+}}\frac{e^{-H(x,y)}}{\left|\det\left(DF_{-\step}^{x'}(y')\right)\right|}\\
 & =\frac{1}{2}\int_{y\in V_{+}}\frac{e^{-H(x',y')}}{\left|\det\left(DF_{-\step}^{x'}(y')\right)\right|}
\end{align*}
where we used that $e^{-H(x,y)}=e^{-H(x',y')}$ (Lemma \ref{lem:ham_eng_pre})
at the end. 

For the second term in (\ref{eq:ham_transit}), by the same calculation,
we have that
\[
\frac{1}{2}\int_{y\in V_{-}}\frac{e^{-H(x,y)}}{\left|\det\left(DF_{-\step}^{x}(y)\right)\right|}=\frac{1}{2}\int_{y\in V_{+}}\frac{e^{-H(x',y')}}{\left|\det\left(DF_{\step}^{x'}(y')\right)\right|}
\]

Combining both terms we have the result.
\end{proof}
The main challenge in analyzing Hamiltonian Monte Carlo is to bound
its mixing time. 

\subsection{Hamiltonian Monte Carlo on Riemannian manifolds\label{sec:RMCMC}}

Suppose we want to sample from the distribution $e^{-f(x)}$. We define
the following energy function $H$:
\begin{equation}
H(x,v)\defeq f(x)+\frac{1}{2}\log((2\pi)^{n}\det g(x))+\frac{1}{2}v^{T}g(x)^{-1}v.\label{eq:H_manifold}
\end{equation}
One can view $x$ as the location and $v$ as the velocity. The following
lemma shows that the first variable $x(t)$ in the Hamiltonian curve
satisfies a second-order differential equation. When we view the domain
$\mathcal{M}$ as a manifold, this equation is simply $D_{t}\frac{dx}{dt}=\mu(x)$,
namely, $x$ acts like a particle under the force field $\mu$. (For
relevant background on manifolds, we refer the reader to Appendix
\ref{sec:RG}).
\begin{lem}
\label{lem:RMCMC}In Euclidean coordinates, The Hamiltonian equation
for (\ref{eq:H_manifold}) can be rewritten as 
\begin{align*}
D_{t}\frac{dx}{dt}= & \mu(x),\\
\frac{dx}{dt}(0)\sim & N(0,g(x)^{-1})
\end{align*}
where $\mu(x)=-g(x)^{-1}\nabla f(x)-\frac{1}{2}g(x)^{-1}\tr\left[g(x)^{-1}Dg(x)\right]$
and $D_{t}$ is the Levi-Civita connection on the manifold $\mathcal{M}$
with metric $g$.
\end{lem}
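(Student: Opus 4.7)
The strategy is to write out Hamilton's equations \eqref{eq:ham} for the specific Hamiltonian \eqref{eq:H_manifold} in coordinates, then perform the change of variable from the momentum $v$ to the velocity $\dot x = dx/dt$, and finally recognize the quadratic-in-velocity correction terms as the Christoffel symbols of the Levi--Civita connection. First I would compute $\partial H/\partial v = g(x)^{-1}v$, which gives the first Hamilton equation $\dot x = g(x)^{-1}v$, i.e.\ $v = g(x)\dot x$. Since the initial $v$ is drawn from the density proportional to $\exp(-\tfrac{1}{2}v^{T}g(x)^{-1}v)$, namely $v\sim N(0,g(x))$, the push-forward under $v\mapsto g(x)^{-1}v$ yields $\dot x(0)\sim N(0,g(x)^{-1})$ as claimed.

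Next I would compute $\partial H/\partial x_{k}$. The three pieces contribute
\[
\partial_{k}f(x),\qquad \tfrac{1}{2}\tr\!\bigl(g(x)^{-1}\partial_{k}g(x)\bigr),\qquad -\tfrac{1}{2}v^{T}g(x)^{-1}\bigl(\partial_{k}g(x)\bigr)g(x)^{-1}v,
\]
using Jacobi's formula $\partial_{k}\log\det g = \tr(g^{-1}\partial_{k}g)$ and the matrix identity $\partial_{k}g^{-1} = -g^{-1}(\partial_{k}g)g^{-1}$. Then the second Hamilton equation gives $\dot v_{k} = -\partial_{k}H$. I would also differentiate the relation $v=g(x)\dot x$ in $t$ to obtain $\dot v = \dot g\,\dot x + g\,\ddot x$, where $\dot g = \sum_{\ell} (\partial_{\ell}g)\,\dot x^{\ell}$. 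Substituting $v=g\dot x$ into the expression for $-\partial_{k}H$ replaces the $v^{T}g^{-1}(\partial_{k}g)g^{-1}v$ term by $\dot x^{T}(\partial_{k}g)\dot x$, eliminating all explicit $v$'s from both sides.

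Equating the two expressions for $\dot v$ and multiplying through by $g(x)^{-1}$, I would collect the force-like terms $-g^{-1}\nabla f - \tfrac{1}{2}g^{-1}\tr[g^{-1}Dg]$ on the right, exactly matching the definition of $\mu(x)$, and the quadratic-in-$\dot x$ terms on the left. The key algebraic identity to verify is that these quadratic terms assemble into $\Gamma^{k}_{ij}\dot x^{i}\dot x^{j}$, i.e.,
\[
(g\ddot x)_{k} + (\partial_{\ell}g_{ki})\dot x^{\ell}\dot x^{i} - \tfrac{1}{2}(\partial_{k}g_{ij})\dot x^{i}\dot x^{j} = g_{k\ell}\bigl(\ddot x^{\ell}+\Gamma^{\ell}_{ij}\dot x^{i}\dot x^{j}\bigr),
\]
which follows by symmetrizing $(\partial_{\ell}g_{ki})\dot x^{\ell}\dot x^{i}=\tfrac{1}{2}(\partial_{\ell}g_{ki}+\partial_{i}g_{k\ell})\dot x^{\ell}\dot x^{i}$ and comparing with the definition of $\Gamma^{\ell}_{ij}=\tfrac{1}{2}g^{\ell k}(\partial_{i}g_{jk}+\partial_{j}g_{ik}-\partial_{k}g_{ij})$. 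Once this identification is made, the equation becomes $D_{t}\dot x = \mu(x)$.

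The main obstacle is purely notational/bookkeeping: one has to be careful to symmetrize indices when the $v^{T}g^{-1}(\partial_{k}g)g^{-1}v$ term is reinterpreted as a quadratic form in $\dot x$, and to match the three combinations $\partial_{i}g_{jk}+\partial_{j}g_{ik}-\partial_{k}g_{ij}$ that constitute the Christoffel symbols. There is no analytical difficulty beyond standard coordinate calculus; the content of the lemma is essentially the well-known equivalence between Hamilton's equations for a kinetic-plus-potential Hamiltonian on a Riemannian manifold and the corresponding forced geodesic equation, adapted here to the modified Hamiltonian that includes the $\tfrac{1}{2}\log\det g$ term responsible for the Gaussian normalization of $v$.
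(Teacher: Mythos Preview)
Your proposal is correct and follows essentially the same route as the paper's proof: compute Hamilton's equations in coordinates, eliminate $v$ via $v=g(x)\dot x$ to obtain a second-order equation for $x$, and then recognize the quadratic-in-$\dot x$ terms as the Christoffel symbols of the Levi--Civita connection. The only cosmetic difference is that the paper differentiates $\dot x=g^{-1}v$ rather than $v=g\dot x$ to pass to a second-order equation, but this is equivalent.
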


\begin{proof}
From the definition of the Hamiltonian curve, we have that
\begin{align*}
\frac{dx}{dt} & =g(x)^{-1}v\\
\frac{dv}{dt} & =-\nabla f(x)-\frac{1}{2}\tr\left[g(x)^{-1}Dg(x)\right]+\frac{1}{2}\frac{dx}{dt}^{T}Dg(x)\frac{dx}{dt}.
\end{align*}
Putting the two equations together, we have that
\begin{align*}
\frac{d^{2}x}{dt^{2}}= & -g(x)^{-1}Dg(x)[\frac{dx}{dt}]g(x)^{-1}v+g(x)^{-1}\frac{dv}{dt}\\
= & -g(x)^{-1}Dg(x)[\frac{dx}{dt}]\frac{dx}{dt}-g(x)^{-1}\nabla f(x)-\frac{1}{2}g(x)^{-1}\tr\left[g(x)^{-1}Dg(x)\right]+\frac{1}{2}g(x)^{-1}\frac{dx}{dt}^{T}Dg(x)\frac{dx}{dt}.
\end{align*}
Hence, 
\begin{align}
\frac{d^{2}x}{dt^{2}}+g(x)^{-1}Dg(x)[\frac{dx}{dt}]\frac{dx}{dt}-\frac{1}{2}g(x)^{-1}\frac{dx}{dt}^{T}Dg(x)\frac{dx}{dt}= & -g(x)^{-1}\nabla f(x)-\frac{1}{2}g(x)^{-1}\tr\left[g(x)^{-1}Dg(x)\right].\label{eq:ham_dx2}
\end{align}

Using the formula of Christoffel symbols
\[
D_{t}\frac{dx}{dt}=\frac{d^{2}x}{dt^{2}}+\sum_{ijk}\frac{dx_{i}}{dt}\frac{dx_{j}}{dt}\Gamma_{ij}^{k}e_{k}\quad\text{where}\quad\Gamma_{ij}^{k}=\frac{1}{2}\sum_{l}g^{kl}(\partial_{j}g_{li}+\partial_{i}g_{lj}-\partial_{l}g_{ij}),
\]
we have that
\begin{align*}
D_{t}\frac{dx}{dt} & =\frac{d^{2}x}{dt^{2}}+\frac{1}{2}g(x)^{-1}\sum_{ijl}\frac{dx_{i}}{dt}\frac{dx_{j}}{dt}(\partial_{j}g_{li}+\partial_{i}g_{lj}-\partial_{l}g_{ij})e_{l}\\
 & =\frac{d^{2}x}{dt^{2}}+g(x)^{-1}Dg(x)[\frac{dx}{dt}]\frac{dx}{dt}-\frac{1}{2}g(x)^{-1}\frac{dx}{dt}^{T}Dg(x)\frac{dx}{dt}.
\end{align*}
Putting this into (\ref{eq:ham_dx2}) gives
\begin{align*}
D_{t}\frac{dx}{dt}= & -g(x)^{-1}\nabla f-\frac{1}{2}g(x)^{-1}\tr\left[g(x)^{-1}Dg(x)\right].
\end{align*}
\end{proof}
Motivated by this, we define the Hamiltonian map as the first component
of the Hamiltonian dynamics operator $T$ defined earlier. For the
reader familiar with Riemannian geometry, this is similar to the exponential
map (for background, see Appendix \ref{sec:RG}).
\begin{defn}
Let $\ham_{x,\step}(v_{x})=\gamma(\step)$ where $\gamma(t)$ be the
solution of the Hamiltonian equation $D_{t}\frac{d\gamma}{dt}=\mu$
with initial conditions $\gamma(0)=x$ and $\gamma'(0)=v_{x}$. We
also denote $\ham_{x,1}(v_{x})$ by $\ham_{x}(v_{x})$.
\end{defn}

We now give two examples of Hamiltonian Monte Carlo.
\begin{example}
When $g(x)=I$, the Hamiltonian curve acts like stochastic gradient
descent for the function $f$ with each random perturbation drawn
from a standard Gaussian.
\[
D_{t}\frac{dx}{dt}=-\nabla f(x).
\]
When $g(x)=\nabla^{2}f(x)$, the Hamiltonian curve acts like a stochastic
Newton curve for the function $f+\psi$:
\[
D_{t}\frac{dx}{dt}=-\left(\nabla^{2}f(x)\right)^{-1}\nabla(f(x)+\psi(x))
\]
where the volumetric function $\psi(x)=\log\det\nabla^{2}f(x)$.
\end{example}

Next we derive a formula for the transition probability in Euclidean
coordinates.
\begin{lem}
\label{lem:prob_formula} For any $x\in\mathcal{M}\subset\Rn$ and
$s>0$, the probability density of the 1-step distribution from $x$
is given by 
\begin{equation}
p_{x}(y)=\sum_{v_{x}:\ham_{x,\step}(v_{x})=y}\left|\det(D\ham_{x,\step}(v_{x}))\right|^{-1}\sqrt{\frac{\det\left(g(y)\right)}{\left(2\pi\right)^{n}}}\exp\left(-\frac{1}{2}\norm{v_{x}}_{x}^{2}\right)\label{eq:1_step_prof}
\end{equation}
where $D\ham_{x,\step}(v_{x})$ is the Jacobian of the Hamiltonian
map $\ham_{x,\step}$\textup{.}
\end{lem}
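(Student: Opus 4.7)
The plan is a direct pushforward: identify the law of the initial tangent vector $v_x=\frac{dx}{dt}(0)$ produced by the sampling step, then change variables under $y=\ham_{x,\delta}(v_x)$. For the first ingredient, the momentum $v$ sampled with density $e^{-H(x,v)}/\pi(x)$ reduces, using the form~(\ref{eq:H_manifold}) of $H$ together with the Gaussian integral $\pi(x)=e^{-f(x)}$, to $v\sim N(0,g(x))$. The identification $v_x=g(x)^{-1}v$ from Lemma~\ref{lem:RMCMC} then gives $v_x\sim N(0,g(x)^{-1})$, so that $v_x$ has density $(2\pi)^{-n/2}\exp(-\tfrac12\norm{v_x}_x^2)$ with respect to the Riemannian volume $\sqrt{\det g(x)}\,dv_x$ on $T_xM$.

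The next observation is that the sign randomization in the algorithm is redundant. The law of $v_x$ is invariant under $v_x\mapsto-v_x$, and the Hamiltonian ODE satisfies the time-reversal identity $\ham_{x,-\delta}(v_x)=\ham_{x,\delta}(-v_x)$ (a direct consequence of $\partial_v H$ being linear and $\partial_x H$ being even in $v$), so both forward and backward branches induce the same law on $y$. It therefore suffices to analyze $y=\ham_{x,\delta}(v_x)$. Applying the change of variables formula and summing over preimages, with Sard's theorem ensuring that the singular set of $\ham_{x,\delta}$ has zero image measure, gives
\[
p_x(y)=\sum_{v_x:\,\ham_{x,\delta}(v_x)=y}p_{v_x}(v_x)\,\bigl|\det D\ham_{x,\delta}(v_x)\bigr|^{-1}.
\]

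The only mildly delicate point is matching the Jacobian convention to the stated form. Interpreting $D\ham_{x,\delta}$ as the ratio of Riemannian volume elements between $T_xM$ and $M$, and substituting the Riemannian density $(2\pi)^{-n/2}e^{-\norm{v_x}_x^2/2}$ of $v_x$ identified above, yields the density of $y$ with respect to the Riemannian volume on $M$; multiplying by the conversion factor $\sqrt{\det g(y)}$ back to Lebesgue measure produces exactly~(\ref{eq:1_step_prof}). Equivalently, reading $D\ham_{x,\delta}$ as the naive Euclidean Jacobian would give a prefactor of $\sqrt{\det g(x)}$ in place of $\sqrt{\det g(y)}$, and the identity $\sqrt{\det g(x)}\,|\det D\ham|_{\mathrm{Eucl}}^{-1}=\sqrt{\det g(y)}\,|\det D\ham|_{\mathrm{Riem}}^{-1}$ follows from the symplectic volume preservation $|\det DT_\delta|=1$ of Lemma~\ref{lem:ham_measure_pre} combined with the coordinate changes $v=g(x)v_x$ and $w=g(y)w_y$ at the two endpoints. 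This Jacobian bookkeeping is the only non-routine ingredient; everything else is elementary change of variables and an appeal to the symmetries of the Hamiltonian flow already proved in Lemmas~\ref{lem:ham_eng_pre}--\ref{lem:ham_measure_pre}.
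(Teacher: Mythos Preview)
Your argument is correct and follows essentially the same route as the paper: identify the Gaussian density of $v_x$ with respect to the Riemannian volume on $T_x\mathcal{M}$, then push forward through $\ham_{x,\delta}$ and convert between the Riemannian and Lebesgue volume elements on the target. The paper phrases the last step as composing with $\text{id}_{\mathcal{M}\to\R^n}$, whose Jacobian determinant is $\det(g(y))^{-1/2}$; your ``multiply by $\sqrt{\det g(y)}$ to return to Lebesgue'' is the same computation.

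Two minor remarks. First, your paragraph on the sign randomization is a genuine addition: the paper's proof silently treats only the forward branch, and your time-reversal argument $\ham_{x,-\delta}(v_x)=\ham_{x,\delta}(-v_x)$ (valid because $\partial_v H$ is odd and $\partial_x H$ is even in $v$) cleanly explains why the mixture collapses. Second, your appeal to symplectic volume preservation (Lemma~\ref{lem:ham_measure_pre}) for the identity
\[
\sqrt{\det g(x)}\,\bigl|\det D\ham\bigr|_{\mathrm{Eucl}}^{-1}
=\sqrt{\det g(y)}\,\bigl|\det D\ham\bigr|_{\mathrm{Riem}}^{-1}
\]
is unnecessary and slightly misleading: this identity is just the definition of the Riemannian Jacobian as the ratio of volume forms, and holds for any smooth map between Riemannian manifolds regardless of any dynamical structure. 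Measure preservation of the full phase-space flow $T_\delta$ says nothing directly about the partial map $v_x\mapsto y$ with $x$ fixed. The identity you need is purely linear-algebraic, and invoking Lemma~\ref{lem:ham_measure_pre} here obscures that.
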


\begin{proof}
We prove the formula by separately considering each $v_{x}\in T_{x}\mathcal{M}$
s.t. $\ham_{x,\step}(v_{x})=y$, then summing up. In the tangent space
$T_{x}\mathcal{M}$, the point $v_{x}$ follows a Gaussian step. Therefore,
the probability density of $v_{x}$ in $T_{x}\mathcal{M}$ is as follows:
\[
p_{x}^{T_{x}\mathcal{M}}(v_{x})=\frac{1}{\left(2\pi\right)^{n/2}}\exp\left(-\frac{1}{2}\norm{v_{x}}_{x}^{2}\right).
\]
Let $y=\ham_{x,\step}(v_{x})$ and $F:T_{x}\mathcal{M}\rightarrow\Rn$
be defined by $F(v)=\text{id}_{\mathcal{M}\rightarrow\Rn}\circ\ham_{x,\step}(v)$.
Here $\Rn$ is the same set as $\mathcal{M}$ but endowed with the
Euclidean metric. Hence, we have
\[
DF(v_{x})=D\text{id}_{\mathcal{M}\rightarrow\Rn}(y)D\ham_{x,\step}(v_{x}).
\]
The result follows from $p_{x}(y)=\left|\det(DF(v_{x}))\right|^{-1}p_{x}^{T_{x}M}(v_{x})$
and
\begin{eqnarray*}
\det DF(v_{x}) & = & \det\left(D\text{id}_{\mathcal{M}\rightarrow\Rn}(y)\right)\det\left(D\ham_{x,\step}(v_{x})\right)\\
 & = & \det(g(y))^{-1/2}\det\left(D\ham_{x,\step}(v_{x})\right).
\end{eqnarray*}
\end{proof}

\pagebreak{}
\section{Convergence of Riemannian Hamiltonian Monte Carlo\label{sec:Convergence}}

Hamiltonian Monte Carlo is a Markov chain on a manifold whose stationary
stationary distribution has density $q(x)$ proportional to $\exp(-f(x))$.
We will bound the conductance of this Markov chain and thereby its
mixing time to converge to the stationary distribution. Bounding conductance
involves showing (a) the induced metric on the state space satisfies
a strong isoperimetric inequality and (b) two points that are close
in metric distance are also close in probabilistic distance, i.e.,
the one-step distributions from two nearby points have large overlap.
In this section and the next, we present general conductance bounds
using parameters determined by the associated manifold. In Section
\ref{sec:Logarithmic-Barrier}, we bound these parameters for the
manifold corresponding to the logarithmic barrier in a polytope.

\subsection{Basics of geometric Markov chains}

\label{sec:general_geometric_random_walk}For completeness, we will
discuss some standard techniques in geometric random walks in this
subsection. For a Markov chain with state space $\mathcal{M}$, stationary
distribution $q$ and next step distribution $p_{u}(\cdot)$ for any
$u\in\mathcal{M}$, the conductance of the Markov chain is 

\[
\phi\defeq\inf_{S\subset\mathcal{M}}\frac{\int_{S}p_{u}(\mathcal{M}\setminus S)dq(u)}{\min\left\{ q(S),q(\mathcal{M}\setminus S)\right\} }.
\]

The conductance of an ergodic Markov chain allows us to bound its
mixing time, i.e., the rate of convergence to its stationary distribution,
e.g., via the following theorem of Lovász and Simonovits.
\begin{thm}[\cite{LS93}]
\label{thm:cheeger}Let $q_{t}$ be the distribution of the current
point after $t$ steps of a Markov chain with stationary distribution
$q$ and conductance at least $\phi,$ starting from initial distribution
$q_{0}.$ For any $\varepsilon>0$, 
\[
d_{TV}(q_{t},q)\le\varepsilon+\sqrt{\frac{1}{\varepsilon}\E_{x\sim q_{0}}\frac{dq_{0}(x)}{dq(x)}}\left(1-\frac{\phi^{2}}{2}\right)^{t}.
\]
\end{thm}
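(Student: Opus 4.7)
The proof will follow the Lov\'asz--Simonovits potential-function method. For $u\in[0,1]$ define
\[
h_t(u) \defeq \sup\bigl\{ q_t(A) : A\subseteq \mathcal{M},\ q(A)\le u \bigr\},
\]
so that $d_{TV}(q_t,q) = \sup_{u\in[0,1]}(h_t(u)-u)$. The plan is to prove by induction on $t$ the pointwise estimate
\[
h_t(u) - u \;\le\; \sqrt{W\,\min(u,1-u)}\,\bigl(1-\tfrac{\phi^{2}}{2}\bigr)^{t}, \qquad W\defeq \E_{x\sim q_0}\tfrac{dq_0(x)}{dq(x)}.
\]
From this, splitting the supremum at the threshold $\min(u,1-u)=\varepsilon$ yields the theorem: on $\{\min(u,1-u)<\varepsilon\}$ we use the trivial bound $h_t(u)-u\le \min(u,1-u)<\varepsilon$, while on $\{\min(u,1-u)\ge\varepsilon\}$ we combine the inductive estimate with $\sqrt{\min(u,1-u)}\le 1/\sqrt{\varepsilon}$ to obtain $h_t(u)-u\le \sqrt{W/\varepsilon}(1-\phi^{2}/2)^{t}$.

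The base case $t=0$ is immediate from Cauchy--Schwarz: for $A$ with $q(A)=u\le 1/2$,
\[
q_0(A) \;\le\; \sqrt{q(A)}\,\sqrt{\E_q\bigl[(dq_0/dq)^{2}\mathbf{1}_A\bigr]} \;\le\; \sqrt{u\,W},
\]
and the range $u>1/2$ is handled by applying the same bound to $A^c$ and complementing. A preliminary observation used throughout is that $h_t$ is concave in $u$, which follows from a standard bathtub rearrangement showing that the extremizers of $h_t$ are precisely superlevel sets of $dq_t/dq$.

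The heart of the argument, and what I expect to be the main obstacle, is the one-step conductance recursion
\[
h_t(u) \;\le\; \tfrac{1}{2}\bigl(h_{t-1}(u-\psi(u)) + h_{t-1}(u+\psi(u))\bigr), \qquad \psi(u)\defeq 2\phi\min(u,1-u).
\]
To prove it, pick a near-extremal $A$ with $q(A)=u$ and decompose $A = A_1\sqcup A_2$ with $A_1\defeq\{x\in A: p_x(A^c)<\phi/2\}$; decompose $A^c=B_1\sqcup B_2$ analogously. The reversibility identity $dq(x)\,p_x(y)=dq(y)\,p_y(x)$ of Lemma~\ref{lem:time_reversibility}, combined with the conductance lower bound applied to the ``slow-moving'' parts $A_1$ and $B_1$, forces $q(A_2),\,q(B_2)\ge \psi(u)/2$ (otherwise $A_1$ or $B_1$ itself would witness a conductance below $\phi$). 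Writing $q_t(A)=\int q_{t-1}(x)\,p_x(A)\,dq(x)$, splitting the integrand according to this partition, and pairing outgoing versus incoming mass via reversibility bounds $q_t(A)$ by a convex combination of $q_{t-1}$-measures of two sets having $q$-measures $u-\psi(u)$ and $u+\psi(u)$; concavity of $h_{t-1}$ then collapses the bound to the stated symmetric average.

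Finally, the inductive step is obtained by substituting this recursion into the inductive hypothesis and invoking the elementary estimate
\[
\tfrac{1}{2}\sqrt{u-\psi(u)} + \tfrac{1}{2}\sqrt{u+\psi(u)} \;\le\; \sqrt{u}\,\bigl(1-\tfrac{\phi^{2}}{2}\bigr) \qquad (u\le 1/2),
\]
which follows from the second-order Taylor expansion of $\sqrt{1\pm a}$ around $a=0$ together with the identity $\psi(u)/u=2\phi$; the symmetric inequality covers $u\ge 1/2$. The only genuinely subtle step is the conductance recursion, whose proof rests delicately on reversibility and on a careful quantitative analysis of the ``slow'' parts $A_1$ and $B_1$; everything else is bookkeeping.
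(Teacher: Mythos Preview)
The paper does not supply its own proof of this statement; Theorem~\ref{thm:cheeger} is quoted from \cite{LS93} as a black box and then combined with Lemma~\ref{lem:gen-convergence_metric} to obtain Lemma~\ref{lem:geometric_walk}. Your sketch is precisely the Lov\'asz--Simonovits potential-function argument from that reference, so you have reconstructed exactly the proof the paper defers to rather than an alternative.

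One small wrinkle worth noting in your base case: for $u>1/2$, ``applying the same bound to $A^{c}$ and complementing'' does not quite work as written, because Cauchy--Schwarz on $A^{c}$ yields an \emph{upper} bound on $q_{0}(A^{c})$, whereas controlling $h_{0}(u)-u=(1-u)-\inf_{q(B)=1-u}q_{0}(B)$ needs a \emph{lower} bound on $q_{0}(B)$. The repair is immediate: the trivial bound $q_{0}(B)\ge 0$ together with $W=\E_{q}[(dq_{0}/dq)^{2}]\ge 1$ (Jensen) already gives $h_{0}(u)-u\le 1-u\le\sqrt{W(1-u)}$. Everything else in your outline is sound.
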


\begin{defn}
\label{def:manifold_conductance}The isoperimetry of a metric space
$\mathcal{M}$ with target distribution $q$ is 
\[
\psi\defeq\inf_{\delta>0}\min_{S\subset\mathcal{M}}\frac{\int_{d(S,x)\le\delta}q(x)dx-q(S)}{\delta\min\left\{ q(S),q(\mathcal{M}\setminus S)\right\} }
\]
where $d$ is the shortest path distance in $\mathcal{M}$. 
\end{defn}

The proof of the following theorem follows the standard outline for
geometric random walks (see e.g., \cite{VemSurvey}).
\begin{lem}
\label{lem:gen-convergence_metric}Given a metric space $\mathcal{M}$
and a time-reversible Markov chain $p$ on $\mathcal{M}$ with stationary
distribution $q$. Fix any $r>0$. Suppose that for any $x,y\in\mathcal{M}$
with $d(x,z)<r$, we have that $d_{TV}(p_{x},p_{y})\leq0.9$. Then,
the conductance of the Markov chain is $\Omega(r\psi)$. 
\end{lem}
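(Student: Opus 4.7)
The plan is to run the standard conductance-from-overlap argument in the metric-space setting, using the manifold isoperimetric constant $\psi$ in place of a Euclidean Cheeger constant. Fix an arbitrary measurable $S\subset\mathcal{M}$ and assume without loss of generality $q(S)\le q(\mathcal{M}\setminus S)$; the goal is to lower bound the flow $\int_{S}p_{u}(\mathcal{M}\setminus S)\,dq(u)$ by $\Omega(r\psi)\cdot q(S)$. First I would introduce the ``resistant'' sub-populations
\[
S_{1}\defeq\{u\in S:p_{u}(\mathcal{M}\setminus S)<\tfrac{1}{20}\},\qquad S_{2}\defeq\{u\in\mathcal{M}\setminus S:p_{u}(S)<\tfrac{1}{20}\}.
\]
For $x\in S_{1}$ and $y\in S_{2}$ we have $p_{x}(S)>19/20$ and $p_{y}(S)<1/20$, so $d_{TV}(p_{x},p_{y})>9/10$. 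Taking the contrapositive of the hypothesis then forces $d(x,y)\ge r$, hence $d(S_{1},S_{2})\ge r$.

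Next comes a dichotomy. If $q(S_{1})\le q(S)/2$, then every point of $S\setminus S_{1}$ escapes with probability at least $1/20$, so the flow is at least $(1/20)\cdot q(S)/2$ and the conductance is $\Omega(1)$. The symmetric case $q(S_{2})\le q(\mathcal{M}\setminus S)/2$ is handled by rewriting the flow as $\int_{\mathcal{M}\setminus S}p_{u}(S)\,dq(u)$ via time-reversibility. In the remaining case both $S_{1}$ and $S_{2}$ have measure at least $q(S)/2$, so I would apply the isoperimetric inequality from Definition \ref{def:manifold_conductance} to $S_{1}$. Since the $\delta$-neighborhood of $S_{1}$ is disjoint from $S_{2}$ for every $\delta<r$, it is contained in $S_{1}\cup S_{3}$, where $S_{3}\defeq\mathcal{M}\setminus(S_{1}\cup S_{2})$. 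Letting $\delta\uparrow r$ yields
\[
q(S_{3})\;\ge\;r\psi\cdot\min\{q(S_{1}),q(\mathcal{M}\setminus S_{1})\}\;\ge\;\tfrac{r\psi}{2}\cdot q(S),
\]
using $q(S_{1})\ge q(S)/2$ and $q(\mathcal{M}\setminus S_{1})\ge q(S_{2})\ge q(S)/2$.

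To finish, observe that every point of $S_{3}$ lies either in $S\setminus S_{1}$ (escape probability $\ge 1/20$ into $\mathcal{M}\setminus S$) or in $(\mathcal{M}\setminus S)\setminus S_{2}$ (escape probability $\ge 1/20$ into $S$). Symmetrizing the flow via reversibility,
\[
\int_{S}p_{u}(\mathcal{M}\setminus S)\,dq(u)=\tfrac{1}{2}\!\left[\int_{S}p_{u}(\mathcal{M}\setminus S)\,dq(u)+\int_{\mathcal{M}\setminus S}p_{u}(S)\,dq(u)\right]\ge\tfrac{1}{40}\cdot q(S_{3}),
\]
which combines with the previous display to deliver the claimed $\Omega(r\psi)$ lower bound on conductance. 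The main point to handle carefully is the isoperimetry step: since $\psi$ is defined as an infimum over $\delta>0$, one must apply Definition \ref{def:manifold_conductance} at each $\delta<r$ separately and pass to the limit $\delta\uparrow r$ using monotonicity of the $\delta$-neighborhoods; the rest is bookkeeping with threshold constants chosen so that the total-variation gap strictly exceeds the $0.9$ assumed in the hypothesis.
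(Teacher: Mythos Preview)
Your proposal is correct and follows essentially the same argument as the paper: define the resistant sets $S_1,S_2$ with threshold $0.05$, deduce $d(S_1,S_2)\ge r$ from the overlap hypothesis, apply the isoperimetric constant $\psi$ to lower bound $q(S_3)$, and finish via the symmetrized flow. Your explicit remark about taking $\delta\uparrow r$ in Definition~\ref{def:manifold_conductance} is a nice clarification that the paper leaves implicit.
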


\begin{proof}
Let $S$ be any measurable subset of $\mathcal{M}$. Then our goal
is to bound the conductance of the Markov chain
\[
\frac{\int_{S}p_{x}(\mathcal{M}\setminus S)\,dq(x)}{\min\left\{ q(S),q(\mathcal{M}\setminus S)\right\} }=\Omega\left(r\psi\right).
\]
Since the Markov chain is time-reversible (For any two subsets $A,B$,
$\int_{A}p_{x}(B)\,dq(x)=\int_{B}p_{x}(A)\,dq(x)$), we can write
the numerator of the left hand side above as 
\[
\frac{1}{2}\left(\int_{S}p_{x}(\mathcal{M}\setminus S)\,dq(x)+\int_{\mathcal{M}\setminus S}p_{x}(S)\,dq(x)\right).
\]
Define
\begin{align*}
S_{1} & =\{x\in S\,:\,p_{x}(\mathcal{M}\setminus S)<0.05\}\\
S_{2} & =\{x\in\mathcal{M}\setminus S\,:\,p_{x}(S)<0.05\}\\
S_{3} & =\mathcal{M}\setminus S_{1}\setminus S_{2}.
\end{align*}
Without loss of generality, we can assume that $q(S_{1})\ge(1/2)q(S)$
and $q(S_{2})\ge(1/2)q(\mathcal{M}\setminus S)$ (if not, $\int_{S}p_{x}(\mathcal{M}\setminus S)\,dq(x)=\Omega(1)$
and hence the conductance is $\Omega(1)$.)

Next, we note that for any two points $x\in S_{1}$ and $y\in S_{2}$,
$d_{TV}(p_{x},p_{y})>0.9$. Therefore, by the assumption, we have
that $d(x,y)\geq r$. Therefore, by the definition of $\psi_{r}$,
we have that
\begin{align*}
q(S_{3}) & \geq\int_{d(S_{1},x)\le r}q(x)dx-q(S_{1})\\
 & \geq r\psi\min\left\{ q(S_{1}),q(\mathcal{M}\setminus S_{1})\right\} \\
 & \geq r\psi\min\left\{ q(S_{1}),q(S_{2})\right\} .
\end{align*}
Going back to the conductance, 

\begin{align*}
\frac{1}{2}\left(\int_{S}p_{x}(\mathcal{M}\setminus S)\,dq(x)+\int_{\mathcal{M}\setminus S}p_{x}(S)\,dq(x)\right) & \ge\frac{1}{2}\int_{S_{3}}(0.05)dq(x)\\
 & =\Omega\left(r\psi\right)\min\{q(S_{1}),q(S_{2})\}\\
 & =\Omega\left(r\psi\right)\min\{q(S),q(M\setminus S)\}.
\end{align*}
Therefore, the conductance of the Markov chain is $\Omega(r\psi)$. 
\end{proof}
Combining Theorem \ref{lem:gen-convergence_metric} and Lemma \ref{lem:gen-convergence_metric}
gives the following result for bounding mixing time of general geometric
random walk.
\begin{lem}
\label{lem:geometric_walk}Given a metric space $\mathcal{M}$ and
a time-reversible Markov chain $p$ on $\mathcal{M}$ with stationary
distribution $q$. Suppose that there exist $r>0$ and $\psi>0$ such
that
\begin{enumerate}
\item For any $x,y\in\mathcal{M}$ with $d(x,z)<r$, we have that $d_{TV}(p_{x},p_{y})\leq0.9$.
\item For any $S\subset\mathcal{M}$, we have that
\[
\int_{0<d(S,x)\le r}q(x)dx\geq r\psi\min\left\{ q(S),q(\mathcal{M}\setminus S)\right\} .
\]
\end{enumerate}
Let $q_{t}$ be the distribution of the current point after $t$ steps
of a Markov chain with stationary distribution $q$ starting from
initial distribution $q_{0}.$ For any $\varepsilon>0$, 
\[
d_{TV}(q_{t},q)\le\varepsilon+\sqrt{\frac{1}{\varepsilon}\E_{x\sim q_{0}}\frac{dq_{0}(x)}{dq(x)}}\left(1-\Omega(r^{2}\psi^{2})\right)^{t}.
\]
\end{lem}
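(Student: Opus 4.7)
The statement is essentially a packaging of the two results that immediately precede it, namely Theorem~\ref{thm:cheeger} (the Lov\'asz--Simonovits conductance bound) and Lemma~\ref{lem:gen-convergence_metric} (conductance from one-step overlap plus isoperimetry). The plan is therefore to first establish that the two hypotheses force the conductance of the chain to be $\Omega(r\psi)$, and then to feed this into Theorem~\ref{thm:cheeger}.

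For the first step, I would essentially reuse the proof of Lemma~\ref{lem:gen-convergence_metric}. Fix a measurable $S$ and split it via
\[
S_{1}=\{x\in S:p_{x}(\mathcal{M}\setminus S)<0.05\},\quad S_{2}=\{x\in\mathcal{M}\setminus S:p_{x}(S)<0.05\},\quad S_{3}=\mathcal{M}\setminus(S_{1}\cup S_{2}).
\]
As before, one may assume $q(S_{1})\ge \tfrac12 q(S)$ and $q(S_{2})\ge \tfrac12 q(\mathcal{M}\setminus S)$, since otherwise the ergodic flow out of $S$ is already $\Omega(1)$. Hypothesis (1) forces $d(x,y)\ge r$ for every $x\in S_{1}$, $y\in S_{2}$ (otherwise $d_{TV}(p_{x},p_{y})\le 0.9$ would contradict the disjointness $\|p_{x}-p_{y}\|_{TV}\ge 1-0.05-0.05$). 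Hence $\{x:0<d(S_{1},x)\le r\}$ is disjoint from $S_{2}$ and contained in $S_{3}$, so hypothesis (2) applied to $S_{1}$ gives $q(S_{3})\ge r\psi\min\{q(S_{1}),q(S_{2})\}$. The usual time-reversibility identity then yields
\[
\int_{S}p_{x}(\mathcal{M}\setminus S)\,dq(x)\;\ge\;\tfrac12\cdot 0.05\cdot q(S_{3})\;=\;\Omega(r\psi)\min\{q(S),q(\mathcal{M}\setminus S)\},
\]
so the conductance satisfies $\phi=\Omega(r\psi)$. Note that Lemma~\ref{lem:gen-convergence_metric} as stated invokes the full isoperimetric constant of Definition~\ref{def:manifold_conductance}, but its proof only uses the inequality at the single scale $\delta=r$, which is precisely what hypothesis (2) provides (via $\int_{0<d(S,\cdot)\le r}q=\int_{d(S,\cdot)\le r}q-q(S)$).

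The second step is a direct substitution: Theorem~\ref{thm:cheeger} applied with the conductance $\phi=\Omega(r\psi)$ gives
\[
d_{TV}(q_{t},q)\;\le\;\varepsilon+\sqrt{\frac{1}{\varepsilon}\,\E_{x\sim q_{0}}\frac{dq_{0}(x)}{dq(x)}}\,\bigl(1-\tfrac{\phi^{2}}{2}\bigr)^{t},
\]
and $1-\phi^{2}/2=1-\Omega(r^{2}\psi^{2})$ yields the claimed bound.

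There is no real obstacle here beyond bookkeeping; the only subtle point is the mismatch between the $\psi$ of Definition~\ref{def:manifold_conductance} (an infimum over all $\delta$) and the scale-$r$ version appearing in hypothesis (2), and this is resolved by observing that the conductance argument only uses the isoperimetric inequality at $\delta=r$. I would probably write the proof as ``apply Lemma~\ref{lem:gen-convergence_metric} to obtain $\phi=\Omega(r\psi)$, then apply Theorem~\ref{thm:cheeger}'', with a one-line remark that Lemma~\ref{lem:gen-convergence_metric} only requires the isoperimetric hypothesis at the relevant scale, as formulated in condition~(2).
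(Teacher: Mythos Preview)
Your proposal is correct and matches the paper's approach exactly: the paper states this lemma with no proof beyond the sentence ``Combining Theorem~\ref{thm:cheeger} and Lemma~\ref{lem:gen-convergence_metric} gives the following result,'' and you have spelled out precisely that combination, including the only non-automatic point (that the proof of Lemma~\ref{lem:gen-convergence_metric} uses the isoperimetric inequality only at the single scale $\delta=r$, which is what hypothesis~(2) supplies).
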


\subsection{Overlap of one-step distributions}

The mixing of the walk depends on smoothness parameters of the manifold
and the functions $f,g$ used to define the Hamiltonian. Since each
step of our walk involves a Gaussian vector, many smoothness parameters
depend on choices of the random vector. Formally, let $\gamma$ be
the Hamiltonian curve used in a step of Hamiltonian Monte Carlo. In
the analysis, we need a large fraction of Hamiltonian curves from
any point on the manifold to be well-behaved. A Hamiltonian curve
can be problematic when its velocity or length is too large and this
happens with non-zero probability. Rather than using supremum bounds
for our smoothness parameters, it suffices to use large probability
bounds, where the probability is over the random choice of Hamiltonian
curve at any point $x\in\Omega$. To capture the notion that ``most
Hamiltonian curves are well-behaved'', we use an auxiliary function
$\ell(\gamma)\geq0$ which assigns a real number to each Hamiltonian
curve $\gamma$ and measures how ``good'' the curve is. The smoothness
parameters assume that this function $\ell$ is bounded and Lipshitz.
One possible choice of such $\ell$ is $\ell(\gamma)=\norm{\gamma'(0)}_{\gamma(0)}$
which measures the initial velocity, but this will give us a weaker
bound. Instead, we use the following which jointly bounds the change
in position (first term) and change in velocity (second term).
\begin{defn}
\label{def:ell_func}An \emph{auxiliary} function $\ell$ is a non-negative
real-valued function on the set of Hamiltonian curves, i.e., maps
$\gamma:[0,\step]\rightarrow M$, with bounded parameters $\ell_{0}$,
$\ell_{1}$ such that 
\begin{enumerate}
\item For any variation $\gamma_{s}$ of a Hamiltonian curve (see Definition
\ref{def:variation}) with $\ell(\gamma_{s})\leq\ell_{0}$, we have
\[
\left|\frac{d}{ds}\ell(\gamma_{s})\right|\leq\ell_{1}\left(\norm{\frac{d}{ds}\gamma_{s}(0)}_{\gamma_{s}(0)}+\step\norm{D_{s}\gamma_{s}'(0)}_{\gamma_{s}(0)}\right).
\]
 
\item For any $x\in M$, $\P_{\gamma\sim x}(\ell(\gamma)\leq\frac{1}{2}\ell_{0})\geq1-\frac{1}{100}\min\left(1,\frac{\ell_{0}}{\ell_{1}\delta}\right)$
where $\gamma\sim x$ indicates a random Hamiltonian curve starting
at $x$, chosen by picking a random Gaussian initial velocity according
to the local metric at $x$. 
\end{enumerate}
\end{defn}

\subsubsection{Proof Outline}

To bound the conductance of HMC, we need to show that one-step distributions
from nearby points have large overlap for reasonably large step size
$\delta$. To this end, recall that the probability density of going
from $x$ to $y$ is given by the following formula
\[
p_{x}(y)=\sum_{v_{x}:\ham_{x,\step}(v_{x})=y}\left|\det\left(D\ham_{x,\step}(v_{x})\right)\right|^{-1}\sqrt{\frac{\det\left(g(y)\right)}{\left(2\pi\right)^{n}}}\exp\left(-\frac{1}{2}\norm{v_{x}}_{x}^{2}\right).
\]
In Section \ref{subsec:Variation-of-Hamiltonian}, we introduce the
concept of variations of Hamiltonian curves and use it to bound $\left|\det\left(D\ham_{x,\step}(v_{x})\right)\right|^{-1}$.
We can show that $p_{x}(y)$ is in fact close to 
\begin{equation}
\widetilde{p}_{x}(y)=\sum_{v_{x}:\ham_{x,\step}(v_{x})=y}\frac{1}{\delta^{n}}\cdot\sqrt{\frac{\det\left(g(y)\right)}{\left(2\pi\right)^{n}}}\exp\left(-\frac{1}{2}\norm{v_{x}}_{x}^{2}\right).\label{eq:p_tilde_x_simplified}
\end{equation}
To compare $p_{x}(y)$ with $p_{z}(y)$, we need to relate $v_{x}$
and $v_{z}$ that map $x$ and $z$ to $y$ respectively. In Section
\ref{subsec:unique_curves}, we shows that if $x$ and $z$ are close
enough, for every $v_{x}$, there is a unique $v_{z}$ such that $v_{x}$
is close to $v_{z}$ and that $\ham_{z,\step}(v_{z})=\ham_{x,\step}(v_{x})$.
Combining these facts, we obtain our main theorem for this section,
stated in Subsection \ref{sec:Smoothness-of-P}.

In the analysis, we use three important operators from the tangent
space to itself. The motivation for defining these operators comes
directly from Lemma \ref{lem:Jacobi_field}, which studies the variation
in Hamiltonian curves as the solution of a Jacobi equation. In words,
the operator $R(.)$ below allows us to write the change in the Hamiltonian
curve as an ODE. 
\begin{defn}
\label{def:R_M_Phi}Given a Hamiltonian curve $\gamma$, let $R(\gamma,t)$,
$M(\gamma,t)$ and $\Phi(\gamma,t)$ be the operators from $TM$ to
$TM$ defined by
\begin{align*}
R(t)u & =R(u,\gamma'(t))\gamma'(t),\\
M(t)u & =D_{u}\mu(\gamma(t)),\\
\Phi(t)u & =M(t)u-R(t)u.
\end{align*}
When $\gamma$ is explicit from the context, we simply write them
as $R(t)$, $M(t)$ and $\Phi(t)$.
\end{defn}

The key parameter $R_{1}$ we use in this section is a bound on the
Frobenius norm of $\Phi$ formally defined as follows.
\begin{defn}
\label{def:R_1}Given a manifold $\mathcal{M}$ with metric $g$ and
an auxiliary function $\ell$ with parameters $\ell_{0},\ell_{1},$
we define the smoothness parameter $R_{1}$ depending only on $\mathcal{M}$
and the step size $\step$ such that 
\[
\norm{\Phi(\gamma,t)}_{F,\gamma(t)}\leq R_{1}
\]
for any $\gamma$ such that $\ell(\gamma)\leq\ell_{0}$ and any $0\leq t\leq\delta$
where the Frobenius norm $\norm A_{F,\gamma(t)}$ is defined by $\norm A_{F,\gamma(t)}^{2}=\E_{\alpha,\beta\sim N(0,g(x)^{-1})}(\alpha^{T}A\beta)^{2}$.
\end{defn}

The above definitions are related to but different from our previous
paper analyzing the geodesic walk \cite{LeeV16}. 

\subsubsection{Variation of Hamiltonian curve}

\label{subsec:Variation-of-Hamiltonian}To bound the determinant of
the Jacobian of $\ham_{x}$, we study variations of Hamiltonian curves.
\begin{defn}
\label{def:variation}We call $\gamma_{s}(t)$ a Hamiltonian variation
if $\gamma_{s}(\cdot)$ satisfies the Hamiltonian equation for every
$s$. We call $\frac{\partial\gamma_{s}}{\partial s}$ a Jacobi field.
\end{defn}

The following lemma shows that a Jacobi field satisfies the following
Jacobi equation. 
\begin{lem}
\label{lem:Jacobi_field}Given a path $c(s)$, let $\gamma_{s}(t)=\ham_{c(s)}(t(v+sw))$
be a Hamiltonian variation. The Jacobi field $\psi(t)\defeq\frac{\partial}{\partial s}\gamma_{s}(t)|_{s=0}$
satisfies the following Jacobi equation
\begin{align}
D_{t}^{2}\psi(t) & =\Phi(t)\psi(t)\label{eq:Jacobi_equ}
\end{align}
Let $\Gamma_{t}$ parallel transport from $T_{\gamma(t,0)}\mathcal{M}$
to $T_{\gamma(0,0)}\mathcal{M}$ and $\overline{\psi}(t)=\Gamma_{t}\psi(t)$.
Then, $\overline{\psi}(t)$ satisfies the following ODE on the tangent
space $T_{\gamma(0,0)}\mathcal{M}$:
\begin{align}
\overline{\psi}''(t) & =\Gamma_{t}\Phi(t)\Gamma_{t}^{-1}\overline{\psi}(t)\quad\forall t\geq0,\label{eq:Jacobi_equ_2}\\
\overline{\psi}'(0) & =w,\nonumber \\
\overline{\psi}(0) & =D_{s}c(0).\nonumber 
\end{align}
\end{lem}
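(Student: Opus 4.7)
The goal is to derive a second-order covariant ODE for the variation field $\psi(t)$ by differentiating the Hamiltonian equation in the variation parameter $s$, and then to rephrase the ODE as an equation on a fixed tangent space via parallel transport.

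The starting point is that each $\gamma_s$ is a Hamiltonian curve and so satisfies $D_t \gamma_s'(t) = \mu(\gamma_s(t))$ by Lemma~\ref{lem:RMCMC}. Covariant differentiation in $s$ at $s=0$ of the right-hand side yields $D_{\psi(t)}\mu = M(t)\psi(t)$ by Definition~\ref{def:R_M_Phi}. To handle the left-hand side $D_s D_t \gamma_s'$, I would invoke the two standard identities on a two-parameter map $(s,t)\mapsto \gamma_s(t)$: the torsion-free (symmetry) identity $D_s \partial_t \gamma = D_t \partial_s \gamma$, which converts $D_s(\partial_t \gamma)$ into $D_t \psi$, and the curvature identity $D_s D_t V - D_t D_s V = R(\partial_s \gamma, \partial_t \gamma)V$ applied to $V = \partial_t \gamma$. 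Putting these together gives
\[
D_s D_t \gamma_s'\big|_{s=0} \;=\; D_t^2 \psi(t) + R(\psi(t),\gamma'(t))\gamma'(t) \;=\; D_t^2\psi(t) + R(t)\psi(t).
\]
Equating this with $M(t)\psi(t)$ and rearranging yields the first equation $D_t^2 \psi = (M(t)-R(t))\psi = \Phi(t)\psi$.

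For the second equation, I would push the Jacobi equation through parallel transport. The defining property of $\Gamma_t$ is that it intertwines the covariant derivative with the ordinary derivative, in the sense that $\Gamma_t D_t V(t) = \frac{d}{dt}(\Gamma_t V(t))$ for any vector field $V$ along $\gamma$. Applying this twice gives $\overline{\psi}''(t) = \Gamma_t D_t^2 \psi(t)$, and substituting $D_t^2\psi = \Phi(t)\psi$ together with $\psi = \Gamma_t^{-1}\overline{\psi}$ produces the stated linear ODE $\overline{\psi}''(t) = \Gamma_t\Phi(t)\Gamma_t^{-1}\overline{\psi}(t)$. The initial conditions then come out automatically: $\overline{\psi}(0) = \psi(0) = \partial_s \gamma_s(0)|_{s=0} = \partial_s c(s)|_{s=0} = D_s c(0)$, and $\overline{\psi}'(0) = D_t\psi(0) = D_s \gamma_s'(0)|_{s=0} = w$, where the latter uses symmetry once more together with the fact that in the expression $\gamma_s'(0) = v + sw$ the base vector $v$ is parallel along $c$, so $D_s v|_{s=0}=0$ and only the explicit $w$ term survives.

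The main (and essentially only) obstacle is bookkeeping with the torsion-free and curvature identities on the two-parameter surface $(s,t)\mapsto \gamma_s(t)$, and taking care with the convention that vectors at different basepoints along $c$ are compared via parallel transport so that $D_s(v+sw)|_{s=0} = w$; once those are set up correctly, the rest of the proof is algebraic manipulation using Definition~\ref{def:R_M_Phi}.
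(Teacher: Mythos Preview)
Your proposal is correct and follows essentially the same approach as the paper: differentiate the Hamiltonian equation $D_t\gamma_s' = \mu(\gamma_s)$ covariantly in $s$, apply the curvature identity (Fact~\ref{fact:formula_R}) together with torsion-freeness to obtain $D_t^2\psi = \Phi(t)\psi$, and then use the intertwining property of parallel transport with $D_t$ to convert this into an ordinary ODE on $T_{\gamma(0,0)}\mathcal{M}$. Your treatment of the initial conditions is actually more explicit than the paper's (which simply records $\overline{\psi}'(0)=w$ and $\overline{\psi}(0)=D_s c(0)$ without comment), and you correctly flag the convention needed so that $D_s(v+sw)|_{s=0}=w$.
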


\begin{proof}
Taking derivative $D_{s}$ on both sides of $D_{t}\frac{\partial\gamma}{\partial t}=\mu(\gamma)$,
and using Fact \ref{fact:formula_R}, we get 
\begin{align*}
D_{s}\mu(\gamma) & =D_{s}D_{t}\frac{\partial\gamma}{\partial t}\\
 & =D_{t}D_{s}\frac{\partial\gamma}{\partial t}+R(\frac{\partial\gamma}{\partial s},\frac{\partial\gamma}{\partial t})\frac{\partial\gamma}{\partial t}\\
 & =D_{t}^{2}\frac{\partial\gamma}{\partial s}+R(\frac{\partial\gamma}{\partial s},\frac{\partial\gamma}{\partial t})\frac{\partial\gamma}{\partial t}.
\end{align*}
In short, we have $D_{t}^{2}\psi(t)=\Phi(t)\psi(t)$. This shows (\ref{eq:Jacobi_equ}). 

Equation (\ref{eq:Jacobi_equ_2}) follows from the fact that 
\[
D_{t}v(t)=\Gamma_{t}\frac{d}{dt}\left(\Gamma_{t}^{-1}v(t)\right)
\]
for any vector field on $\gamma_{0}(t)$ (see Definition \ref{def:Directional-derivatives}
in the appendix) applied to $v(t)=\overline{\psi}'(t)$.
\end{proof}
We now proceed to estimate the determinant of the Jacobian of $\ham_{x}$.
For this we will use the following elementary lemmas describing the
solution of the following second-order matrix ODE:
\begin{align}
\frac{d^{2}}{dt^{2}}\Psi(t) & =\Phi(t)\Psi(t),\label{eq:matrix_ODE}\\
\frac{d}{dt}\Psi(0) & =B,\nonumber \\
\Psi(0) & =A.\nonumber 
\end{align}
\begin{lem}
\label{lem:ODE_upper}Consider the matrix ODE (\ref{eq:matrix_ODE}).
Let $\lambda=\max_{0\leq t\leq\ell}\norm{\Phi(t)}_{2}$ . For any
$t\geq0$, we have that
\[
\norm{\Psi(t)}_{2}\leq\norm A_{2}\cosh(\sqrt{\lambda}t)+\frac{\norm B_{2}}{\sqrt{\lambda}}\sinh(\sqrt{\lambda}t).
\]
\end{lem}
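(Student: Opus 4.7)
The plan is to convert the second-order matrix ODE into a Volterra integral equation, expand its solution as a Picard series, and bound each term by induction. Integrating $\Psi''(t) = \Phi(t)\Psi(t)$ twice and swapping the order of integration via Fubini gives
\[
\Psi(t) = A + tB + \int_0^t (t-s)\,\Phi(s)\Psi(s)\,ds.
\]
I would then solve this equation via Picard iteration: setting $\Psi_0(t) := A + tB$ and $\Psi_{k+1}(t) := \int_0^t (t-s)\,\Phi(s)\Psi_k(s)\,ds$, the series $\sum_{k\geq 0}\Psi_k(t)$ converges absolutely and uniformly on $[0,\ell]$ (using $\norm{\Phi(s)}_2 \leq \lambda$) to the unique solution $\Psi(t)$.

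The key step is then to show by induction on $k$ that
\[
\norm{\Psi_k(t)}_2 \leq \norm A_2\,\frac{(\sqrt{\lambda}\,t)^{2k}}{(2k)!} + \norm B_2\,t\,\frac{(\sqrt{\lambda}\,t)^{2k}}{(2k+1)!}.
\]
The base case $k=0$ is immediate. For the inductive step, submultiplicativity of the operator norm together with $\norm{\Phi(s)}_2 \leq \lambda$ reduce the argument to the elementary integrals $\int_0^t (t-s)s^{2k}\,ds = t^{2k+2}/((2k+1)(2k+2))$ and $\int_0^t (t-s)s^{2k+1}\,ds = t^{2k+3}/((2k+2)(2k+3))$, which produce exactly the right factorials at level $k+1$.

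Finally, summing over $k$ and using $\sum_{k\geq 0} x^{2k}/(2k)! = \cosh(x)$ and $\sum_{k\geq 0} x^{2k+1}/(2k+1)! = \sinh(x)$, I would obtain
\[
\norm{\Psi(t)}_2 \leq \norm A_2 \cosh(\sqrt{\lambda}\,t) + \frac{\norm B_2}{\sqrt{\lambda}} \sinh(\sqrt{\lambda}\,t),
\]
as claimed. The argument is entirely classical, and no real obstacle is anticipated; the only mildly subtle items are justifying the Fubini step (straightforward because $\Phi$ and $\Psi$ are continuous on the compact interval $[0,\ell]$) and the bookkeeping of factorials in the induction. An alternative route via a Gronwall-type comparison with the scalar ODE $\psi''(t) = \lambda\psi(t)$ with matching initial data $\psi(0) = \norm A_2$, $\psi'(0) = \norm B_2$ would also work, but the Picard-iteration presentation is more direct.
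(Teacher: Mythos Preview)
Your proposal is correct. Interestingly, the paper takes precisely the alternative route you mention at the end: after deriving the same integral equation $\Psi(t)=A+tB+\int_0^t (t-s)\Phi(s)\Psi(s)\,ds$, it passes to the scalar inequality $a(t)\le \|A\|_2+t\|B\|_2+\lambda\int_0^t(t-s)a(s)\,ds$ for $a(t)=\|\Psi(t)\|_2$, and then compares $a$ with the exact solution $\overline a$ of the corresponding scalar integral \emph{equation}, which satisfies $\overline a''=\lambda\overline a$ with $\overline a(0)=\|A\|_2$, $\overline a'(0)=\|B\|_2$ and hence equals the claimed $\cosh/\sinh$ expression. Your Picard-iteration argument is an explicit unrolling of that comparison: it exhibits the Taylor coefficients of $\cosh$ and $\sinh$ directly and makes the convergence of the series manifest, whereas the paper's Gronwall-style step hides this inside the phrase ``by induction, $a(t)\le\overline a(t)$.'' Both arguments rest on the same integral equation and the same bound $\|\Phi(s)\|_2\le\lambda$, so the difference is purely presentational.
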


\begin{lem}
\label{lem:matrix_ODE_est_1}Consider the matrix ODE (\ref{eq:matrix_ODE}).
Let $\lambda=\max_{0\leq t\leq\ell}\norm{\Phi(t)}_{F}$. For any $0\leq t\leq\frac{1}{\sqrt{\lambda}}$,
we have that
\[
\norm{\Psi(t)-A-Bt}_{F}\leq\lambda\left(t^{2}\norm A_{2}+\frac{t^{3}}{5}\norm B_{2}\right).
\]
In particular, this shows that
\[
\Psi(t)=A+Bt+\int_{0}^{t}(t-s)\Phi(s)(A+Bs+E(s))ds
\]
with $\norm{E(s)}_{F}\leq\lambda\left(s^{2}\norm A_{2}+\frac{s^{3}}{5}\norm B_{2}\right)$. 
\end{lem}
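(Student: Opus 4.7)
The plan is to convert the second-order matrix ODE $\Psi''=\Phi\Psi$ into integral form and then apply the a priori bound from Lemma~\ref{lem:ODE_upper}. Integrating twice from $0$ using the initial conditions $\Psi(0)=A$, $\Psi'(0)=B$ yields
\[
\Psi(t)=A+Bt+\int_{0}^{t}(t-s)\Phi(s)\Psi(s)\,ds.
\]
Setting $E(s)\defeq\Psi(s)-A-Bs$ and substituting $\Psi(s)=A+Bs+E(s)$ inside the integral gives the ``in particular'' formula of the lemma immediately; what remains is just the $\|E(s)\|_{F}$ bound.

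For that bound, I would take the Frobenius norm inside the integral and use the mixed-norm inequality $\norm{\Phi(s)\Psi(s)}_{F}\le\norm{\Phi(s)}_{F}\norm{\Psi(s)}_{2}\le\lambda\norm{\Psi(s)}_{2}$. To control $\norm{\Psi(s)}_{2}$, I apply Lemma~\ref{lem:ODE_upper}, noting that its $\lambda$-parameter is $\max\norm{\Phi}_{2}$, which is bounded a fortiori by our $\lambda=\max\norm{\Phi}_{F}$. This yields
\[
\norm{E(t)}_{F}\le\lambda\int_{0}^{t}(t-s)\left(\norm A_{2}\cosh(\sqrt{\lambda}s)+\tfrac{\norm B_{2}}{\sqrt{\lambda}}\sinh(\sqrt{\lambda}s)\right)ds.
\]
The inner integrals are elementary: integration by parts gives $\int_{0}^{t}(t-s)\cosh(\sqrt{\lambda}s)\,ds=(\cosh(\sqrt{\lambda}t)-1)/\lambda$ and $\int_{0}^{t}(t-s)\sinh(\sqrt{\lambda}s)\,ds=(\sinh(\sqrt{\lambda}t)-\sqrt{\lambda}t)/\lambda$. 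Writing $\tau\defeq\sqrt{\lambda}t$, this collapses to
\[
\norm{E(t)}_{F}\le\norm A_{2}(\cosh\tau-1)+\tfrac{\norm B_{2}}{\sqrt{\lambda}}(\sinh\tau-\tau).
\]

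The final step is to replace the hyperbolic functions by their leading Taylor terms for $\tau\in[0,1]$, which is the regime $t\le 1/\sqrt{\lambda}$. The functions $\tau\mapsto(\cosh\tau-1)/\tau^{2}$ and $\tau\mapsto(\sinh\tau-\tau)/\tau^{3}$ are increasing on $[0,1]$ (all power-series coefficients are positive), so their maxima occur at $\tau=1$: namely $\cosh1-1\approx0.543<1$ and $\sinh1-1\approx0.1752<1/5$. Hence $\cosh\tau-1\le\tau^{2}$ and $\sinh\tau-\tau\le\tau^{3}/5$ for $\tau\in[0,1]$, which upon substituting $\tau^{2}=\lambda t^{2}$ and $\tau^{3}/\sqrt{\lambda}=\lambda t^{3}$ produces the claimed estimate $\norm{E(t)}_{F}\le\lambda(t^{2}\norm A_{2}+\tfrac{t^{3}}{5}\norm B_{2})$.

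The only mildly delicate point is getting the constant $1/5$ rather than the $1/6$ one would obtain from the first term of the Taylor series of $\sinh\tau-\tau$: the tail contributions $\tau^{5}/120+\tau^{7}/5040+\cdots$ must be absorbed, which is why I use the exact value $\sinh1-1$ together with monotonicity of $(\sinh\tau-\tau)/\tau^{3}$ rather than a crude term-by-term bound. Everything else is routine manipulation, and nothing in the argument relies on $\Phi(s)$ being symmetric or otherwise structured.
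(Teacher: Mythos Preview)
Your proof is correct and follows essentially the same route as the paper's: convert to the integral form, define $E(t)=\Psi(t)-A-Bt$, bound $\norm{\Phi(s)\Psi(s)}_{F}\le\lambda\norm{\Psi(s)}_{2}$, feed in the $\cosh/\sinh$ bound from Lemma~\ref{lem:ODE_upper}, integrate explicitly, and then use the elementary inequalities $\cosh\tau-1\le\tau^{2}$ and $\sinh\tau-\tau\le\tau^{3}/5$ on $[0,1]$. Your observation that Lemma~\ref{lem:ODE_upper} is stated with $\max\norm{\Phi}_{2}$ but applies a fortiori with $\max\norm{\Phi}_{F}$ is a point the paper leaves implicit, and your justification of the constant $1/5$ via monotonicity of $(\sinh\tau-\tau)/\tau^{3}$ is cleaner than simply asserting the numerical inequality.
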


The proofs of these lemmas are in Appendix \ref{sec:Matrix-ODE}.
We continue with the main proof here.
\begin{lem}
\label{lem:Jac-approx}Let $\gamma(t)=\ham_{x}(tv_{x})$ be a Hamiltonian
curve and step size $\step$ satisfy $0<\step^{2}\leq\frac{1}{R_{1}}$
where $R_{1}=\max_{0\leq t\leq h}\norm{\Phi(t)}_{F,\gamma(t)}$. Then
$D\ham_{x,\step}$ is invertible with $\norm{D\ham_{x,\step}-\step I}_{F,\gamma(\delta)}\leq\frac{\step}{5}$.
Also, we have,
\begin{equation}
\left|\log\det\left(\frac{1}{\step}D\ham_{x,\step}(v_{x})\right)-\int_{0}^{\step}\frac{t(\step-t)}{\step}\tr\Phi(t)dt\right|\leq\frac{\left(\step^{2}R_{1}\right)^{2}}{10}.\label{eq:Jac_approx}
\end{equation}
\end{lem}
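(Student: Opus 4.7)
The plan is to reduce the claim to the matrix ODE estimate of Lemma~\ref{lem:matrix_ODE_est_1} via the Jacobi equation, then expand $\log\det$ in a power series. For any $w\in T_x\mathcal{M}$, let $\gamma_s$ be the Hamiltonian variation with $\gamma_s(0)=x$ and $\gamma_s'(0)=v_x+sw$, so that $D\ham_{x,\delta}(v_x)[w]=\partial_s\gamma_s(\delta)|_{s=0}=\psi(\delta)$. By Lemma~\ref{lem:Jacobi_field}, the parallel-transported Jacobi field $\bar\psi(t)=\Gamma_t\psi(t)$ satisfies $\bar\psi''=\bar\Phi(t)\bar\psi$ with $\bar\psi(0)=0$ and $\bar\psi'(0)=w$, where $\bar\Phi(t)=\Gamma_t\Phi(t)\Gamma_t^{-1}$. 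Letting $w$ range over an orthonormal basis of $T_x\mathcal{M}$, the matrix solution $\Psi(t)$ of $\Psi''=\bar\Phi(t)\Psi$ with $\Psi(0)=0$, $\Psi'(0)=I$ represents $\Gamma_\delta D\ham_{x,\delta}(v_x)$ as a linear map on $T_x\mathcal{M}$. Since parallel transport is an isometry, both Frobenius norms and determinants are preserved between $\Psi(\delta)$ and $D\ham_{x,\delta}(v_x)$.

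For the first claim, I apply Lemma~\ref{lem:matrix_ODE_est_1} with $A=0$, $B=I$, $\lambda=R_1$: this yields $\|\Psi(\delta)-\delta I\|_F\le R_1\delta^3/5\le \delta/5$, using the hypothesis $\delta^2 R_1\le 1$. Invertibility of $D\ham_{x,\delta}(v_x)$ then follows from $\|\Psi(\delta)/\delta-I\|_2\le\|\Psi(\delta)/\delta-I\|_F\le 1/5<1$.

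For the log-det claim, I use the integral form from the same lemma,
\[
\Psi(\delta)=\delta I+\int_0^\delta(\delta-s)\,\bar\Phi(s)\bigl(sI+E(s)\bigr)\,ds,\qquad \|E(s)\|_F\le R_1 s^3/5,
\]
and set $U=\Psi(\delta)/\delta-I$, so $\|U\|_F\le \delta^2 R_1/5$. Taking traces and using $\tr\bar\Phi(s)=\tr\Phi(s)$ (cyclicity under conjugation by $\Gamma_t$),
\[
\tr U=\int_0^\delta\frac{(\delta-s)s}{\delta}\tr\Phi(s)\,ds+\frac{1}{\delta}\int_0^\delta(\delta-s)\tr\bigl[\bar\Phi(s)E(s)\bigr]\,ds.
\]
The first integral is the desired main term, while the second is bounded by $R_1^2\delta^4/100$ via Cauchy--Schwarz for the Frobenius inner product. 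Finally, the series $\log\det(I+U)=\sum_{k\ge 1}(-1)^{k+1}\tr(U^k)/k$ converges since $\|U\|_2\le 1/5$; using $|\tr(U^k)|\le\|U\|_2^{k-2}\|U\|_F^2$ and summing the geometric tail gives $|\log\det(I+U)-\tr U|\le C\|U\|_F^2\le C(\delta^2 R_1)^2/25$ for a small absolute constant $C$. Combining the two $O((\delta^2R_1)^2)$ error contributions yields the bound $(\delta^2 R_1)^2/10$.

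The main obstacle is constant-chasing: both the matrix-ODE remainder and the $\log$-series tail contribute errors of order $(\delta^2 R_1)^2$, so hitting the stated ``$/10$'' constant requires tracking the numerical factors in Lemma~\ref{lem:matrix_ODE_est_1} and the series tail together, without slack. A minor technical point is verifying that the Frobenius norm and trace are preserved under conjugation by $\Gamma_t$, but this is immediate from the fact that $\Gamma_t$ is an orthogonal transformation between the relevant tangent spaces.
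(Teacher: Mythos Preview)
Your proposal is correct and follows essentially the same route as the paper: set up the matrix Jacobi equation via Lemma~\ref{lem:Jacobi_field}, identify $D\ham_{x,\delta}(v_x)$ with $\Gamma_\delta^{-1}\Psi(\delta)$, apply Lemma~\ref{lem:matrix_ODE_est_1} with $A=0$, $B=I$ for both the invertibility bound and the integral expansion of $\tr(\Psi(\delta)/\delta-I)$, and then control $\log\det(I+U)-\tr U$ by $\|U\|_F^2$. The only cosmetic difference is that the paper invokes Lemma~\ref{lem:logdet_est} for this last step, whereas you expand the $\log$ series directly; your series-tail bound $|\tr(U^k)|\le\|U\|_2^{k-2}\|U\|_F^2$ yields a constant well under $1$, and combined with your trace-error bound $R_1^2\delta^4/100$ the total comfortably meets the stated $(\delta^2 R_1)^2/10$.
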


\begin{proof}
We want to compute $D\ham_{x,\step}(v_{x})[w]$ for some $w\in T_{x}M$.
By definition, we have that 
\begin{equation}
D\ham_{x,\step}(v_{x})[w]=\frac{\partial}{\partial s}\gamma(t,s)|_{t=\step,s=0}\label{eq:Jham_def}
\end{equation}
where $\gamma(t,s)=\ham_{x}(t(v_{x}+sw))$. Define $\overline{\psi}(t)$
as in Lemma \ref{lem:Jacobi_field} with $c(s)=x$. So, $D_{s}c(0)=0$,
i.e., $\overline{\psi}(0)=0$. Then, by the lemma,
\[
\overline{\psi}''(t)=\Gamma_{t}\Phi(t)\Gamma_{t}^{-1}\overline{\psi}(t),\quad\overline{\psi}'(0)=w,\quad\overline{\psi}(0)=0.
\]
Now, we define $\Psi$ be the solution of the matrix ODE
\begin{align*}
\Psi''(t) & =\Gamma_{t}\Phi(t)\Gamma_{t}^{-1}\Psi(t)\quad\forall t\geq0,\\
\Psi'(0) & =I,\\
\Psi(0) & =0.
\end{align*}

By the definition of $\Psi$, we see that $\overline{\psi}(t)=\Psi(t)w$.
Therefore, we have that
\[
\frac{\partial}{\partial s}\gamma(t,s)|_{s=0}=\Gamma_{t}^{-1}\overline{\psi}(t)=\Gamma_{t}^{-1}\Psi(t)w.
\]
Combining it with (\ref{eq:Jham_def}), we have that $D\ham_{x,\step}(v_{x})=\Gamma_{t}^{-1}\Psi(\step)$.
Since $\Gamma_{t}$ is an orthonormal matrix, we have that 
\begin{equation}
\log\det\left(D\ham_{x,\step}(v_{x})\right)=\log\det\Psi(\step).\label{eq:ham_phi_logdet}
\end{equation}

Note that $\norm{\Gamma_{t}\Phi(t)\Gamma_{t}^{-1}}_{F,\gamma(t)}=\norm{\Phi(t)}_{F,\gamma(t)}\leq R_{1}$
for all $0\leq t\leq\step$. Using this, Lemma \ref{lem:matrix_ODE_est_1}
shows that 
\begin{equation}
\norm{\frac{1}{\step}\Psi(\step)-I}_{F,x}\leq R_{1}\left(\frac{\step^{2}}{5}\norm I_{2}\right)\leq\frac{1}{5}\label{eq:estimate_psi_error}
\end{equation}
Hence, $\Psi(\step)$ is invertible, and so is $D\ham_{x}$.

By Lemma \ref{lem:logdet_est}, we have that
\begin{equation}
\left|\log\det(\frac{1}{\step}\Psi(\step))-\tr\left(\frac{1}{\step}\Psi(\step)-I\right)\right|\leq\left(\frac{1}{5}\step^{2}R_{1}\right)^{2}.\label{eq:log_psi_est_1}
\end{equation}

Now we need to estimate $\tr(\Psi(\step)-\step I)$. Lemma \ref{lem:matrix_ODE_est_1}
shows that 
\[
\Psi(\step)=\step I+\int_{0}^{\step}t(\step-t)\Phi(t)dt+\int_{0}^{\step}(\step-t)\Phi(t)E(t)dt
\]
with $\norm{E(t)}_{F,\gamma(t)}\leq\frac{t^{3}R_{1}}{5}$ for all
$0\leq t\leq\step$. Since 
\[
\left|\tr\left(\int_{0}^{\step}(\step-t)\Phi(t)E(t)dt\right)\right|\leq\int_{0}^{\step}(\step-t)\norm{\Phi(t)}_{F,x}\norm{E(t)}_{F,x}dt\leq\frac{\step^{5}}{20}R_{1}^{2},
\]
we have that
\begin{equation}
\left|\tr\left(\frac{1}{\step}\Psi(\step)-I-\int_{0}^{\step}\frac{t(\step-t)}{\step}\Phi(t)dt\right)\right|\leq\frac{\step^{4}}{20}R_{1}^{2}.\label{eq:log_psi_est_2}
\end{equation}
Combining (\ref{eq:log_psi_est_1}) and (\ref{eq:log_psi_est_2}),
we have
\[
\left|\log\det(\frac{1}{\step}\Psi(\step))-\int_{0}^{\step}\frac{t(\step-t)}{\step}\tr\Phi(t)dt\right|\leq\left(\frac{1}{5}\step^{2}R_{1}\right)^{2}+\frac{\step^{4}}{20}R_{1}^{2}\leq\frac{\left(\step^{2}R_{1}\right)^{2}}{10}.
\]
Applying (\ref{eq:ham_phi_logdet}), we have the result.
\end{proof}

\subsubsection{Local Uniqueness of Hamiltonian Curves}

\label{subsec:unique_curves}Next, we study the local uniqueness of
Hamiltonian curves. We know that for every pair $x,y$, there can
be multiple Hamiltonian curves connecting $x$ and $y$. Due to this,
the probability density $p_{x}$ at $y$ in $\mathcal{M}$ is the
sum over all possible Hamiltonian curves connecting $x$ and $y$.
The next lemma establishes a 1-1 map between Hamiltonian curves connecting
$x$ to $y$ as we vary $x.$
\begin{lem}
\label{lem:smoothness_dx}Let $\gamma(t)=\ham_{x}(tv_{x})$ be a Hamiltonian
curve and let the step size $\step$ satisfy $0<\step^{2}\leq\frac{1}{R_{1}}$,
where $R_{1}=\max_{0\leq t\leq\step}\norm{\Phi(t)}_{F,\gamma(t)}$.
Let the end points be $x=\gamma(0)$ and $y=\gamma(\step)$. Then
there is an unique smooth invertible function $v:U\subset\mathcal{M}\rightarrow V\subset T\mathcal{M}$
such that
\[
y=\ham_{z,\step}(v(z))
\]
for any $z\in U$ where $U$ is a neighborhood of $x$ and $V$ is
a neighborhood of $v_{x}=v(x)$. Furthermore, we have that $\norm{\nabla_{\eta}v(x)}_{x}\leq\frac{5}{2\step}\norm{\eta}_{x}$
and
\[
\norm{\frac{1}{\delta}\eta+\nabla_{\eta}v(x)}_{x}\leq\frac{3}{2}R_{1}\delta\norm{\eta}_{x}.
\]

Let $\gamma_{s}(t)=\ham_{c(s)}(t\cdot v(c(s)))$ where $c(s)$ is
any path with $c(0)=x$ and $c'(0)=\eta$. Then, for all $0\leq t\leq\step$,
we have that
\[
\norm{\left.\frac{\partial}{\partial s}\right|_{s=0}\gamma_{s}(t)}_{\gamma(t)}\leq5\norm{\eta}_{x}\quad\text{and}\quad\norm{\left.D_{s}\gamma'_{s}(t)\right|_{s=0}}_{\gamma(t)}\leq\frac{10}{\step}\norm{\eta}_{x}.
\]
\end{lem}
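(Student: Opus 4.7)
The whole statement is fundamentally an application of the implicit function theorem to the smooth map
\[
F:\mathcal{M}\times T\mathcal{M}\to \mathcal{M},\qquad F(z,w)=\ham_{z,\step}(w),
\]
followed by careful first-order bounds on the implicit solution, obtained from the Jacobi-equation analysis of Lemmas \ref{lem:Jacobi_field}, \ref{lem:ODE_upper} and \ref{lem:matrix_ODE_est_1}. The plan is as follows.

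First, to produce the function $v$, I would fix $y=\gamma(\step)$ and look for $v(z)$ with $F(z,v(z))=y$. Lemma \ref{lem:Jac-approx} (with its conclusion that $D\ham_{x,\step}$ is invertible and $\|D\ham_{x,\step}-\step I\|_{F,y}\le \step/5$) says that $\partial F/\partial w$ at $(x,v_x)$ is invertible. The classical implicit function theorem then yields a unique smooth $v\colon U\to V$ with $v(x)=v_x$ and $F(z,v(z))=y$ on a neighborhood $U$ of $x$.

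Second, to get the norm bounds on $\nabla_{\eta}v(x)$, differentiate $F(c(s),v(c(s)))=y$ at $s=0$ to obtain
\[
\tfrac{\partial F}{\partial z}(x,v_x)\,\eta \;+\; \tfrac{\partial F}{\partial w}(x,v_x)\,\nabla_{\eta}v(x)=0.
\]
Both derivatives of $F$ are Jacobi fields. For $\partial F/\partial z\cdot\eta$, apply Lemma \ref{lem:Jacobi_field} to the variation $\ham_{c(s)}(t\,v_x)$ (Jacobi initial data $(\eta,0)$); for $\partial F/\partial w\cdot w$, use $\ham_x(t(v_x+sw))$ (initial data $(0,w)$). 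After parallel transport and Lemma \ref{lem:matrix_ODE_est_1}, the associated matrix solutions $\Psi_1,\Psi_2$ of the Jacobi ODE satisfy $\|\Psi_1(\step)-I\|_{F}\le \step^2 R_1$ and $\|\Psi_2(\step)-\step I\|_F\le \tfrac{1}{5}\step^3 R_1$. Using $\step^2 R_1\le 1$, invert $\frac{1}{\step}\Psi_2(\step)=I+E$ with $\|E\|_F\le 1/5$ by a Neumann series, so $\nabla_{\eta}v(x)=-\tfrac{1}{\step}\Psi_2(\step)^{-1}\Psi_1(\step)\eta$. Telescoping $\Psi_2(\step)^{-1}\Psi_1(\step)-I$ into the two small errors $E$ and $\Psi_1-I$ gives both $\|\nabla_{\eta}v(x)\|_x\le \frac{5}{2\step}\|\eta\|_x$ and the refined estimate $\|\tfrac{1}{\step}\eta+\nabla_{\eta}v(x)\|_x\le \tfrac{3}{2}R_1\step\,\|\eta\|_x$. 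I expect this bookkeeping of the small errors, with explicit constants, to be the main technical nuisance; everything boils down to combining the $(1/5)$-type bounds from Lemma \ref{lem:matrix_ODE_est_1}.

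Third, for the last two inequalities concerning $\gamma_s$, note that $\psi(t)\defeq\left.\partial_s\right|_{s=0}\gamma_s(t)$ is exactly the Jacobi field along $\gamma$ with initial data
\[
\psi(0)=c'(0)=\eta,\qquad D_t\psi(0)=\nabla_{\eta}v(x).
\]
Parallel-transport $\psi$ back to $T_x\mathcal{M}$ to get $\bar\psi$, which satisfies the scalar-in-$t$ ODE (\ref{eq:Jacobi_equ_2}) with $A=\eta$, $B=\nabla_{\eta}v(x)$. I would plug these into Lemma \ref{lem:ODE_upper} (with $\lambda=R_1$ and $\sqrt{\lambda}t\le 1$, using $\cosh 1<1.55$, $\sinh 1<1.18$) to get
\[
\|\bar\psi(t)\|_x \le \cosh(\sqrt{R_1}t)\|\eta\|_x + \tfrac{\sinh(\sqrt{R_1}t)}{\sqrt{R_1}}\cdot\tfrac{5}{2\step}\|\eta\|_x \le 5\|\eta\|_x
\]
for $0\le t\le\step$. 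Since parallel transport is an isometry this gives the bound on $\|\partial_s\gamma_s(t)\|_{\gamma(t)}$. For the derivative term, differentiate the Volterra form $\bar\psi(t)=A+Bt+\int_0^t(t-s)\Gamma_s\Phi(s)\Gamma_s^{-1}\bar\psi(s)\,ds$ once in $t$, take norms and use the bound on $\bar\psi$ already established together with $\|\Phi\|_{F}\le R_1$; with $\step^2R_1\le 1$ this gives $\|\bar\psi'(t)\|\le \|B\|+R_1\int_0^t\|\bar\psi(s)\|\,ds \le \frac{5}{2\step}\|\eta\|+R_1\cdot 5\|\eta\|\cdot t\le \frac{10}{\step}\|\eta\|$, which is what is needed since $D_s\gamma_s'(t)=\Gamma_t^{-1}\bar\psi'(t)$.

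The main obstacle is neither the implicit function theorem step nor the Jacobi-field translation but the explicit constant tracking in step two, where the errors of size $R_1\step^2$ produced by Lemma \ref{lem:matrix_ODE_est_1} have to be propagated through a matrix inversion and a difference of products while still coming out below the stated constants $5/2$ and $3/2$; any sloppy use of Frobenius vs.\ operator norm there would lose the required sharpness.
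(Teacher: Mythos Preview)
Your proposal is correct and follows essentially the same route as the paper: implicit function theorem for existence, then the Jacobi field for the combined variation $\gamma_s(t)=\ham_{c(s)}(t\cdot v(c(s)))$ with initial data $(\eta,\nabla_\eta v(x))$ and boundary condition $\bar\psi(\step)=0$, analyzed via Lemma~\ref{lem:matrix_ODE_est_1}. The paper works with this single Jacobi field directly (plugging $\bar\psi(\step)=0$ into the estimate $\|\bar\psi(t)-\eta-t\nabla_\eta v(x)\|\le R_1 t^2(\|\eta\|+\tfrac{t}{5}\|\nabla_\eta v(x)\|)$ and solving the resulting inequality), whereas you split it into the two fundamental matrix solutions $\Psi_1,\Psi_2$ and invert $\Psi_2(\step)$; the algebra and final constants come out the same. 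One minor caution: your description of $\Psi_1$ as arising from the variation ``$\ham_{c(s)}(tv_x)$ with Jacobi data $(\eta,0)$'' is not literally right, since for a nonconstant $c$ one has $\bar\psi'(0)=D_s v_x\neq 0$ in general; but this is only a labeling issue, as your actual computation uses the abstract fundamental solutions with $(I,0)$ and $(0,I)$ initial data, which is valid by linearity.
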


\begin{proof}
Consider the smooth function $f(z,w)=\ham_{z,\step}(w)$. From Lemma
\ref{lem:Jac-approx}, the Jacobian of $w$ at $(x,v_{x})$ in the
$w$ variables, i.e., $D\ham_{x,\step}(v_{x})$, is invertible. Hence,
the implicit function theorem shows that there is a open neighborhood
$U$ of $x$ and a unique function $v$ on $U$ such that $f(z,v(z))=f(x,v_{x})$,
i.e. $\ham_{z,\step}(v(z))=\ham_{x,\step}(v_{x})=y$.

To bound $\nabla_{\eta}v(x)$, we let $\gamma_{s}(t)=\ham_{c(s)}(t\cdot v(c(s)))$
and $c(s)$ be any path with $c(0)=x$ and $c'(0)=\eta$. Let $\Gamma_{t}$
be the parallel transport from $T_{\gamma(t)}\mathcal{M}$ to $T_{\gamma(0)}\mathcal{M}$.
Define 
\[
\overline{\psi}(t)=\Gamma_{t}\left.\frac{\partial}{\partial s}\right|_{s=0}\gamma_{s}(t).
\]
Lemma \ref{lem:Jacobi_field} shows that $\overline{\psi}(t)$ satisfies
the following ODE
\begin{align*}
\overline{\psi}''(t) & =\Gamma_{t}\Phi(t)\Gamma_{t}^{-1}\overline{\psi}(t)\quad\forall t\geq0,\\
\overline{\psi}'(0) & =\nabla_{\eta}v(x),\\
\overline{\psi}(0) & =\eta.
\end{align*}
Moreover, we know that $\overline{\psi}(\step)=0$ because $\gamma_{s}(\step)=\ham_{c(s)}(\step\cdot v(c(s)))=\ham_{c(s),\step}(v(c(s)))=y$
for small enough $s$. 

To bound $\norm{\nabla_{\eta}v(x)}_{x}$, we note that Lemma \ref{lem:matrix_ODE_est_1}
shows that
\begin{equation}
\norm{\overline{\psi}(t)-\eta-t\cdot\nabla_{\eta}v(x)}_{x}\leq R_{1}t^{2}\left(\norm{\eta}_{x}+\frac{t}{5}\norm{\nabla_{\eta}v(x)}_{x}\right).\label{eq:smoothness_dx_ode}
\end{equation}
Since $\overline{\psi}(\step)=0$ and $\step^{2}\leq\frac{1}{R_{1}}$,
we have that 
\[
\norm{\eta+\step\cdot\nabla_{\eta}v(x)}_{x}\leq\norm{\eta}_{x}+\frac{\step}{5}\norm{\nabla_{\eta}v(x)}_{x}
\]
which implies that 
\[
\step\norm{\nabla_{\eta}v(x)}_{x}-\norm{\eta}_{x}\leq\norm{\eta}_{x}+\frac{\step}{5}\norm{\nabla_{\eta}v(x)}_{x}.
\]
Therefore, $\norm{\nabla_{\eta}v(x)}_{x}\leq\frac{5}{2\step}\norm{\eta}_{x}$.
More precisely, from (\ref{eq:smoothness_dx_ode}), we have that
\[
\norm{\frac{1}{\delta}\eta+\nabla_{\eta}v(x)}_{x}\leq\frac{3}{2}R_{1}\delta\norm{\eta}_{x}.
\]

Putting this into (\ref{eq:smoothness_dx_ode}), for $t\le\step$
we get
\[
\norm{\overline{\psi}(t)}_{x}\leq\norm{\eta}_{x}+\frac{5}{2}\norm{\eta}_{x}+R_{1}t^{2}\left(\norm{\eta}_{x}+\frac{1}{2}\norm{\eta}_{x}\right)\leq5\norm{\eta}_{x}.
\]
Now, apply the conclusion of Lemma \ref{lem:matrix_ODE_est_1} after
taking a derivative, we have that 
\[
\overline{\psi}'(t)=\nabla_{\eta}v(x)+\int_{0}^{t}\Phi(s)(\eta+s\cdot\nabla_{\eta}v(x)+E(s))ds
\]
where $\norm{E(s)}_{x}\leq R_{1}\left(s^{2}\norm{\eta}_{x}+\frac{s^{3}}{5}\norm{\nabla_{\eta}v(x)}_{x}\right)\leq\frac{3}{2}\norm{\eta}_{x}$.
Hence, bounding each term and noting that $t\le\delta$, we have
\[
\norm{\overline{\psi}'(t)}_{x}\leq\frac{5}{2\step}\norm{\eta}_{x}+\step R_{1}\left(1+\frac{5}{2}+\frac{3}{2}\right)\norm{\eta}_{x}\leq\frac{10}{\step}\norm{\eta}_{x}.
\]
\end{proof}
When we vary $x$, the Hamiltonian curve $\gamma$ from $x$ to $y$
varies and we need to bound $\ell(\gamma)$ over the variation. 
\begin{lem}
\label{lem:one_one_cor}Given a Hamiltonian curve $\gamma(t)=\ham_{x}(t\cdot v_{x})$
with step size $\step$ satisfying $\step^{2}\leq\frac{1}{R_{1}}$,
let $c(s)$ be any geodesic starting at $\gamma(0)$. Let $x=c(0)=\gamma(0)$
and $y=\gamma(\step)$. Suppose that the auxiliary function $\ell$
satisfies $\norm{\frac{dc}{ds}}_{c(0)}\leq\frac{\ell_{0}}{7\ell_{1}}$
and $\ell(\gamma)\leq\frac{1}{2}\ell_{0}$. Then, there is a unique
vector field $v$ on $c$ such that 
\[
y=\ham_{c(s),\step}(v(s)).
\]
Moreover, this vector field is uniquely determined by the geodesic
$c(s)$ and any $v(s)$ on this vector field. Also, we have that $\ell(\ham_{c(s),\step}(v(s)))\leq\ell_{0}$
for all $s\le1$.
\end{lem}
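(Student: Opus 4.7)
I would run a continuation (open-closed) argument along the geodesic $c$, using Lemma \ref{lem:smoothness_dx} as the local step and Definition \ref{def:ell_func} to keep $\ell$ from blowing up. Once $v(s)$ is defined write $\gamma_s(t)=\ham_{c(s)}(t\cdot v(s))$, and set
\[
s^\ast\defeq\sup\bigl\{s\in[0,1]\ :\ v\ \text{extends smoothly on}\ [0,s]\ \text{with}\ \gamma_{s'}(\delta)=y\ \text{and}\ \ell(\gamma_{s'})\le\ell_0\ \forall s'\le s\bigr\}.
\]
Initializing $v(0)=v_x$ gives $\gamma_0=\gamma$ and $\ell(\gamma_0)\le\ell_0/2<\ell_0$, so $s^\ast>0$; the goal is to show $s^\ast=1$.

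\textbf{Local extension.} Whenever $\ell(\gamma_s)\le\ell_0$, Definition \ref{def:R_1} ensures $\norm{\Phi(\gamma_s,\cdot)}_F\le R_1$, so the hypothesis $\delta^2\le1/R_1$ of Lemma \ref{lem:smoothness_dx} holds at $c(s)$. That lemma produces a unique smooth field $\tilde v$ on a manifold neighborhood of $c(s)$ with $\ham_{z,\delta}(\tilde v(z))=y$; restricting to $c$ and invoking uniqueness extends the existing $v$ smoothly to an open interval about $s$. Hence the set of admissible $s$ is open in $[0,1]$ whenever $\ell(\gamma_s)<\ell_0$.

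\textbf{Controlling $\ell$.} The heart of the argument is that, as long as $\ell(\gamma_s)\le\ell_0$, Definition \ref{def:ell_func} gives
\[
\Bigl|\tfrac{d}{ds}\ell(\gamma_s)\Bigr|\le\ell_1\Bigl(\bigl\|\tfrac{d}{ds}\gamma_s(0)\bigr\|_{\gamma_s(0)}+\delta\,\bigl\|D_s\gamma_s'(0)\bigr\|_{\gamma_s(0)}\Bigr).
\]
Since $c$ is a geodesic its speed is constant, so $\|\tfrac{d}{ds}\gamma_s(0)\|_{c(s)}=\|c'(s)\|_{c(s)}=\|c'(0)\|_{c(0)}\le\ell_0/(7\ell_1)$. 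For the second term, $\gamma_s'(0)=v(c(s))$ gives $D_s\gamma_s'(0)=\nabla_{c'(s)}v(c(s))$, and Lemma \ref{lem:smoothness_dx} at $c(s)$ with $\eta=c'(s)$ yields $\|D_s\gamma_s'(0)\|_{c(s)}\le\tfrac{5}{2\delta}\|c'(s)\|_{c(s)}$. Combining,
\[
\Bigl|\tfrac{d}{ds}\ell(\gamma_s)\Bigr|\le\ell_1\cdot\tfrac{7}{2}\|c'(s)\|_{c(s)}\le\tfrac{\ell_0}{2}.
\]
Integrating from $0$ yields $\ell(\gamma_s)\le\ell_0/2+s\ell_0/2$, which is strictly less than $\ell_0$ for $s<1$ and equal to $\ell_0$ at $s=1$. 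Combined with the openness step, and using the a priori bound $\|\nabla_{c'}v\|_{c(s)}\le 5\|c'(s)\|_{c(s)}/(2\delta)$ from Lemma \ref{lem:smoothness_dx} to extend $v$ continuously up to any $s^\ast$, this rules out $s^\ast<1$: at such a candidate $\ell(\gamma_{s^\ast})<\ell_0$, so Lemma \ref{lem:smoothness_dx} would push $v$ further, a contradiction. Therefore $s^\ast=1$ and $\ell(\gamma_s)\le\ell_0$ throughout $[0,1]$.

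\textbf{Uniqueness.} Finally, if two smooth fields $v,\tilde v$ on $c$ both satisfy $\ham_{c(s),\delta}(v(s))=\ham_{c(s),\delta}(\tilde v(s))=y$ and coincide at one $s_0$, then they coincide on a neighborhood of $s_0$ by the uniqueness clause of Lemma \ref{lem:smoothness_dx}, and a standard connectedness argument on $[0,1]$ upgrades this to global equality. The main obstacle of the proof is really just the $\ell$-growth computation: the constant $7$ in the hypothesis $\|dc/ds\|_{c(0)}\le\ell_0/(7\ell_1)$ is calibrated so that it combines with the factor $5/(2\delta)$ from Lemma \ref{lem:smoothness_dx} to give exactly $1+5/2=7/2$, producing the $\ell_0/2$ slack per unit $s$ needed to finish the bootstrap at $\ell_0$.
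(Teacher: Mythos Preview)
Your proposal is correct and is essentially identical to the paper's proof: both run a continuation argument along $c$, using Lemma~\ref{lem:smoothness_dx} for local existence/uniqueness and the Lipschitz bound $\|D_s v\|\le\tfrac{5}{2\delta}\|c'\|$, combined with Definition~\ref{def:ell_func} to get $|\tfrac{d}{ds}\ell(\gamma_s)|\le\ell_1(1+\tfrac{5}{2})\|c'\|\le\ell_0/2$, so $\ell$ stays within $\ell_0$ on $[0,1]$. Your explanation of why the constant~$7$ is exactly what is needed is in fact more explicit than the paper's.
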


\begin{proof}
Let $s_{\max}$ be the supremum of $s$ such that $v(s)$ can be defined
continuously such that $y=\ham_{c(s),\step}(v(s))$ and $\ell(\gamma_{s})\leq\ell_{0}$
where $\gamma_{s}(t)=\ham_{c(s)}(t\cdot v(s)).$ Lemma \ref{lem:smoothness_dx}
shows that there is a neighborhood $N$ at $x$ and a vector field
$u$ on $N$ such that for any $z\in N$, we have that 
\[
y=\ham_{z,\step}(u(z)).
\]
Also, this lemma shows that $u(s)$ is smooth and hence the parameter
$\ell_{1}$ shows that $\ell(\gamma_{s})$ is Lipschitz in $s$. Therefore,
$\ell(\gamma_{s})\leq\ell_{0}$ in a small neighborhood of $0$. Hence
$s_{\max}>0$.

Now, we show $s_{\max}>1$ by contradiction. By the definition of
$s_{\max}$, we have that $\ell(\gamma_{s})\leq\ell_{0}$ for any
$0\leq s<s_{\max}$. Hence, we can apply Lemma \ref{lem:smoothness_dx}
to show that $\norm{D_{s}v(s)}_{\gamma(s)}\leq\frac{5}{2\delta}\norm{\frac{dc}{ds}}_{\gamma(s)}=\frac{5}{2\delta}L$
where $L$ is the length of $c$ up to $s=1$ (since the speed is
constant on any geodesic and the curve is defined over $[0,1]$).
Therefore, the function $v$ is Lipschitz and hence $v(s_{\max})$
is well-defined and $\ell(\gamma_{s_{\max}})\leq\ell_{0}$ by continuity.
Hence, we can apply Lemma \ref{lem:smoothness_dx} at $\ell(s_{\max})$
and extend the domain of $\ell(s)$ beyond $s_{\max}$.

To bound $\ell(\gamma_{s})$ beyond $s_{\max}$, we note that $\norm{D_{s}\gamma_{s}'}_{\gamma(s)}=\norm{D_{s}v(s)}_{\gamma(s)}\leq\frac{5}{2\step}L$
and $\norm{\frac{d}{ds}c}_{\gamma(s)}=\norm{\frac{d}{ds}c}_{\gamma(0)}=L$.
Hence, $\left|\frac{d}{ds}\ell(\gamma_{s})\right|\leq(L+\frac{5}{2}L)\ell_{1}$
by the definition of $\ell_{1}$. Therefore, if $L\leq\frac{\ell_{0}}{7\ell_{1}}$,
we have that $\ell(\gamma_{s})\leq\ell(\gamma)+\frac{1}{2}\ell_{0}\leq\ell_{0}$
for all $s\leq1.01$ wherever $v(s)$ is defined. Therefore, this
contradicts the assumption that $s_{\max}$ is the supremum. Hence,
$s_{\max}>1$. 

The uniqueness follows from Lemma \ref{lem:smoothness_dx}.
\end{proof}

\subsubsection{Smoothness of one-step distributions}

\label{sec:Smoothness-of-P}
\begin{lem}
\label{lem:TV_diff_simplified}For $\step^{2}\leq\frac{1}{100\sqrt{n}R_{1}}$
and $\delta^{3}\leq\frac{\ell_{0}}{100\sqrt{n}R_{1}\ell_{1}}$, the
one-step Hamiltonian walk distributions $p_{x},p_{z}$ from $x,z$
satisfy
\[
d_{\mathrm{TV}}(p_{x},p_{z})=O\left(\frac{1}{\step}\right)d(x,z)+\frac{1}{25}.
\]
\end{lem}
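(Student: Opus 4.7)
\textbf{Proof plan for Lemma~\ref{lem:TV_diff_simplified}.}

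The plan is a coupling argument along the geodesic $c:[0,1]\to\mathcal{M}$ with $c(0)=x$, $c(1)=z$, $\|c'(s)\|_{c(s)}=d(x,z)$. First I would split Hamiltonian curves from $x$ into \emph{good} ones (with $\ell(\gamma)\le\tfrac12\ell_0$) and \emph{bad} ones; by Definition~\ref{def:ell_func} the bad set has probability at most $1/100$ under the initial Gaussian at $x$, so it contributes at most $2/100<1/50$ to $d_{TV}(p_x,p_z)$ after accounting for both endpoints. For good $v_x$, Lemma~\ref{lem:one_one_cor} (which applies since we will verify $d(x,z)\le\ell_0/(7\ell_1)$ is the regime of interest) gives a unique vector field $v(s)$ along $c$ with $\ham_{c(s),\delta}(v(s))=y$ for all $s\in[0,1]$ and $\ell(\gamma_s)\le\ell_0$; in particular this yields a $1$--$1$ correspondence $v_x\leftrightarrow v_z$ through the \emph{same} endpoint $y$.

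Using the inequality $d_{TV}(p_x,p_z)\le\tfrac12\mathbb{E}_{y\sim p_x}[\mathbf{1}_{\mathrm{good}}\cdot|1-p_z(y)/p_x(y)|]+\mathrm{bad\ term}$ and $|1-e^u|\le 2|u|$ for $|u|\le 1$, it suffices to bound $\mathbb{E}|\log(p_z(y)/p_x(y))|$ for good curves. By Lemma~\ref{lem:prob_formula}, the $\sqrt{\det g(y)/(2\pi)^n}$ factor cancels at the common endpoint, so
\[
\log\frac{p_z(y)}{p_x(y)}=\int_0^1\frac{d}{ds}\Bigl(-\log|\det D\ham_{c(s),\delta}(v(s))|-\tfrac12\|v(s)\|_{c(s)}^2\Bigr)\,ds.
\]

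For the Gaussian term I would apply $\tfrac{d}{ds}\tfrac12\|v(s)\|_{c(s)}^2=\langle v(s),D_sv(s)\rangle_{c(s)}$ together with the estimate $\|D_sv(s)+\tfrac{1}{\delta}c'(s)\|_{c(s)}\le\tfrac32 R_1\delta\,\|c'(s)\|_{c(s)}$ from Lemma~\ref{lem:smoothness_dx}. The leading piece is $-\tfrac{1}{\delta}\langle v(s),c'(s)\rangle_{c(s)}$; since $v(s)$ is (essentially, up to the smooth map induced by the flow) the image of a Gaussian at $x$ in direction $c'(s)$, Cauchy--Schwarz in expectation gives $\mathbb{E}|\langle v(s),c'(s)\rangle|\le\|c'(s)\|_{c(s)}=d(x,z)$, contributing $O(d(x,z)/\delta)$ after integration. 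The error term contributes $\mathbb{E}\|v(s)\|\cdot O(R_1\delta\,d(x,z))=O(\sqrt{n}R_1\delta\,d(x,z))=O(d(x,z)/\delta)$ using $\delta^2R_1\sqrt n\le 1/100$.

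The main obstacle, and the heart of the calculation, is the Jacobian term $\tfrac{d}{ds}\log|\det D\ham_{c(s),\delta}(v(s))|$. Lemma~\ref{lem:Jac-approx} gives
$\log\det(\tfrac1\delta D\ham_{c(s),\delta}(v(s)))=\int_0^\delta \tfrac{t(\delta-t)}{\delta}\tr\Phi_s(t)\,dt+O((\delta^2R_1)^2)$. I would bound the $s$-derivative by using the \emph{second-order} variation of the Hamiltonian curve: differentiating the Jacobi equation of Lemma~\ref{lem:Jacobi_field} once more in $s$ produces a matrix ODE whose driving term is controlled, in Frobenius norm at fixed $s$, by $R_1$ together with the size of the variation field along $c$, which Lemma~\ref{lem:smoothness_dx} bounds by $O(\|c'\|)$ in position and $O(\|c'\|/\delta)$ in velocity. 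Taking traces (which picks up at most a factor $\sqrt n$ from the Frobenius-to-trace estimate combined with the Gaussian second moments hidden in the $\|\cdot\|_F$ definition) and integrating, the contribution is $O(\delta^2R_1\sqrt n\cdot d(x,z)/\delta)=O(d(x,z)/\delta)$ by the step-size hypothesis $\delta^2R_1\sqrt n\le 1/100$; the $O(\delta^4R_1^2)$ remainder from Lemma~\ref{lem:Jac-approx}, being smooth in $s$ by the same variation computation, is similarly negligible. Summing the Gaussian and Jacobian contributions with the bad-curve term then yields $d_{TV}(p_x,p_z)=O(d(x,z)/\delta)+1/25$, as claimed.
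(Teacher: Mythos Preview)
Your overall architecture (good/bad split, the $1$--$1$ correspondence from Lemma~\ref{lem:one_one_cor}, and differentiating the Gaussian exponent along $c$ using Lemma~\ref{lem:smoothness_dx}) matches the paper, and your treatment of the $\tfrac12\|v(s)\|^2$ term is essentially what the paper does.

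The genuine gap is your handling of the Jacobian term. You propose to bound $\frac{d}{ds}\log|\det D\ham_{c(s),\delta}(v(s))|$ by ``differentiating the Jacobi equation once more in $s$'' and claim the resulting driving term is controlled by $R_1$. It is not: differentiating $\Psi_s''=\Phi_s\Psi_s$ in $s$ produces a source term $(\partial_s\Phi_s)\Psi_s$, and $\partial_s\Phi_s$ involves the covariant derivative of the curvature operator and the second covariant derivative of $\mu$. The parameter $R_1$ only bounds $\|\Phi\|_F$; it says nothing about $\partial_s\Phi$. In fact the paper introduces a \emph{separate} parameter $R_2$ (Definition~\ref{def:R2}) precisely to control $\frac{d}{ds}\tr\Phi_s$, and uses it in the sharper Theorem~\ref{thm:TV_diff_improved}. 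With only $R_1$ available, your proposed bound $O(\delta^2 R_1\sqrt n\cdot d(x,z)/\delta)$ on the Jacobian variation is unsupported.

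The paper's workaround is much simpler and avoids differentiating the Jacobian altogether. From Lemma~\ref{lem:Jac-approx} and $|\tr\Phi(t)|\le\sqrt{n}\,\|\Phi(t)\|_F\le\sqrt{n}\,R_1$, one has the \emph{uniform} bound
\[
\Bigl|\log\det\bigl(\tfrac{1}{\delta}D\ham_{c(s),\delta}(v(s))\bigr)\Bigr|\ \le\ \tfrac{\delta^2}{6}\sqrt{n}\,R_1+\tfrac{(\delta^2 R_1)^2}{10}\ \le\ \tfrac{1}{600}
\]
for every $s\in[0,1]$ (using $\delta^2\sqrt n R_1\le 1/100$). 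Thus the true density $p(v_{c(s)})$ and the simplified density $\widetilde p(v_{c(s)})\defeq\sqrt{\det g(y)/(2\pi\delta^2)^n}\,e^{-\|v(s)\|^2/2}$ (with Jacobian \emph{frozen} at $\delta^n$) agree up to a multiplicative factor $1\pm\tfrac{1}{600}$, uniformly in $s$. This lets you replace $p$ by $\widetilde p$ at the cost of an additive $O(\widetilde\ell/50)$ in TV, and then only the Gaussian exponent remains to differentiate in $s$---exactly the part you already handled correctly. The Jacobian contribution therefore lands in the constant $1/25$, not in the $O(d(x,z)/\delta)$ term.
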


\begin{proof}
We first consider the case $d(x,y)<\frac{\ell_{0}}{7\ell_{1}}$. Let
$c(s)$ be a unit speed geodesic connecting $x$ and $z$ of length
$L<\frac{\ell_{0}}{7\ell_{1}}$.

Let $\widetilde{\ell}\defeq\min\left(1,\frac{\ell_{0}}{\ell_{1}\delta}\right)$.
By the definition of $\ell_{0}$, with probability at least $1-\widetilde{\ell}$
in paths $\gamma$ start at $x$, we have that $\ell(\gamma)\leq\frac{1}{2}\ell_{0}$
. Let $V_{x}$ be the set of $v_{x}$ such that $\ell(\ham_{x}(t\cdot v_{x}))\leq\frac{1}{2}\ell_{0}$.
Since the distance from $x$ to $z$ is less than $\frac{\ell_{0}}{7\ell_{1}}$
and $\ell(\gamma)\leq\frac{1}{2}\ell_{0}$, for those $v_{x}$, Lemma
\ref{lem:one_one_cor} shows there is a family of Hamiltonian curves
$\gamma_{s}$ that connect $c(s)$ to $y$, and $\ell(\gamma_{s})\leq\ell_{0}$
for each of them. 

For any $v_{x}\in V$, we have that $\ell(\gamma_{s})\leq\ell_{0}.$
When $\ell(\gamma_{s})\leq\ell_{0}$, by the definition of $R_{1}$,
we indeed have that $\norm{\Phi(t)}\leq R_{1}$ and hence Lemma \ref{lem:Jac-approx}
shows that
\begin{align*}
\left|\log\det\left(\frac{1}{\step}D\ham_{x,\step}(v_{x})\right)\right| & \leq\left|\int_{0}^{\step}\frac{t(\step-t)}{\step}\tr\Phi(t)dt\right|+\frac{\left(\step^{2}R_{1}\right)^{2}}{10}\\
 & \leq\frac{\delta^{2}}{6}\sqrt{n}R_{1}+\frac{\left(\step^{2}R_{1}\right)^{2}}{10}\leq\frac{1}{600}\widetilde{\ell}
\end{align*}
where we used our assumption on $\delta$. We use $p(v_{x})$ to denote
the probability density of choosing $v_{x}$ and
\[
\widetilde{p}(v_{x})\defeq\sqrt{\frac{\det\left(g(\ham_{x}(\delta\cdot v_{x}))\right)}{\left(2\pi\step^{2}\right)^{n}}}\exp\left(-\frac{1}{2}\norm{v_{x}}_{x}^{2}\right).
\]
Hence, we have that
\begin{equation}
C^{-1}\cdot p(v_{x})\leq\widetilde{p}(v_{x})\leq C\cdot p(v_{x})\label{eq:jac_approx_2}
\end{equation}
where $C\defeq1+\frac{1}{600}\widetilde{\ell}$. As we noted, for
every $v_{x}$, there is a corresponding $v_{z}$ such that $\ham_{z}(\delta\cdot v_{z})=y$
and $\ell(\ham_{z}(t\cdot v_{z}))\leq\ell_{0}$. Therefore, we have
that
\begin{align*}
(1-C^{2})p(v_{x})+C(\widetilde{p}(v_{x})-\widetilde{p}(v_{z}))\leq p(v_{x})-p(v_{z})\leq & (1-C^{-2})p(v_{x})+C^{-1}(\widetilde{p}(v_{x})-\widetilde{p}(v_{z})).
\end{align*}
Since $\int_{V_{x}}p(v_{x})dv_{x}\geq1-\frac{1}{100}\min\left(1,\frac{\ell_{0}}{\ell_{1}\delta}\right)$,
we have that
\begin{align}
d_{TV}(p_{x},p_{z}) & \leq\frac{\widetilde{\ell}}{100}+\int_{V_{x}}\left|p(v_{x})-p(v_{z})\right|dv_{x}\nonumber \\
 & \leq\frac{\widetilde{\ell}}{100}(1+\int_{V_{x}}p(v_{x})dv_{x})+2\int_{V_{x}}\left|\widetilde{p}(v_{x})-\widetilde{p}(v_{z})\right|dv_{x}\nonumber \\
 & \leq\frac{\widetilde{\ell}}{50}+2\int_{V_{x}}\int_{s}\left|\frac{d}{ds}\widetilde{p}(v_{c(s)})\right|dsdv_{x}\label{eq:d_TV_integrate}
\end{align}

Note that
\[
\frac{d}{ds}\widetilde{p}(v_{c(s)})=\left(-\frac{1}{2}\frac{d}{ds}\norm{v(s)}_{c(s)}^{2}\right)\widetilde{p}(v_{c(s)}).
\]
Using (\ref{eq:jac_approx_2}), we have that $\widetilde{p}(v_{c(s)})\leq2\cdot p(v_{c(s)})$
and hence
\begin{align}
\int_{V_{x}}\left|\frac{d}{ds}\widetilde{p}(v_{c(s)})\right|dy\leq & \int_{V_{x}}\left|\frac{d}{ds}\norm{v(s)}_{c(s)}^{2}\right|p(v_{c(s)})dy\nonumber \\
\leq & \E_{\ell(\gamma_{s})\leq\ell_{0}}\left|\frac{d}{ds}\norm{v(s)}_{c(s)}^{2}\right|\label{eq:second_term_TV}
\end{align}
Using that $\norm{\partial_{s}c(s)}=1$, Lemma \ref{lem:smoothness_dx}
shows that
\[
\norm{\frac{1}{\delta}\partial_{s}c(s)+D_{s}v(s)}_{x}\leq\frac{3}{2}R_{1}\delta.
\]
Therefore, we have that
\begin{align*}
\left|\frac{d}{ds}\norm{v(s)}_{c(s)}^{2}\right| & =2\left|\left\langle v(s),D_{s}v(s)\right\rangle _{c(s)}\right|\\
 & \leq\frac{2}{\delta}\left|\left\langle v(s),\partial_{s}c(s)\right\rangle _{c(s)}\right|+3R_{1}\delta\norm{v(s)}_{c(s)}.
\end{align*}
Since $v(s)$ is a random Gaussian vector from the local metric, we
have that $\left|\left\langle v(s),\partial_{s}c(s)\right\rangle _{c(s)}\right|=O(1)$
and $\norm{v(s)}_{c(s)}=O(\sqrt{n})$ with high probability. Putting
it into (\ref{eq:second_term_TV}), we have that
\[
\E_{\ell(\gamma_{s})\leq\ell_{0}}\left|\frac{d}{ds}\norm{v(s)}_{c(s)}^{2}\right|=O\left(\frac{1}{\step}+\step R_{1}\sqrt{n}\right).
\]
Putting this into \ref{eq:second_term_TV}, we have that
\[
\int_{V_{x}}\left|\frac{d}{ds}\widetilde{p}(v_{c(s)})\right|dy=O\left(\frac{1}{\step}+\step R_{1}\sqrt{n}\right).
\]

Putting this into (\ref{eq:d_TV_integrate}), we get
\[
d_{\mathrm{TV}}(p_{x},p_{z})=O\left(\frac{1}{\step}+\step R_{1}\sqrt{n}\right)L+\frac{\widetilde{\ell}}{50}
\]
for any $L<\frac{\ell_{0}}{7\ell_{1}}$. By taking a minimal length
geodesic, and summing over segment of length $\frac{\ell_{0}}{8\ell_{1}}$,
for any $x$ and $z$, we have 
\begin{align*}
d_{\mathrm{TV}}(p_{x},p_{z}) & =O\left(\frac{1}{\step}+\step R_{1}\sqrt{n}\right)d(x,z)+\frac{1}{25}\\
 & =O\left(\frac{1}{\step}\right)d(x,z)+\frac{1}{25}.
\end{align*}
\end{proof}

\subsection{Convergence bound}

Combining Lemma \ref{lem:TV_diff_simplified} and Lemma \ref{lem:geometric_walk},
we have the following result.
\begin{thm}
\label{thm:gen-convergence_simplified}Given a manifold $\mathcal{M}$.
Let $\ell_{0},\ell_{1},R_{1}$ be the parameters of the Hamiltonian
Monte Carlo defined in Definition \ref{def:ell_func} and \ref{def:R_1}.
Let $q_{t}$ be the distribution of the current point after $t$ steps
Hamiltonian Monte Carlo with step size $\step$ satisfying 
\[
\step^{2}\leq\frac{1}{100\sqrt{n}R_{1}}\quad\text{and}\quad\delta^{3}\leq\frac{\ell_{0}}{100\sqrt{n}R_{1}\ell_{1}},
\]
starting from initial distribution $q_{0}.$ Let $q$ be the distribution
proportional to $e^{-f}$. For any $\varepsilon>0$, we have that
\[
d_{TV}(q_{t},q)\le\varepsilon+\sqrt{\frac{1}{\varepsilon}\E_{x\sim q_{0}}\frac{dq_{0}(x)}{dq(x)}}\left(1-\frac{(\delta\psi)^{2}}{2}\right)^{t}
\]
where $\psi$ is the conductance of the manifold defined in Definition
\ref{def:manifold_conductance}.
\end{thm}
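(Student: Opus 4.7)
The plan is to verify that the two hypotheses of the general geometric Markov chain convergence result, Lemma \ref{lem:geometric_walk}, are satisfied for Hamiltonian Monte Carlo under the stated step size constraints, and then read off the conclusion. The chain is already known to be time-reversible with stationary distribution proportional to $e^{-f}$ by Lemma \ref{lem:time_reversibility} applied to the Hamiltonian in \eqref{eq:H_manifold} (together with the fact that integrating out the velocity recovers the Gibbs density on $\mathcal{M}$), so the setting of Lemma \ref{lem:geometric_walk} applies verbatim.

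First, I would choose the radius $r$ used in Lemma \ref{lem:geometric_walk}. By Lemma \ref{lem:TV_diff_simplified}, under the assumed step size conditions $\delta^{2}\leq\frac{1}{100\sqrt{n}R_{1}}$ and $\delta^{3}\leq\frac{\ell_{0}}{100\sqrt{n}R_{1}\ell_{1}}$, one has
\[
d_{\mathrm{TV}}(p_{x},p_{z})\leq \frac{C}{\delta}\,d(x,z)+\frac{1}{25}
\]
for an absolute constant $C$. Setting $r= c\,\delta$ for a small enough absolute constant $c$ (chosen so that $Cc+1/25 \le 0.9$) then ensures that whenever $d(x,z)<r$ we have $d_{\mathrm{TV}}(p_{x},p_{z})\leq 0.9$, which is exactly hypothesis (1) of Lemma \ref{lem:geometric_walk}.

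Second, hypothesis (2) of Lemma \ref{lem:geometric_walk} is the isoperimetric inequality
\[
\int_{0<d(S,x)\le r}q(x)\,dx\geq r\psi\min\{q(S),q(\mathcal{M}\setminus S)\},
\]
which follows by unfolding the definition of the isoperimetric constant $\psi$ in Definition \ref{def:manifold_conductance}: by that definition, for every $\delta'>0$ and every measurable $S$, $\int_{d(S,x)\le \delta'} q - q(S) \ge \delta' \psi \min\{q(S),q(\mathcal{M}\setminus S)\}$, and specializing to $\delta'=r$ gives the statement.

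Finally, I would invoke the conclusion of Lemma \ref{lem:geometric_walk} directly. It yields
\[
d_{TV}(q_{t},q)\le\varepsilon+\sqrt{\frac{1}{\varepsilon}\,\E_{x\sim q_{0}}\frac{dq_{0}(x)}{dq(x)}}\,\bigl(1-\Omega(r^{2}\psi^{2})\bigr)^{t},
\]
and substituting $r=c\,\delta$ turns $\Omega(r^{2}\psi^{2})$ into $\Omega((\delta\psi)^{2})$, recovering the claimed bound (with the stated constant $1/2$ in the exponent absorbed into the constants by standard rescaling, or equivalently proved with the sharper tracking of constants already present in Theorem \ref{thm:cheeger}). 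There is no serious obstacle here; all the heavy lifting was done in proving Lemma \ref{lem:TV_diff_simplified} (the one-step smoothness estimate based on the Jacobi field analysis and local uniqueness of Hamiltonian curves) and in the Lov\'asz--Simonovits-type conductance bound underlying Lemma \ref{lem:geometric_walk}. The only point that requires care is making sure the absolute constant in front of $d(x,z)/\delta$ in Lemma \ref{lem:TV_diff_simplified} is small enough that $r$ can indeed be taken proportional to $\delta$, which is why the step size hypotheses of the theorem are inherited verbatim from those lemma.
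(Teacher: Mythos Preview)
Your proposal is correct and matches the paper's approach exactly: the paper simply states that Theorem \ref{thm:gen-convergence_simplified} follows by combining Lemma \ref{lem:TV_diff_simplified} and Lemma \ref{lem:geometric_walk}, and you have faithfully spelled out how those two ingredients fit together (choosing $r=c\,\delta$, verifying both hypotheses, and reading off the conclusion).
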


For $\ell(\gamma)=\norm{\gamma'(0)}_{\gamma(0)}$, we can bound $\ell_{0}$
and $\ell_{1}$ as follows:
\begin{lem}
\label{lem:simple_ell}For the auxiliary function $\ell(\gamma)=\norm{\gamma'(0)}_{\gamma(0)}$,
we have that $\ell_{0}=10\sqrt{n}$ and $\ell_{1}=O(\frac{1}{\delta})$.
Furthermore, we have that 
\[
\norm{\gamma'(t)}_{\gamma(t)}\leq\norm{\gamma'(0)}_{\gamma(0)}+R_{0}t
\]
where $R_{0}=\sup_{x\in\mathcal{M}}\norm{\mu(x)}_{x}$.
\end{lem}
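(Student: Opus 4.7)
The goal is to verify the two conditions in Definition \ref{def:ell_func} for $\ell(\gamma)=\norm{\gamma'(0)}_{\gamma(0)}$ with $\ell_{0}=10\sqrt{n}$ and $\ell_{1}=O(1/\delta)$, and then to establish the growth bound on $\norm{\gamma'(t)}_{\gamma(t)}$.

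For the tail bound (Condition 2), the initial velocity $\gamma'(0)$ is drawn from $N(0,g(x)^{-1})$, so $\norm{\gamma'(0)}_{\gamma(0)}^{2}=\gamma'(0)^{T}g(x)\gamma'(0)$ is distributed as $\chi_{n}^{2}$. Standard chi-square concentration (e.g., $\P(\chi_{n}^{2}\ge n+2\sqrt{nt}+2t)\le e^{-t}$) gives $\P(\norm{\gamma'(0)}_{\gamma(0)}>5\sqrt{n})\le e^{-\Omega(n)}$, which comfortably beats the required $\tfrac{1}{100}\min(1,\ell_{0}/(\ell_{1}\delta))\le\tfrac{1}{100}$ for every choice of $\delta$ and all $n\ge 1$ (the small-$n$ cases can be absorbed into the constant factor of $\ell_{0}$).

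For the Lipschitz bound (Condition 1), I would differentiate along the variation using metric compatibility of the Levi-Civita connection:
\[
\frac{d}{ds}\norm{\gamma_{s}'(0)}_{\gamma_{s}(0)}^{2}=2\bigl\langle D_{s}\gamma_{s}'(0),\gamma_{s}'(0)\bigr\rangle_{\gamma_{s}(0)}.
\]
Dividing by $2\norm{\gamma_{s}'(0)}_{\gamma_{s}(0)}$ and applying Cauchy--Schwarz, $\bigl|\tfrac{d}{ds}\ell(\gamma_{s})\bigr|\le\norm{D_{s}\gamma_{s}'(0)}_{\gamma_{s}(0)}$. Writing this as $\tfrac{1}{\delta}\cdot\delta\norm{D_{s}\gamma_{s}'(0)}_{\gamma_{s}(0)}$ and discarding the (nonnegative) $\norm{\tfrac{d}{ds}\gamma_{s}(0)}_{\gamma_{s}(0)}$ term on the right-hand side shows that $\ell_{1}=1/\delta$ suffices. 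The only mild subtlety is the apparent singularity at $\norm{\gamma_{s}'(0)}_{\gamma_{s}(0)}=0$; this is handled by working throughout with $\norm{\cdot}^{2}$ and deducing the bound on $\norm{\cdot}$ via $|f'|\le|({f^{2}})'|/(2f)$ away from zero and by continuity at zero.

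For the velocity-growth inequality, the Hamiltonian curve satisfies $D_{t}\gamma'(t)=\mu(\gamma(t))$ by Lemma \ref{lem:RMCMC}. Metric compatibility again gives
\[
\frac{d}{dt}\norm{\gamma'(t)}_{\gamma(t)}^{2}=2\bigl\langle\mu(\gamma(t)),\gamma'(t)\bigr\rangle_{\gamma(t)},
\]
so $\bigl|\tfrac{d}{dt}\norm{\gamma'(t)}_{\gamma(t)}\bigr|\le\norm{\mu(\gamma(t))}_{\gamma(t)}\le R_{0}$ by Cauchy--Schwarz and the definition of $R_{0}$. Integrating from $0$ to $t$ yields the claim. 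No step here is really an obstacle; the only thing to be careful about is the zero-velocity set, which is resolved as above, and verifying that the chosen constants $\ell_{0}=10\sqrt{n}$ and $\ell_{1}=O(1/\delta)$ are consistent with both conditions simultaneously, which is immediate from the computations above.
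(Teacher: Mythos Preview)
Your proposal is correct and follows essentially the same approach as the paper: Gaussian/chi-square concentration for $\ell_{0}$, metric compatibility plus Cauchy--Schwarz for $\ell_{1}$, and the same differentiation of $\norm{\gamma'(t)}_{\gamma(t)}^{2}$ via $D_{t}\gamma'=\mu$ for the velocity bound. You even note the mild zero-velocity subtlety that the paper leaves implicit, so there is nothing to add.
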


\begin{proof}
For $\ell_{0}$, we note that $\gamma'(0)\sim N(0,g(x)^{-1})$ and
hence $\norm{\gamma'(0)}_{\gamma(0)}\leq5\sqrt{n}$ with probability
$e^{-O(\sqrt{n})}$.

For $\ell_{1}$, we note that
\begin{align*}
\left|\frac{d}{ds}\ell(\gamma_{s})\right| & =\left|\frac{d}{ds}\norm{\gamma_{s}'(0)}_{\gamma_{s}(0)}\right|\leq\frac{1}{2}\left|\frac{\frac{d}{ds}\norm{\gamma_{s}'(0)}_{\gamma_{s}(0)}^{2}}{\norm{\gamma_{s}'(0)}_{\gamma_{s}(0)}}\right|\\
 & =\left|\frac{\left\langle D_{s}\gamma_{s}'(0),\gamma_{s}'(0)\right\rangle _{\gamma_{s}(0)}}{\norm{\gamma_{s}'(0)}_{\gamma_{s}(0)}}\right|\\
 & \leq\norm{D_{s}\gamma_{s}'(0)}_{\gamma_{s}(0)}.
\end{align*}
Hence, we have that $\ell_{1}=\frac{1}{\delta}$.

Next, we note that
\begin{align*}
\left|\frac{d}{dt}\norm{\gamma'(t)}_{\gamma(t)}^{2}\right| & =2\left|\left\langle D_{t}\gamma'(t),\gamma'(t)\right\rangle _{\gamma(t)}\right|\\
 & \leq2\norm{\mu(\gamma(t))}_{\gamma(t)}\norm{\gamma'(t)}_{\gamma(t)}.
\end{align*}
Therefore, we get the last result.
\end{proof}
Combining Lemma \ref{lem:simple_ell} and Theorem \ref{thm:gen-convergence_simplified},
we have the following result. This result might be more convenient
to establish an upper bound on the rate of convergence, as it depends
on only two worst-case smoothness parameters. In the next section,
we will see a more refined bound that uses the randomness of Hamiltonian
curves via additional parameters. 
\begin{cor}
\label{cor:gen-convergence_simplified}Given a manifold $\mathcal{M}$.
Let $q_{t}$ be the distribution of the current point after $t$ steps
Hamiltonian Monte Carlo with step size $\step$ and $q$ be the distribution
proportional to $e^{-f}$. Let $R_{0}$ and $R_{1}$ be parameters
such that
\begin{enumerate}
\item $\norm{\mu(x)}_{x}\leq R_{0}$ for any $x\in\mathcal{M}$ where $\mu$
is defined in Lemma \ref{lem:RMCMC}.
\item $\E_{\alpha,\beta\sim N(0,g(x)^{-1})}\left\langle D_{\alpha}\mu(x),\beta\right\rangle _{x}^{2}\leq R_{1}^{2}$
for any $x\in\mathcal{M}$.
\item $\E_{\alpha,\beta\sim N(0,g(x)^{-1})}\left\langle R(\alpha,v)v,\beta\right\rangle _{x}^{2}\leq R_{1}^{2}$
for any $x\in\mathcal{M}$ and any $\norm v_{x}\leq\sqrt{n}$ where
$R$ is the Riemann curvature tensor of $\mathcal{M}$.
\end{enumerate}
Suppose that $\delta\leq\frac{\sqrt{n}}{R_{0}}$ and $\step^{2}\leq\frac{1}{100\sqrt{n}R_{1}}$,
then for any $\varepsilon>0$, we have that
\[
d_{TV}(q_{t},q)\le\varepsilon+\sqrt{\frac{1}{\varepsilon}\E_{x\sim q_{0}}\frac{dq_{0}(x)}{dq(x)}}\left(1-\frac{(\delta\psi)^{2}}{2}\right)^{t}
\]
where $\psi$ is the conductance of the manifold defined in Definition
\ref{def:manifold_conductance}. In short, the mixing time of Hamiltonian
Monte Carlo is
\[
\widetilde{O}\left(\psi^{-2}\left(\sqrt{n}R_{1}+\frac{R_{0}^{2}}{n}\right)\right).
\]
\end{cor}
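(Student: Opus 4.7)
The plan is to derive the corollary from Theorem \ref{thm:gen-convergence_simplified} by instantiating the auxiliary function of Definition \ref{def:ell_func} and then verifying that the curvature/drift hypotheses 1--3 of the corollary yield the Frobenius bound $R_1$ required by Definition \ref{def:R_1}. First I would adopt the simple auxiliary function $\ell(\gamma)=\norm{\gamma'(0)}_{\gamma(0)}$, for which Lemma \ref{lem:simple_ell} already provides $\ell_0=10\sqrt n$, $\ell_1=O(1/\delta)$ together with the pointwise velocity bound $\norm{\gamma'(t)}_{\gamma(t)}\le \norm{\gamma'(0)}_{\gamma(0)}+R_0 t$. Combined with the assumption $\delta\le\sqrt n/R_0$, this guarantees $\norm{\gamma'(t)}_{\gamma(t)}=O(\sqrt n)$ for every $t\in[0,\delta]$ whenever $\ell(\gamma)\le\ell_0$, which is exactly the regime in which hypothesis 3 of the corollary applies.

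Next I would bound $\norm{\Phi(\gamma,t)}_{F,\gamma(t)}$ along such a good Hamiltonian curve. By Definition \ref{def:R_M_Phi}, $\Phi(t)=M(t)-R(t)$, so the triangle inequality gives
\[
\norm{\Phi(t)}_{F,\gamma(t)}\;\le\;\norm{M(t)}_{F,\gamma(t)}+\norm{R(t)}_{F,\gamma(t)}.
\]
Unfolding the Frobenius norm $\norm A_{F,x}^2=\E_{\alpha,\beta\sim N(0,g(x)^{-1})}(\alpha^T A\beta)^2$, hypothesis 2 directly gives $\norm{M(t)}_{F,\gamma(t)}\le R_1$ at $x=\gamma(t)$, while hypothesis 3 applied with $v=\gamma'(t)$ (whose norm we just bounded by $O(\sqrt n)$) gives $\norm{R(t)}_{F,\gamma(t)}=O(R_1)$. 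Hence the parameter $R_1$ in Definition \ref{def:R_1} can be taken as $O(R_1)$ in the sense of the corollary, up to a harmless absolute constant that is absorbed in the step-size conditions.

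With these identifications, the step-size hypotheses of Theorem \ref{thm:gen-convergence_simplified} become $\delta^2\le c/(\sqrt n R_1)$ and $\delta^3\le c\ell_0/(\sqrt n R_1\ell_1)=c\cdot\delta\sqrt n/(\sqrt n R_1)$, i.e.\ the second reduces to the first up to constants. Both are guaranteed by the stated conditions $\delta\le\sqrt n/R_0$ and $\delta^2\le 1/(100\sqrt n R_1)$, so Theorem \ref{thm:gen-convergence_simplified} gives the stated total-variation bound. The ``in short'' mixing-time claim then follows by optimizing: to make $(\delta\psi)^2/2\ge 1/T$ we need $T=\widetilde O(1/(\delta\psi)^2)$, and choosing $\delta$ as large as the two constraints allow yields $1/\delta^2=\max(R_0^2/n,\,100\sqrt n R_1)$, giving the mixing bound $\widetilde O\bigl(\psi^{-2}(\sqrt n R_1+R_0^2/n)\bigr)$.

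The only step I expect to require genuine care is the passage from the pointwise conditions 2 and 3 to the Frobenius bound on $\Phi(t)$ at $\gamma(t)$: one must check that the velocity $\gamma'(t)$ along the curve stays in the range $\norm{\gamma'(t)}_{\gamma(t)}\le O(\sqrt n)$ for which hypothesis 3 is stated, and must use hypothesis 1 exactly to close this loop via Lemma \ref{lem:simple_ell}. Everything else is bookkeeping: matching $\ell_0,\ell_1$ from Lemma \ref{lem:simple_ell} into the step-size requirements of Theorem \ref{thm:gen-convergence_simplified} and invoking that theorem.
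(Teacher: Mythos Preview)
Your proposal is correct and follows essentially the same route as the paper's own proof, which simply observes that with the auxiliary function $\ell(\gamma)=\norm{\gamma'(0)}_{\gamma(0)}$ of Lemma~\ref{lem:simple_ell} and the assumption $\delta\le\sqrt n/R_0$, the velocity stays $O(\sqrt n)$ along the curve so hypotheses~2 and~3 furnish the required Frobenius bound on $\Phi(t)$; your write-up merely makes explicit the bookkeeping (the triangle inequality $\norm\Phi_F\le\norm M_F+\norm R_F$, the reduction of the second step-size condition to the first via $\ell_0/\ell_1=O(\sqrt n\,\delta)$, and the final optimization over $\delta$) that the paper leaves implicit. The one constant-factor wrinkle you flag---that $\norm{\gamma'(t)}$ is $O(\sqrt n)$ rather than literally $\le\sqrt n$, so hypothesis~3 yields $\norm{R(t)}_F\le C^2R_1$ by bilinearity of the curvature tensor---is handled exactly as you say, by absorbing the constant into the step-size bound; the paper glosses over this as well.
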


\begin{proof}
The statement is basically restating the definition of $R_{1}$ and
$R_{0}$ used in Lemma \ref{lem:simple_ell} and Theorem \ref{thm:gen-convergence_simplified}.
The only difference is that if $\delta\leq\frac{\sqrt{n}}{R_{0}}$,
then we know that 
\[
\norm{\gamma'(t)}_{\gamma(t)}\leq\norm{\gamma'(0)}_{\gamma(0)}+R_{0}t=O(\sqrt{n})
\]
for all $0\leq t\leq\delta$. Therefore, we can relax the constraints
$\ell(\gamma)\leq\ell_{0}$ in the definition of $R_{1}$ to simply
$\norm v_{x}\leq\sqrt{n}$. It allows us to use $R_{1}$ without mentioning
the auxiliary function $\ell$.
\end{proof}

\section{Improved analysis of the convergence\label{sec:Convergence_imporved}}

Corollary \ref{cor:gen-convergence_simplified} gives a polynomial
mixing time for the log barrier function. There are two bottlenecks
to improving the bound. First, the auxiliary function $\ell$ it used
does not capture the fact each curve $\gamma$ in Hamiltonian Monte
Carlo follows a random initial direction. Second, Lemma \ref{lem:smoothness_dx}
also does not take full advantage of the random initial direction.
In this section, we focus on improving Lemma \ref{lem:smoothness_dx}.

Our main theorem for convergence can be stated as follows in terms
of $\psi$ and additional parameters $\ell_{0},\ell_{1},R_{1},R_{2},R_{3}$
(see Definitions \ref{def:manifold_conductance}, \ref{def:ell_func},
\ref{def:R_M_Phi}, \ref{def:R_1}, \ref{def:R2} and \ref{def:R3}).
It uses the following key lemma.

\begin{restatable}{thm}{TVdiff}

\label{thm:TV_diff_improved}For $\step^{2}\leq\frac{1}{R_{1}}$ and
$\delta^{5}\leq\frac{\ell_{0}}{R_{1}^{2}\ell_{1}}$, the one-step
Hamiltonian walk distributions $p_{x},p_{z}$ from $x,z$ satisfy
\[
d_{\mathrm{TV}}(p_{x},p_{z})=O\left(\step^{2}R_{2}+\frac{1}{\step}+\step R_{3}\right)d(x,z)+\frac{1}{25}.
\]

\end{restatable}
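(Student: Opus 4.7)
The plan is to follow the skeleton of the proof of Lemma~\ref{lem:TV_diff_simplified} but replace two worst-case estimates with expectation-based ones that exploit the Gaussian distribution of the initial velocity. We join $x$ and $z$ by a minimal unit-speed geodesic $c$, and for each well-behaved initial velocity $v_x\in V_x$ (those with $\ell(\gamma)\le\tfrac{1}{2}\ell_0$) invoke Lemma~\ref{lem:one_one_cor} to obtain a vector field $v(s)$ along $c$ with $\ham_{c(s),\delta}(v(s))=y$. Exactly as in (\ref{eq:d_TV_integrate}), this reduces the theorem to a bound of the form
\[
d_{\mathrm{TV}}(p_x,p_z)\le \tfrac{\widetilde{\ell}}{50}+2\int_{V_x}\int_0^{L}\left|\frac{d}{ds}\widetilde{p}(v_{c(s)})\right|ds\,dv_x,
\]
which decomposes into a bound on $\left|\tfrac{d}{ds}\norm{v(s)}_{c(s)}^{2}\right|$ together with a bound on the derivative of the log-Jacobian $\log\det(\tfrac{1}{\delta}D\ham_{c(s),\delta}(v(s)))$. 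Patching short geodesic segments of length $\le\ell_0/(7\ell_1)$ to handle arbitrary $d(x,z)$ proceeds identically to the earlier proof; what is new are refined estimates of these two quantities.

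The first refinement upgrades the Jacobian bound in Lemma~\ref{lem:Jac-approx}. Iterating the Volterra identity of Lemma~\ref{lem:matrix_ODE_est_1} one more step and taking $\log\det$ isolates, beyond the leading linear-in-$\Phi$ contribution $\int_0^\delta \tfrac{t(\delta-t)}{\delta}\tr\Phi(t)\,dt$ from (\ref{eq:log_psi_est_2}), a next-order correction that is quadratic in $\Phi$ together with a directional derivative $D_\eta\Phi$ along the curve. Differentiating in $s$ and taking the Gaussian expectation over $v(s)$ against these tensors is exactly what $R_2$ is designed to control, giving a per-unit-geodesic-length contribution of order $\delta^2 R_2$ rather than the crude $\delta R_1\sqrt{n}$ that one obtains from bounding $\norm{\Phi}_F\le R_1$ against a worst-case Gaussian.

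The second, more delicate refinement sharpens Lemma~\ref{lem:smoothness_dx}. Rewriting the Jacobi equation $\overline{\psi}''=\Gamma_t\Phi(t)\Gamma_t^{-1}\overline{\psi}$ in Volterra form with boundary data $\overline{\psi}(0)=\eta$ and $\overline{\psi}(\delta)=0$, and substituting the zeroth-order guess $\overline{\psi}(s)\approx\eta(1-s/\delta)$ back into the integral, one obtains the next-order expansion
\[
D_s v(0)=-\frac{1}{\delta}\partial_s c(0)-\frac{1}{\delta^2}\int_0^\delta (\delta-t)^2\,\Gamma_t\Phi(t)\Gamma_t^{-1}\partial_s c(0)\,dt+E,
\]
with $\norm{E}\le O(\delta^3 R_1^2)\norm{\partial_s c(0)}$. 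Substituting into $\tfrac{d}{ds}\norm{v(s)}_{c(s)}^2=2\langle v(s),D_s v(s)\rangle_{c(s)}$ splits the derivative into three pieces: the unavoidable $O(1/\delta)$ Gaussian term $\langle v(s),\partial_s c(s)\rangle/\delta$, a ``curvature drift'' term $\langle v(s),\Phi(t)\partial_s c(s)\rangle$ whose Gaussian expectation is what $R_3$ is defined to control, and a residual of order $\delta^2 R_2$. The step-size conditions $\delta^2\le 1/R_1$ and $\delta^5\le \ell_0/(R_1^2\ell_1)$ are precisely what is required to keep this Volterra iteration stable and to ensure, via Definition~\ref{def:ell_func}, that the mass on curves with $\ell(\gamma)>\tfrac{1}{2}\ell_0$ can still be absorbed into the additive $\tfrac{1}{25}$.

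The main obstacle will be the careful bookkeeping for the curvature drift term: one must show that $\E_{v(s)}\left|\langle v(s),\Phi\partial_s c\rangle\right|$ and the symmetric bilinear forms arising from even higher Jacobi iterations are controlled by $R_3$ (an appropriately Gaussian-averaged norm of $\Phi$ against the unit tangent $\partial_s c$) rather than by the worst-case $\sqrt{n}\,R_1$ produced by plugging in $\norm{v(s)}_{c(s)}=O(\sqrt n)$. Once these refined estimates replace their cruder counterparts in (\ref{eq:second_term_TV}), the final total-variation bound $O(\delta^2 R_2+1/\delta+\delta R_3)\,d(x,z)+1/25$ follows by integrating along $c$, summing over length-$\ell_0/(7\ell_1)$ segments, and applying the rare-curve probability bound in Definition~\ref{def:ell_func} exactly as in the proof of Lemma~\ref{lem:TV_diff_simplified}.
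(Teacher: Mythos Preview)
Your overall structure matches the paper's: refine $\widetilde{p}$ to (\ref{eq:p_tilde_x}), differentiate along the connecting geodesic, and bound separately the derivative of $\|v(s)\|^2$ and of the log-Jacobian term $\int_0^\delta\tfrac{t(\delta-t)}{\delta}\tr\Phi(t)\,dt$. But your description of how $R_3$ enters is wrong, and following it literally would not yield the claimed bound.

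You say the relevant term is $\langle v(s),\Phi(t)\partial_s c(s)\rangle$ and that $R_3$ is ``an appropriately Gaussian-averaged norm of $\Phi$ against the unit tangent $\partial_s c$''. By Definition~\ref{def:R3}, however, $R_3$ bounds $\|\Phi(t)\zeta(t)\|_{\gamma(t)}$, where $\zeta(t)$ is the parallel transport of the \emph{initial velocity} $\gamma'(0)$, not of $\partial_s c$. Since $\partial_s c$ is an arbitrary unit vector, $\|\Phi\,\partial_s c\|$ is controlled only by $\|\Phi\|_{\mathrm{op}}\le R_1$, and taking a Gaussian expectation in $v(s)$ of $|\langle v(s),\Phi\eta\rangle|$ gives $O(\|\Phi\eta\|)$, again only $R_1$. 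That route recovers Lemma~\ref{lem:TV_diff_simplified}, not the improvement.

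The actual step (Lemma~\ref{lem:smoothness_dx2}) uses that $\Phi(t)$ is self-adjoint in $\langle\cdot,\cdot\rangle_{\gamma(t)}$ (see the remark following Lemma~\ref{lem:formulate_M_R}), so
\[
\bigl|\langle v(x),\Gamma_t\Phi(t)\Gamma_t^{-1}\eta\rangle_x\bigr|
=\bigl|\langle\Phi(t)\Gamma_t^{-1}v(x),\Gamma_t^{-1}\eta\rangle_{\gamma(t)}\bigr|
\le\|\Phi(t)\zeta(t)\|_{\gamma(t)}\,\|\eta\|_x\le R_3,
\]
since $v(x)=\gamma'(0)$ and $\zeta(t)=\Gamma_t^{-1}\gamma'(0)$. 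The gain comes from $\Phi$ landing on the parallel transport of the random initial velocity (which is well-behaved via $\ell(\gamma)\le\ell_0$), not from averaging over $v(s)$ on the other side of the inner product. The paper also does not split off a residual and attribute it to $R_2$: it keeps the full integrand $\eta+s\chi+E(s)$ together, bounds its norm by $6\|\eta\|$, and applies the self-adjoint move once.

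Your treatment of the Jacobian term is over-engineered. No further Volterra iteration is needed: the refined $\widetilde{p}$ in (\ref{eq:p_tilde_x}) already carries $\exp\bigl(-\int_0^\delta\tfrac{t(\delta-t)}{\delta}\tr\Phi(t)\,dt\bigr)$, with approximation error $(\delta^2 R_1)^2/10$ from Lemma~\ref{lem:Jac-approx} (this is where the hypothesis $\delta^5\le\ell_0/(R_1^2\ell_1)$ enters, making that error $O(\widetilde\ell)$). Differentiating in $s$ produces $\int_0^\delta\tfrac{t(\delta-t)}{\delta}\tfrac{d}{ds}\tr\Phi(\gamma_s'(t))\,dt$, and Definition~\ref{def:R2} is literally a bound on $|\tfrac{d}{ds}\tr\Phi|$, so Lemma~\ref{lem:smoothness_logdetJ} gives $O(\delta^2 R_2)$ directly. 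Both $R_2$ and $R_3$ are deterministic bounds valid whenever $\ell(\gamma)\le\ell_0$; the randomness of the initial velocity has already been encapsulated in that constraint.
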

\begin{rem*}
The constant term at the end is an artifact that comes from bounding
the probability of some bad events of the Hamiltonian walk, and can
be made to arbitrary small. 
\end{rem*}
We now prove this theorem. In a later section, we specialize to sampling
distributions over polytopes using the logarithmic barrier, by defining
a suitable auxiliary function and bounding all the parameters. The
key ingredient is Theorem \ref{thm:TV_diff_improved} about the overlap
of one-step distributions, which we prove in the next section. 

Here is the consequence of Theorem \ref{thm:TV_diff_improved} and
Theorem \ref{lem:gen-convergence_metric}.
\begin{thm}
\label{thm:gen-convergence}Given a manifold $\mathcal{M}$. Let $\ell_{0},\ell_{1},R_{1},R_{2},R_{3}$
be the parameters of the Hamiltonian Monte Carlo defined in Definition
\ref{def:ell_func}, \ref{def:R_1}, \ref{def:R2} and \ref{def:R3}.
Let $q_{t}$ be the distribution of the current point after $t$ steps
Hamiltonian Monte Carlo starting from initial distribution $q_{0}.$
Let $q$ be the distribution proportional to $e^{-f}$. Suppose that
the step size $\step$ satisfies 
\[
\step^{2}\leq\frac{1}{R_{1}},\ \delta^{5}\leq\frac{\ell_{0}}{R_{1}^{2}\ell_{1}}\text{ and }\step^{3}R_{2}+\delta^{2}R_{3}\leq1.
\]
For any $\varepsilon>0$, we have that
\[
d_{TV}(q_{t},q)\le\varepsilon+\sqrt{\frac{1}{\varepsilon}\E_{x\sim q_{0}}\frac{dq_{0}(x)}{dq(x)}}\left(1-\frac{(\delta\psi)^{2}}{2}\right)^{t}
\]
where $\psi$ is the conductance of the manifold.
\end{thm}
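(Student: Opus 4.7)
The plan is to deduce Theorem \ref{thm:gen-convergence} from Theorem \ref{thm:TV_diff_improved} by feeding its one-step overlap bound into the general geometric-walk framework already packaged as Lemma \ref{lem:geometric_walk}. Recall that Lemma \ref{lem:geometric_walk} requires two inputs: (a) a radius $r>0$ such that any two points within metric distance $r$ have one-step distributions at total variation distance at most $0.9$; and (b) the metric isoperimetric estimate $\int_{0<d(S,x)\le r} q(x)\,dx \ge r\psi\min\{q(S),q(\mathcal{M}\setminus S)\}$. Input (b) is immediate from Definition \ref{def:manifold_conductance} for every $r>0$, so the entire task reduces to establishing (a) for a suitable $r$.

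The main step is to choose $r = c\delta$ for a sufficiently small absolute constant $c>0$ and verify (a). Observe that the first two hypotheses of the theorem, $\delta^{2}\le 1/R_{1}$ and $\delta^{5}\le \ell_{0}/(R_{1}^{2}\ell_{1})$, are exactly the hypotheses of Theorem \ref{thm:TV_diff_improved}, which therefore yields
\[
d_{\mathrm{TV}}(p_{x},p_{z}) \;=\; O\!\left(\delta^{2}R_{2} + \tfrac{1}{\delta} + \delta R_{3}\right) d(x,z) + \tfrac{1}{25}.
\]
Evaluating at $d(x,z)\le c\delta$ turns the first term into $O(c)\bigl(\delta^{3}R_{2}+1+\delta^{2}R_{3}\bigr)$, and the third hypothesis $\delta^{3}R_{2}+\delta^{2}R_{3}\le 1$ bounds the parenthesized quantity by $2$. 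Hence for $c$ chosen small enough (independently of all parameters), we obtain $d_{\mathrm{TV}}(p_{x},p_{z})\le 0.9$ whenever $d(x,z)<r=c\delta$.

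With (a) and (b) established for $r=\Omega(\delta)$, Lemma \ref{lem:geometric_walk} immediately yields the stated mixing bound, with conductance $\Omega(\delta\psi)$; the absolute constants hiding inside the $\Omega$ are absorbed into the $(\delta\psi)^{2}/2$ of the displayed bound in the same way that the constants in Theorem \ref{thm:cheeger} are being absorbed. I do not expect any new technical obstacle here: all of the analytic work is already contained in Theorem \ref{thm:TV_diff_improved}, and the present theorem is essentially a packaging step. The only point that needs care is checking that the three scalar conditions on $\delta$ in the hypothesis simultaneously enable Theorem \ref{thm:TV_diff_improved} and force the right-hand side of its conclusion to collapse to a constant less than one at distance $r=\Omega(\delta)$, which is exactly what the last hypothesis $\delta^{3}R_{2}+\delta^{2}R_{3}\le 1$ has been engineered to do.
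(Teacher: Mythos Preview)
Your proposal is correct and matches the paper's own argument: the paper states Theorem \ref{thm:gen-convergence} as a direct consequence of Theorem \ref{thm:TV_diff_improved} together with Lemma \ref{lem:gen-convergence_metric} (equivalently, Lemma \ref{lem:geometric_walk}), without writing out further details. Your write-up in fact spells out more than the paper does, including the verification that the third hypothesis $\delta^{3}R_{2}+\delta^{2}R_{3}\le 1$ is precisely what forces the overlap bound to drop below $0.9$ at radius $r=\Omega(\delta)$.
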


\subsection{Improved one-to-one correspondence for Hamiltonian curve}

In the previous section, we only used $R_{1}$ to analyze how much
a Hamiltonian curves change as one end point varies. Here we derive
a more refined analysis of Lemma \ref{lem:smoothness_dx} using an
additional parameter $R_{3}$. 
\begin{defn}
\label{def:R3}For a manifold $M$ and auxiliary function $\ell$,
$R_{3}$ is a constant such that for any Hamiltonian curve $\gamma(t)$
of step size $\step$ with $\ell(\gamma_{0})\leq\ell_{0}$, if $\zeta(t)$
is the parallel transport of the vector $\gamma'(0)$ along $\gamma(t)$,
then we have
\[
\sup_{0\leq t\leq\step}\norm{\Phi(t)\zeta(t)}_{\gamma(t)}\leq R_{3}.
\]
\end{defn}

\begin{lem}
\label{lem:smoothness_dx2}Under the same assumptions as Lemma \ref{lem:smoothness_dx},
we have that 
\[
\frac{\step}{2}\left|\nabla_{\eta}\norm{v(x)}_{x}^{2}\right|\leq\left|\left\langle v_{x},\eta\right\rangle _{x}\right|+3\step^{2}R_{3}\norm{\eta}_{x}.
\]
\end{lem}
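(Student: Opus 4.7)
\medskip

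\noindent\textbf{Proof proposal for Lemma \ref{lem:smoothness_dx2}.}
The plan is to leverage the setup of Lemma \ref{lem:smoothness_dx}, project the Jacobi equation onto $v_x$, and then exploit self-adjointness of $\Phi$ together with parallel transport to replace the worst-case Frobenius bound $R_1$ by the sharper directional bound $R_3$.

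First I would reduce the statement to a statement about $\langle v_x, \eta + \delta\nabla_\eta v(x)\rangle_x$. Since $\nabla_\eta\|v(x)\|_x^2 = 2\langle v_x,\nabla_\eta v(x)\rangle_x$, we may write
\[
\step\,\langle v_x,\nabla_\eta v(x)\rangle_x
\;=\;\langle v_x,\eta+\step\nabla_\eta v(x)\rangle_x - \langle v_x,\eta\rangle_x,
\]
so by the triangle inequality it suffices to prove $|\langle v_x,\eta+\step\nabla_\eta v(x)\rangle_x|\le 3\step^2 R_3\|\eta\|_x$ (in fact $(5/2)\step^2R_3\|\eta\|_x$ will fall out).

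Next I would recall from Lemma \ref{lem:smoothness_dx} that if $\overline{\psi}(t)$ is the parallel transport of $\partial_s\gamma_s(t)\big|_{s=0}$ back to $T_xM$, it solves the Jacobi ODE $\overline{\psi}''(t)=\Gamma_t\Phi(t)\Gamma_t^{-1}\overline{\psi}(t)$ with $\overline{\psi}(0)=\eta$, $\overline{\psi}'(0)=\nabla_\eta v(x)$, and boundary condition $\overline{\psi}(\step)=0$. Applying the integral form from Lemma \ref{lem:matrix_ODE_est_1} at $t=\step$ yields
\[
\eta+\step\nabla_\eta v(x)
\;=\;-\int_0^\step(\step-s)\,\Gamma_s\Phi(s)\Gamma_s^{-1}\!\bigl(\eta+s\nabla_\eta v(x)+E(s)\bigr)\,ds,
\]
with $\|E(s)\|_x\le R_1(s^2\|\eta\|_x+\tfrac{s^3}{5}\|\nabla_\eta v(x)\|_x)$. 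Taking the inner product with $v_x$ moves the problem onto the integrand.

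The crux is now a symmetry trick. Parallel transport is an isometry, so $\langle v_x,\Gamma_s\Phi(s)\Gamma_s^{-1} w\rangle_x=\langle \Gamma_s^{-1}v_x,\Phi(s)\Gamma_s^{-1}w\rangle_{\gamma(s)}$. Because $\mu=-\nabla_M(f+\tfrac12\log\det g)$ is the negative Riemannian gradient of a scalar, $M(s)=D_{\cdot}\mu$ is the Hessian and hence self-adjoint; $R(s)$ is self-adjoint by the standard symmetry of the curvature tensor; therefore $\Phi(s)$ is self-adjoint. We may thus swap $\Phi(s)$ onto $\Gamma_s^{-1}v_x=\zeta(s)$, the parallel transport of $\gamma'(0)$, and apply Cauchy--Schwarz together with the definition of $R_3$ to obtain $|\langle v_x,\Gamma_s\Phi(s)\Gamma_s^{-1}w\rangle_x|\le R_3\|w\|_x$. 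Finally, using the bounds $\|\nabla_\eta v(x)\|_x\le\tfrac{5}{2\step}\|\eta\|_x$ from Lemma \ref{lem:smoothness_dx} and $R_1\step^2\le 1$ to control $\|E(s)\|_x$, the norm $\|\eta+s\nabla_\eta v(x)+E(s)\|_x$ is at most $5\|\eta\|_x$, and integrating $(\step-s)\cdot 5R_3\|\eta\|_x$ over $[0,\step]$ gives the desired $\tfrac52\step^2R_3\|\eta\|_x\le 3\step^2R_3\|\eta\|_x$.

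The main obstacle I anticipate is the self-adjointness of $\Phi$: without it, the projection onto $v_x$ gives no advantage over the uniform Frobenius bound $R_1$. Once one observes that the Hamiltonian of interest has $\mu$ equal to a Riemannian gradient and that the Riemann tensor is symmetric in the relevant slots, the rest is essentially the same calculation that proved Lemma \ref{lem:smoothness_dx}, but with a directional bound in place of a spectral one.
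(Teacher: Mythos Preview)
Your proposal is correct and follows essentially the same route as the paper: take the integral form of the Jacobi equation at $t=\step$, pair with $v_x$, use that parallel transport is an isometry and that $\Phi(s)$ is self-adjoint to convert the pairing into $\|\Phi(s)\zeta(s)\|_{\gamma(s)}\le R_3$, and then bound $\|\eta+s\nabla_\eta v(x)+E(s)\|_x$ via Lemma~\ref{lem:smoothness_dx}. In fact you are more explicit than the paper about why $\Phi$ is self-adjoint (the paper silently passes from $\gamma'(0)^{T}\Gamma_s\Phi(s)\Gamma_s^{-1}$ to $\Phi(s)\zeta(s)$), and your constant $5$ versus the paper's $6$ just reflects a slightly tighter estimate on $\|E(s)\|_x$.
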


\begin{proof}
Let $\chi=\nabla_{\eta}v(x)$. Define $\overline{\psi}(t)$ as in
the proof of Lemma \ref{lem:smoothness_dx}. Using Lemma \ref{lem:matrix_ODE_est_1}
we get that
\[
0=\overline{\psi}(\step)=\eta+\step\chi+\int_{0}^{\step}(\step-s)\Gamma_{s}\Phi(s)\Gamma_{s}^{-1}(\eta+s\chi+e(s))ds
\]
i.e., 
\[
-\step\chi=\eta+\int_{0}^{\step}(\step-s)\Gamma_{s}\Phi(s)\Gamma_{s}^{-1}(\eta+s\chi+E(s))ds
\]
with 
\[
\norm{E(s)}_{x}\leq R_{1}\left(\step^{2}\norm{\eta}_{x}+\frac{\step^{3}}{5}\norm{\chi}_{x}\right)\leq2R_{1}\step^{2}\norm{\eta}_{x}
\]
for all $0\leq s\leq\step$ where we used $\norm{\nabla_{\eta}v(x)}_{x}\leq\frac{5}{2\step}\norm{\eta}_{x}$
at the end (by Lemma \ref{lem:smoothness_dx}).

Therefore, 
\[
\step\left|\langle v(x),\chi\rangle_{x}\right|\leq\left|\langle v(x),\eta\rangle_{x}\right|+\frac{\step^{2}}{2}\sup_{0\leq s\leq\step}\left|\langle v(x),\Gamma_{s}\Phi(s)\Gamma_{s}^{-1}\left(\eta+s\chi+E(s)\right)\rangle_{x}\right|.
\]
Noting that
\[
\norm{\eta+s\chi+E(s)}_{x}\leq\norm{\eta}_{x}+\step\norm{\chi}_{x}+\norm{E(s)}_{x}\leq(1+\frac{5}{2}+2R_{1}\step^{2})\norm{\eta}_{x}\leq6\norm{\eta}_{x},
\]
we have
\begin{align*}
\step\left|\langle v(x),\chi\rangle_{x}\right| & \leq\left|\langle v(x),\eta\rangle_{x}\right|+3\step^{2}\sup_{0\leq s\leq\step}\norm{\gamma'(0)^{T}\Gamma_{s}\Phi(s)\Gamma_{s}^{-1}}_{x}\norm{\eta}_{x}\\
 & =\left|\langle v(x),\eta\rangle_{x}\right|+3\step^{2}\sup_{0\leq s\leq\step}\norm{\Phi(s)\Gamma_{s}\gamma'(0)}_{\gamma(s)}\norm{\eta}_{x}\\
 & \leq\left|\langle v(x),\eta\rangle_{x}\right|+3\step^{2}R_{3}\norm{\eta}_{x}.
\end{align*}
Finally, we note that
\[
\frac{\step}{2}\left|\nabla_{\eta}\norm{v(x)}_{x}^{2}\right|=\step\left|\left\langle v(x),\nabla_{\eta}v(x)\right\rangle _{x}\right|.
\]
\end{proof}

\subsection{Improved smoothness of one-step distributions}

The proof of Theorem \ref{thm:TV_diff_improved} is pretty similar
to Lemma \ref{lem:TV_diff_simplified}. First, we show that $p_{x}(y)$
is in fact close to 
\begin{equation}
\widetilde{p}_{x}(y)=\sum_{v_{x}:\ham_{x,\step}(v_{x})=y}\sqrt{\frac{\det\left(g(y)\right)}{\left(2\pi\step^{2}\right)^{n}}}\exp\left(-\int_{0}^{\step}\frac{t(\step-t)}{\step}\tr\Phi(t)dt-\frac{1}{2}\norm{v_{x}}_{x}^{2}\right).\label{eq:p_tilde_x}
\end{equation}
Note that this is a more refined estimate than (\ref{eq:p_tilde_x_simplified}).
We use Lemma \ref{lem:smoothness_dx2} to bound the change of $\norm{v_{x}}_{x}^{2}$.
For the change of $\tr\Phi(t)$, we defer the calculation until the
end of this section.

\TVdiff*
\begin{proof}
We first consider the case $d(x,y)<\frac{\ell_{0}}{7\ell_{1}}$. By
a similar argument as Lemma \ref{lem:TV_diff_simplified}, using that
$\step^{2}\leq\frac{1}{R_{1}}$ and $\delta^{5}\leq\frac{\ell_{0}}{R_{1}^{2}\ell_{1}}$,
we have that
\begin{align}
d_{TV}(p_{x},p_{z}) & \leq\frac{\widetilde{\ell}}{50}+2\int_{V_{x}}\int_{s}\left|\frac{d}{ds}\widetilde{p}(v_{c(s)})\right|dsdv_{x}\label{eq:d_TV_integrate-2}
\end{align}
where $\widetilde{\ell}\defeq\min\left(1,\frac{\ell_{0}}{\ell_{1}\delta}\right)$.
By direct calculation, we have
\[
\frac{d}{ds}\widetilde{p}(v_{c(s)})=\left(-\int_{0}^{\step}\frac{t(\step-t)}{\step}\frac{d}{ds}\tr\Phi(\gamma'_{s}(t))dt-\frac{1}{2}\frac{d}{ds}\norm{v(s)}_{c(s)}^{2}\right)\widetilde{p}(v_{c(s)}).
\]
By similar argument as (\ref{eq:jac_approx_2}), we have that $\widetilde{p}(v_{c(s)})\leq2\cdot p(v_{c(s)})$
and hence
\[
\left|\frac{d}{ds}\widetilde{p}(v_{c(s)})\right|\leq2\left(\left|\int_{0}^{\step}\frac{t(\step-t)}{\step}\frac{d}{ds}\tr\Phi(\gamma'_{s}(t))dt\right|+\frac{1}{2}\left|\frac{d}{ds}\norm{v(s)}_{c(s)}^{2}\right|\right)p(v_{c(s)}).
\]

Since $\ell(\gamma_{s})\leq\ell_{0}$, we can use Lemma \ref{lem:smoothness_logdetJ}
to get
\[
\left|\int_{0}^{\step}\frac{t(\step-t)}{\step}\frac{d}{ds}\tr\Phi(\gamma'_{s}(t))dt\right|\leq O\left(\step^{2}R_{2}\right).
\]
Hence, 
\begin{align}
\int_{V_{x}}\left|\frac{d}{ds}\widetilde{p}(v_{c(s)})\right|dv_{x}\leq & O(\step^{2}R_{2})\int_{V_{x}}p(v_{c(s)})dv_{x}+\int_{V_{x}}\left|\frac{d}{ds}\norm{v(s)}_{c(s)}^{2}\right|p(v_{c(s)})dv_{x}.\label{eq:TV_term-1}
\end{align}

For the first term, we note that $\int_{V_{x}}p(v_{c(s)})dv_{x}\leq1$.

For the second term, we have that
\begin{align}
\int_{V_{x}}\left|\frac{d}{ds}\norm{v(s)}_{c(s)}^{2}\right|p(v_{c(s)})dv_{x}\leq & \E_{\ell(\gamma_{s})\leq\ell_{0}}\left|\frac{d}{ds}\norm{v(s)}_{c(s)}^{2}\right|.\label{eq:second_term_TV-1}
\end{align}
By Lemma \ref{lem:smoothness_dx2}, we have that
\[
\frac{\step}{2}\left|\frac{d}{ds}\norm{v(s)}_{c(s)}^{2}\right|\leq\left|\left\langle v(s),\partial_{s}c\right\rangle _{x}\right|+3\step^{2}R_{3}\norm{\partial_{s}c}_{x}.
\]
Since $\norm{\partial_{s}c}_{x}=1$ and $\left|\left\langle v(s),\partial_{s}c\right\rangle _{x}\right|=O(1)$
with high probability since $v(s)$ is a Gaussian vector from the
local metric. We have
\[
\frac{\step}{2}\left|\frac{d}{ds}\norm{v(s)}_{c(s)}^{2}\right|\leq O(1)+3\step^{2}R_{3}.
\]
Putting it into (\ref{eq:second_term_TV-1}) and (\ref{eq:TV_term-1}),
we have that
\[
\int_{V_{x}}\left|\frac{d}{ds}\widetilde{p}(v_{c(s)})\right|dv_{x}=O\left(\step^{2}R_{2}+\frac{1}{\step}+\step R_{3}\right).
\]

Putting this into \ref{eq:d_TV_integrate-2}, we get
\[
d_{\mathrm{TV}}(p_{x},p_{z})=O\left(\step^{2}R_{2}+\frac{1}{\step}+\step R_{3}\right)L+\frac{\widetilde{\ell}}{50}
\]
for any $L<\frac{\ell_{0}}{7\ell_{1}}$. By taking a minimal length
geodesic, and summing over segment of length $\frac{\ell_{0}}{8\ell_{1}}$,
for any $x$ and $z$, we have 
\[
d_{\mathrm{TV}}(p_{x},p_{z})=O\left(\step^{2}R_{2}+\frac{1}{\step}+\step R_{3}\right)d(x,z)+\frac{1}{25}.
\]
\end{proof}
\begin{defn}
\label{def:R2}Given a Hamiltonian curve $\gamma(t)$ with $\ell(\gamma_{0})\leq\ell_{0}$.
Let $R_{2}$ be a constant depending on the manifold $M$ and the
step size $\step$ such that for any $0\leq t\leq\step$, any curve
$c(s)$ starting from $\gamma(t)$ and any vector field $v(s)$ on
$c(s)$ with $v(0)=\gamma'(t)$, we have that

\[
\left|\frac{d}{ds}\left.\tr\Phi(v(s))\right|_{s=0}\right|\leq\left(\norm{\left.\frac{dc}{ds}\right|_{s=0}}_{\gamma(t)}+\step\norm{\left.D_{s}v\right|_{s=0}}_{\gamma(t)}\right)R_{2}.
\]
\end{defn}

\begin{lem}
\label{lem:smoothness_logdetJ}For $\step^{2}\leq\frac{1}{R_{1}}$
and $\ell(\gamma_{s})\leq\ell_{0}$, we have 
\[
\left|\int_{0}^{\step}\frac{t(\step-t)}{\step}\frac{d}{ds}\tr\Phi(\gamma'_{s}(t))dt\right|\leq O\left(\step^{2}R_{2}\right)
\]
where $\gamma_{s}$ is a family of Hamiltonian curve that connect
$c(s)$ to $y$ defined in Lemma \ref{lem:one_one_cor}.
\end{lem}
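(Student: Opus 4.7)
The plan is to reduce the integral estimate directly to the pointwise smoothness parameter $R_2$ (Definition \ref{def:R2}), using the quantitative bounds on the Jacobi-field variation that are already established in Lemma \ref{lem:smoothness_dx}.

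First I would fix $t\in[0,\delta]$ and identify, at that fixed time, the appropriate ``curve and vector field'' to which the defining inequality of $R_2$ applies. As $s$ varies, the point $s\mapsto\gamma_s(t)$ traces out a curve in $\mathcal{M}$ starting at $\gamma_0(t)$, and $s\mapsto\gamma_s'(t)$ is a vector field along it satisfying $\gamma_0'(t)=\gamma'(t)$. Since $\ell(\gamma_0)\le\ell_0$, Definition \ref{def:R2} yields
\[
\Bigl|\tfrac{d}{ds}\tr\Phi(\gamma_s'(t))\Bigr|_{s=0}\Bigr|
\;\le\;\Bigl(\bigl\|\tfrac{\partial}{\partial s}\gamma_s(t)\bigr|_{s=0}\bigr\|_{\gamma_0(t)}+\delta\bigl\|D_s\gamma_s'(t)\bigr|_{s=0}\bigr\|_{\gamma_0(t)}\Bigr)R_2.
\]

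Next I would invoke Lemma \ref{lem:smoothness_dx}, applied with $\eta=c'(0)$. Because $c$ is a unit speed geodesic (as it is in the enclosing proof of Theorem \ref{thm:TV_diff_improved}), $\|\eta\|_{x}=1$, and the lemma gives the pointwise bounds
\[
\bigl\|\tfrac{\partial}{\partial s}\gamma_s(t)\bigr|_{s=0}\bigr\|_{\gamma_0(t)}\le 5,\qquad \bigl\|D_s\gamma_s'(t)\bigr|_{s=0}\bigr\|_{\gamma_0(t)}\le\tfrac{10}{\delta},
\]
valid throughout $0\le t\le\delta$ under the step-size assumption $\delta^2\le 1/R_1$. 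Substituting these into the previous inequality produces the uniform pointwise estimate
\[
\Bigl|\tfrac{d}{ds}\tr\Phi(\gamma_s'(t))\Bigr|_{s=0}\Bigr|\;\le\;(5+10)R_2\;=\;O(R_2).
\]

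Finally I would integrate this pointwise estimate against the explicit kernel $t(\delta-t)/\delta$. Since $\int_0^\delta t(\delta-t)/\delta\,dt=\delta^2/6$, we obtain
\[
\Bigl|\int_0^\delta \tfrac{t(\delta-t)}{\delta}\,\tfrac{d}{ds}\tr\Phi(\gamma_s'(t))\,dt\Bigr|\;\le\;O(R_2)\cdot\tfrac{\delta^2}{6}\;=\;O(\delta^2 R_2),
\]
which is exactly the claimed bound. There is no serious obstacle: all the nontrivial work has been done in Lemma \ref{lem:smoothness_dx} (controlling the Jacobi field and its covariant derivative via the matrix ODE machinery), and Definition \ref{def:R2} is tailored precisely so that pointwise application here is immediate. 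The only mild care required is to verify that the variation-of-curve/vector-field data appearing in the definition of $R_2$ really matches what the family $\gamma_s(t)$ provides for each fixed $t$, which is transparent from Lemma \ref{lem:one_one_cor}.
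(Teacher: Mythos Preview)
Your proposal is correct and follows essentially the same approach as the paper: apply Definition~\ref{def:R2} pointwise in $t$, bound the two norms using the Jacobi-field estimates of Lemma~\ref{lem:smoothness_dx} (with $\|\eta\|=1$ since $c$ is unit speed), obtain the uniform pointwise bound $15R_2$, and then integrate against the kernel $t(\delta-t)/\delta$. The only cosmetic difference is that you spell out the final integration explicitly, whereas the paper stops at the pointwise bound $\bigl|\tfrac{d}{ds}\tr\Phi(\gamma_s'(t))\bigr|\le 15R_2$ and leaves the elementary integral implicit.
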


\begin{proof}
By Definition \ref{def:R2}, we have that
\begin{equation}
\left|\frac{d}{ds}\tr\Phi(\gamma'_{s}(t))\right|\leq\left(\norm{\left.\frac{\partial}{\partial s}\right|_{s=0}\gamma_{s}(t)}_{\gamma(t)}+\step\norm{\left.D_{s}\gamma'_{s}(t)\right|_{s=0}}_{\gamma(t)}\right)\cdot R_{2}.\label{eq:dRicus}
\end{equation}
By definition of $\gamma_{s}$, we have that $\frac{d}{ds}\gamma_{s}(0)=\frac{d}{ds}\ham_{c(s)}(0)=\frac{d}{ds}c(s)$
is a unit vector and hence $\norm{\frac{d}{ds}\gamma_{s}(0)}_{\gamma(0)}=1$.
Since $\ham_{c(s),\delta}(v(s))=y$, Lemma \ref{lem:smoothness_dx}
shows that
\[
\norm{\left.\frac{\partial}{\partial s}\right|_{s=0}\gamma_{s}(t)}_{\gamma(t)}\leq5\quad\text{and}\quad\norm{\left.D_{s}\gamma'_{s}(t)\right|_{s=0}}_{\gamma(t)}\leq\frac{10}{\step}.
\]
Therefore, we have that 
\[
\left|\frac{d}{ds}\tr\Phi(\gamma'_{s}(t))\right|\leq15R_{2}.
\]
\end{proof}

\pagebreak{}
\section{Gibbs sampling on manifolds\label{sec:KLS}}

\subsection{Isoperimetry for Hessian manifolds}

Here we derive a general isoperimetry bound, assuming that the manifold
is defined by the Hessian of a convex function, and that the directional
fourth directive is non-negative, a property satisfied, e.g., by the
standard logarithmic barrier.
\begin{lem}
\label{lem:1-d-gaussian-manifold-isoperimetry}Let $\phi:[a,b]\rightarrow\R$
be a convex function such that $\phi''$ is also convex. For any $x\in[a,b]$,
we have that
\[
e^{-\phi(x)}\geq0.372\sqrt{\phi''(x)}\min\left(\int_{x}^{b}e^{-\phi(t)}dt,\int_{a}^{x}e^{-\phi(t)}dt\right).
\]
Let $f(x)$ be the logconcave density proportional to $e^{-\phi(x)}$.
Then, we have that
\[
\Var_{X\sim f(x)}X\leq\frac{O(1)}{\min_{x\in[a,b]}\phi''(x)}.
\]
\end{lem}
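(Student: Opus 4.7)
I will handle the two claims separately. For the pointwise isoperimetric inequality, fix $x$ and, by translating and shifting $\phi$, assume WLOG $x=0$ and $\phi(0)=0$. Write $\alpha = \phi''(0)$ and $\beta = \phi'(0)$. The key observation is a dichotomy driven by the convexity of $\phi''$: since a convex function lies above each of its one-sided tangent lines, and since the left subderivative $\phi'''(0^-)$ is $\leq \phi'''(0^+)$, at least one of the following must hold --- either $\phi'''(0^+) \geq 0$, whence $\phi''(t) \geq \alpha$ on $[0,b]$; or $\phi'''(0^-) \leq 0$, whence $\phi''(t) \geq \alpha$ on $[a,0]$. Assume the former (the latter is symmetric). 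Integrating twice gives the quadratic lower bound $\phi(t) \geq \beta t + \alpha t^{2}/2$ on $[0,b]$, so
\[
G(0) \defeq \int_0^b e^{-\phi(t)}\,dt \;\leq\; \int_0^\infty e^{-\beta t - \alpha t^{2}/2}\,dt \;=\; \frac{1}{\sqrt{\alpha}}\,\Phi_c\!\left(\frac{\beta}{\sqrt{\alpha}}\right),
\]
where $\Phi_c(z) \defeq e^{z^{2}/2}\int_z^\infty e^{-v^{2}/2}\,dv$ is the exponentially scaled Gaussian tail. On the opposite side, only the linear bound $\phi(t) \geq \beta t$ (from convexity of $\phi$ itself) is available, giving $F(0) \defeq \int_a^0 e^{-\phi(t)}\,dt \leq 1/|\beta|$ whenever $\beta < 0$.

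Combining and setting $z = \beta/\sqrt{\alpha}$, I get $\sqrt{\alpha}\,\min(F(0),G(0)) \leq \min\!\bigl(\Phi_c(z),\; 1/|z|\bigr)$. For $\beta \geq 0$ the first term alone is bounded by $\Phi_c(0) = \sqrt{\pi/2}$, giving a constant of $\sqrt{2/\pi} \approx 0.798$. For $\beta < 0$ the minimum of the two estimates is largest (worst for us) where they balance, $|z|\,\Phi_c(z) = 1$; a numerical root finding gives $|z_\star| \approx 0.51$, so in every case $e^{-\phi(0)} \geq 0.372\,\sqrt{\alpha}\,\min(F(0),G(0))$, which is the claimed bound with slack to spare.

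For the variance bound, the one-dimensional Brascamp--Lieb inequality applied to $f \propto e^{-\phi}$ states that for any $C^{1}$ test function $g$,
\[
\Var_{X\sim f} g(X) \;\leq\; \E_f\!\left[\frac{g'(X)^{2}}{\phi''(X)}\right].
\]
Taking $g(x)=x$ yields $\Var(X) \leq \E[1/\phi''(X)] \leq 1/\min_{[a,b]}\phi''$, as desired. (Alternatively, the first part already gives a Cheeger constant $\Omega(\sqrt{\min\phi''})$ for $f$, whence Cheeger's inequality produces a Poincar\'e constant of $O(1/\min\phi'')$ and the variance bound follows with a worse absolute constant.) The main obstacle lives entirely in the first part: one must verify the dichotomy in conjunction with a careful sign analysis on $\beta$ (to ensure that for each sign of $\beta$ the bound used applies to the smaller of $F(0),G(0)$), and then carry out the transcendental optimization to extract an explicit universal constant; the rest reduces to routine Gaussian integrals.
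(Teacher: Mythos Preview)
Your argument is correct and mirrors the paper's proof closely. Both exploit the same two ingredients: convexity of $\phi$ gives the linear lower bound $\phi(t)\ge\phi(x)+\phi'(x)(t-x)$, controlling one tail by $1/|\phi'(x)|$; and convexity of $\phi''$ forces $\phi''(t)\ge\phi''(x)$ on at least one side of $x$, yielding a quadratic lower bound and a Gaussian tail estimate there. The paper organizes the trade-off by introducing a threshold $|\phi'(x)|\gtrless a\sqrt{\phi''(x)}$, bounds the Gaussian side by the full integral $\sqrt{2\pi/\phi''(x)}\,e^{a^{2}/2}$, and then optimizes $a$ to get $a e^{a^{2}/2}=1/\sqrt{2\pi}$, i.e.\ $a\approx 0.372$. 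You instead retain the half-Gaussian integral $\Phi_c(z)$ and optimize over $z=\phi'(x)/\sqrt{\phi''(x)}$ directly; this is the same mechanism packaged more tightly, and it legitimately produces the better constant $\approx 0.51$. For the variance claim the paper just says it ``follows from the fact that $f$ is logconcave'' (i.e.\ the isoperimetry just established plus Cheeger); your direct appeal to the one-dimensional Brascamp--Lieb inequality is a cleaner route and gives the sharp constant $1$.
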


\begin{proof}
Case 1) $\left|\phi'(x)\right|\geq a\sqrt{\phi''(x)}$ (we will pick
$a$ at the end). Without loss of generality, we have that $\phi'(x)\geq a\sqrt{\phi''(x)}$.
Since $\phi$ is convex, we have that 
\[
\phi(t)\geq\phi(x)+\phi'(x)(t-x).
\]
Therefore, we have that
\begin{align*}
\int_{x}^{b}e^{-\phi(t)}dt & \leq\int_{x}^{\infty}e^{-\phi(x)-\phi'(x)(t-x)}dt\\
 & =\frac{e^{-\phi(x)}}{\phi'(x)}\leq\frac{e^{-\phi(x)}}{a\sqrt{\phi''(x)}}.
\end{align*}
Case 2) $\left|\phi'(x)\right|\leq a\sqrt{\phi''(x)}$. Without loss
of generality, we have that $\phi'''(x)\geq0$. Since $\phi''$ is
convex, we have that
\[
\phi''(t)\geq\phi''(x)+\phi'''(x)(t-x)\geq\phi''(x)
\]
for $t\geq x$. Hence, we have that
\begin{align*}
\phi(t) & =\phi(x)+\phi'(x)(t-x)+\int_{x}^{t}(t-s)\phi''(t)ds\\
 & \geq\phi(x)+\phi'(x)(t-x)+\frac{1}{2}\phi''(x)(t-x)^{2}
\end{align*}
for all $t\geq x$. Therefore, we have that 
\begin{align*}
\int_{x}^{b}e^{-\phi(t)}dt & \leq\int_{-\infty}^{\infty}e^{-\phi(x)-\phi'(x)(t-x)-\frac{1}{2}\phi''(x)(t-x)^{2}}ds\\
 & =e^{-\phi(x)}\int_{-\infty}^{\infty}e^{-\phi'(x)s-\frac{1}{2}\phi''(x)s^{2}}ds\\
 & =\sqrt{\frac{2\pi}{\phi''(x)}}e^{-\phi(x)+\frac{1}{2}\frac{\phi'(x)^{2}}{\phi''(x)}}\\
 & \leq\sqrt{\frac{2\pi}{\phi''(x)}}e^{-\phi(x)+\frac{a^{2}}{2}}.
\end{align*}
Hence, we have 
\[
\int_{x}^{b}e^{-\phi(t)}dt\leq\sqrt{2\pi}e^{\frac{a^{2}}{2}}\frac{e^{-\phi(x)}}{\sqrt{\phi''(x)}}.
\]
Combining both cases, the isoperimetric ratio is $\min\left\{ a,\frac{e^{-\frac{a^{2}}{2}}}{\sqrt{2\pi}}\right\} $.
Setting $a$ to be the solution of $ae^{\frac{a^{2}}{2}}=\frac{1}{\sqrt{2\pi}}$,
this minimum is achieved at $\sqrt{W(1/2\pi)}>0.372$ where $W$ is
the inverse Lambert function, i.e., $W(x)e^{W(x)}=\frac{1}{2\pi}$.
This proves the first result.

The variance of $f$ follows from the fact that $f$ is logconcave.
\end{proof}
This generalizes to higher dimension with no dependence on the dimension
using localization, which we review next. Define an exponential needle
$E=(a,b,\gamma)$ as a segment $[a,b]\subseteq\R^{n}$ and $\gamma\in\R$
corresponding to the weight function $e^{\gamma t}$ applied the segment
$[a,b]$. The integral of an $n$-dimensional function $h:\R^{n}\rightarrow\R$
over this one dimensional needle is 
\[
\int_{E}h=\int_{0}^{|b-a|}h(a+tu)e^{\gamma t}\,dt\qquad\qquad\mbox{ where }\quad u=\frac{b-a}{|b-a|}.
\]
\begin{thm}[Theorem~2.7 in \cite{KLS95}]
 \label{thm:exp-needles} Let $f_{1},f_{2},f_{3},f_{3}$ be four
nonnegative continuous functions defined on $\R^{n}$, and $c_{1},c_{2}>0$.
Then, the following are equivalent: 
\begin{enumerate}
\item For every logconcave function $F$ defined on $\R^{n}$ with compact
support, 
\[
\left(\int_{\R^{n}}F(x)f_{1}(x)\,dx\right)^{c_{1}}\left(\int_{\R^{n}}F(x)f_{2}(x)\,dx\right)^{c_{2}}\le\left(\int_{\R^{n}}F(x)f_{3}(x)\,dt\right)^{c_{1}}\left(\int_{\R^{n}}F(x)f_{4}(x)\,dx\right)^{c_{2}}
\]
\item For every exponential needle $E$,
\[
\left(\int_{E}f_{1}\right)^{c_{1}}\left(\int_{E}f_{2}\right)^{c_{2}}\le\left(\int_{E}f_{3}\right)^{c_{1}}\left(\int_{E}f_{4}\right)^{c_{2}}
\]
\end{enumerate}
\end{thm}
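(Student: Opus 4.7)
The plan is to prove the two implications separately. The direction (1) $\Rightarrow$ (2) is straightforward: given an exponential needle $E = (a,b,\gamma)$ with unit direction $u = (b-a)/|b-a|$, I would construct a sequence of compactly supported logconcave functions $F_k$ concentrating on the segment, e.g.\ $F_k(x) = e^{\gamma \langle x-a,u\rangle} \rho_k(x - a - \langle x-a,u\rangle u) \mathbf{1}_{[0,|b-a|]}(\langle x-a,u\rangle)$, where $\rho_k$ is a narrow logconcave bump in the $(n-1)$-dimensional orthogonal complement of $u$, supported in a ball of radius $1/k$ and of unit integral. Each $F_k$ is logconcave, so (1) applies, and a dominated convergence argument as $k \to \infty$ gives the needle inequality (2).

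For the direction (2) $\Rightarrow$ (1), the approach is the Lov\'asz--Simonovits localization argument. I argue by contradiction: suppose (1) fails for some compactly supported logconcave $F$, so that the defect
\[
D(F) \defeq \left(\int F f_1\right)^{c_1}\left(\int F f_2\right)^{c_2} - \left(\int F f_3\right)^{c_1}\left(\int F f_4\right)^{c_2}
\]
is strictly positive. The heart of the proof is a bisection lemma: for any such $F$ one can find an affine hyperplane $H$ cutting the support into pieces $F_+ = F \cdot \mathbf{1}_{H^+}$ and $F_- = F \cdot \mathbf{1}_{H^-}$ (still logconcave as products of logconcave factors) such that $D(F_+) > 0$ or $D(F_-) > 0$. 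The existence of a good hyperplane follows from a continuity / intermediate-value argument on the family of cuts parametrized by direction $u \in S^{n-1}$ and offset $t \in \R$: one selects $H$ to match certain ratios $\int_{F_+} f_i / \int_F f_i$, after which a convexity-in-ratios computation forces at least one of $D(F_\pm)$ to inherit the positive sign of $D(F)$.

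Iterating the bisection with directions that progressively shrink the diameter of the convex hull of the support in a basis of directions produces a nested sequence of logconcave functions whose supports collapse (in Hausdorff distance) to a one-dimensional segment $[a,b]$. On such a segment, the surviving logconcave factor $F = e^{-\phi}$ is governed by a one-variable convex $\phi$. To complete the reduction to an \emph{exponential} needle, I further subdivide $[a,b]$ into short sub-segments on which $\phi$ is approximately linear, $\phi(t) \approx \phi(t_0) + \gamma(t - t_0)$; iterating the bisection along the one remaining direction and passing to a limit of shrinking sub-segments yields an exponential needle $E = (a',b',\gamma)$ on which the defect remains positive, contradicting (2).

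The main obstacle will be making the bisection lemma rigorous: the continuity/fixed-point argument must simultaneously track the four integrals $\int F f_i$ and produce a halfspace for which the ratios can be arranged so that the resulting inequality $(\alpha_+ + \alpha_-)^{c_1}(\beta_+ + \beta_-)^{c_2} > (\gamma_+ + \gamma_-)^{c_1}(\delta_+ + \delta_-)^{c_2}$ decomposes cleanly, guaranteeing $D(F_+) > 0$ or $D(F_-) > 0$. One also needs to ensure that the bisection process converges to a segment rather than stalling, which is handled by choosing the cutting directions to rotate through a basis and ensuring each cut strictly reduces a suitable potential (e.g.\ the width of the support in the chosen direction). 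The final passage from a logconcave function on a segment to a genuine exponential weight is the secondary technical point, handled by further localization on shrinking sub-intervals.
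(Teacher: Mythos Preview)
The paper does not prove this theorem; it is quoted from \cite{KLS95} and used as a black box. So there is no ``paper's own proof'' to compare against, only the original KLS argument.

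Your outline follows the standard Lov\'asz--Simonovits/KLS localization strategy, and the direction (1)$\Rightarrow$(2) is fine. For (2)$\Rightarrow$(1), the high-level plan (bisect, iterate, collapse to a needle) is the right one, but two of the steps as you describe them would not go through.

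First, the bisection-to-a-segment step: ``rotate the cutting direction through a basis and reduce the width each time'' does not work, because when you cut and keep the half with positive defect, that half need not have smaller width in the chosen direction --- it could be the wider half. The actual KLS argument is different: one fixes a direction and uses the one remaining degree of freedom (the offset $t$) to arrange a specific equality between ratios of the four integrals, and then an algebraic inequality (essentially the power-mean/H\"older structure of $(\cdot)^{c_1}(\cdot)^{c_2}$) forces the defect to be positive on \emph{both} halves. One then lets the hyperplane slide to the boundary, passing to a limit in which the support drops to an $(n-1)$-dimensional affine subspace, and inducts on dimension. Alternatively one argues by compactness: take an extremal violating $F$ (minimizing a suitable functional over all violators) and show any non-segment can be cut, contradicting extremality. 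Either route replaces your width-potential idea.

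Second, the reduction from a one-dimensional logconcave weight to an exponential: subdividing into short sub-segments where $\phi$ is ``approximately linear'' and passing to a limit does not by itself produce a violating \emph{exponential} needle, because the defect could vanish in the limit as the sub-segments shrink. The KLS proof handles this differently: it first reduces the general logconcave $F$ to a log-linear one (i.e., $F = e^{\langle a,\cdot\rangle}\mathbf 1_K$) \emph{before} localizing, using that any logconcave function is an infimum of log-linear ones and a decomposition argument; after that reduction, the one-dimensional endpoint is already an exponential needle and no further limit is needed.
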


\begin{lem}
\label{lem:n-d-gaussian-manifold-isoperimetry}Let $\phi:K\subset\Rn\rightarrow\R$
be a convex function defined over a convex body $K$ such that $D^{4}\phi(x)[h,h,h,h]\geq0$
for all $x\in K$ and $h\in\Rn$. Given any partition $S_{1},S_{2},S_{3}$
of $K$ with $d=\min_{x\in S_{1},y\in S_{2}}d(x,y)$, i.e., the minimum
distance between $S_{1}$ and $S_{2}$ in the Riemannian metric induced
by $\phi$. For any $\alpha>0$, we have that

\[
\frac{\int_{S_{3}}e^{-\alpha\phi(x)}dx}{\min\left\{ \int_{S_{1}}e^{-\alpha\phi(x)}dx,\int_{S_{2}}e^{-\alpha\phi(x)}\,dx\right\} }=\Omega(\sqrt{\alpha}\cdot d)
\]
\end{lem}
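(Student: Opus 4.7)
The plan is to reduce to a one-dimensional statement via exponential-needle localization (Theorem \ref{thm:exp-needles}) and then apply Lemma \ref{lem:1-d-gaussian-manifold-isoperimetry} pointwise on each needle. First I would recast the desired bound in the four-function form of Theorem \ref{thm:exp-needles}, taking the logconcave function $F$ to be $e^{-\alpha\phi}$ and the $f_i$'s to be (smoothed) indicators of $S_1, S_2, S_3$. The standard three-set localization of Lov\'asz--Simonovits (the same reduction used in the proof of the KLS isoperimetric inequality in \cite{KLS95}) then reduces the claim to an exponential needle $E = (a,b,\gamma)$ on which the induced pieces $A_i \defeq \{t : a+tu \in S_i\}$, with $u = (b-a)/|b-a|$, appear as three consecutive subintervals $A_1, A_3, A_2$ of the segment. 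On such a needle, combining the ambient density with the needle weight yields the one-dimensional weight $e^{-\psi(t)}$ with $\psi(t)\defeq\alpha\phi(a+tu)-\gamma t$, and the ambient Riemannian distance $d$ is bounded above by the Riemannian length of the entire $A_3$-portion of the segment in the induced metric $u^T\nabla^2\phi(a+tu)\,u\,dt^2$.

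Next, I would check that $\psi$ meets the hypotheses of Lemma \ref{lem:1-d-gaussian-manifold-isoperimetry}: convexity of $\psi$ is immediate, and $\psi''(t)=\alpha\,u^T\nabla^2\phi(a+tu)\,u$ is itself convex because $\psi''''(t)=\alpha\,D^4\phi(a+tu)[u,u,u,u]\ge 0$ by hypothesis. Applying the pointwise 1D bound at an arbitrary $t\in A_3$ and using the consecutive structure of the partition (so $A_1$ lies entirely to the left of $t$ and $A_2$ entirely to the right), we get
\[
e^{-\psi(t)} \;\geq\; 0.372\,\sqrt{\psi''(t)}\,\min\!\left(\int_{A_1}e^{-\psi},\ \int_{A_2}e^{-\psi}\right).
\]

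Integrating this over $t\in A_3$ produces
\[
\int_{A_3}e^{-\psi(t)}\,dt \;\geq\; 0.372\,\min\!\left(\int_{A_1}e^{-\psi},\ \int_{A_2}e^{-\psi}\right)\cdot\int_{A_3}\sqrt{\psi''(t)}\,dt,
\]
and the last integral equals $\sqrt{\alpha}\int_{A_3}\sqrt{u^T\nabla^2\phi(a+tu)\,u}\,dt$, which is $\sqrt{\alpha}$ times the Riemannian length of the middle portion of the needle's segment. Because this portion is a path joining a point of $S_1$ to a point of $S_2$, its length is at least $d(S_1,S_2)=d$, giving the needle-level inequality $\int_{A_3}e^{-\psi}\geq \Omega(\sqrt{\alpha}\,d)\min(\int_{A_1}e^{-\psi},\int_{A_2}e^{-\psi})$. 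Undoing the localization closes the proof. The main obstacle is the three-set reduction itself -- producing a needle on which the three pieces are consecutive intervals -- but this is the standard bisection argument of Lov\'asz--Simonovits and goes through verbatim, since the hypothesis $D^4\phi[h,h,h,h]\ge 0$ is preserved by restriction to a line and by subtracting a linear function.
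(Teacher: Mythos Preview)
Your proposal is correct and follows essentially the same route as the paper: reduce via exponential-needle localization (Theorem~\ref{thm:exp-needles}) to a one-dimensional statement, use the standard Lov\'asz--Simonovits argument to pass to the case where the three pieces are consecutive intervals, and then apply Lemma~\ref{lem:1-d-gaussian-manifold-isoperimetry} pointwise on the middle interval and integrate. The paper phrases the needle-level inequality in the product form $\int_{A_3}e^{-\psi}\cdot\int e^{-\psi}\ge Cd\int_{A_1}e^{-\psi}\cdot\int_{A_2}e^{-\psi}$ (as is needed to match the four-function template), whereas you state the stronger $\min$-form; but your form immediately implies the product form, so the localization step closes the same way.
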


\begin{proof}
By rescaling $\phi$, we can assume $\alpha=1$. We write the desired
inequality as follows, for a constant $C$, with $\chi_{S}$ being
the indicator of set $S$: 
\[
Cd\int_{\R^{n}}e^{-\phi(x)}\chi_{S_{1}}(x)\,dx\cdot\int_{\R^{n}}e^{-\phi(x)}\chi_{S_{2}}(x)\,dx\leq\int_{\R^{n}}e^{-\phi(x)}\,dx\cdot\int_{\R^{n}}e^{-\phi(x)}\chi_{S_{3}}(x)\,dx.
\]
Using the localization lemma for exponential needles (Theorem \ref{thm:exp-needles}),
with $f_{i}(x)$ being $Cde^{-\phi(x)}\chi_{S_{1}}(x)$, $e^{-\phi(x)}\chi_{S_{2}}(x)$,
$e^{-\phi(x)}$ and $e^{-\phi(x)}\chi_{S_{3}}(x)$ with respectively,
it suffices to prove the following one-dimensional inequality for
functions $\phi$ defined on an interval and shifted by a linear term:
\begin{align*}
 & Cd\int_{0}^{1}e^{-\phi((1-t)a+tb)}e^{-ct}\chi_{S_{1}}((1-t)a+tb)\,dt\cdot\int_{0}^{1}e^{-\phi((1-t)a+tb)}e^{-ct}\chi_{S_{2}}((1-t)a+tb)\,dt\\
\leq & \int_{0}^{1}e^{-\phi((1-t)a+tb)}e^{-ct}\,dt\cdot\int_{0}^{1}e^{-\phi((1-t)a+tb)}e^{-ct}\chi_{S_{3}}((1-t)a+tb)\,dt.
\end{align*}

Each $T_{i}=\left\{ t:(1-t)a+tb\in S_{i}\right\} $ is a union of
intervals. By a standard argument (see \cite{LS93}), it suffices
to consider the case when each $S_{i}$ is a single interval and add
up over all intervals in $S_{3}$. Thus it suffices to prove the statement
in one dimension for all convex $\phi$ with convex $\phi''$. In
one-dimension, we have  
\[
d(x,y)=\int_{x}^{y}\sqrt{\phi''(t)}\,dt.
\]
Taking $T_{3}=[a',b']\subset[a,b]$, the inequality we need to prove
is that for any convex $\phi$ with convex $\phi''$ is

\[
\frac{\int_{a'}^{b'}e^{-\phi(t)}\,dt}{\int_{a'}^{b'}\sqrt{\phi''(t)}\,dt}\ge\Omega(1)\frac{\int_{a}^{a'}e^{-\phi(t)}\:dt\int_{b'}^{b}e^{-\phi(t)}\:dt}{\int_{a}^{b}e^{-\phi(t)}\,dt}
\]
which is implied by noting that $\frac{\int_{a'}^{b'}e^{-\phi(t)}\,dt}{\int_{a'}^{b'}\sqrt{\phi''(t)}\,dt}\ge\min_{x\in[a',b']}\frac{e^{-\phi(x)}}{\sqrt{\phi''(x)}}$
and applying Lemma \ref{lem:1-d-gaussian-manifold-isoperimetry}. 
\end{proof}

\subsection{Sampling with the log barrier}

For any polytope $M=\{Ax>b\}$, the logarithmic barrier function $\phi(x)$
is defined as
\[
\phi(x)=-\sum_{i=1}^{m}\log(a_{i}^{T}x-b_{i}).
\]

\logbarrier*
\begin{proof}
Lemma \ref{lem:n-d-gaussian-manifold-isoperimetry} shows that the
isoperimetric coefficient $\psi$ is $\Omega(\sqrt{\alpha})$. Also,
we know that isoperimetric coefficient $\psi$ is at worst $\Omega(m^{-\frac{1}{2}})$
\cite{LeeV17geodesic}. Lemma \ref{lem:one-step} shows that the condition
of Theorem \ref{thm:gen-convergence} is satisfied, thus implying
the bound
\begin{align*}
 & \widetilde{O}\left(\max(\sqrt{\alpha},m^{-\frac{1}{2}})^{-2}\delta^{-2}\right)\\
= & \widetilde{O}\left(\max(\sqrt{\alpha},m^{-\frac{1}{2}})^{-2}\max\left(n^{\frac{2}{3}},\alpha^{\frac{2}{3}}m^{\frac{1}{3}}n^{\frac{1}{3}},\alpha m^{\frac{1}{2}}n^{\frac{1}{6}}\right)\right)\\
= & \widetilde{O}\left(\frac{n^{\frac{2}{3}}+\alpha^{\frac{2}{3}}m^{\frac{1}{3}}n^{\frac{1}{3}}+\alpha m^{\frac{1}{2}}n^{\frac{1}{6}}}{\alpha+m^{-1}}\right)\\
= & \widetilde{O}\left(\frac{n^{\frac{2}{3}}}{\alpha+m^{-1}}+\frac{m^{\frac{1}{3}}n^{\frac{1}{3}}}{\alpha^{\frac{1}{3}}+m^{-\frac{1}{3}}}+m^{\frac{1}{2}}n^{\frac{1}{6}}\right)
\end{align*}

To implement the walk, we solve this ODE using the collocation method
as described in \cite{LeeV16}. The key lemma used in that paper is
that the slack of the geodesic does not change by more than a constant
multiplicative factor up to the step size $\delta=O(\frac{1}{n^{1/4}})$.
Similarly in Lemma \ref{lem:geodesic_4}, we proved the the Hamiltonian
flow does not change by more than a constant multiplicative factor
up to the step size $\delta=O(\frac{1}{M_{1}^{1/4}})=O(\frac{n^{-1/4}}{1+\sqrt{\alpha}})$.
Since the step size we use is $\Theta(\frac{n^{-1/3}}{1+\sqrt{\alpha}})$,
we can apply the collocation method as described in \cite{LeeV16}
and obtain an algorithm to compute the Hamiltonian flow in $\widetilde{O}(mn^{\omega-1}\log^{O(1)}\frac{1}{\eta})$
time with additive error $\eta$ distance in the local norm. Due to
the exponential convergence, it suffices for the sampling purpose
with only a polylogarithmic overhead in the total running time.
\end{proof}

\section{Polytope volume computation: Gaussian cooling on manifolds \label{sec:cooling}}

The volume algorithm is essentially the Gaussian cooling algorithm
introduced in \cite{CV2015}. Here we apply it to a sequence of Gibbs
distributions rather than a sequence of Gaussians. More precisely,
for a convex body $K$ and a convex barrier function $\phi:K\rightarrow\R$,
we define

\[
f(\sigma^{2},x)=\begin{cases}
\e{-\sigma^{-2}\phi(x)} & \mbox{ if }x\in K\\
0 & \mbox{ otherwise}
\end{cases}
\]
and 
\[
F(\sigma^{2})=\int_{\R^{n}}f(\sigma^{2},x)\,dx
\]
where $x^{*}$ is the minimizer of $\phi$ (the center of $K$). Let
$\mu_{i}$ be the probability distribution proportional to $f(\sigma_{i}^{2},x)$
where $\sigma_{i}$ is the tempature of the Gibbs distribution to
be fixed. The algorithm estimates each ratio in the following telescoping
product: 

\[
e^{-\sigma^{-2}\phi(x^{*})}\vol(K)\approx F(\sigma_{k})=F\left(\sigma_{0}\right)\prod_{i=1}^{k}\frac{F(\sigma_{i+1}^{2})}{F(\sigma_{i}^{2})}
\]
for some large enough $\sigma_{k}$.

Let $x$ be a random sample point from $\mu_{i}$ and let $Y_{x}=f(\sigma_{i+1}^{2},x)/f(\sigma_{i}^{2},x)$.
Then, 
\begin{align*}
\E_{x\sim\mu_{i}}(Y_{x}) & =\frac{F(\sigma_{i+1}^{2})}{F(\sigma_{i}^{2})}.
\end{align*}

\subsection{Algorithm: cooling schedule}

\begin{algorithm2e}

\caption{Volume($M$, $\varepsilon$)}

\SetAlgoLined

Let $\sigma_{0}^{2}=\Theta(\varepsilon^{2}n^{-3}\log^{-3}(n/\varepsilon))$,
\[
\sigma_{i+1}^{2}=\begin{cases}
\sigma_{i}^{2}\left(1+\frac{1}{\sqrt{n}}\right) & \text{if }\vartheta\le n\sigma_{i}^{2}\\
\sigma_{i}^{2}\left(1+\min(\frac{\sigma_{i}}{\sqrt{\vartheta}},\frac{1}{2})\right) & \text{otherwise}
\end{cases}.
\]
and
\[
k_{i}=\begin{cases}
\Theta(\frac{\sqrt{n}}{\varepsilon^{2}}\log\frac{n}{\varepsilon}) & \text{if }\vartheta\le n\sigma_{i}^{2}\\
\Theta((\frac{\sqrt{\vartheta}}{\sigma_{i}}+1)\epsilon^{-2}\log\frac{n}{\varepsilon}) & \text{otherwise}
\end{cases}.
\]

Set $i=0$. Compute $x^{*}=\arg\min\phi(x)$.

Assume that $\phi(x^{*})=0$ by shifting the barrier function.

Sample $k_{0}$ points $\{X_{1},\ldots,X_{k_{0}}\}$ from the Gaussian
distribution centered at the minimizer $x^{*}$ of $\phi$ with covariance
$\sigma_{0}^{2}\left(\nabla^{2}\phi(x^{*})\right)^{-1}.$

\While{$\sigma_{i}^{2}\leq\Theta(1)\frac{\vartheta}{\varepsilon}\log\frac{n\vartheta}{\varepsilon}$}{

Sample $k_{i}$ points $\{X_{1},\ldots,X_{k}\}$ using \textbf{Hamiltonian
Monte Carlo} with target density $f(\sigma_{i}^{2},X)$ and the previous
$\{X_{1},\ldots,X_{k}\}$ as warm start.

Compute the ratio 
\[
W_{i+1}=\frac{1}{k_{i}}\cdot\sum_{j=1}^{k_{i}}\frac{f(\sigma_{i+1}^{2},X_{j})}{f(\sigma_{i}^{2},X_{j})}.
\]

Increment $i$.

}

\textbf{Output:} $(2\pi\sigma_{0}^{2})^{\frac{n}{2}}\det(\nabla^{2}\phi(x^{*}))^{-\frac{1}{2}}W_{1}\ldots W_{i}$
as the volume estimate. 

\end{algorithm2e}

\subsection{Correctness of the algorithm}

In this subsection, we prove the correctness of the algorithm. We
separate the proof into two parts. In the first part, we estimate
how small $\sigma_{0}$ we start with should be and how large $\sigma_{k}$
we end with should be. In the second part, we estimate the variance
of the estimator $Y_{x}$.

\subsubsection{Initial and terminal conditions }

First, we need the following lemmas about self-concordance functions
and logconcave functions:
\begin{lem}[{\cite[Thm 2.1.1]{nesterov1994interior}}]
\label{lem:self_concordant_quadratic}Let $\phi$ be a self-concordant
function and $x^{*}$ be its minimizer. For any $x$ such that $\phi(x)\leq\phi(x^{*})+r^{2}$
with $r\leq\frac{1}{2}$, we have that 
\[
\phi(x)=\phi(x^{*})+\frac{1\pm\Theta(r)}{2}(x-x^{*})^{T}\nabla^{2}\phi(x^{*})(x-x^{*}).
\]
\end{lem}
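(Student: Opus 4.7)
The plan is to reduce to a one-dimensional analysis along the segment from $x^*$ to $x$, and then to convert the hypothesis $\phi(x)\le\phi(x^*)+r^2$ into an asymptotic expansion of $\phi(x)-\phi(x^*)$ in terms of the local-norm quadratic. Write $\psi(t):=\phi(x^*+t(x-x^*))$ for $t\in[0,1]$ and set $r_0^2:=\psi''(0)=(x-x^*)^T\nabla^2\phi(x^*)(x-x^*)$. Since $x^*$ minimizes $\phi$, one has $\psi'(0)=0$. Self-concordance of $\phi$ specializes along this segment to the one-dimensional inequality $|\psi'''(t)|\le 2\,\psi''(t)^{3/2}$.

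The key intermediate step is a sandwich estimate on $\psi''$. Viewing $|\psi'''|\le 2(\psi'')^{3/2}$ as $|(\psi''{}^{-1/2})'|\le 1$ and integrating from $0$ yields $|\psi''(t)^{-1/2}-r_0^{-1}|\le t$, so for any $t$ with $t r_0<1$,
\[
\frac{r_0^2}{(1+t r_0)^2}\;\le\;\psi''(t)\;\le\;\frac{r_0^2}{(1-t r_0)^2}.
\]
Combining this with the Taylor identity $\phi(x)-\phi(x^*)=\int_0^1(1-t)\psi''(t)\,dt$ (valid because $\psi'(0)=0$) and evaluating the elementary integrals in closed form produces the classical self-concordant envelope
\[
\omega(r_0)\;\le\;\phi(x)-\phi(x^*)\;\le\;\omega_*(r_0),\qquad \omega(t)=t-\log(1+t),\;\; \omega_*(t)=-t-\log(1-t),
\]
valid as long as $r_0<1$.

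To finish, I would invert these bounds. The hypothesis gives $\omega(r_0)\le r^2\le 1/4$, and since $\omega(1)=1-\log 2>1/4$ this already forces $r_0<1$; the elementary inequality $\omega(t)\ge\tfrac14 t^2$ on $[0,1]$ then yields $r_0=O(r)$. Substituting $r_0=O(r)$ into the expansions $\omega(r_0)=\tfrac12 r_0^2(1-\tfrac23 r_0+O(r_0^2))$ and $\omega_*(r_0)=\tfrac12 r_0^2(1+\tfrac23 r_0+O(r_0^2))$ gives the claimed two-sided estimate $\phi(x)-\phi(x^*)=\tfrac{1\pm\Theta(r)}{2}r_0^2$. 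The only subtlety is to check that the open segment $(x^*,x)$ lies entirely in $\mathrm{dom}\,\phi$ so that the ODE argument is legitimate throughout $[0,1]$; this is automatic because self-concordant barriers have convex (open) domains and both endpoints are in that domain since $\phi(x^*)$ and $\phi(x)$ are finite. I expect the main technical point to be tracking the constants cleanly in the final step, rather than any conceptual obstacle.
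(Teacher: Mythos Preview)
Your argument is correct and is essentially the standard proof of this fact. Note, however, that the paper does not give its own proof of this lemma: it simply cites it as \cite[Thm~2.1.1]{nesterov1994interior}. So there is no in-paper proof to compare against; your write-up is a complete derivation of the cited result. One minor remark: when $r$ is near $1/2$ your bound $r_0\le 2r$ allows $r_0$ close to $1$, where $\omega_*(r_0)$ is large, so the implicit constant in $\Theta(r)$ is not tiny; you may want to sharpen this by first using $\omega(r_0)\le r^2\le 1/4$ together with the exact value $\omega^{-1}(1/4)<0.9$ to keep $r_0$ uniformly bounded away from $1$, after which the series bounds on $\omega$ and $\omega_*$ give clean absolute constants.
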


\begin{lem}[{\cite[Prop 2.3.2]{nesterov1994interior}}]
\label{lem:phi_difference}For any $\vartheta$-self-concordance
barrier function $\phi$ on convex $K$, for any interior point $x$
and $y$ in $K$, we have that
\[
\phi(x)\leq\phi(y)+\vartheta\ln\frac{1}{1-\pi_{y}(x)}
\]
where $\pi_{y}(x)=\inf\{t\geq0|y+t^{-1}(x-y)\in K\}$. 
\end{lem}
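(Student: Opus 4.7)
The plan is to reduce the inequality to a one-dimensional ODE calculation along the segment from $y$ through $x$ to the boundary of $K$. Let $v = x-y$, $\alpha = \pi_y(x) \in (0,1)$, $T = 1/\alpha$, so that $y + Tv \in \partial K$ while $x$ corresponds to parameter $t=1$. Define $\phi_0(t) \defeq \phi(y + tv)$ on $[0,T)$. Since $T/(T-1) = 1/(1-\alpha)$, the target becomes $\phi_0(1) \le \phi_0(0) + \vartheta \ln\frac{T}{T-1}$.

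The two defining properties of a $\vartheta$-self-concordant barrier, restricted to this line, give $\phi_0'' \ge 0$ and the key inequality $(\phi_0'(t))^2 \le \vartheta\, \phi_0''(t)$. Since $\phi$ blows up on $\partial K$, $\phi_0(t) \to \infty$ as $t \to T^-$, and combined with monotonicity of $\phi_0'$ (from convexity) this forces $\phi_0'(t) \to +\infty$ there (otherwise $\int_0^T \phi_0'$ would be finite, contradicting the blow-up of $\phi_0$).

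The argument splits on the sign of $\phi_0'(1)$. If $\phi_0'(1) \le 0$, monotonicity of $\phi_0'$ gives $\phi_0' \le 0$ on $[0,1]$, so $\phi_0(1) \le \phi_0(0)$ and the inequality is trivial. Otherwise let $t_0 \defeq \sup\{t \in [0,1] : \phi_0'(t) \le 0\} \in [0,1)$, so $\phi_0(t_0) \le \phi_0(0)$ and $\phi_0' > 0$ on $(t_0, T)$. On that interval the barrier inequality rewrites as $\tfrac{d}{dt}\bigl(-1/\phi_0'(t)\bigr) = \phi_0''(t)/(\phi_0'(t))^2 \ge 1/\vartheta$. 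Integrating from $t$ up to $T^-$, using $-1/\phi_0'(T^-) = 0$, yields $\phi_0'(t) \le \vartheta/(T-t)$, and then
\[
\phi_0(1) - \phi_0(t_0) \;\le\; \int_{t_0}^1 \frac{\vartheta}{T-s}\,ds \;=\; \vartheta \ln\frac{T-t_0}{T-1} \;\le\; \vartheta \ln\frac{T}{T-1},
\]
which together with $\phi_0(t_0) \le \phi_0(0)$ yields the lemma.

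The main technical obstacle is rigorously justifying $\phi_0'(t) \to +\infty$ as $t \to T^-$, which is exactly what makes the substitution $g \defeq -1/\phi_0'$ satisfy the clean boundary condition $g(T^-) = 0$ and lets the differential inequality be integrated up to the boundary; this combines the fact that a barrier blows up on $\partial K$ with the convexity of $\phi_0$. Everything else reduces to a first-order ODE manipulation on the one-dimensional restriction, so no higher-dimensional geometry is needed beyond the initial parametrization.
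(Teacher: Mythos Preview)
Your argument is correct and is essentially the classical proof from Nesterov--Nemirovskii (the cited reference): restrict to the chord, use the barrier inequality $(\phi_0')^2 \le \vartheta\,\phi_0''$ to obtain $\phi_0'(t) \le \vartheta/(T-t)$ via the substitution $g=-1/\phi_0'$, and integrate. The paper itself gives no proof of this lemma---it is simply quoted from \cite{nesterov1994interior}---so there is nothing in the paper to compare against beyond observing that your write-up matches the standard source argument.

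One very minor remark: you silently assume $\alpha=\pi_y(x)>0$. If $K$ is unbounded and $x-y$ happens to be a recession direction then $\alpha=0$, and the claim becomes $\phi(x)\le\phi(y)$; this needs a separate (easy) observation that a self-concordant barrier is non-increasing along recession directions. In the paper's setting $K$ is a polytope, hence bounded, so this case does not arise and your argument covers everything that is actually used.
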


\begin{lem}[{\cite[Lem 5.16]{LV07}}]
\label{lem:max_f}For any logconcave distribution $f$ on $\Rn$
and any $\beta\geq2$, we have
\[
\P_{x\sim f}(f(x)\leq e^{-\beta n}\max_{y}f(y))\leq e^{-O(\beta n)}.
\]
\end{lem}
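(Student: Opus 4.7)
This is a standard tail bound for logconcave distributions; the plan is to analyze the super-level sets of $f$ via Brunn-Minkowski and then compare two Gamma-type integrals. After normalizing so that $\int f = 1$, let $M = \sup_y f(y)$ and define
\[
V(s) \defeq \mathrm{vol}\bigl(\{x : f(x) \geq e^{-s}M\}\bigr),\qquad s \geq 0.
\]
Since $f$ is logconcave, each super-level set is convex, so by Brunn-Minkowski $V^{1/n}$ is concave on $[0,\infty)$ with $V(0) = 0$. A layer-cake computation yields the identity $M\int_0^\infty V(s)\,e^{-s}\,ds = 1$ and the bound
\[
\P\bigl(f(X) \leq e^{-\beta n}M\bigr) \;\leq\; M\int_{\beta n}^\infty V(s)\,e^{-s}\,ds.
\]

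Next, I would use $V(\beta n)$ as a common pivot to compare the numerator and denominator above. Concavity of $V^{1/n}$ together with $V(0) = 0$ implies $s \mapsto V(s)/s^n$ is non-increasing, so $V(s) \leq V(\beta n)(s/(\beta n))^n$ for $s \geq \beta n$ and the reverse inequality for $s \leq \beta n$. Plugging the upper bound into the tail integral and using $(1 + u/(\beta n))^n \leq e^{u/\beta}$,
\[
\int_{\beta n}^\infty V(s)\,e^{-s}\,ds \;\leq\; V(\beta n)\,e^{-\beta n}\cdot\frac{\beta}{\beta - 1} \;\leq\; 2\,V(\beta n)\,e^{-\beta n}\qquad\text{for }\beta \geq 2.
\]
For the denominator, a Laplace-method estimate applied to $n\int_0^\beta (t/\beta)^n e^{-nt}\,dt$ around its interior maximum at $t = 1$ (peak value $(e\beta)^{-n}$, width $\Theta(1/\sqrt{n})$) gives
\[
\int_0^{\beta n} V(s)\,e^{-s}\,ds \;\geq\; V(\beta n)\cdot\Omega\bigl(\sqrt{n}\,(e\beta)^{-n}\bigr).
\]

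Combining with $M\int_0^\infty V(s)\,e^{-s}\,ds = 1$ yields $M\,V(\beta n) \leq O((e\beta)^n/\sqrt{n})$, hence
\[
\P\bigl(f(X) \leq e^{-\beta n}M\bigr) \;\leq\; O\bigl((e\beta)^n\,e^{-\beta n}/\sqrt{n}\bigr) \;=\; O\bigl(e^{-n(\beta - 1 - \log\beta)}/\sqrt{n}\bigr).
\]
Since $\beta - 1 - \log\beta$ is increasing on $\beta > 1$ and already exceeds $1 - \log 2 > 0.3$ at $\beta = 2$, there is a constant $c > 0$ such that $\beta - 1 - \log\beta \geq c\beta$ for all $\beta \geq 2$, yielding the claimed $e^{-O(\beta n)}$ bound. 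The one slightly delicate step is retaining the $\sqrt{n}$ factor in the Laplace estimate rather than absorbing it into the exponent; without this care the comparison degenerates as $\beta \to 1^+$, so the hypothesis $\beta \geq 2$ in the statement is used precisely to keep the exponent $\beta - 1 - \log\beta$ bounded away from zero by a constant multiple of $\beta$.
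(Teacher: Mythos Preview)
The paper does not supply its own proof of this lemma; it is quoted verbatim as \cite[Lem~5.16]{LV07} and used as a black box in the proof of Lemma~\ref{lem:boundary_case}. So there is nothing in the paper to compare against, and the relevant question is simply whether your argument is correct.

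It is. The super-level-set/Brunn--Minkowski reduction and the comparison of the two Gamma-type integrals via the pivot $V(\beta n)$ are exactly the standard route to this inequality, and your Laplace/Stirling estimate on $\int_0^{\beta n} s^n e^{-s}\,ds$ is fine since $\beta\geq 2$ keeps the peak $s=n$ well inside the range of integration. One tiny inaccuracy: you assert $V(0)=0$, but for a density with a flat maximum (e.g.\ the uniform distribution on a convex body) one only has $V(0)\geq 0$. This does not matter for your argument, since concavity of $V^{1/n}$ together with $V(0)\geq 0$ already forces $s\mapsto V(s)^{1/n}/s$ to be non-increasing, which is all you use. With that cosmetic fix the proof stands.
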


\begin{lem}[Large and small $\sigma^{2}$]
\label{lem:boundary_case}Let $\phi$ be a $\vartheta$-self concordant
barrier function for $K$. If $\sigma n^{\frac{3}{2}}\log^{\frac{3}{2}}\frac{1}{\sigma}\leq1$
and $\sigma n^{\frac{1}{2}}\log^{\frac{1}{2}}\frac{1}{\sigma}\leq1$,
then we have that
\[
\left|\ln F\left(\sigma\right)-\ln\left(e^{-\sigma^{-2}\phi(x^{*})}(2\pi\sigma^{2})^{\frac{n}{2}}\det(\nabla^{2}\phi(x^{*}))^{-\frac{1}{2}}\right)\right|\leq O(\sigma n^{\frac{3}{2}}\log^{\frac{3}{2}}\frac{1}{\sigma}).
\]
If $\sigma^{2}\geq\vartheta$, then 
\[
\left|\ln F\left(\sigma\right)-\ln\left(e^{-\sigma^{-2}\phi(x^{*})}\vol(K)\right)\right|\leq O(\sigma^{-2}\vartheta\ln(\sigma^{2}n/\vartheta)).
\]
\end{lem}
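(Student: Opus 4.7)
The plan is to handle the two regimes separately, each time comparing $F(\sigma)$ to a tractable reference: a Gaussian integral around $x^{*}$ when $\sigma$ is small, and $e^{-\sigma^{-2}\phi(x^{*})}\vol(K)$ when $\sigma$ is large.

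For the small-$\sigma$ case, the density $e^{-\sigma^{-2}\phi(x)}$ is sharply concentrated near $x^{*}$, so I would pick $R = c\sigma\sqrt{n\log(1/\sigma)}$ for a small constant $c$ and consider the Dikin-type ball $\Omega = \{x : \|x-x^{*}\|_{H} \le R\}$ with $H = \nabla^{2}\phi(x^{*})$. The hypothesis $\sigma n^{1/2}\log^{1/2}(1/\sigma) \le 1$ makes $R$ a small constant, so by self-concordance $\Omega \subset K$. Applying Lemma~\ref{lem:self_concordant_quadratic} pointwise with $r = \sqrt{\phi(x)-\phi(x^{*})}$ sharpens the quadratic approximation into the cubic estimate
\[
\left|\,\phi(x) - \phi(x^{*}) - \tfrac{1}{2}\|x-x^{*}\|_{H}^{2}\,\right| = O(\|x-x^{*}\|_{H}^{3}) \quad\text{on } \Omega.
\]
Dividing by $\sigma^{2}$, the error in the exponent is at most $O(\sigma^{-2}R^{3}) = O(\sigma n^{3/2}\log^{3/2}(1/\sigma))$, so
\[
\int_{\Omega} e^{-\sigma^{-2}\phi(x)}\,dx = e^{-\sigma^{-2}\phi(x^{*})}\,e^{\pm O(\sigma n^{3/2}\log^{3/2}(1/\sigma))} \int_{\Omega} e^{-\|y\|_{H}^{2}/(2\sigma^{2})}\,dy.
\]
The truncated Gaussian integral differs from $(2\pi\sigma^{2})^{n/2}\det(H)^{-1/2}$ by a $\chi^{2}_{n}$ tail of size $e^{-\Omega(n\log(1/\sigma))}$, which is absorbed into the error. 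Finally, on $K \setminus \Omega$ the density is at most $e^{-\Omega(n\log(1/\sigma))}$ times its maximum (the quadratic expansion shows $\sigma^{-2}(\phi-\phi(x^{*})) \ge \Omega(n\log(1/\sigma))$ there), so Lemma~\ref{lem:max_f} gives $\int_{K\setminus\Omega}e^{-\sigma^{-2}\phi}\,dx \le e^{-\Omega(n\log(1/\sigma))}F(\sigma)$, again negligible.

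For the large-$\sigma$ case, $\phi \ge \phi(x^{*})$ immediately yields $F(\sigma) \le e^{-\sigma^{-2}\phi(x^{*})}\vol(K)$, giving the easy side of the inequality. For the matching lower bound, Lemma~\ref{lem:phi_difference} with $y = x^{*}$ gives $\phi(x)-\phi(x^{*}) \le \vartheta\log(1/(1-\pi_{x^{*}}(x)))$. On the scaled body $K_{1-s} \defeq x^{*} + (1-s)(K-x^{*})$, which has volume $(1-s)^{n}\vol(K)$, this says $\phi(x)-\phi(x^{*}) \le \vartheta\log(1/s)$, hence
\[
\int_{K}e^{-\sigma^{-2}(\phi(x)-\phi(x^{*}))}\,dx \;\ge\; (1-s)^{n}\, s^{\sigma^{-2}\vartheta}\,\vol(K).
\]
Taking logs gives $n\log(1-s) + \sigma^{-2}\vartheta\log(s)$; optimizing in $s$ yields $s = \sigma^{-2}\vartheta/(n+\sigma^{-2}\vartheta)$ (which is at most $1/n$ since $\sigma^{2}\ge\vartheta$), and the optimum value is $-O(\sigma^{-2}\vartheta\log(n\sigma^{2}/\vartheta))$, matching the claim.

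The main technical point is the small-$\sigma$ case. One has to simultaneously (i) match the cubic self-concordance remainder with the $\log^{3/2}(1/\sigma)$ factor, which forces $R$ to be exactly $\Theta(\sigma\sqrt{n\log(1/\sigma)})$ and thereby explains the role of the two hypotheses (the first makes the error bound meaningful, the second ensures $\Omega \subset K$ and that the pointwise self-concordance expansion is valid), and (ii) verify that both tails --- the Gaussian tail in the main calculation and the mass outside $\Omega$ controlled via Lemma~\ref{lem:max_f} --- decay faster than the target error. The large-$\sigma$ case, by contrast, is essentially a one-line computation once one slices $K$ by the Minkowski functional $\pi_{x^{*}}$.
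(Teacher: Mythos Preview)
Your proposal is correct and follows essentially the same route as the paper. The paper localizes via the sublevel set $\{\phi(x)\le\phi(x^{*})+\beta\sigma^{2}n\}$ with $\beta=\Theta(\log(1/\sigma))$, then uses Lemma~\ref{lem:self_concordant_quadratic} in its multiplicative form and Lemma~\ref{lem:max_f} for the tail; you localize via the Dikin ball $\{\|x-x^{*}\|_{H}\le R\}$ with $R=\Theta(\sigma\sqrt{n\log(1/\sigma)})$ and rewrite the same lemma as a cubic remainder, but by self-concordance these two sets and the two error bookkeepings are interchangeable. For large $\sigma$ the two arguments are identical, with your $s$ equal to the paper's $\beta$ (you optimize, the paper simply plugs in $\beta=\sigma^{-2}\vartheta/n$, yielding the same order).
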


\begin{proof}
We begin with the first inequality. Let $S$ be the set of $x$ such
that $\phi(x)\leq\phi(x^{*})+\beta\sigma^{2}n$ for some $\beta\geq2$
to be determined. Therefore, 
\[
f(\sigma^{2},x)\geq e^{-\beta n}f(\sigma^{2},x^{*})
\]
for all $x\in S$. Since $f(\sigma^{2},x)$ is a logconcave function,
Lemma \ref{lem:max_f} shows that $\P_{x\sim f}(S)\leq e^{-O(\beta n)}$.
Therefore, 
\[
\int_{S}f(\sigma^{2},x)\,dx\leq\int f(\sigma^{2},x)\,dx\leq(1+e^{-O(\beta n)})\int_{S}f(\sigma^{2},x)\,dx.
\]
In short, we have
\[
F(\sigma)=(1\pm e^{-O(\beta n)})\int_{S}f(\sigma^{2},x)\,dx.
\]
By the construction of $S$, Lemma \ref{lem:self_concordant_quadratic}
and the fact that $\phi$ is self-concordant, if $\beta\sigma^{2}n\leq\frac{1}{4}$,
we have that
\[
\phi(x)=\phi(x^{*})+\frac{1\pm\Theta(\sigma\sqrt{\beta n})}{2}(x-x^{*})^{T}\nabla^{2}\phi(x^{*})(x-x^{*})
\]
for all $x\in S$. Hence, 
\begin{align*}
F(\sigma) & =(1\pm e^{-O(\beta n)})e^{-\sigma^{-2}\phi(x^{*})}\int_{S}e^{-(1\pm O(\sigma\sqrt{\beta n}))\frac{\sigma^{-2}}{2}(x-x^{*})^{T}\nabla^{2}\phi(x^{*})(x-x^{*})}dx.
\end{align*}
Now we note that 
\[
\int_{S^{c}}e^{-\Theta(\frac{\sigma^{-2}}{2}(x-x^{*})^{T}\nabla^{2}\phi(x^{*})(x-x^{*}))}dx=e^{-O(\beta n)}\int_{S}e^{-\frac{\sigma^{-2}}{2}(x-x^{*})^{T}\nabla^{2}\phi(x^{*})(x-x^{*})}dx
\]
because 
\[
\frac{\sigma^{-2}}{2}(x-x^{*})^{T}\nabla^{2}\phi(x^{*})(x-x^{*})=\Omega(\beta n)
\]
 outside $S$. Therefore, 
\begin{align*}
F(\sigma) & =(1\pm e^{-O(\beta n)})e^{-\sigma^{-2}\phi(x^{*})}\int_{\Rn}e^{-(1\pm O(\sigma\sqrt{\beta n}))\frac{\sigma^{-2}}{2}(x-x^{*})^{T}\nabla^{2}\phi(x^{*})(x-x^{*})}dx\\
 & =(1\pm e^{-O(\beta n)}\pm O(\sigma(\beta n)^{\frac{3}{2}}))e^{-\sigma^{-2}\phi(x^{*})}\int_{\Rn}e^{-\frac{\sigma^{-2}}{2}(x-x^{*})^{T}\nabla^{2}\phi(x^{*})(x-x^{*})}dx\\
 & =(1\pm e^{-O(\beta n)}\pm O(\sigma(\beta n)^{\frac{3}{2}}))e^{-\sigma^{-2}\phi(x^{*})}(2\pi\sigma^{2})^{\frac{n}{2}}\det(\nabla^{2}\phi(x^{*}))^{-\frac{1}{2}}
\end{align*}
where we used that $\sigma(\beta n)^{\frac{1}{2}}=O(1)$ in first
sentence and $\sigma(\beta n)^{\frac{3}{2}}=O(1)$ in the second sentence.
Setting $\beta=\Theta(\log\frac{1}{\sigma})$, we get the first result.

For the second inequality, for any $0\leq t<1$ and any $x\in x^{*}+t(K-x^{*})$,
we have that $\pi_{x^{*}}(x)\leq t$ ($\pi_{x^{*}}$ is defined in
Lemma \ref{lem:phi_difference}). Therefore, Lemma \ref{lem:phi_difference}
shows that
\[
\phi(x)\leq\phi(x^{*})+\vartheta\ln\frac{1}{1-t}.
\]
Note that $\P_{x\sim\mu}(x\in x^{*}+t(K-x^{*}))=t^{n}$ where $\mu$
is the uniform distribution in $K$. Therefore, for any $0<\beta<1$,
we have that
\[
\P_{x\sim\mu}(\phi(x)\leq\phi(x^{*})+\vartheta\ln\frac{1}{\beta}\})\geq(1-\beta)^{n}.
\]
Hence, 
\begin{align*}
\vol(K)\cdot e^{-\sigma^{-2}\phi(x^{*})} & \geq F(\sigma^{2})\\
 & \geq\vol(K)\cdot(1-\beta)^{n}\e{-\sigma^{-2}(\phi(x^{*})+\vartheta\ln\frac{1}{\beta})}.
\end{align*}
Setting $\beta=\sigma^{-2}n^{-1}\vartheta$, we get the second result.
\end{proof}

\subsubsection{Variance of $Y_{x}$}

Our goal is to estimate $\E_{x\sim\mu_{i}}(Y_{x})$ within a target
relative error. The algorithm estimates the quantity $\E_{x\sim\mu_{i}}(Y_{x})$
by taking random sample points $x_{1},\ldots,x_{k}$ and computing
the empirical estimate for $\E_{x\sim\mu_{i}}(Y_{x})$ from the corresponding
$Y_{x_{1}},\ldots,Y_{x_{k}}$. The variance of $Y_{x_{i}}$ divided
by its expectation squared will give a bound on how many independent
samples $x_{i}$ are needed to estimate $\E_{x\sim\mu_{i}}(Y_{x})$
within the target accuracy. We have 
\[
\E_{x\sim\mu_{i}}(Y_{x}^{2})=\frac{\int_{K}\e{\frac{\phi(x)}{\sigma_{i}^{2}}-\frac{2\phi(x)}{\sigma_{i+1}^{2}}}\,dx}{\int_{K}\e{-\frac{\phi(x)}{\sigma_{i}^{2}}}\,dx}=\frac{F(\frac{\sigma_{i+1}^{2}\sigma_{i}^{2}}{2\sigma_{i}^{2}-\sigma_{i+1}^{2}})}{F(\sigma_{i}^{2})}
\]
and 
\[
\frac{\E_{x\sim\mu_{i}}(Y_{x}^{2})}{\E_{x\sim\mu_{i}}(Y_{x})^{2}}=\frac{F(\sigma_{i}^{2})F(\frac{\sigma_{i+1}^{2}\sigma_{i}^{2}}{2\sigma_{i}^{2}-\sigma_{i+1}^{2}})}{F(\sigma_{i+1}^{2})^{2}}%.
\]

If we let $\sigma^{2}=\sigma_{i+1}^{2}$ and $\sigma_{i}^{2}=\sigma^{2}/(1+r)$,
then we can further simplify as 
\begin{equation}
\frac{\E_{x\sim\mu_{i}}(Y_{x}^{2})}{\E_{x\sim\mu_{i}}(Y_{x})^{2}}=\frac{F\left(\frac{\sigma^{2}}{1+r}\right)F\left(\frac{\sigma^{2}}{1-r}\right)}{F\left(\sigma^{2}\right)^{2}}.\label{eq:EY2_EY2}
\end{equation}
\begin{lem}
\label{lem:ln_F_formula}For any $1>r\geq0$, we have that
\[
\ln\left(\frac{F\left(\frac{\sigma^{2}}{1+r}\right)F\left(\frac{\sigma^{2}}{1-r}\right)}{F\left(\sigma^{2}\right)^{2}}\right)=\frac{1}{\sigma^{4}}\int_{0}^{r}\int_{1-t}^{1+t}\Var_{x\sim\mu_{s}}\phi(x)dsdt
\]
where $\mu_{s}$ be the probability distribution proportional to $f(\frac{\sigma^{2}}{s},x)$.
\end{lem}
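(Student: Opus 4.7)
The plan is to view the lemma as a second-order Taylor-type expansion of the log-partition function and then change variables to match the stated form. Set $G(\alpha)\defeq\ln\int_{K}e^{-\alpha\phi(x)}\,dx$, so that $F(\sigma^{2})=e^{G(1/\sigma^{2})}$. Then the left hand side of the lemma equals
\[
G\!\left(\tfrac{1+r}{\sigma^{2}}\right)+G\!\left(\tfrac{1-r}{\sigma^{2}}\right)-2\,G\!\left(\tfrac{1}{\sigma^{2}}\right).
\]

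Next, I would differentiate $G$ under the integral. Writing $\pi_{\alpha}$ for the probability density proportional to $e^{-\alpha\phi(x)}$ on $K$, standard computation gives $G'(\alpha)=-\E_{\pi_{\alpha}}\phi$ and $G''(\alpha)=\Var_{\pi_{\alpha}}\phi$. By the definition in the statement, $\mu_{s}$ is exactly $\pi_{s/\sigma^{2}}$, so $G''(\alpha)=\Var_{\mu_{\alpha\sigma^{2}}}\phi$. (Interchanging integral and derivative is easily justified since $\phi$ is continuous on the compact sublevel sets that dominate $\pi_{\alpha}$.)

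Now set $a=1/\sigma^{2}$ and $b=r/\sigma^{2}$, so the target becomes $G(a+b)+G(a-b)-2G(a)$. Define $h(b)=G(a+b)+G(a-b)-2G(a)$, observe $h(0)=0$ and $h'(0)=0$, and apply the fundamental theorem of calculus twice to get
\[
h(b)=\int_{0}^{b}\bigl(G'(a+t)-G'(a-t)\bigr)\,dt=\int_{0}^{b}\!\!\int_{-t}^{t}G''(a+w)\,dw\,dt.
\]
Substitute $s=1+\sigma^{2}w$ in the inner integral (so $ds=\sigma^{2}\,dw$ and the limits $\pm t$ become $1\pm\sigma^{2}t$), and $\tau=\sigma^{2}t$ in the outer integral (so $d\tau=\sigma^{2}\,dt$ and the limit $b=r/\sigma^{2}$ becomes $r$). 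Using $G''(a+w)=\Var_{\mu_{1+\sigma^{2}w}}\phi=\Var_{\mu_{s}}\phi$, both substitutions together produce a factor $\sigma^{-4}$ and yield exactly
\[
\frac{1}{\sigma^{4}}\int_{0}^{r}\!\!\int_{1-\tau}^{1+\tau}\Var_{\mu_{s}}\phi\,ds\,d\tau,
\]
which is the claimed identity.

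No step looks genuinely obstructive: the only thing to be careful about is the bookkeeping of the two change-of-variables $w\mapsto s$ and $t\mapsto\tau$, each contributing a factor $1/\sigma^{2}$ and together producing the $1/\sigma^{4}$ in front. Verifying that $G''(\alpha)=\Var_{\pi_{\alpha}}\phi$ (rather than, say, some normalization-dependent quantity) is routine but worth writing out, since this is what converts the analytic second difference of $\ln F$ into a variance of $\phi$ under the Gibbs measures $\mu_{s}$.
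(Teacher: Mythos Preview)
Your proof is correct and is essentially the same as the paper's: both write the log-ratio as a second symmetric difference of the log-partition function and identify the second derivative with $\Var\phi$ under the Gibbs measure. The only cosmetic difference is the parametrization---the paper sets $g(t)=\ln F(\sigma^{2}/t)$ so that the $1/\sigma^{4}$ appears directly when computing $g''(s)$, whereas you parametrize by $\alpha=s/\sigma^{2}$ and recover the factor via two changes of variables; the content is identical.
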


\begin{proof}
Fix $\sigma^{2}$. Let $g(t)=\ln F(\frac{\sigma^{2}}{t})$. Then,
we have that
\begin{align}
\ln\left(\frac{F\left(\frac{\sigma^{2}}{1+r}\right)F\left(\frac{\sigma^{2}}{1-r}\right)}{F\left(\sigma^{2}\right)^{2}}\right) & =\int_{0}^{r}\frac{d}{dt}\ln\left(\frac{F\left(\frac{\sigma^{2}}{1+t}\right)F\left(\frac{\sigma^{2}}{1-t}\right)}{F\left(\sigma^{2}\right)^{2}}\right)dt\nonumber \\
 & =\int_{0}^{r}\frac{d}{dt}g(1+t)-\frac{d}{dt}g(1-t)dt\nonumber \\
 & =\int_{0}^{r}\int_{1-t}^{1+t}\frac{d^{2}}{ds^{2}}g(s)dsdt.\label{eq:ln_F_formula}
\end{align}
For $\frac{d^{2}}{ds^{2}}g(s)$, we have that
\begin{align*}
\frac{d^{2}}{ds^{2}}g(s) & =\frac{d^{2}}{ds^{2}}\ln\int_{K}\e{-\frac{s}{\sigma^{2}}\phi(x)}dx\\
 & =-\frac{1}{\sigma^{2}}\cdot\frac{d}{ds}\frac{\int_{K}\phi(x)\cdot\e{-\frac{s}{\sigma^{2}}\phi(x)}dx}{\int_{K}\e{-\frac{s}{\sigma^{2}}\phi(x)}dx}\\
 & =\left(\frac{1}{\sigma^{2}}\right)^{2}\left(\frac{\int_{K}\phi^{2}(x)\cdot\e{-\frac{s}{\sigma^{2}}\phi(x)}dx}{\int_{K}\e{-\frac{s}{\sigma^{2}}\phi(x)}dx}-\frac{\left(\int_{K}\phi(x)\cdot\e{-\frac{s}{\sigma^{2}}\phi(x)}dx\right)^{2}}{\left(\int_{K}\e{-\frac{s}{\sigma^{2}}\phi(x)}dx\right)^{2}}\right)\\
 & =\frac{1}{\sigma^{4}}\left(\E_{x\sim\mu_{s}}\phi^{2}(x)-\left(\E_{x\sim\mu_{s}}\phi(x)\right)^{2}\right)=\frac{1}{\sigma^{4}}\Var_{x\sim\mu_{s}}\phi(x).
\end{align*}
Putting it into (\ref{eq:ln_F_formula}), we have the result. 
\end{proof}
Now, we bound $\Var_{x\sim\mu_{s}}\phi(x)$. This can be viewed as
a manifold version of the thin shell or variance hypothesis estimate.
\begin{lem}[Thin shell estimates]
\label{lem:thin_shell}Let $\phi$ be a $\vartheta$-self concordant
barrier function for $K$. Then, we have that
\[
\Var_{x\sim\mu_{s}}\phi(x)=O\left(\frac{\sigma^{2}}{s}\vartheta\right).
\]
\end{lem}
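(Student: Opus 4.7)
The plan is to recognize this as a direct application of the Brascamp--Lieb variance inequality for log-concave measures, combined with the defining inequality of self-concordance. Writing $\alpha = s/\sigma^{2}$, the density of $\mu_{s}$ is proportional to $e^{-\alpha \phi(x)}$ on $K$, and since $\phi$ is convex the potential $V \defeq \alpha\phi$ is strongly log-concave in the weak sense that $\nabla^{2}V = \alpha\nabla^{2}\phi \succ 0$. The barrier blows up at $\partial K$, so $e^{-V}$ decays to $0$ at the boundary, which makes the boundary terms in the integration by parts vanish and lets us apply Brascamp--Lieb without any issue.

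Concretely, I would invoke the Brascamp--Lieb inequality in the form: for any smooth $g$ and any strictly convex $V$ with $e^{-V}$ integrable,
\[
\Var_{x \sim e^{-V}/Z}\, g(x) \;\le\; \E_{x \sim e^{-V}/Z}\,\bigl\langle \nabla g(x),\,(\nabla^{2}V(x))^{-1}\nabla g(x)\bigr\rangle.
\]
Applying this with $V = \alpha\phi$ and $g = \phi$ gives
\[
\Var_{x\sim\mu_{s}}\phi(x) \;\le\; \frac{1}{\alpha}\,\E_{x\sim\mu_{s}}\bigl[\nabla\phi(x)^{T}(\nabla^{2}\phi(x))^{-1}\nabla\phi(x)\bigr].
\]

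The final step is to use the defining property of a $\vartheta$-self-concordant barrier, namely the pointwise bound
\[
\nabla\phi(x)^{T}(\nabla^{2}\phi(x))^{-1}\nabla\phi(x) \;\le\; \vartheta \qquad \text{for all } x \in \mathrm{int}(K),
\]
which plugs into the previous display to yield $\Var_{\mu_{s}}\phi \le \vartheta/\alpha = \sigma^{2}\vartheta/s$, as claimed.

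The only subtlety to verify carefully is that Brascamp--Lieb is legitimate in this setting: one should check that $\phi \in L^{2}(\mu_{s})$ and that the usual derivation via integration by parts on $K$ is valid. Both follow from the barrier blowing up at $\partial K$, which makes $e^{-\alpha\phi}\phi$ and $e^{-\alpha\phi}|\nabla\phi|$ decay at the boundary. If one prefers a self-contained route, the inequality can also be derived from the $H^{-1}$ formulation or from the Bochner/$\Gamma_{2}$ calculus for the measure $e^{-\alpha\phi}$, but invoking Brascamp--Lieb is by far the cleanest. Hence the whole argument is essentially one inequality plus the self-concordance bound; I do not expect any real obstacle.
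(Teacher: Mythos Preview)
Your argument is correct and is in fact cleaner than the paper's own proof. The paper proceeds indirectly: it invokes its isoperimetric inequality for Hessian manifolds (Lemma~\ref{lem:n-d-gaussian-manifold-isoperimetry}) to obtain exponential concentration of $\mu_s$ around a median level set of $\phi$, then combines this with the Lipschitz estimate $|\phi(x)-\phi(y)|\le\sqrt{\vartheta}\,d(x,y)$ in the Hessian metric (which is precisely the self-concordance bound $\nabla\phi^{T}(\nabla^{2}\phi)^{-1}\nabla\phi\le\vartheta$ rephrased) to deduce sub-Gaussian tails for $\phi$ and hence the variance bound. Your route via Brascamp--Lieb uses the same self-concordance input but skips the isoperimetry step entirely; it delivers the sharp constant $\Var_{\mu_s}\phi\le\sigma^{2}\vartheta/s$ rather than a big-$O$, and it does not rely on the fourth-derivative hypothesis $D^{4}\phi[h,h,h,h]\ge 0$ that the paper's isoperimetry lemma needs (and which the thin-shell lemma, as stated, does not assume). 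The paper's approach has the virtue of reusing machinery already built for the sampling analysis and actually yields more---full sub-Gaussian concentration rather than just a second-moment bound---but for this lemma as stated your argument is both shorter and strictly more general.
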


\begin{proof}
Let $K_{t}\defeq\{x\in K\text{ such that }\phi(x)\leq t\}$ and $m$
be the number such that $\mu_{s}(K_{m})=\frac{1}{2}$. Let $K_{m,r}=\{x\text{ such that }d(x,y)\leq r\text{ and }y\in K_{m}\}$.
By repeatedly applying Lemma \ref{lem:n-d-gaussian-manifold-isoperimetry},
we have that
\[
\mu_{s}(K_{m,r})=1-e^{-\Omega(\frac{\sqrt{s}}{\sigma}r)}.
\]
By our assumption on $\phi$, for any $x$ and $y$, we have that
$\left|\phi(x)-\phi(y)\right|\leq\sqrt{\vartheta}d(x,y)$. Therefore,
for any $x\in K_{m,r}$, we have that $\phi(x)\leq m+\sqrt{\vartheta}r$.
Therefore, with probability at least $1-e^{-\Omega(\frac{\sqrt{s}}{\sigma}r)}$
in $\mu_{s}$, it follows that $\phi(x)\leq m+\sqrt{\vartheta}r$.
Similarly, $\phi(x)\ge m-\sqrt{\vartheta}r$. Hence, with $1-e^{-\Omega(\frac{\sqrt{s}}{\sigma}r)}$
probability in $\mu_{s}$, we have that $\left|\phi(x)-m\right|\leq\sqrt{\vartheta}r$.
The bound on the variance follows.
\end{proof}
Now we are ready to prove the key lemma.
\begin{lem}
\label{lem:cooling_step}Let $\phi$ be a $\vartheta$-self concordant
barrier function for $K$. For any $\frac{1}{2}>r\geq0$, we have
that
\[
\frac{\E_{x\sim\mu_{i}}(Y_{x}^{2})}{\E_{x\sim\mu_{i}}(Y_{x})^{2}}=O\left(r^{2}\right)\min\left(\frac{\vartheta}{\sigma_{i}^{2}},n\right).
\]
\end{lem}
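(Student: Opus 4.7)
I would start from Lemma~\ref{lem:ln_F_formula}, which already rewrites the target quantity as
\[
\ln\frac{\E_{x\sim\mu_{i}}(Y_{x}^{2})}{\E_{x\sim\mu_{i}}(Y_{x})^{2}}=\frac{1}{\sigma^{4}}\int_{0}^{r}\int_{1-t}^{1+t}\Var_{\mu_{s}}\phi(x)\,ds\,dt,
\]
where $\sigma=\sigma_{i+1}$ and $\sigma_{i}^{2}=\sigma^{2}/(1+r)$. Because $r\leq 1/2$, the inner variable ranges over $s\in[1/2,3/2]$ and $\sigma_{i}^{2}=\Theta(\sigma^{2})$; in particular $1/s$, $1/s^{2}$, and $\sigma^{2}/\sigma_{i}^{2}$ are all $\Theta(1)$. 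The area of the integration region is $\int_{0}^{r}2t\,dt=r^{2}$, so once I have a pointwise bound on the integrand the result will follow by a one-line integration. Thus the whole proof reduces to giving two complementary bounds on $\Var_{\mu_{s}}\phi$.

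The first bound, involving $\vartheta$, is immediate from Lemma~\ref{lem:thin_shell}: $\Var_{\mu_{s}}\phi=O(\sigma^{2}\vartheta/s)=O(\sigma^{2}\vartheta)$ on our range of $s$. Plugging this into the double integral and dividing by $\sigma^{4}$ gives $O(r^{2}\vartheta/\sigma^{2})=O(r^{2}\vartheta/\sigma_{i}^{2})$.

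For the second bound, involving $n$, I would invoke the Bobkov--Madiman theorem: for any logconcave probability density $e^{-\psi}$ on $\R^{n}$, $\Var(\psi(X))=O(n)$. This applies here because the density of $\mu_{s}$ is proportional to $e^{-s\phi/\sigma^{2}}\mathbf{1}_{K}$, which is logconcave since $\phi$ is convex and $K$ is convex. Writing the negative log-density as $\psi_{s}=(s/\sigma^{2})\phi+\log Z_{s}$ gives
\[
\Var_{\mu_{s}}(\phi)=(\sigma^{2}/s)^{2}\Var_{\mu_{s}}(\psi_{s})=O\!\left(n\sigma^{4}/s^{2}\right)=O(n\sigma^{4}),
\]
and integrating yields $O(nr^{2})$.

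Combining the two estimates and taking the minimum gives $\ln[\E(Y^{2})/\E(Y)^{2}]=O(r^{2})\min(\vartheta/\sigma_{i}^{2},n)$; since the cooling schedule chooses $r$ so that this quantity is $O(1)$, exponentiating gives the claimed bound on the ratio itself. The main obstacle is the $n$-side bound: it is not implied by anything proved earlier in the paper and requires citing the Bobkov--Madiman result (or a direct argument based on logconcavity of the density of $\psi_{s}(X)$ on $\R$). The $\vartheta$-side bound, by contrast, drops out of Lemma~\ref{lem:thin_shell} with essentially no extra work.
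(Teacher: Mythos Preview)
Your proposal is correct. The $\vartheta$-side bound via Lemma~\ref{lem:thin_shell} plugged into the double integral of Lemma~\ref{lem:ln_F_formula} is exactly what the paper does.

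For the $n$-side bound, however, the paper takes a different and somewhat more elementary route. Instead of bounding $\Var_{\mu_{s}}\phi$ pointwise (which, as you note, would require invoking the Bobkov--Madiman variance-of-information inequality), the paper bypasses the variance representation entirely and appeals to the classical fact that for any logconcave $f$, the map $a\mapsto a^{n}\int f(x)^{a}\,dx$ is logconcave (Lemma~3.2 of \cite{KV06}). Evaluating this at the arithmetic progression $a=\tfrac{1-r}{\sigma^{2}},\tfrac{1}{\sigma^{2}},\tfrac{1+r}{\sigma^{2}}$ gives directly
\[
\frac{F(\sigma^{2}/(1+r))\,F(\sigma^{2}/(1-r))}{F(\sigma^{2})^{2}}\le(1-r^{2})^{-n},
\]
hence $\ln$ of the ratio is $O(nr^{2})$. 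Your approach via Bobkov--Madiman is equally valid and has the aesthetic advantage of treating both branches through the same integral formula; the paper's approach trades this uniformity for a softer external ingredient (the $a^{n}\!\int f^{a}$ logconcavity predates and is considerably easier than Bobkov--Madiman).
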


\begin{proof}
Using Lemma \ref{lem:thin_shell} and Lemma \ref{lem:ln_F_formula},
we have that
\begin{equation}
\ln\left(\frac{F\left(\frac{\sigma^{2}}{1+r}\right)F\left(\frac{\sigma^{2}}{1-r}\right)}{F\left(\sigma^{2}\right)^{2}}\right)=O\left(\frac{r^{2}\vartheta}{\sigma^{2}}\right).\label{eq:ln_F_1}
\end{equation}
This bounds is useful when $\sigma^{2}$ is large. 

For the case $\sigma^{2}$ is small, we recall that for any logconcave
function $f$, the function $a\rightarrow a^{n}\int f(x)^{a}\,dx$
is logconcave (Lemma 3.2 in \cite{KV06}). In particular, this shows
that $a^{n}F\left(\frac{1}{a}\right)$ is logconcave in $a$. Therefore,
with $a=\frac{1+r}{\sigma^{2}},\frac{1}{\sigma^{2}},\frac{1-r}{\sigma^{2}},$we
have 
\[
\frac{1}{\sigma^{4n}}F(\sigma^{2})^{2}\geq\left(\frac{1+r}{\sigma^{2}}\right)^{n}F\left(\frac{\sigma^{2}}{1+r}\right)\left(\frac{1-r}{\sigma^{2}}\right)^{n}F\left(\frac{\sigma^{2}}{1-r}\right).
\]
Rearranging the term, we have that 
\[
\frac{F\left(\frac{\sigma^{2}}{1+r}\right)F\left(\frac{\sigma^{2}}{1-r}\right)}{F\left(\sigma^{2}\right)^{2}}\leq\left(\frac{1}{(1+r)(1-r)}\right)^{n}.
\]
Therefore, we have that
\begin{equation}
\ln\left(\frac{F\left(\frac{\sigma^{2}}{1+r}\right)F\left(\frac{\sigma^{2}}{1-r}\right)}{F\left(\sigma^{2}\right)^{2}}\right)=O\left(nr^{2}\right).\label{eq:ln_F_2}
\end{equation}
Combining (\ref{eq:EY2_EY2}), (\ref{eq:ln_F_1}) and (\ref{eq:ln_F_2}),
we have the result.
\end{proof}

\subsubsection{Main lemma}
\begin{lem}
Given any $\vartheta$-self-concordance barrier $\phi$ on a convex
set $K$ and $0<\varepsilon<\frac{1}{2}$, the algorithm Volume($M$,
$\varepsilon$) outputs the volume of $K$ to within a $1\pm\varepsilon$
multiplicative factor.
\end{lem}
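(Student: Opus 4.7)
The plan is to combine three ingredients: (i) the boundary estimates from Lemma~\ref{lem:boundary_case}, which control the error incurred by replacing the true volume of $K$ by $F(\sigma_k^2)$ at the end, and by replacing $F(\sigma_0^2)$ by the explicit Gaussian integral $(2\pi\sigma_0^2)^{n/2}\det(\nabla^2\phi(x^*))^{-1/2}$ at the start; (ii) the variance estimate from Lemma~\ref{lem:cooling_step}, which controls the relative sampling error in each telescoping ratio; and (iii) the convergence guarantee of Riemannian HMC (Theorem~\ref{thm:logbarrier}), which controls the bias from sampling $X_j$ from a distribution only approximately equal to $\mu_i$.

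First I would show that the cooling schedule has $N=O(\log(n\vartheta/\varepsilon))$ phases. In the ``hot'' regime $\vartheta\le n\sigma_i^2$, each step multiplies $\sigma_i^2$ by $1+1/\sqrt n$, so going from $\sigma^2=\vartheta/n$ to $\sigma^2=\Theta(\vartheta\varepsilon^{-1}\log(n\vartheta/\varepsilon))$ takes $O(\sqrt n\log(n/\varepsilon))$ steps; in the ``cold'' regime $\sigma^2<\vartheta/n$, the multiplicative factor $1+\sigma_i/\sqrt{\vartheta}$ telescopes so that $O(\log(n/\varepsilon))$ phases suffice to reach $\sigma^2=\vartheta/n$ starting from $\sigma_0^2$. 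Next, by Lemma~\ref{lem:boundary_case} applied at $\sigma_0$ (whose choice makes $\sigma_0 n^{3/2}\log^{3/2}(1/\sigma_0)=O(\varepsilon)$) and at the terminal $\sigma_k^2=\Theta(\vartheta\varepsilon^{-1}\log(n\vartheta/\varepsilon))$ (where $\sigma_k^{-2}\vartheta\log(\sigma_k^2 n/\vartheta)=O(\varepsilon)$), together with the shift $\phi(x^*)=0$, both endpoint approximations introduce multiplicative error at most $1\pm O(\varepsilon)$ in the final volume estimate.

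For each phase $i$, let $W_{i+1}=k_i^{-1}\sum_{j=1}^{k_i}Y_{X_j}$ be the empirical estimator of $\E_{\mu_i}Y_x=F(\sigma_{i+1}^2)/F(\sigma_i^2)$. Lemma~\ref{lem:cooling_step} with the scheduled ratio $r_i=\sigma_{i+1}^2/\sigma_i^2-1$ (so $r_i=O(1/\sqrt n)$ in the hot regime and $r_i=O(\min(\sigma_i/\sqrt{\vartheta},1))$ in the cold regime) gives
\[
\frac{\Var_{\mu_i}Y_x}{(\E_{\mu_i}Y_x)^2}=O(r_i^2)\min\!\left(\frac{\vartheta}{\sigma_i^2},n\right),
\]
which is $O(1)$ in both regimes. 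Choosing the $k_i$ as in the algorithm (each $k_i\ge C\varepsilon^{-2}\log(N/\varepsilon)\cdot r_i^2\min(\vartheta/\sigma_i^2,n)$) and applying Chebyshev's inequality to the $k_i$ samples (accounting for the fact that consecutive samples from the HMC warm start are near-independent in the $L_2$ sense after a small burn-in), each $W_{i+1}$ estimates $\E_{\mu_i}Y_x$ to within a $1\pm \varepsilon/(10N)$ multiplicative factor with probability $1-O(1/N)$. Multiplying over $N$ phases, the accumulated statistical error is $1\pm\varepsilon/2$.

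Finally, I would absorb the HMC sampling bias: by Theorem~\ref{thm:logbarrier}, running HMC for $\widetilde O(1)$ warm-start-aware steps per sample drives the total variation distance between the law of $X_j$ and $\mu_i$ below $\varepsilon/(10Nk_i)$ (since the warm start provided by the previous phase has bounded $L_2$ density ratio, as the ratio of consecutive densities is $e^{-(\sigma_{i+1}^{-2}-\sigma_i^{-2})\phi}$ whose $L_2$ norm is controlled by Lemma~\ref{lem:cooling_step} with $r$ replaced by a constant); this contributes at most an additional $\varepsilon/2$ multiplicative error across all samples. Combining the endpoint error, the statistical error, and the bias error yields the claimed $1\pm\varepsilon$ guarantee. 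The main obstacle is the third step: one must be careful that the warm-start density ratio between consecutive phases remains bounded (so the HMC mixing hypothesis of Theorem~\ref{thm:logbarrier} applies with only polylogarithmic overhead), and that using the \emph{same} samples as warm start for the next phase does not accumulate correlated bias in the product estimator — this is handled by the standard martingale/coupling argument of Gaussian cooling as in~\cite{CV2015}, adapted here to the Gibbs family $e^{-\sigma^{-2}\phi}$.
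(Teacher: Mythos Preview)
Your overall structure—boundary estimates via Lemma~\ref{lem:boundary_case}, per-phase variance via Lemma~\ref{lem:cooling_step}, and HMC bias control—matches the paper. The gap is in your accounting of the number of phases and, consequently, in your error-accumulation argument.

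You assert that the schedule has $N=O(\log(n\vartheta/\varepsilon))$ phases, but your own next sentence already gives $O(\sqrt{n}\log(n/\varepsilon))$ phases for the hot regime, contradicting this. The cold regime is worse: with multiplicative step $1+\sigma_i/\sqrt{\vartheta}$ there is no ``telescoping''; doubling $\sigma_i$ near level $\sigma$ takes $\Theta(\sqrt{\vartheta}/\sigma)$ phases, so from $\sigma_0=\Theta(\varepsilon n^{-3/2}\log^{-3/2}(n/\varepsilon))$ up to $\sqrt{\vartheta/n}$ you incur $\Theta(\sqrt{\vartheta}/\sigma_0)$ phases in total—polynomial in $n,\vartheta,1/\varepsilon$, not polylogarithmic. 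With the correct $N$, your strategy of Chebyshev plus union bound at level $\varepsilon/(10N)$ with failure probability $O(1/N)$ per phase would require $k_i=\Omega(N^3/\varepsilon^2)$, far larger than the algorithm's $k_i$; so the argument as written does not close.

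The paper's (admittedly terse) argument avoids this entirely: instead of bounding each $W_{i+1}$ to additive accuracy $\varepsilon/N$, it controls the \emph{relative variance of the product}. Since $\Var_{\mu_i}(Y)/(\E_{\mu_i}Y)^2=O(1)$ by Lemma~\ref{lem:cooling_step}, one has $\Var(W_{i+1})/(\E W_{i+1})^2=O(1/k_i)$, and the point of the schedule is precisely that $\sum_i 1/k_i=O(\varepsilon^2)$: in the hot regime there are $O(\sqrt{n})$ phases per doubling each with $k_i=\Theta(\sqrt{n}\,\varepsilon^{-2}\log(n/\varepsilon))$, and in the cold regime $O(\sqrt{\vartheta}/\sigma_i)$ phases per doubling each with $k_i=\Theta((\sqrt{\vartheta}/\sigma_i)\varepsilon^{-2}\log(n/\varepsilon))$, so each doubling contributes $O(\varepsilon^2/\log(n/\varepsilon))$ and the $O(\log)$ doublings sum to $O(\varepsilon^2)$. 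A single Chebyshev on the product then yields the $(1\pm\varepsilon)$ guarantee. Replace your per-phase union bound by this summed-variance argument and the proof goes through with the algorithm's $k_i$ as stated.
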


\begin{proof}
By our choice of $\varepsilon$, Lemma \ref{lem:boundary_case} shows
that $e^{-\sigma_{0}^{-2}\phi(x^{*})}(2\pi\sigma_{0}^{2})^{\frac{n}{2}}\det(\nabla^{2}\phi(x^{*}))^{-\frac{1}{2}}$
is an $1\pm\frac{\varepsilon}{4}$ multiplicative approximation of
$F(\sigma_{0})$ and that $e^{-\sigma_{k}^{-2}\phi(x^{*})}\vol(K)$
is a $1\pm\frac{\varepsilon}{4}$ multiplicative approximation of
$F(\sigma_{k})$. Note that we shifted the function $\phi$ such that
$\phi(x^{*})=0$. Therefore,
\[
\vol(K)=(1\pm\frac{\varepsilon}{2})(2\pi\sigma_{0}^{2})^{\frac{n}{2}}\det(\nabla^{2}\phi(x^{*}))^{-\frac{1}{2}}\prod_{i=1}^{k}\frac{F(\sigma_{i+1}^{2})}{F(\sigma_{i}^{2})}.
\]
In Lemma \ref{lem:cooling_step}, we showed that the variance of the
estimator $Y=f(\sigma_{i+1}^{2},X)/f(\sigma_{i}^{2},X)$ is upper
bounded by $O(1)(\E Y)^{2}$. Note that the algorithm takes $O(\sqrt{n})$
iterations to double $\sigma_{i}$ if $\sqrt{\frac{\vartheta}{n}}\le\sigma_{i}$
and $O(\sqrt{\vartheta}\sigma_{i}^{-1})$ iterations otherwise. By
a simple analysis of variance, to have relative error $\varepsilon$,
it suffices to have $\widetilde{O}(k_{i})$ samples in each phase.
\end{proof}

\subsection{Volume computation with the log barrier}

In this section, we prove the Theorem \ref{thm:volume}, restated
below for convenience.

\volume*
\begin{proof}
In the first part, when $\sigma^{2}\leq\frac{m}{n}$, the mixing time
of HMC is $\widetilde{O}(m\cdot n^{-\frac{1}{3}})$. Since the number
of sampling phases to double such $\sigma^{2}$ is $O(\sqrt{n})$
and since we samples $\widetilde{O}(\frac{\sqrt{n}}{\epsilon^{2}})$,
the total number of steps of HMC is 
\[
\widetilde{O}\left(mn^{-\frac{1}{3}}\right)\times O\left(\sqrt{n}\right)\times\widetilde{O}\left(\frac{\sqrt{n}}{\epsilon^{2}}\right)=\widetilde{O}\left(\frac{m\cdot n^{\frac{2}{3}}}{\epsilon^{2}}\right).
\]

In the second part, when $\sigma^{2}\geq\frac{m}{n}$, the mixing
time of HMC is $\widetilde{O}(\frac{n^{\frac{2}{3}}}{\sigma^{-2}+m^{-1}}+\frac{m^{\frac{1}{3}}n^{\frac{1}{3}}}{\sigma^{-\frac{2}{3}}+m^{-\frac{1}{3}}}+m^{\frac{1}{2}}n^{\frac{1}{6}})$.
Since the number of sampling phases to double $\sigma^{2}$ is $O(1+\frac{\sqrt{m}}{\sigma})$
and since we sample $\widetilde{O}((\frac{\sqrt{m}}{\sigma}+1)\varepsilon^{-2})$
in each phase, the total number of steps of HMC is 
\[
\widetilde{O}\left(\frac{n^{\frac{2}{3}}}{\sigma^{-2}+m^{-1}}+\frac{m^{\frac{1}{3}}n^{\frac{1}{3}}}{\sigma^{-\frac{2}{3}}+m^{-\frac{1}{3}}}+m^{\frac{1}{2}}n^{\frac{1}{6}}\right)\times O\left(1+\frac{\sqrt{m}}{\sigma}\right)\times\widetilde{O}\left((\frac{\sqrt{m}}{\sigma}+1)\varepsilon^{-2}\right)=\widetilde{O}\left(\frac{m\cdot n^{\frac{2}{3}}}{\epsilon^{2}}\right).
\]

Combining both parts, the total number of steps of HMC is 
\[
\widetilde{O}\left(\frac{m\cdot n^{\frac{2}{3}}}{\epsilon^{2}}\right).
\]
\end{proof}

\pagebreak{}
\section{Logarithmic barrier\label{sec:Logarithmic}}

\label{sec:Logarithmic-Barrier}For any polytope $\mathcal{M}=\{Ax>b\}$,
the logarithmic barrier function $\phi(x)$ is defined as
\[
\phi(x)=-\sum_{i=1}^{m}\log(a_{i}^{T}x-b_{i}).
\]
We denote the manifold induced by the logarithmic barrier on $\mathcal{M}$
by $\mathcal{M}_{L}$. The goal of this section is to analyze Hamiltonian
Monte Carlo on $\mathcal{M}_{L}$. In Section \ref{subsec:RG_L},
we give explicit formulas for various Riemannian geometry concepts
on $\mathcal{M}_{L}$. In Section \ref{subsec:geodesic_walk_log},
we describe the HMC specialized to $\mathcal{M}_{L}$. In Sections
\ref{sec:walk_is_random_log} to \ref{subsec:Stability-of-norm},
we bound the parameters required by Theorem \ref{thm:gen-convergence},
resulting in Theorem \ref{thm:logbarrier}.

The following parameters that are associated with barrier functions
will be convenient. 
\begin{defn}
\label{def:self_concordance}For a convex function $f$, let $M_{1}$,
$M_{2}$ and $M_{3}$ be the smallest numbers such that
\begin{enumerate}
\item $M_{1}\geq\max_{x\in\mathcal{M}}(\nabla f(x))^{T}\left(A_{x}^{T}A_{x}\right)^{-1}\nabla f(x)$
and $M_{1}\geq n$.
\item $\nabla^{2}f\preceq M_{2}\cdot A_{x}^{T}A_{x}$ .
\item $\left|\tr((A_{x}^{T}A_{x})^{-1}\nabla^{3}f(x)[v])\right|\leq M_{3}\norm v_{x}$
for all $v$.
\end{enumerate}
\end{defn}

For the case $f=\phi$ are the standard logarithmic barrier, these
parameters are $n,1,\sqrt{n}$ respectively.

\subsection{Riemannian geometry on $\mathcal{M}_{L}$ ($G_{2}$)}

\label{subsec:RG_L}We use the following definitions throughout this
section.
\begin{defn}
\label{def:notation}For any matrix $A\in\R^{m\times n}$ and vectors
$b\in\R^{m}$ and $x\in\R^{n}$, define
\begin{enumerate}
\item $s_{x}=Ax-b,\,S_{x}=\Diag(s_{x}),\,A_{x}=S_{x}^{-1}A$.
\item $s_{x,v}=A_{x}v$, $S_{x,v}=\Diag(A_{x}v)$.
\item $P_{x}=A_{x}(A_{x}^{T}A_{x})^{-1}A_{x}^{T}$, $\sigma_{x}=\diag(P_{x})$,
$\Sigma_{x}=\Diag(P)$, $\left(P_{x}^{(2)}\right)_{ij}=\left(P_{x}\right)_{ij}^{2}$.
\item Gradient of $\phi$: $\phi_{i}=-\sum_{\ell}\left(e_{\ell}^{T}A_{x}e_{i}\right)$.
\item Hessian of $\phi$ and its inverse: $g_{ij}=\phi_{ij}=\left(A_{x}^{T}A_{x}\right)_{ij}=\sum\left(e_{\ell}^{T}A_{x}e_{i}\right)\left(e_{\ell}^{T}A_{x}e_{j}\right)$,
$g^{ij}=e_{i}^{T}\left(A_{x}^{T}A_{x}\right)^{-1}e_{j}$.
\item Third derivatives of $\phi$: $\phi_{ijk}=-2\sum_{\ell}\left(e_{\ell}^{T}A_{x}e_{i}\right)\left(e_{\ell}^{T}A_{x}e_{j}\right)\left(e_{\ell}^{T}A_{x}e_{k}\right)$.
\item For brevity (overloading notation), we define $s_{\gamma'}=s_{\gamma,\gamma'}$,
$s_{\gamma''}=s_{\gamma,\gamma''}$ , $S_{\gamma'}=S_{\gamma,\gamma'}$
and $S_{\gamma''}=S_{\gamma,\gamma''}$ for a curve $\gamma(t)$.
\end{enumerate}
\end{defn}

In this section, we will frequently use the following identities derived
from elementary calculus (using only the chain/product rules and the
formula for derivative of the inverse of a matrix). For reference,
we include proofs in Appendix \ref{sec:Calculus}.
\begin{fact}
\label{fact:calculus}For any matrix $A$ and any curve $\gamma(t)$,
we have
\begin{align*}
\frac{dA_{\gamma}}{dt} & =-S_{\gamma'}A_{\gamma},\\
\frac{dP_{\gamma}}{dt} & =-S_{\gamma'}P_{\gamma}-P_{\gamma}S_{\gamma'}+2P_{\gamma}S_{\gamma'}P_{\gamma},\\
\frac{dS_{\gamma'}}{dt} & =\Diag(-S_{\gamma'}A_{\gamma}\gamma'+A_{\gamma}\gamma'')=-S_{\gamma'}^{2}+S_{\gamma''},
\end{align*}
\end{fact}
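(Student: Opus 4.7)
The plan is to verify the three identities by elementary calculus, leaning only on the chain rule, product rule, and the standard identity $\frac{d}{dt} M^{-1} = -M^{-1} (\frac{d}{dt} M) M^{-1}$. The key bookkeeping point, which I would establish up front, is the diagonal-matrix identity $S_\gamma^{-1} \Diag(A\gamma') = S_{\gamma'}$: both sides are diagonal with $i$-th entry $(A\gamma')_i / (s_\gamma)_i = (A_\gamma \gamma')_i = (s_{\gamma'})_i$. This observation is what turns messy $S_\gamma^{-1}$'s into the compact $S_{\gamma'}$ notation in all three formulas.

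For the first identity, I would write $A_\gamma = S_\gamma^{-1} A$ and note $\frac{d}{dt} S_\gamma = \Diag(A\gamma')$ directly from $s_\gamma = A\gamma - b$. Then the inverse-derivative rule gives
\[
\frac{d A_\gamma}{dt} = -S_\gamma^{-1} \Diag(A\gamma') S_\gamma^{-1} A = -S_{\gamma'} S_\gamma^{-1} A = -S_{\gamma'} A_\gamma,
\]
using the bookkeeping identity above.

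For the second identity, let $M = A_\gamma^T A_\gamma$. Applying the first identity and its transpose, $\frac{dM}{dt} = -2 A_\gamma^T S_{\gamma'} A_\gamma$ (here $S_{\gamma'}$ is symmetric so no fuss), hence $\frac{dM^{-1}}{dt} = 2 M^{-1} A_\gamma^T S_{\gamma'} A_\gamma M^{-1}$. Expanding $P_\gamma = A_\gamma M^{-1} A_\gamma^T$ by the product rule and substituting gives exactly the three terms $-S_{\gamma'} P_\gamma$, $+2 P_\gamma S_{\gamma'} P_\gamma$, and $-P_\gamma S_{\gamma'}$.

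For the third identity, I would differentiate $S_{\gamma'} = \Diag(A_\gamma \gamma')$, using the first identity again:
\[
\frac{d}{dt}(A_\gamma \gamma') = -S_{\gamma'} A_\gamma \gamma' + A_\gamma \gamma'' = -S_{\gamma'} s_{\gamma'} + s_{\gamma''},
\]
and since $S_{\gamma'} s_{\gamma'}$ is the vector whose $i$-th entry is $(s_{\gamma'})_i^2$, taking $\Diag$ yields $-S_{\gamma'}^2 + S_{\gamma''}$. There is no real obstacle here — the whole fact is routine calculus; the only thing to watch is to keep diagonal matrices and their diagonals straight, so that steps like $\Diag(A_\gamma\gamma')^2 = \Diag\bigl((A_\gamma\gamma')^{\circ 2}\bigr)$ are applied correctly.
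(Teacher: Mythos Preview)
Your proposal is correct and follows essentially the same approach as the paper's proof: both differentiate $A_\gamma = S_\gamma^{-1}A$ via the inverse-derivative rule, then bootstrap to $P_\gamma$ by the product rule on $A_\gamma(A_\gamma^T A_\gamma)^{-1}A_\gamma^T$, and finally differentiate $\Diag(A_\gamma\gamma')$ for the third identity. Your explicit isolation of the diagonal bookkeeping identity $S_\gamma^{-1}\Diag(A\gamma') = S_{\gamma'}$ is a nice touch that the paper leaves implicit in the chain of equalities.
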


We also use these matrix inequalities: $\tr(AB)=\tr(BA)$, $\tr(PAP)\le\tr(A)$
for any psd matrix $A$; $\tr(ABA^{T})\le\tr(AZA^{T})$ for any $B\preceq Z$;
the Cauchy-Schwartz, namely, $\tr(AB)\le\tr(AA^{T})^{\frac{1}{2}}\tr(BB^{T})^{\frac{1}{2}}.$
We note $P_{x}^{2}=P_{x}$ because $P_{x}$ is a projection matrix.

Since the manifold $\mathcal{M}_{L}$ is naturally embedded in $\Rn$,
we can identify $T_{x}\mathcal{M}_{L}$ with Euclidean coordinates.
We have that
\[
\left\langle u,v\right\rangle _{x}=u^{T}\nabla^{2}\phi(x)v=u^{T}A_{x}^{T}A_{x}v.
\]
We will use the following two lemmas proved in \cite{LeeV16}.
\begin{lem}
\label{lem:geo_equ_log}Let $w(t)$ be a vector field defined on a
curve $z(t)$ in $\mathcal{M}_{L}$. Then, 
\[
\nabla_{z'}w=\frac{dw}{dt}-\left(A_{z}^{T}A_{z}\right)^{-1}A_{z}^{T}S_{z'}s_{z,w}=\frac{dw}{dt}-\left(A_{z}^{T}A_{z}\right)^{-1}A_{z}^{T}S_{z,w}s_{z'}.
\]
In particular, the equation for parallel transport on a curve $\gamma(t)$
is given by
\begin{equation}
\frac{d}{dt}v(t)=\left(A_{\gamma}^{T}A_{\gamma}\right)^{-1}A_{\gamma}^{T}S_{\gamma'}A_{\gamma}v.\label{eq:parallel_log}
\end{equation}
\end{lem}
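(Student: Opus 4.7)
The plan is to compute both identities by direct substitution into the coordinate formula for the Levi-Civita connection, exploiting a total symmetry in the third derivatives of $\phi$ that causes the Christoffel symbols to collapse. Working in Euclidean coordinates on $\mathcal{M}_L$, the connection is
\[
(\nabla_{z'} w)^k = \frac{dw^k}{dt} + \Gamma^k_{ij}(z')^i w^j, \qquad \Gamma^k_{ij} = \tfrac{1}{2} g^{kl}\bigl(\partial_j g_{li} + \partial_i g_{lj} - \partial_l g_{ij}\bigr),
\]
and $g_{ij} = \phi_{ij} = (A_x^T A_x)_{ij} = \sum_p (A_x)_{pi}(A_x)_{pj}$ by Definition \ref{def:notation}.

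First, I would compute $\partial_k g_{ij}$. Since $(A_x)_{pi} = A_{pi}/s_p$ with $s_p = a_p^T x - b_p$, differentiation in $x_k$ yields $\partial_k (A_x)_{pi} = -(A_x)_{pi}(A_x)_{pk}$. Applying the product rule gives
\[
\partial_k g_{ij} = -2\sum_p (A_x)_{pi}(A_x)_{pj}(A_x)_{pk} = \phi_{ijk},
\]
matching the formula in Definition \ref{def:notation} and, crucially, being fully symmetric in $i,j,k$. This symmetry makes the three terms in the bracket of $\Gamma^k_{ij}$ combine to a single one, giving $\Gamma^k_{ij} = \tfrac{1}{2}g^{kl}\phi_{lij}$.

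Next, I would substitute this into the Christoffel sum and rewrite using the matrix notation of Definition \ref{def:notation}. Explicitly,
\[
\Gamma^k_{ij}(z')^i w^j = -\sum_{p,l} g^{kl}(A_x)_{pl}(A_x z')_p (A_x w)_p = -\bigl((A_z^T A_z)^{-1} A_z^T S_{z'} s_{z,w}\bigr)^k,
\]
since $A_x z' = s_{z'}$ and $A_x w = s_{z,w}$ and summing over $p$ and multiplying by $(A_x)_{pl}$ is exactly $A_z^T S_{z'} s_{z,w}$. Assembling this with the $dw/dt$ term produces the claimed formula $\nabla_{z'} w = dw/dt - (A_z^T A_z)^{-1} A_z^T S_{z'} s_{z,w}$. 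The second expression of the first equality is simply the observation that the vector $S_{z'} s_{z,w}$ has $p$-th entry $(s_{z'})_p (s_{z,w})_p$, which is obviously symmetric under swapping the roles of $z'$ and $w$, hence equals $S_{z,w} s_{z'}$. Finally, parallel transport along $\gamma$ is defined by $\nabla_{\gamma'} v = 0$; setting the expression above to zero with $z = \gamma$, $w = v$, and using $s_{\gamma,v} = A_\gamma v$ immediately yields \eqref{eq:parallel_log}.

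There is no real obstacle here; the only mildly nontrivial point is recognizing that $\phi_{ijk}$ is totally symmetric (rather than just pairwise symmetric in the Hessian indices), which is what permits the three-term Christoffel combination to degenerate into a single term and gives the compact matrix-form expression. The rest is bookkeeping to convert index sums into products of the matrices $A_z$, $S_{z'}$, and the inverse metric $(A_z^T A_z)^{-1}$.
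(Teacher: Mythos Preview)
Your proof is correct. The paper does not give its own proof of this lemma but simply cites \cite{LeeV16}; your direct computation via the Christoffel symbols is exactly the natural route, and in fact the simplification $\Gamma^{k}_{ij}=\tfrac{1}{2}\sum_{l}g^{kl}\phi_{ijl}$ you derive is already recorded in the appendix as Lemma~\ref{lem:Hessian_formula} for general Hessian manifolds, so one could equally start from there and skip the symmetry argument.
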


\begin{lem}
\label{lem:Riemann_tensor_log}Given $u,v,w,x\in T_{x}\mathcal{M}_{L}$,
the Riemann Curvature Tensor at $x$ is given by
\begin{eqnarray*}
R(u,v)w & = & \left(A_{x}^{T}A_{x}\right)^{-1}A_{x}^{T}\left(S_{x,v}P_{x}S_{x,w}-\Diag(P_{x}s_{x,v}s_{x,w})\right)A_{x}u
\end{eqnarray*}
and the Ricci curvature $\Ric(u)\defeq\tr R(u,u)$ is given by
\begin{eqnarray*}
\Ric(u) & = & s_{x,u}^{T}P_{x}^{(2)}s_{x,u}-\sigma_{x}^{T}P_{x}s_{x,u}^{2}
\end{eqnarray*}
where $R(u,u)$ is the operator defined above.
\end{lem}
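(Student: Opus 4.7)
The strategy is to compute $R(u,v)w$ directly from its definition $R(u,v)w = \nabla_u\nabla_v w - \nabla_v\nabla_u w - \nabla_{[u,v]}w$ using Lemma~\ref{lem:geo_equ_log} for the covariant derivative, and then to obtain the Ricci curvature by tracing the resulting linear operator against a $g$-orthonormal frame.

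For the Riemann tensor, I extend $u, v, w$ as constant vector fields in the ambient Euclidean coordinates on $\mathcal{M}_L$. This has two advantages: the Lie bracket vanishes so the $\nabla_{[u,v]}w$ term drops out, and the ordinary directional derivatives of $u, v, w$ are zero, so Lemma~\ref{lem:geo_equ_log} reduces to $\nabla_v w = -(A_x^T A_x)^{-1} A_x^T S_{x,v} s_{x,w}$ pointwise. Then $\nabla_u\nabla_v w$ is obtained by differentiating this expression in direction $u$ and applying the connection formula a second time, subtracting the correction $(A_x^T A_x)^{-1}A_x^T S_{x,u} s_{x,\nabla_v w}$ where $s_{x,\nabla_v w} = A_x\nabla_v w = -P_x S_{x,v} s_{x,w}$. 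All pointwise derivatives I need come from Fact~\ref{fact:calculus}: $D_u A_x = -S_{x,u}A_x$, $D_u S_{x,v} = -S_{x,u}S_{x,v}$, and $D_u s_{x,w} = -S_{x,u}s_{x,w}$, together with $D_u(A_x^T A_x)^{-1} = 2(A_x^T A_x)^{-1}A_x^T S_{x,u} A_x (A_x^T A_x)^{-1}$.

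The main obstacle is the bookkeeping in the subtraction $\nabla_u\nabla_v w - \nabla_v\nabla_u w$: each term expands into roughly half a dozen pieces. I would exploit three reductions to collapse them. First, diagonal matrices commute, so every summand that is symmetric in $(u,v)$ cancels. Second, $P_x = A_x(A_x^T A_x)^{-1}A_x^T$ absorbs inner factors of $A_x(A_x^T A_x)^{-1}A_x^T$ whenever two of the differentiated pieces sit next to each other, and $P_x^2 = P_x$ eliminates duplicates. Third, to move the dependence on the $u$-slot to the far right (as required by the stated formula $\cdots A_x u$), I use the identity $S_{x,a}\, s_{x,b} = \Diag(s_{x,b})\, s_{x,a}$, which lets me rewrite products of the form $S_{x,u}\cdot s_{x,\star}$ as $\Diag(s_{x,\star})\, A_x u$. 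After this cleanup only the antisymmetric part survives and assembles as $(A_x^T A_x)^{-1}A_x^T[S_{x,v}P_x S_{x,w} - \Diag(P_x s_{x,v}s_{x,w})]A_x u$.

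For the Ricci curvature I take $\Ric(u) = \sum_i \langle R(e_i,u)u,\, e_i\rangle_x$ over a basis $\{e_i\}$ that is orthonormal in the metric $g = A_x^T A_x$. Orthonormality gives $\sum_i (A_x e_i)(A_x e_i)^T = A_x(A_x^T A_x)^{-1}A_x^T = P_x$, so each sum over $i$ collapses to a trace against $P_x$. The first piece becomes
\[
\tr(P_x S_{x,u} P_x S_{x,u}) = \sum_{i,j}(P_x)_{ij}^2 (s_{x,u})_i(s_{x,u})_j = s_{x,u}^T P_x^{(2)} s_{x,u},
\]
and the second becomes $\tr(P_x\Diag(P_x s_{x,u}^2)) = \sigma_x^T P_x s_{x,u}^2$. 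Subtracting gives the claimed Ricci formula. The only subtle point in this final step is recognizing that the $\Diag$-term traces against $P_x$ only through its diagonal $\sigma_x$, which is why the simpler weight $\sigma_x^T$ rather than $P_x$ appears in the second summand.
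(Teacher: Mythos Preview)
Your approach is correct. The paper itself does not give a proof of this lemma; it simply cites \cite{LeeV16} (``We will use the following two lemmas proved in \cite{LeeV16}''). Your direct computation via constant Euclidean extensions of $u,v,w$, the connection formula of Lemma~\ref{lem:geo_equ_log}, and the derivative identities in Fact~\ref{fact:calculus} is exactly the standard route and is how the cited reference proceeds. The bookkeeping you outline collapses correctly: the symmetric-in-$(u,v)$ pieces cancel, the surviving antisymmetric part is
\[
(A_x^TA_x)^{-1}A_x^T\bigl[S_{x,v}P_xS_{x,u}-S_{x,u}P_xS_{x,v}\bigr]s_{x,w},
\]
and your identity $S_{x,a}s_{x,b}=S_{x,b}s_{x,a}$ converts this into the stated form with $A_xu$ on the right. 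The Ricci trace is also handled correctly.

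One minor remark: an alternative route, available from the appendix material in the paper, is to specialize the general Hessian-manifold index formula of Lemma~\ref{lem:Hessian_formula},
\[
R_{klij}=\tfrac14\sum_{pq}g^{pq}(\phi_{jkp}\phi_{ilq}-\phi_{ikp}\phi_{jlq}),
\]
using $\phi_{ijk}=-2\sum_\ell (e_\ell^TA_xe_i)(e_\ell^TA_xe_j)(e_\ell^TA_xe_k)$ from Definition~\ref{def:notation}. This trades the product-rule expansion you do for an index contraction; it yields the same answer with roughly the same effort. Your choice to work directly with the covariant derivative is arguably cleaner since it avoids indices entirely and reuses Lemma~\ref{lem:geo_equ_log} rather than the Christoffel symbols.
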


\subsection{Hamiltonian walk on $\mathcal{M}_{L}$}

\label{subsec:geodesic_walk_log}We often work in Euclidean coordinates.
In this case, the Hamiltonian walk is given by the formula in the
next lemma. To implement the walk, we solve this ODE using the collocation
method as described in \cite{LeeV16}, after first reducing it to
a first-order ODE. The resulting complexity is $\tilde{O}(mn^{\omega-1})$
per step.
\begin{lem}
\label{lem:geodesic_walk_log_def}The Hamiltonian curve at a point
$x$ in Euclidean coordinates is given by the following equations
\begin{eqnarray*}
\gamma''(t) & = & \left(A_{\gamma}^{T}A_{\gamma}\right)^{-1}A_{\gamma}^{T}s_{\gamma'}^{2}+\mu(\gamma(t))\quad\forall t\geq0\\
\gamma'(0) & = & w,\\
\gamma(0) & = & x.
\end{eqnarray*}
where $\mu(x)=\left(A_{x}^{T}A_{x}\right)^{-1}A_{x}^{T}\sigma_{x}-(A_{x}^{T}A_{x})^{-1}\nabla f(x)$
and $w\sim N(0,(A_{\gamma}^{T}A_{\gamma})^{-1})$.
\end{lem}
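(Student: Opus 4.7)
The plan is to combine Lemma~\ref{lem:RMCMC}, which gives the Hamiltonian curve on a general Riemannian manifold as $D_{t}\gamma'=\mu(\gamma)$ with $\gamma'(0)\sim N(0,g^{-1})$, with Lemma~\ref{lem:geo_equ_log}, which gives an explicit Euclidean-coordinate formula for the Levi-Civita connection on $\mathcal{M}_{L}$. Applied to the vector field $\gamma'$ along the curve $\gamma$, Lemma~\ref{lem:geo_equ_log} gives
\[
D_{t}\gamma'=\gamma''-(A_{\gamma}^{T}A_{\gamma})^{-1}A_{\gamma}^{T}S_{\gamma'}s_{\gamma'}=\gamma''-(A_{\gamma}^{T}A_{\gamma})^{-1}A_{\gamma}^{T}s_{\gamma'}^{2},
\]
using $S_{\gamma'}s_{\gamma'}=s_{\gamma'}^{2}$ (elementwise square) since $S_{\gamma'}=\Diag(s_{\gamma'})$. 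So rearranging $D_{t}\gamma'=\mu(\gamma)$ gives exactly the stated second-order ODE once we identify $\mu$ in Euclidean coordinates.

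The remaining work is to verify that the abstract formula
\[
\mu(x)=-g(x)^{-1}\nabla f(x)-\tfrac{1}{2}g(x)^{-1}\tr\!\bigl[g(x)^{-1}Dg(x)\bigr]
\]
from Lemma~\ref{lem:RMCMC}, with $g(x)=\nabla^{2}\phi(x)=A_{x}^{T}A_{x}$, reduces to $(A_{x}^{T}A_{x})^{-1}A_{x}^{T}\sigma_{x}-(A_{x}^{T}A_{x})^{-1}\nabla f(x)$. The first term is immediate. For the second, the $k$-th component of the vector $\tr[g^{-1}Dg]$ is $\tr(g^{-1}\partial_{k}g)=\partial_{k}\log\det g$. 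The key computation is thus $\nabla\log\det(A_{x}^{T}A_{x})$. Using $A_{x}=S_{x}^{-1}A$ and $\partial_{k}S_{x}^{-2}=-2S_{x}^{-3}\Diag(Ae_{k})$, together with the identity $\partial_{k}\log\det M=\tr(M^{-1}\partial_{k}M)$, a short calculation (rewriting $A(A_{x}^{T}A_{x})^{-1}A^{T}=S_{x}P_{x}S_{x}$) yields $\partial_{k}\log\det g=-2\,\sigma_{x}^{T}A_{x}e_{k}$, i.e.\ $\nabla\log\det g=-2A_{x}^{T}\sigma_{x}$. Multiplying by $-\tfrac{1}{2}g^{-1}$ produces exactly the claimed $(A_{x}^{T}A_{x})^{-1}A_{x}^{T}\sigma_{x}$ drift term.

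The initial condition $\gamma'(0)\sim N(0,(A_{\gamma}^{T}A_{\gamma})^{-1})$ is just the specialization of $\gamma'(0)\sim N(0,g(x)^{-1})$ from Lemma~\ref{lem:RMCMC} to $g=A_{x}^{T}A_{x}$.

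There is no real obstacle here: the whole statement is a mechanical translation from abstract manifold language to Euclidean coordinates. The only ``delicate'' step is bookkeeping the gradient of $\log\det(A_{x}^{T}A_{x})$, for which Fact~\ref{fact:calculus} (or equivalently the identity $\partial_{k}\log\det M=\tr(M^{-1}\partial_{k}M)$ with $P_{x}^{2}=P_{x}$) does all the work, so the proof is essentially a two- or three-line derivation followed by this short determinant computation.
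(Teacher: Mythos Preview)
Your proposal is correct and follows essentially the same approach as the paper: both combine Lemma~\ref{lem:RMCMC} with Lemma~\ref{lem:geo_equ_log} applied to $w=\gamma'$, and then compute the drift term $\tfrac12\tr[g^{-1}Dg]$ to obtain $-A_x^T\sigma_x$. The only cosmetic difference is that the paper evaluates this trace via the precomputed third-derivative formula $\phi_{ijk}=-2\sum_\ell (e_\ell^T A_x e_i)(e_\ell^T A_x e_j)(e_\ell^T A_x e_k)$ from Definition~\ref{def:notation}(6), whereas you use the equivalent $\partial_k\log\det g=\tr(g^{-1}\partial_k g)$ identity and differentiate $A_x^T A_x$ directly; both routes arrive at the same one-line answer.
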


\begin{proof}
Recall from Lemma \ref{lem:RMCMC} that the Hamiltonian walk is given
by
\begin{align*}
D_{t}\frac{d\gamma}{dt}= & \mu(\gamma(t)),\\
\frac{d\gamma}{dt}(0)\sim & N(0,g(x)^{-1})
\end{align*}
where $\mu(x)=-g(x)^{-1}\nabla f(x)-\frac{1}{2}g(x)^{-1}\tr\left[g(x)^{-1}Dg(x)\right]$.
By Lemma \ref{lem:geo_equ_log}, applied with $w(t)=\gamma'(t)$,
$z(t)=\gamma(t)$, we have 
\[
D_{t}\frac{d\gamma}{dt}=\gamma''(t)-\left(A_{\gamma}^{T}A_{\gamma}\right)^{-1}A_{\gamma}^{T}s_{\gamma'}^{2}.
\]
For the formula of $\mu$, we note that
\begin{align*}
\frac{1}{2}\tr\left[g(x)^{-1}Dg(x)\right]_{k} & =\frac{1}{2}\sum_{ij}\left((A_{x}^{T}A_{x})^{-1}\right)_{ij}\frac{\partial}{\partial x_{k}}\left(A_{x}^{T}A_{x}\right)_{ji}\\
\mbox{by Defn.\ref{def:notation}(6) } & =-\sum_{ij}\left((A_{x}^{T}A_{x})^{-1}\right)_{ij}\sum_{\ell}\left(e_{\ell}^{T}A_{x}e_{i}\right)\left(e_{\ell}^{T}A_{x}e_{j}\right)\left(e_{\ell}^{T}A_{x}e_{k}\right)\\
 & =-\sum_{\ell}\left(A_{x}^{T}(A_{x}^{T}A_{x})^{-1}A_{x}\right)_{\ell\ell}\left(e_{\ell}^{T}A_{x}e_{k}\right)\\
 & =-A_{x}^{T}\sigma_{x}.
\end{align*}
Therefore, 
\[
\mu(x)=-(A_{x}^{T}A_{x})^{-1}\nabla f(x)+\left(A_{x}^{T}A_{x}\right)^{-1}A_{x}^{T}\sigma_{x}.
\]
\end{proof}
Many parameters for Hamiltonian walk depends on the operator $\Phi(t)$.
Here, we give a formula of $\Phi(t)$ in Euclidean coordinates.
\begin{lem}
\label{lem:formulate_M_R}Given a curve $\gamma(t)$, in Euclidean
coordinates, we have that
\[
\Phi(t)=M(t)-R(t)
\]
where
\begin{align*}
R(t) & =\left(A_{\gamma}^{T}A_{\gamma}\right)^{-1}\left(A_{\gamma}^{T}S_{\gamma'}P_{\gamma}S_{\gamma'}A_{\gamma}-A_{\gamma}^{T}\Diag(P_{\gamma}s_{\gamma'}^{2})A_{\gamma}\right)\\
M(t) & =\left(A_{\gamma}^{T}A_{\gamma}\right)^{-1}\left(A_{x}^{T}\left(S_{x,\mu}-3\Sigma_{x}+2P_{x}^{(2)}\right)A_{x}-\nabla^{2}f(x)\right).
\end{align*}
\end{lem}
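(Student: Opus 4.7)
The claim decomposes along $\Phi(t) = M(t) - R(t)$ from Definition~\ref{def:R_M_Phi}, so I treat the two operators separately.

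\emph{Computing $R(t)$.} This is immediate from Lemma~\ref{lem:Riemann_tensor_log}. Substituting $v = w = \gamma'(t)$ in the general curvature formula gives $s_{x,v}s_{x,w} = s_{\gamma'}^2$ and $S_{x,v} P_x S_{x,w} = S_{\gamma'} P_\gamma S_{\gamma'}$, and reading off the operator acting on $u$ yields the claimed expression for $R(t)$.

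\emph{Computing $M(t)$.} By Definition~\ref{def:R_M_Phi}, $M(t)u = D_u \mu$, the Levi--Civita derivative of the vector field $\mu(x) = (A_x^T A_x)^{-1}(A_x^T \sigma_x - \nabla f(x))$ from Lemma~\ref{lem:geodesic_walk_log_def}. I will compute it in two steps. First, Lemma~\ref{lem:geo_equ_log} (applied with $z' = u$ and $w = \mu$) converts the covariant derivative to a Euclidean one:
\[
D_u \mu \;=\; \partial_u \mu \;-\; (A_x^T A_x)^{-1} A_x^T S_{x,\mu} A_x u,
\]
where I used the symmetry $S_{x,u} s_{x,\mu} = S_{x,\mu} s_{x,u} = S_{x,\mu} A_x u$, and $\partial_u \mu$ denotes the usual Euclidean directional derivative. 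Second, I compute $\partial_u \mu$ by the product rule, drawing on Fact~\ref{fact:calculus}: from $\partial_u A_x = -S_{x,u} A_x$ one gets $\partial_u (A_x^T A_x)^{-1} = 2(A_x^T A_x)^{-1} A_x^T S_{x,u} A_x (A_x^T A_x)^{-1}$; and taking diagonals of $\partial_u P_x = -S_{x,u} P_x - P_x S_{x,u} + 2 P_x S_{x,u} P_x$ gives $\partial_u \sigma_x = 2(P_x^{(2)} - \Sigma_x) s_{x,u}$.

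\emph{Combining.} The derivative of $g^{-1} = (A_x^T A_x)^{-1}$ acting on $A_x^T \sigma_x - \nabla f(x)$ yields $2 g^{-1} A_x^T S_{x,u} A_x \mu = 2 g^{-1} A_x^T S_{x,\mu} A_x u$; the $\nabla f$ term contributes $-g^{-1} \nabla^2 f \cdot u$; and $\partial_u(A_x^T \sigma_x)$ expands to $A_x^T(-3\Sigma_x + 2 P_x^{(2)}) A_x u$, the coefficient $-3\Sigma_x$ arising as $-\Sigma_x$ (from $\partial_u A_x^T$ acting on $\sigma_x$, using $S_{x,u}\sigma_x = \Sigma_x s_{x,u}$) plus $-2\Sigma_x$ (from $\partial_u \sigma_x$). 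Subtracting the Christoffel correction $g^{-1} A_x^T S_{x,\mu} A_x u$ then cancels one copy of $S_{x,\mu}$, leaving exactly
\[
M(t)u \;=\; (A_x^T A_x)^{-1}\bigl(A_x^T(S_{x,\mu} - 3\Sigma_x + 2 P_x^{(2)}) A_x - \nabla^2 f(x)\bigr)u,
\]
as required. The only real obstacle is bookkeeping: making sure the three contributions to the $\Sigma_x$ coefficient add to $-3$, and that the Christoffel correction reduces the coefficient of $S_{x,\mu}$ from $2$ to $1$.
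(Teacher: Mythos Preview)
Your proposal is correct and follows essentially the same route as the paper: invoke Lemma~\ref{lem:Riemann_tensor_log} for $R(t)$, and for $M(t)$ use Lemma~\ref{lem:geo_equ_log} to split $D_u\mu$ into the Euclidean derivative $\partial_u\mu$ minus the Christoffel correction, then expand $\partial_u\mu$ via Fact~\ref{fact:calculus}. Your grouping (reading off the $2g^{-1}A_x^T S_{x,\mu}A_x u$ contribution directly from differentiating $g^{-1}$) is slightly cleaner than the paper's five-term expansion followed by recombining the $\sigma_x$- and $\nabla f$-terms into $S_{x,\mu}$, but the content is the same.
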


\begin{proof}
Lemma \ref{lem:Riemann_tensor_log} with $v=w=\gamma'$, gives the
formula for $R(t)$. 

For $M(t)$, Lemma \ref{lem:geo_equ_log} with $w=\mu(x)$, $z'=u$
shows that
\begin{align*}
D_{u}\mu(x) & =\nabla_{u}\mu(x)-(A_{x}^{T}A_{x})^{-1}A_{x}^{T}S_{x,\mu}A_{x}u.
\end{align*}
For the first term $\nabla_{u}\mu(x)$, we note that
\begin{align*}
\nabla_{u}\mu(x)= & 2\left(A_{x}^{T}A_{x}\right)^{-1}A_{x}^{T}S_{x,u}A_{x}\left(A_{x}^{T}A_{x}\right)^{-1}A_{x}^{T}\sigma_{x}\\
 & -3\left(A_{x}^{T}A_{x}\right)^{-1}A_{x}^{T}S_{x,u}\sigma_{x}\\
 & +2\left(A_{x}^{T}A_{x}\right)^{-1}A_{x}^{T}\diag(P_{x}S_{x,u}P_{x})\\
 & -2\left(A_{x}^{T}A_{x}\right)^{-1}A_{x}^{T}S_{x,u}A_{x}\left(A_{x}^{T}A_{x}\right)^{-1}\nabla f(x)\\
 & -(A_{x}^{T}A_{x})^{-1}\nabla^{2}f(x)u.
\end{align*}
Therefore, we have that
\begin{align*}
 & D_{u}\mu(x)\\
= & 2\left(A_{x}^{T}A_{x}\right)^{-1}A_{x}^{T}\Diag(P_{x}\sigma_{x})A_{x}u-3\left(A_{x}^{T}A_{x}\right)^{-1}A_{x}^{T}\Sigma_{x}A_{x}u+2\left(A_{x}^{T}A_{x}\right)^{-1}A_{x}^{T}P_{x}^{(2)}A_{x}u\\
 & -2\left(A_{x}^{T}A_{x}\right)^{-1}A_{x}^{T}\Diag\left(A_{x}\left(A_{x}^{T}A_{x}\right)^{-1}\nabla f(x)\right)A_{x}u-(A_{x}^{T}A_{x})^{-1}\nabla^{2}f(x)u-(A_{x}^{T}A_{x})^{-1}A_{x}^{T}S_{x,\mu}A_{x}u\\
= & \left(A_{x}^{T}A_{x}\right)^{-1}\left(A_{x}^{T}\left(2\Diag(P_{x}\sigma_{x})-2\Diag\left(A_{\gamma}\left(A_{x}^{T}A_{x}\right)^{-1}\nabla f(x)\right)-S_{x,\mu}-3\Sigma_{x}+2P_{x}^{(2)}\right)A_{x}-\nabla^{2}f(x)\right)u\\
= & \left(A_{x}^{T}A_{x}\right)^{-1}\left(A_{x}^{T}\left(2S_{x,\mu}-S_{x,\mu}-3\Sigma_{x}+2P_{x}^{(2)}\right)A_{x}-\nabla^{2}f(x)\right)u.\\
= & \left(A_{x}^{T}A_{x}\right)^{-1}\left(A_{x}^{T}\left(S_{x,\mu}-3\Sigma_{x}+2P_{x}^{(2)}\right)A_{x}-\nabla^{2}f(x)\right)u.
\end{align*}
where we used the facts that 
\[
\mu=\left(A_{x}^{T}A_{x}\right)^{-1}A_{x}^{T}\sigma_{x}-(A_{x}^{T}A_{x})^{-1}\nabla f(x)
\]
\[
S_{x,\mu}=A_{x}\mu=\Diag(P_{x}\sigma_{x}-A_{x}(A_{x}^{T}A_{x})^{-1}\nabla f(x)).
\]
\end{proof}
\begin{rem*}
Note that $R(t)$ and $M(t)$ is symmetric in $\left\langle \cdot,\cdot\right\rangle _{\gamma}$,
but not in $\left\langle \cdot,\cdot\right\rangle _{2}$. That is
why the formula does not look symmetric.
\end{rem*}

\subsection{\label{sec:walk_is_random_log}Randomness of the Hamiltonian flow
($\ell_{0}$)}

Many parameters of a Hessian manifold relate to how fast a Hamiltonian
curve approaches the boundary of the polytope. Since the initial velocity
of the Hamiltonian curve is drawn from a Gaussian distribution, one
can imagine that $\norm{s_{\gamma'(0)}}_{\infty}=O\left(\frac{1}{\sqrt{m}}\right)\norm{s_{\gamma'(0)}}_{2}$
(each coordinate of $s_{\gamma'}$ measures the relative rate at which
the curve is approaching the corresponding facet). So the walk initial
approaches/leaves every facet of the polytope at roughly the same
slow pace. If this holds for the entire walk, it would allow us to
get very tight bounds on various parameters. Although we are not able
to prove that $\norm{s_{\gamma'(t)}}_{\infty}$ is stable throughout
$0\leq t\leq\step$, we will show that $\norm{s_{\gamma'(t)}}_{4}$
is stable and thereby obtain a good bound on $\norm{s_{\gamma'(t)}}_{\infty}$. 

Throughout this section, we only use the randomness of the walk to
prove that both $\norm{s_{\gamma'(t)}}_{4}$ and $\norm{s_{\gamma'(t)}}_{\infty}$
are small with high probability. Looking ahead, we will show that
$\norm{s_{\gamma'(t)}}_{4}=O(M_{1}^{1/4})$ and $\norm{s_{\gamma'(t)}}_{\infty}=O(\sqrt{\log n}+\sqrt{M_{1}}\step)$
(Lemma \ref{lem:gamma_est}), we define 
\[
\ell(\gamma)\defeq\max_{0\leq t\leq\step}\left(\frac{\norm{s_{\gamma'(t)}}_{2}}{n^{1/2}+M_{1}^{1/4}}+\frac{\norm{s_{\gamma'(t)}}_{4}}{M_{1}^{1/4}}+\frac{\norm{s_{\gamma'(t)}}_{\infty}}{\sqrt{\log n}+\sqrt{M_{1}}\step}+\frac{\norm{s_{\gamma'(0)}}_{2}}{n^{1/2}}+\frac{\norm{s_{\gamma'(0)}}_{4}}{n^{1/4}}+\frac{\norm{s_{\gamma'(0)}}_{\infty}}{\sqrt{\log n}}\right)
\]
to capture this randomness involves in generating the geodesic walk.
This allows us to perturb the geodesic (Lemma \ref{lem:one_one_cor})
without worrying about the dependence on randomness. 

We first prove the the walk is stable in the $L_{4}$ norm and hence
$\ell(\gamma)$ can be simply approximated by $\norm{s_{\gamma'(0)}}_{4}$
and $\norm{s_{\gamma'(0)}}_{\infty}$.
\begin{lem}
\label{lem:geodesic_4}Let $\gamma$ be a Hamiltonian flow in $\mathcal{M}_{L}$
starting at $x$. Let $v_{4}=\norm{s_{\gamma'(0)}}_{4}$. Then, for
$0\leq t\leq\frac{1}{12(v_{4}+M_{1}^{1/4})}$, we have that

\begin{enumerate}
\item $\norm{s_{\gamma'(t)}}_{4}\leq2v_{4}+M_{1}^{1/4}$.
\item $\norm{\gamma''(t)}_{\gamma(t)}^{2}\leq128v_{4}^{4}+30M_{1}$.
\end{enumerate}
\end{lem}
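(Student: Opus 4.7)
The plan is a bootstrap/continuation argument applied to $u(t) := \|s_{\gamma'(t)}\|_4$. The first step is to derive a clean first-order ODE for $s_{\gamma'}$. Differentiating $s_{\gamma'}=A_\gamma\gamma'$ and using $\frac{dA_\gamma}{dt}=-S_{\gamma'}A_\gamma$ from Fact~\ref{fact:calculus} gives $\frac{d}{dt}s_{\gamma'} = -s_{\gamma'}^2 + A_\gamma\gamma''$. Substituting the Hamiltonian ODE from Lemma~\ref{lem:geodesic_walk_log_def} and using $A_\gamma(A_\gamma^TA_\gamma)^{-1}A_\gamma^T=P_\gamma$ yields
\[
\frac{d}{dt}s_{\gamma'} \;=\; -(I-P_\gamma)\,s_{\gamma'}^2 \;+\; A_\gamma\mu.
\]
The key cancellation is that only the component of $s_{\gamma'}^2$ orthogonal to the column span of $A_\gamma$ contributes to the growth of $s_{\gamma'}$.

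For the differential inequality, $\frac{d}{dt}u^4 = 4\langle s_{\gamma'}^3,\tfrac{d}{dt}s_{\gamma'}\rangle$, and I would bound each piece separately. Since $I-P_\gamma$ is an orthogonal projection, Cauchy--Schwarz gives $|\langle s_{\gamma'}^3,(I-P_\gamma)s_{\gamma'}^2\rangle|\le \|s_{\gamma'}\|_6^3\|s_{\gamma'}\|_4^2\le \|s_{\gamma'}\|_4^5$, using the finite-dimensional monotonicity $\|v\|_6\le\|v\|_4$. For the drift piece, the dual-norm pairing gives $|\langle s_{\gamma'}^3,A_\gamma\mu\rangle|\le \|A_\gamma^Ts_{\gamma'}^3\|_{\gamma^*}\|\mu\|_\gamma\le \|s_{\gamma'}\|_4^3\cdot 2\sqrt{M_1}$, where $\|A_\gamma^Ts_{\gamma'}^3\|_{\gamma^*}^2=s_{\gamma'}^{3T}P_\gamma s_{\gamma'}^3\le \|s_{\gamma'}\|_6^6\le \|s_{\gamma'}\|_4^6$, and $\|\mu\|_\gamma\le 2\sqrt{M_1}$ follows from the triangle inequality in the dual metric applied to $\mu=(A_\gamma^TA_\gamma)^{-1}(A_\gamma^T\sigma_\gamma-\nabla f)$, together with $\sigma_\gamma^TP_\gamma\sigma_\gamma\le n\le M_1$ and Definition~\ref{def:self_concordance}(1). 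Dividing through by $4u^3$ yields
\[
|u'(t)| \;\le\; u(t)^2 \;+\; 2\sqrt{M_1}.
\]

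The bootstrap is then routine: let $T$ be the largest time at which $u\le 2v_4+M_1^{1/4}$ holds throughout $[0,T]$. On this interval $(2v_4+M_1^{1/4})^2+2\sqrt{M_1}\le 8v_4^2+4\sqrt{M_1}\le 12(v_4+M_1^{1/4})^2$, so integrating gives $u(t)\le v_4+12(v_4+M_1^{1/4})^2\,t$, which is at most $2v_4+M_1^{1/4}$ whenever $t\le\frac{1}{12(v_4+M_1^{1/4})}$; continuity extends $T$ to the entire hypothesized interval, proving (1). For (2), substitute $A_\gamma\gamma''=P_\gamma s_{\gamma'}^2+A_\gamma\mu$ into $\|\gamma''\|_\gamma^2=\|A_\gamma\gamma''\|_2^2$, bound the cross term $\langle P_\gamma s_{\gamma'}^2,A_\gamma\mu\rangle$ by weighted AM--GM, and plug in (1) together with $\|\mu\|_\gamma^2\le 4M_1$ and $(2v_4+M_1^{1/4})^4\le 8(16v_4^4+M_1)$ to obtain $\|\gamma''\|_\gamma^2=O(v_4^4+M_1)$, with the explicit constants $128$ and $30$ recoverable by suitable AM--GM weights.

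The main obstacle is choosing the right norm for the bootstrap. An $L_2$ bound on $s_{\gamma'}$ would require controlling $\|s_{\gamma'}\|_3^3$ in the nonlinearity and would lose an $m$-factor; an $L_\infty$ bound is not even preserved by the flow (it is controlled only jointly with the step size, via the auxiliary function $\ell$ defined at the start of Section~\ref{sec:walk_is_random_log}). The $L_4$ norm is sharp here because the cubic test vector $s_{\gamma'}^3$ pairs naturally with the quadratic nonlinearity $s_{\gamma'}^2$ to give $\|s_{\gamma'}\|_6^3\|s_{\gamma'}\|_4^2\le u^5$, and because the projection $I-P_\gamma$ (rather than the full identity) keeps Cauchy--Schwarz from losing anything; together these pin the bootstrap threshold to the intrinsic scale $\frac{1}{12(v_4+M_1^{1/4})}$ without any dimension-dependent overhead.
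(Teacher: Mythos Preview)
Your proof is correct and takes essentially the same route as the paper: a bootstrap on $u(t)=\|s_{\gamma'(t)}\|_4$ via a differential inequality of the form $u'\lesssim u^2+\sqrt{M_1}$, then integration over $[0,\tfrac{1}{12(v_4+M_1^{1/4})}]$. The only difference is cosmetic: the paper bounds $\|\tfrac{d}{dt}s_{\gamma'}\|_4\le\|A_\gamma\gamma''\|_4+u^2$ by the triangle inequality and then $\|A_\gamma\gamma''\|_4\le\|\gamma''\|_\gamma$ (arriving at $u'\le 3u^2+3\sqrt{M_1}$), whereas you first regroup $\tfrac{d}{dt}s_{\gamma'}=-(I-P_\gamma)s_{\gamma'}^2+A_\gamma\mu$ and differentiate $u^4$ directly to get the slightly tighter $u'\le u^2+2\sqrt{M_1}$ --- a nice touch, though your closing remark that the projection ``keeps Cauchy--Schwarz from losing anything'' overstates its role, since the saving comes from pairing against $A_\gamma\mu$ rather than $A_\gamma\gamma''$, not from the trivial contraction $\|(I-P_\gamma)v\|_2\le\|v\|_2$.
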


\begin{proof}
Let $u(t)=\norm{s_{\gamma'(t)}}_{4}$. Then, we have (using Holder's
inequality in the first step),
\begin{align}
\frac{du}{dt} & \leq\norm{\frac{d}{dt}\left(A_{\gamma}\gamma'\right)}_{4}=\norm{A_{\gamma}\gamma''-\left(A_{\gamma}\gamma'\right)^{2}}_{4}\nonumber \\
 & \leq\norm{A_{\gamma}\gamma''}_{4}+u^{2}(t).\label{eq:geodesic_error_d}
\end{align}

Under the Euclidean coordinates, by Lemma \ref{lem:geodesic_walk_log_def}
the Hamiltonian flow is given by
\[
\gamma''(t)=\left(A_{\gamma}^{T}A_{\gamma}\right)^{-1}A_{\gamma}^{T}s_{\gamma'}^{2}+\mu(\gamma(t))
\]
with $\mu(x)=\left(A_{x}^{T}A_{x}\right)^{-1}A_{x}^{T}\sigma_{x}-(A_{x}^{T}A_{x})^{-1}\nabla f(x)$.
Hence, we have that
\begin{align}
\norm{\gamma''}_{\gamma}^{2}\leq & 3\left(s_{\gamma'}^{2}\right)^{T}A_{\gamma}\left(A_{\gamma}^{T}A_{\gamma}\right)^{-1}\left(A_{\gamma}^{T}A_{\gamma}\right)\left(A_{\gamma}^{T}A_{\gamma}\right)^{-1}A_{\gamma}^{T}s_{\gamma'}^{2}\nonumber \\
 & +3\sigma_{\gamma}^{T}A_{\gamma}\left(A_{\gamma}^{T}A_{\gamma}\right)^{-1}\left(A_{\gamma}^{T}A_{\gamma}\right)\left(A_{\gamma}^{T}A_{\gamma}\right)^{-1}A_{\gamma}^{T}\sigma_{\gamma}\nonumber \\
 & +3(\nabla f(x))^{T}\left(A_{\gamma}^{T}A_{\gamma}\right)^{-1}\left(A_{\gamma}^{T}A_{\gamma}\right)\left(A_{\gamma}^{T}A_{\gamma}\right)^{-1}\nabla f(x)\nonumber \\
\leq & 3\sum_{i}(s_{\gamma'}^{4})_{i}+3\sum_{i}(\sigma_{\gamma}^{2})_{i}+3(\nabla f(x))^{T}\left(A_{\gamma}^{T}A_{\gamma}\right)^{-1}\nabla f(x)\nonumber \\
\leq & 3u^{4}(t)+3(n+M_{1})\leq3u^{4}(t)+6M_{1}\label{eq:geodesic_changes}
\end{align}
Therefore, we have
\[
\norm{A_{\gamma}\gamma''}_{4}\leq\norm{A_{\gamma}\gamma''}_{2}\leq2u^{2}(t)+3\sqrt{M_{1}}.
\]

Plugging it into (\ref{eq:geodesic_error_d}), we have that
\[
\frac{du}{dt}\leq3u^{2}(t)+3\sqrt{M_{1}}.
\]
Note that when $u\leq2v_{4}+M_{1}^{1/4}$, we have that
\[
\frac{du}{dt}\leq12v_{4}^{2}+9\sqrt{M_{1}}\leq12(v_{4}+M_{1}^{1/4})^{2}.
\]
Since $u(0)=v_{4}$, for $0\leq t\leq\frac{1}{12(v_{4}+M_{1}^{1/4})}$,
we have that $u(t)\leq2v_{4}+M_{1}^{1/4}$ and this gives the first
inequality. 

Using (\ref{eq:geodesic_changes}), we get the second inequality.
\end{proof}
We can now prove that $\ell(\gamma)$ is small with high probability.
\begin{lem}
\label{lem:V0_bound}Assume that $\step\leq\frac{1}{36M_{1}^{1/4}}$,
$\ell_{1}=\Omega(n^{1/4}\delta)$ and $n$ is large enough, we have
that
\[
\P_{\gamma\sim x}\left(\ell(\gamma)\geq128\right)\leq\frac{1}{100}\min\left(1,\frac{\ell_{0}}{\ell_{1}\delta}\right).
\]
Therefore, we have $\ell_{0}\leq256$.
\end{lem}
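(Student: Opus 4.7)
The plan is to verify that each of the six normalized quantities in the definition of $\ell(\gamma)$ is bounded by a constant (say $128/6$) with high probability. The three terms at $t=0$ follow from Gaussian concentration applied to the initial velocity $\gamma'(0) \sim N(0, g(x)^{-1})$, and the three supremum terms over $t\in[0,\delta]$ are then obtained by propagating the initial bounds along the flow, using Lemma~\ref{lem:geodesic_4} for the $L^4$ bound and a direct ODE estimate for $s_{\gamma'}$ for the $L^2$ and $L^\infty$ bounds.

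First I analyze the initial velocity. Because $s_{\gamma'(0)} = A_x \gamma'(0)$ is a centered Gaussian with covariance $P_x$, whose diagonal entries $\sigma_{x,i}$ lie in $[0,1]$ and sum to $n$, standard subexponential and Gaussian tail bounds give
\[
\|s_{\gamma'(0)}\|_2 \le C\sqrt{n}, \qquad \|s_{\gamma'(0)}\|_4 \le C n^{1/4}, \qquad \|s_{\gamma'(0)}\|_\infty \le C\sqrt{\log n},
\]
each with subexponential failure probability. The first uses that $\|s_{\gamma'(0)}\|_2^2 = \|\gamma'(0)\|_x^2$ is $\chi^2$-like with mean $n$; the second uses $\mathbb{E}\|s_{\gamma'(0)}\|_4^4 = 3\sum_i \sigma_{x,i}^2 \le 3n$; the third is a Gaussian union bound using $\sigma_{x,i}\le 1$.

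Next, conditioned on these three events, I propagate the bounds to all $t\in[0,\delta]$. Since $M_1 \ge n$, we have $v_4 := \|s_{\gamma'(0)}\|_4 = O(n^{1/4}) = O(M_1^{1/4})$, so the hypothesis $\delta \le 1/(36 M_1^{1/4})$ implies the step-size condition of Lemma~\ref{lem:geodesic_4}, yielding $\|s_{\gamma'(t)}\|_4 \le 2v_4 + M_1^{1/4} = O(M_1^{1/4})$ for all $t\in[0,\delta]$. For the $L^2$ and $L^\infty$ bounds, differentiating $s_{\gamma'} = A_\gamma \gamma'$ via Fact~\ref{fact:calculus} and substituting the Hamiltonian equation from Lemma~\ref{lem:geodesic_walk_log_def} gives
\[
\frac{d}{dt} s_{\gamma'} \;=\; -(I-P_\gamma)\, s_{\gamma'}^{\,2} + A_\gamma\, \mu(\gamma).
\]
Since $I-P_\gamma$ is a contraction, $\|(I-P_\gamma)s_{\gamma'}^2\|_2 \le \|s_{\gamma'}\|_4^2 = O(M_1^{1/2})$, while $\|A_\gamma\mu\|_2^2 = \|\mu\|_\gamma^2 = O(M_1)$ directly from Definition~\ref{def:self_concordance} (together with $n \le M_1$). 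Integrating then gives $\|s_{\gamma'(t)}-s_{\gamma'(0)}\|_\infty \le \|s_{\gamma'(t)}-s_{\gamma'(0)}\|_2 = O(\sqrt{M_1}\,t)$, so that for $t \le \delta \le 1/(36 M_1^{1/4})$,
\[
\|s_{\gamma'(t)}\|_2 = O\bigl(\sqrt{n} + M_1^{1/4}\bigr), \qquad \|s_{\gamma'(t)}\|_\infty = O\bigl(\sqrt{\log n} + \sqrt{M_1}\,\delta\bigr),
\]
which exactly matches the two remaining denominators in $\ell(\gamma)$.

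Combining all six bounds gives $\ell(\gamma) \le 128$ on the intersection of the three initial events, which by a union bound fails with probability at most $\frac{1}{100}\min\!\bigl(1, \ell_0/(\ell_1 \delta)\bigr)$, provided $\ell_1 = \Omega(n^{1/4}\delta)$ and $n$ is large enough so that the Gaussian/subexponential tails dominate this polynomial-in-$1/(\ell_1\delta)$ quantity. The value $\ell_0 \le 256$ then follows from Definition~\ref{def:ell_func}(2) by taking twice the constant $128$. The main technical obstacle will be the precise bookkeeping needed to match the exact form $\frac{1}{100}\min(1,\ell_0/(\ell_1\delta))$: the binding constraints are the subexponential tails on $\|s_{\gamma'(0)}\|_2$ and $\|s_{\gamma'(0)}\|_4$, and any passage from $\sqrt{\log m}$ to $\sqrt{\log n}$ in the $L^\infty$ bound must be justified separately (e.g., using that $m$ appears only polylogarithmically in the target mixing bound, so the final complexity is unaffected).
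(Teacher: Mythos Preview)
Your argument is correct and in fact cleaner than the paper's in one respect. The paper handles the three initial bounds the same way you do (via the concentration Lemma~\ref{lem:norm_random_Ax}), and propagates the $L^{4}$ bound via Lemma~\ref{lem:geodesic_4} just as you suggest. For the $L^{2}$ and $L^{\infty}$ propagation, however, the paper takes a longer route: for $L^{2}$ it differentiates $\norm{s_{\gamma'}}_{2}^{2}$ and exploits the orthogonality $\langle s_{\gamma'},(I-P_{\gamma})s_{\gamma'}^{2}\rangle=0$ to get $\bigl|\tfrac{d}{dt}\norm{s_{\gamma'}}_{2}\bigr|\le\sqrt{M_{1}}$ directly, without invoking the $L^{4}$ bound; for $L^{\infty}$ it introduces the slack-ratio $z(t)=\max_{i,s}\lvert s_{\gamma(t),i}/s_{\gamma(s),i}\rvert$, bounds $\norm{A_{\gamma}\gamma''}_{\infty}$ via Lemma~\ref{lem:geodesic_4}, and solves a scalar ODE for $z$. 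Your single identity $\tfrac{d}{dt}s_{\gamma'}=-(I-P_{\gamma})s_{\gamma'}^{2}+A_{\gamma}\mu$, followed by the crude estimate $\norm{\cdot}_{\infty}\le\norm{\cdot}_{2}\le\norm{s_{\gamma'}}_{4}^{2}+\norm{\mu}_{\gamma}=O(\sqrt{M_{1}})$, replaces both arguments at once; the only cost is that your $L^{2}$ step relies on the $L^{4}$ bound already being in hand, which it is.

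On the one point you flag: the paper does get $\sqrt{\log n}$ rather than $\sqrt{\log m}$ in the $L^{\infty}$ tail. The trick is that $e_{i}^{T}s_{\gamma'(0)}\sim N(0,\sigma_{x,i})$ with $\sum_{i}\sigma_{x,i}=n$ and $0\le\sigma_{x,i}\le 1$, so $\sum_{i}\exp(-2\log n/\sigma_{x,i})$ is maximized (by convexity of $\sigma\mapsto e^{-c/\sigma}$ on $[0,1]$ when $c\ge 2$) at a vertex of this simplex, namely when $n$ of the $\sigma_{x,i}$ equal $1$ and the rest vanish; this gives the bound $2/n$ with threshold $2\sqrt{\log n}$, independently of $m$.
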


\begin{proof}
From the definition of the Hamiltonian curve (Lemma \ref{lem:RMCMC}),
we have that 
\[
A_{\gamma}\gamma'(0)=Bz
\]
where $B=A_{\gamma}\left(A_{\gamma}^{T}A_{\gamma}\right)^{-1/2}$
and $z\sim N(0,I)$. 

First, we estimate $\norm{A_{\gamma}\gamma'(t)}_{4}$. Lemma \ref{lem:norm_random_Ax}
shows that
\[
\P_{z\sim N(0,I)}\left(\norm{Bz}_{4}^{4}\leq\left(\left(3\sum_{i}\norm{e_{i}^{T}B}_{2}^{4}\right)^{1/4}+\norm B_{2\rightarrow4}s\right)^{4}\right)\leq1-\exp(-\frac{s^{2}}{2}).
\]
Note that $\sum_{i}\norm{e_{i}^{T}B}_{2}^{4}=\sum_{i}(\sigma_{\gamma})_{i}^{2}\leq n$
and $\norm B_{2\rightarrow4}\leq\norm B_{2\rightarrow2}=1$. Putting
$s=\frac{n^{1/4}}{2}$, we have that
\[
\P_{\gamma'(0)}\left(\norm{A_{\gamma}\gamma'(0)}_{4}^{4}\leq11n\right)\leq1-\exp(-\frac{\sqrt{n}}{8}).
\]
Therefore, we have that $v_{4}\defeq\norm{A_{\gamma}\gamma'(0)}_{4}\leq2n^{1/4}$
with probability at least $1-\exp(-\frac{\sqrt{n}}{8})$. Now, we
apply Lemma \ref{lem:geodesic_4} to get that
\[
\norm{s_{\gamma'(t)}}_{4}\leq2v_{4}+M_{1}^{1/4}\leq5M_{1}^{1/4}
\]
for all $0\leq t\leq\frac{1}{12(v_{4}+M_{1}^{1/4})}$. 

Next, we estimate $\norm{A_{\gamma}\gamma'(t)}_{\infty}$. Since $e_{i}^{T}A_{\gamma}\gamma'(0)=e_{i}^{T}Bx\sim N(0,\sigma_{i})$,
we have
\[
\P_{\gamma'(0)}\left(\left|e_{i}^{T}A_{\gamma}\gamma'(0)\right|\geq\sqrt{\sigma_{i}}t\right)\leq2\exp\left(-\frac{t^{2}}{2}\right).
\]
Hence, we have that
\[
\P_{\gamma'(0)}\left(\norm{A_{\gamma}\gamma'(0)}_{\infty}\geq2\sqrt{\log n}\right)\leq2\sum_{i}\exp\left(-\frac{2\log n}{\sigma_{i}}\right)
\]
Since $\sum_{i}\exp\left(-\frac{2\log n}{\sigma_{i}}\right)$ is concave
in $\sigma$, the maximum of $\sum_{i}\exp\left(-\frac{\log n}{\sigma_{i}}\right)$
on the feasible set $\{0\leq\sigma\leq1,\sum\sigma_{i}=n\}$ occurs
on its vertices. Hence, we have that
\[
\P_{\gamma'(0)}\left(\norm{A_{\gamma}\gamma'(0)}_{\infty}\geq2\sqrt{\log n}\right)\leq2n\exp\left(-2\log n\right)=\frac{2}{n}.
\]
Lemma \ref{lem:geodesic_4} shows that $\norm{A_{\gamma}\gamma''}_{\infty}\leq\norm{\gamma''}_{\gamma(t)}\leq46\sqrt{n}+6\sqrt{M_{1}}$.
Hence, for any $0\leq t\leq\step$, we have that

\begin{align}
\norm{s_{\gamma'(t)}}_{\infty} & \leq\norm{A_{\gamma(t)}\gamma'(0)}_{\infty}+\int_{0}^{t}\norm{A_{\gamma(t)}\gamma''(r)}_{\infty}dr\nonumber \\
 & \leq\left(\max_{i,0\leq s\leq t}\left|\frac{s_{\gamma(t),i}}{s_{\gamma(s),i}}\right|\right)\left(\norm{A_{\gamma(0)}\gamma'(0)}_{\infty}+\int_{0}^{\step}\norm{A_{\gamma(r)}\gamma''(r)}_{\infty}dr\right)\nonumber \\
 & \leq\left(\max_{i,0\leq s\leq t}\left|\frac{s_{\gamma(t),i}}{s_{\gamma(s),i}}\right|\right)\left(2\sqrt{\log n}+(46\sqrt{n}+6\sqrt{M_{1}})\step\right)\nonumber \\
 & \leq\left(\max_{i,0\leq s\leq t}\left|\frac{s_{\gamma(t),i}}{s_{\gamma(s),i}}\right|\right)\left(2\sqrt{\log n}+52\sqrt{M_{1}}\step\right).\label{eq:s_gamma_dot}
\end{align}
Let $z(t)=\max_{i,0\leq s\leq t}\left|\frac{s_{\gamma(t),i}}{s_{\gamma(s),i}}\right|$.
Note that
\[
s_{\gamma(t),i}=s_{\gamma(s),i}\exp\left(\int_{r}^{t}s_{\gamma'(\alpha),i}d\alpha\right)\qquad\mbox{since}\qquad(A\gamma(t)-b)_{i}=(A\gamma(r)-b)_{i}\exp\left(\int_{r}^{t}\frac{a_{i}\cdot\gamma'(\alpha)}{(A\gamma(\alpha)-b)_{i}}d\alpha\right)
\]
Hence, we have that
\[
z'(t)\leq z(t)\norm{s_{\gamma'(t)}}_{\infty}\leq z^{2}(t)\left(2\sqrt{\log n}+52\sqrt{M_{1}}\step\right).
\]
Solving this, since $z(0)=1$, we get
\[
z(t)\leq\frac{1}{1-\left(2\sqrt{\log n}+52\sqrt{M_{1}}\step\right)t}.
\]
Since $t\leq\step\leq\frac{1}{36M_{1}^{1/4}}$, we have that $z(t)\leq1.05$.
Putting this into (\ref{eq:s_gamma_dot}), we have that
\[
\norm{s_{\gamma'(t)}}_{\infty}\leq3\sqrt{\log n}+55\sqrt{M_{1}}\step.
\]

Finally, we estimate $\norm{s_{\gamma'(t)}}_{2}=\norm{A_{\gamma}\gamma'(t)}_{2}$.
Lemma \ref{lem:norm_random_Ax} shows that
\[
\P_{z\sim N(0,I)}\left(\norm{Bz}_{2}^{2}\leq\left(\left(\sum_{i}\norm{e_{i}^{T}B}_{2}^{2}\right)^{1/2}+\norm B_{2\rightarrow2}r\right)^{2}\right)\leq1-\exp(-\frac{r^{2}}{2}).
\]
Note that $\sum_{i}\norm{e_{i}^{T}B}_{2}^{2}=\sum_{i}(\sigma_{\gamma})_{i}\leq n$
and $\norm B_{2\rightarrow2}\leq1$. Putting $s=\frac{n^{1/2}}{3}$,
we have that
\[
\P_{\gamma'(0)}\left(\norm{A_{\gamma}\gamma'(0)}_{2}^{2}\leq2n\right)\leq1-\exp(-\frac{n}{18}).
\]
Therefore, $\norm{A_{\gamma}\gamma'(0)}_{2}^{2}\leq2n$ with high
probability. Next, we note that
\begin{align*}
\frac{d}{dt}\norm{A_{\gamma}\gamma'}_{2}^{2} & =\left\langle A_{\gamma}\gamma',A_{\gamma}\gamma''-s_{\gamma'}^{2}\right\rangle \\
 & =\left\langle A_{\gamma}\gamma',A_{\gamma}\left(A_{\gamma}^{T}A_{\gamma}\right)^{-1}A_{\gamma}^{T}s_{\gamma'}^{2}+A_{\gamma}\mu(\gamma(t))-A_{\gamma}s_{\gamma'}^{2}\right\rangle \\
 & =\left\langle A_{\gamma}\gamma',A_{\gamma}\left(A_{\gamma}^{T}A_{\gamma}\right)^{-1}A_{\gamma}^{T}s_{\gamma'}^{2}-s_{\gamma'}^{2}\right\rangle +\left\langle A_{\gamma}\gamma',A_{\gamma}\mu(\gamma(t))\right\rangle \\
 & =\left\langle A_{\gamma}\gamma',A_{\gamma}\mu(\gamma(t))\right\rangle .
\end{align*}
Using that $\mu(x)=\left(A_{x}^{T}A_{x}\right)^{-1}A_{x}^{T}\sigma_{x}-(A_{x}^{T}A_{x})^{-1}\nabla f(x)$,
we have that
\begin{align*}
\frac{d}{dt}\norm{A_{\gamma}\gamma'}_{2}^{2} & =\sum_{i}(s_{\gamma'})_{i}(\sigma_{\gamma})_{i}-\sum_{i}(\gamma')_{i}(\nabla f)_{i}\\
 & \leq\sqrt{\sum_{i}(s_{\gamma'})_{i}^{2}}\sqrt{\sum(\sigma_{\gamma})_{i}^{2}}+\sqrt{\sum_{i}(s_{\gamma'})_{i}^{2}}\sqrt{(\nabla f)^{T}(A_{\gamma}^{T}A_{\gamma})^{-1}(\nabla f)}\\
 & \leq2\norm{A_{\gamma}\gamma'}_{2}\sqrt{M_{1}}.
\end{align*}
Therefore, we have that $\left|\frac{d}{dt}\norm{A_{\gamma}\gamma'}_{2}\right|\leq\sqrt{M_{1}}.$
Since $\step\leq\frac{1}{36M_{1}^{1/4}}$, we have that $\norm{A_{\gamma}\gamma'(t)}_{2}\leq\norm{A_{\gamma}\gamma'(0)}_{2}+\frac{M_{1}^{1/4}}{36}\leq\sqrt{2n}+\frac{M_{1}^{1/4}}{36}$.
Therefore, we have that $\norm{A_{\gamma}\gamma'(t)}_{2}\leq2\sqrt{n}+M_{1}^{1/4}$
with probability at least $1-\exp(-\frac{n}{18})$.

Combining our estimates on $\norm{s_{\gamma'(t)}}_{2}$, $\norm{s_{\gamma'(t)}}_{4}$
and $\norm{s_{\gamma'(t)}}_{\infty}$ and using the assumption on
$\step$, we have that
\begin{align*}
 & \P\left(\max_{0\leq t\leq\step}\left(\frac{\norm{s_{\gamma'(t)}}_{2}}{n^{1/2}+M_{1}^{1/4}}+\frac{\norm{s_{\gamma'(t)}}_{4}}{M_{1}^{1/4}}+\frac{\norm{s_{\gamma'(t)}}_{\infty}}{\sqrt{\log n}+\sqrt{M_{1}}\step}+\frac{\norm{s_{\gamma'(0)}}_{2}}{n^{1/2}}+\frac{\norm{s_{\gamma'(0)}}_{4}}{n^{1/4}}+\frac{\norm{s_{\gamma'(0)}}_{\infty}}{\sqrt{\log n}}\right)\geq128\right)\\
\leq & \exp(-\frac{n}{18})+\exp(-\frac{\sqrt{n}}{8})+\frac{2}{n}.
\end{align*}
In Lemma \ref{lem:log_V1}, we indeed have that $\ell_{1}=\Omega(n^{1/4}\delta)$
and hence $\frac{\ell_{0}}{\ell_{1}\delta}=\Omega(n^{-1/4}\delta^{-2})=\Omega(1)$.
Therefore, the probability is less than $\frac{1}{100}\min\left(1,\frac{\ell_{0}}{\ell_{1}\delta}\right)$
when $n$ is large enough.
\end{proof}
Here, we collect some simple consequences of small $\ell(\gamma)$
that we will use later.
\begin{lem}
\label{lem:gamma_est}Given a Hamiltonian flow $\gamma$ on $\mathcal{M}_{L}$
with $\ell(\gamma)\leq\ell_{0}\le256$. For any $0\leq t\leq\step$,

\begin{enumerate}
\item $\norm{A_{\gamma}\gamma'(t)}_{2}\leq256(n^{1/2}+M_{1}^{1/4})$, $\norm{A_{\gamma}\gamma'(0)}_{2}\leq256n^{1/2}$.
\item $\norm{A_{\gamma}\gamma'(t)}_{4}\leq256M_{1}^{1/4}$, $\norm{A_{\gamma}\gamma'(0)}_{4}\leq256n^{1/4}$.
\item $\norm{A_{\gamma}\gamma'(t)}_{\infty}\leq256\left(\sqrt{\log n}+\sqrt{M_{1}}\step\right)$,
$\norm{A_{\gamma}\gamma'(0)}_{\infty}\leq256\sqrt{\log n}$.
\item $\norm{\gamma''(t)}_{\gamma}^{2}\leq10^{13}M_{1}$.
\end{enumerate}
\end{lem}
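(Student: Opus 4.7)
\medskip

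\noindent\textbf{Proof plan for Lemma \ref{lem:gamma_est}.}

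The first three bounds are essentially definitional. Recall that
\[
\ell(\gamma)=\max_{0\leq t\leq\step}\left(\tfrac{\norm{s_{\gamma'(t)}}_{2}}{n^{1/2}+M_{1}^{1/4}}+\tfrac{\norm{s_{\gamma'(t)}}_{4}}{M_{1}^{1/4}}+\tfrac{\norm{s_{\gamma'(t)}}_{\infty}}{\sqrt{\log n}+\sqrt{M_{1}}\step}+\tfrac{\norm{s_{\gamma'(0)}}_{2}}{n^{1/2}}+\tfrac{\norm{s_{\gamma'(0)}}_{4}}{n^{1/4}}+\tfrac{\norm{s_{\gamma'(0)}}_{\infty}}{\sqrt{\log n}}\right),
\]
and the hypothesis $\ell(\gamma)\le\ell_0\le 256$ means that each of these six nonnegative summands is individually bounded by $256$ for every $0\le t\le\step$. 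Since $\norm{s_{\gamma'(t)}}_p=\norm{A_\gamma\gamma'(t)}_p$ by Definition~\ref{def:notation}, reading off each term yields exactly items (1), (2), (3) of the lemma.

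For item (4), the plan is to reuse the calculation already carried out in the proof of Lemma~\ref{lem:geodesic_4}. The Hamiltonian equation from Lemma~\ref{lem:geodesic_walk_log_def} gives
\[
\gamma''(t)=\bigl(A_\gamma^TA_\gamma\bigr)^{-1}A_\gamma^Ts_{\gamma'}^2+\mu(\gamma(t)),\qquad \mu(x)=(A_x^TA_x)^{-1}A_x^T\sigma_x-(A_x^TA_x)^{-1}\nabla f(x).
\]
Splitting $\norm{\gamma''}_\gamma^2=\gamma''^\top(A_\gamma^TA_\gamma)\gamma''$ into three pieces via the inequality $\norm{a+b+c}^2\le 3(\norm a^2+\norm b^2+\norm c^2)$, applying $A_\gamma(A_\gamma^TA_\gamma)^{-1}A_\gamma^T\preceq I$, and using $\sum_i(\sigma_\gamma)_i^2\le\sum_i(\sigma_\gamma)_i\le n$ together with the defining bound $(\nabla f)^T(A_\gamma^TA_\gamma)^{-1}(\nabla f)\le M_1$ from Definition~\ref{def:self_concordance}(1), one obtains the estimate
\[
\norm{\gamma''(t)}_\gamma^2\le 3\norm{s_{\gamma'(t)}}_4^4+3n+3M_1\le 3\norm{s_{\gamma'(t)}}_4^4+6M_1,
\]
where the last step uses $M_1\ge n$. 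This is exactly equation~\eqref{eq:geodesic_changes}. Now substituting the bound $\norm{s_{\gamma'(t)}}_4\le 256\,M_1^{1/4}$ from item (2) gives
\[
\norm{\gamma''(t)}_\gamma^2\le 3\cdot 256^4\,M_1+6M_1\le 10^{13}M_1,
\]
since $3\cdot 256^4<1.3\cdot 10^{10}\ll 10^{13}$.

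There is no real obstacle here: the lemma is a bookkeeping consequence of the definition of $\ell(\gamma)$ (which was designed precisely to make items (1)--(3) immediate) together with the already-established norm estimate for the Hamiltonian acceleration in Lemma~\ref{lem:geodesic_4}. The only point to be careful about is ensuring that the same inequality $M_1\ge n$ is invoked to absorb the $3n$ term into $6M_1$, so that the final bound depends only on $M_1$ and not on $n$.
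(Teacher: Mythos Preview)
Your proposal is correct and matches the paper's approach: items (1)--(3) are read off directly from the definition of $\ell(\gamma)$, and item (4) follows by plugging the $L^4$ bound from item (2) into the pointwise estimate \eqref{eq:geodesic_changes} established in the proof of Lemma~\ref{lem:geodesic_4}. Your write-up is in fact more explicit than the paper's one-line invocation of Lemma~\ref{lem:geodesic_4}, and cleanly avoids any concern about the time restriction in that lemma's statement by citing the intermediate inequality \eqref{eq:geodesic_changes} directly.
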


\begin{proof}
The first three inequalities simply follow from the definition of
$\ell(\gamma)$. Since $\norm{A_{\gamma}\gamma'(0)}_{4}\leq256M_{1}^{1/4}$,
Lemma \ref{lem:geodesic_4} shows the last inequality.
\end{proof}

\subsection{Parameters $R_{1}$, $R_{2}$ and $R_{3}$\label{subsec:Stability-of-Jacobian-log-barrier}}
\begin{lem}
\label{lem:total_ricci2}For a Hamiltonian curve $\gamma$ on $\mathcal{M}_{L}$
with $\ell(\gamma)\leq\ell_{0}$, we have that 
\[
\sup_{0\leq t\leq\ell}\norm{\Phi(t)}_{F,\gamma(t)}\leq R_{1}
\]
with $R_{1}=O(\sqrt{M_{1}}+M_{2}\sqrt{n})$.
\end{lem}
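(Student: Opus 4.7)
The plan is to apply the formula $\Phi(t) = M(t) - R(t)$ from Lemma \ref{lem:formulate_M_R}, bound each of the two pieces separately in the local Frobenius norm, and combine by the triangle inequality. Because both $R(t)$ and $M(t)$ have the structure $g^{-1}\cdot(\text{symmetric})$, the key tool is the following general calculation: if $H$ is any symmetric $n\times n$ matrix and $X$ is any symmetric $m\times m$ matrix, then
\[
\|g^{-1}A_{\gamma}^{T}X A_{\gamma}\|_{F,\gamma}^{2}=\tr(P_{\gamma}X P_{\gamma}X)\leq\tr(X^{2}),\qquad\|g^{-1}H\|_{F,\gamma}^{2}=\|g^{-1/2}H g^{-1/2}\|_{F}^{2}.
\]
The first identity comes from writing $\mathbb{E}[(\alpha^{T}g^{-1}A_{\gamma}^{T}X A_{\gamma}\beta)^{2}]=\tr(A_{\gamma}g^{-1}A_{\gamma}^{T}X A_{\gamma}g^{-1}A_{\gamma}^{T}X)$ and using $A_{\gamma}g^{-1}A_{\gamma}^{T}=P_{\gamma}$; the inequality $\tr(PXPX)\leq\tr(X^{2})$ follows from $\tr(PXPX)\leq\|PX\|_{F}^{2}=\tr(X P X)\leq\tr(X^{2})$ since $P_{\gamma}\preceq I$.

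Next I bound the curvature contribution $R(t)$. Splitting $R(t) = g^{-1}A_{\gamma}^{T}(S_{\gamma'}P_{\gamma}S_{\gamma'})A_{\gamma}-g^{-1}A_{\gamma}^{T}\Diag(P_{\gamma}s_{\gamma'}^{2})A_{\gamma}$, each summand has the form above. The first gives $\tr((S_{\gamma'}P_{\gamma}S_{\gamma'})^{2})=(s_{\gamma'}^{2})^{T}P_{\gamma}^{(2)}(s_{\gamma'}^{2})\leq\|s_{\gamma'}\|_{4}^{4}$ using $\|P_{\gamma}^{(2)}\|_{\mathrm{op}}\leq 1$ (Schur's product theorem), and the second gives $\|P_{\gamma}s_{\gamma'}^{2}\|_{2}^{2}\leq\|s_{\gamma'}\|_{4}^{4}$. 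By Lemma \ref{lem:gamma_est}, $\|s_{\gamma'}\|_{4}\leq 256\,M_{1}^{1/4}$, so $\|R(t)\|_{F,\gamma}=O(\sqrt{M_{1}})$.

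For the Hessian-type term $M(t)=g^{-1}(A_{\gamma}^{T}X A_{\gamma}-\nabla^{2}f)$ with $X=S_{x,\mu}-3\Sigma_{x}+2P_{x}^{(2)}$, I bound the two pieces separately. For the $A_{\gamma}^{T}X A_{\gamma}$ part I invoke the first identity and control $\tr(X^{2})$ by three summands: $\tr(S_{x,\mu}^{2})=\|A_{\gamma}\mu\|_{2}^{2}=\|\mu\|_{g}^{2}$ which, using $\mu=-g^{-1}\nabla f+g^{-1}A_{\gamma}^{T}\sigma_{\gamma}$ and the definitions of $M_{1}$ (namely $\|\nabla f\|_{g^{-1}}^{2}\leq M_{1}$ and $M_{1}\ge n\ge\sigma_{\gamma}^{T}P_{\gamma}\sigma_{\gamma}$), is $O(M_{1})$; $\tr(\Sigma_{x}^{2})=\sum\sigma_{i}^{2}\leq n\leq M_{1}$; and $\tr((P_{x}^{(2)})^{2})=\sum_{ij}P_{ij}^{4}\leq\tr(P)=n\leq M_{1}$. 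Thus this piece contributes $O(\sqrt{M_{1}})$. For the $g^{-1}\nabla^{2}f$ piece, I use $\nabla^{2}f\preceq M_{2}\cdot g$, so $g^{-1/2}(\nabla^{2}f)g^{-1/2}\preceq M_{2}I$ and its Frobenius norm is at most $M_{2}\sqrt{n}$.

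Combining the bounds by the triangle inequality yields $\|\Phi(t)\|_{F,\gamma(t)}\leq\|M(t)\|_{F,\gamma(t)}+\|R(t)\|_{F,\gamma(t)}=O(\sqrt{M_{1}}+M_{2}\sqrt{n})$, as claimed. The main obstacle is the identity $\|g^{-1}A_{\gamma}^{T}X A_{\gamma}\|_{F,\gamma}^{2}=\tr(P_{\gamma}X P_{\gamma}X)$ together with its upper bound by $\tr(X^{2})$: once this is isolated, each individual trace bound is a short calculation that exploits the appearance of the projection $P_{\gamma}$ (in particular, the Schur-product estimate for $P_{\gamma}^{(2)}$) and the standing assumption $\ell(\gamma)\leq\ell_{0}$ to keep $\|s_{\gamma'}\|_{4}$ under control.
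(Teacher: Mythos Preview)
Your proof is correct and follows essentially the same route as the paper's: split $\Phi(t)=M(t)-R(t)$ via Lemma~\ref{lem:formulate_M_R}, reduce each piece of the form $g^{-1}A_{\gamma}^{T}X A_{\gamma}$ to $\tr(P_{\gamma}X P_{\gamma}X)\le\tr(X^{2})$, bound the $\nabla^{2}f$ contribution by $M_{2}\sqrt{n}$, and invoke Lemma~\ref{lem:gamma_est} for $\|s_{\gamma'}\|_{4}=O(M_{1}^{1/4})$. The only cosmetic difference is that you isolate the identity $\|g^{-1}A_{\gamma}^{T}X A_{\gamma}\|_{F,\gamma}^{2}=\tr(P_{\gamma}X P_{\gamma}X)$ once at the outset, whereas the paper expands each term separately with explicit constants; the underlying estimates (including $\tr((P_{\gamma}^{(2)})^{2})\le n$ and $\|\mu\|_{\gamma}^{2}=O(M_{1})$) are the same.
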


\begin{proof}
Note that $\Phi(t)=M(t)-R(t)$ where

\begin{align*}
R(t) & =\left(A_{\gamma}^{T}A_{\gamma}\right)^{-1}\left(A_{\gamma}^{T}S_{\gamma'}P_{\gamma}S_{\gamma'}A_{\gamma}-A_{\gamma}^{T}\Diag(P_{\gamma}s_{\gamma'}^{2})A_{\gamma}\right),\\
M(t) & =\left(A_{\gamma}^{T}A_{\gamma}\right)^{-1}\left(A_{\gamma}^{T}(S_{\gamma,\mu}-3\Sigma_{\gamma}+2P_{\gamma}^{(2)})A_{\gamma}-\nabla^{2}f\right).
\end{align*}
We bound the Frobenius norm of $\Phi(t)$ separately.

For $\norm{R(t)}_{F,\gamma}$, we note that
\begin{align*}
 & \norm{R(t)}_{F,\gamma}^{2}\\
\leq & 2\norm{(A_{\gamma}^{T}A_{\gamma})^{-1/2}A_{\gamma}^{T}S_{\gamma'}P_{\gamma}S_{\gamma'}A_{\gamma}(A_{\gamma}^{T}A_{\gamma})^{-1/2}}_{F}^{2}+2\norm{(A_{\gamma}^{T}A_{\gamma})^{-1/2}A_{\gamma}^{T}\Diag(P_{\gamma}s_{\gamma'}^{2})A_{\gamma}(A_{\gamma}^{T}A_{\gamma})^{-1/2}}_{F}^{2}\\
= & 2\tr P_{\gamma}S_{\gamma'}P_{\gamma}S_{\gamma'}P_{\gamma}S_{\gamma'}P_{\gamma}S_{\gamma'}+2\tr P_{\gamma}\Diag(P_{\gamma}s_{\gamma'}^{2})P_{\gamma}\Diag(P_{\gamma}s_{\gamma'}^{2})\\
\leq & 4\norm{s_{\gamma'}}_{4}^{4}.
\end{align*}

For $\norm{M(t)}_{F,\gamma}$, we note that
\begin{align*}
 & \norm{M(t)}_{F,\gamma}^{2}\\
\leq & 2\norm{(A_{\gamma}^{T}A_{\gamma})^{-1/2}A_{\gamma}^{T}(S_{\gamma,\mu}-3\Sigma_{\gamma}+2P_{\gamma}^{(2)})A_{\gamma}(A_{\gamma}^{T}A_{\gamma})^{-1/2}}_{F}^{2}+2\norm{(A_{\gamma}^{T}A_{\gamma})^{-1/2}\nabla^{2}f(A_{\gamma}^{T}A_{\gamma})^{-1/2}}_{F}^{2}\\
\leq & 6\tr P_{\gamma}S_{\gamma,\mu}P_{\gamma}S_{\gamma,\mu}+54\tr P_{\gamma}\Sigma_{\gamma}P_{\gamma}\Sigma_{\gamma}+24\tr P_{\gamma}P_{\gamma}^{(2)}P_{\gamma}P_{\gamma}^{(2)}+2M_{2}^{2}n\\
\leq & 6\tr S_{\gamma,\mu}^{2}+54\tr\Sigma_{\gamma}^{2}+24\tr(P_{\gamma}^{(2)})^{2}+2M_{2}^{2}n.
\end{align*}
where we used that $\norm{\diag(P_{\gamma}^{(2)})}^{2}\leq\norm{\sigma_{\gamma}}^{2}\leq n$
and
\begin{align*}
\tr P_{\gamma}MP_{\gamma}M & =\tr P_{\gamma}MP_{\gamma}MP_{\gamma}=\tr P_{\gamma}MMP_{\gamma}\\
 & =\tr MP_{\gamma}P_{\gamma}M\leq\tr M^{2}.
\end{align*}
Note that 
\[
\norm{s_{\gamma,\mu}}_{2}^{2}=\norm{\mu(\gamma)}_{\gamma}^{2}=\norm{P_{\gamma}\sigma_{\gamma}-A_{\gamma}(A_{\gamma}^{T}A_{\gamma})^{-1}\nabla f(\gamma)}_{2}^{2}\leq2n+2M_{1}\leq4M_{1}
\]
and
\[
\norm{A_{\gamma}\left(A_{\gamma}^{T}A_{\gamma}\right)^{-1}\nabla f(\gamma(t))}_{2}^{2}=\nabla f(\gamma(t))^{T}\left(A_{\gamma}^{T}A_{\gamma}\right)^{-1}\nabla f(\gamma(t))=M_{1}.
\]
Therefore, we have that
\[
\norm{M(t)}_{F,\gamma}^{2}\leq102M_{1}+2M_{2}^{2}n.
\]
The claim follows from Lemma \ref{lem:gamma_est}.
\end{proof}
\begin{lem}
\label{lem:log_R2}Let $\gamma$ be a Hamiltonian curve on $\mathcal{M}_{L}$
with $\ell(\gamma)\leq\ell_{0}$. Assume that $\step^{2}\leq\frac{1}{\sqrt{n}}$.
For any $0\leq t\leq\step$, any curve $c(r)$ starting from $\gamma(t)$
and any vector field $v(r)$ on $c(r)$ with $v(0)=\gamma'(t)$, we
have that

\[
\left|\frac{d}{dr}\left.\tr\Phi(v(r))\right|_{r=0}\right|\leq R_{2}\left(\norm{\left.\frac{dc}{dr}\right|_{r=0}}_{\gamma(t)}+\step\norm{\left.D_{r}v\right|_{r=0}}_{\gamma(t)}\right).
\]
where 
\[
R_{2}=O(\sqrt{nM_{1}}+\sqrt{n}M_{1}\step^{2}+\frac{M_{1}^{1/4}}{\step}+\frac{\sqrt{n\log n}}{\step}+\sqrt{n}M_{2}+M_{3}).
\]
\end{lem}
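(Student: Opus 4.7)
The plan is to expand $\tr\Phi = \tr M - \tr R$ using the explicit formulas of Lemma \ref{lem:formulate_M_R}, differentiate each term along the parameter $r$ using the product rule together with Fact \ref{fact:calculus}, and bound every resulting summand by combining the pointwise estimates of Lemma \ref{lem:gamma_est} on $\norm{s_v}_2$, $\norm{s_v}_4$, and $\norm{s_v}_\infty$ with the self-concordance constants $M_1, M_2, M_3$ of Definition \ref{def:self_concordance}.

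For $\tr R = s_v^T P_\gamma^{(2)} s_v - \sigma_\gamma^T P_\gamma s_v^2$ (Lemma \ref{lem:Riemann_tensor_log}), the derivative splits into a part from the variation of $v$ and a part from the variation of $\gamma$. The variation of $v$ enters through $\frac{ds_v}{dr} = A_c D_r v - (I - P_\gamma) S_{\dot c} s_v$; bounding the $A_c D_r v$ contribution in two different ways---once via $\left|s_v^T P_\gamma^{(2)} A_c D_r v\right| \le \sqrt{n} \norm{s_v}_\infty \norm{D_r v}_{\gamma}$ (using $\sum_j P_{ij}^2 = \sigma_i \le 1$) and once via $P_\gamma^{(2)} \preceq I$ giving $\left|s_v^T P_\gamma^{(2)} A_c D_r v\right| \le \norm{s_v}_2 \norm{D_r v}_{\gamma}$---produces the $\sqrt{n\log n}/\step$ and $M_1^{1/4}/\step$ terms of $R_2$ after dividing by the $\step$ built into the normalization of the $D_r v$ term in Definition \ref{def:R2}. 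The remaining $\dot c$ contributions, arising from $\frac{dA_\gamma}{dr} = -S_{\dot c}A_\gamma$, $\frac{dP_\gamma}{dr} = -S_{\dot c}P_\gamma - P_\gamma S_{\dot c} + 2P_\gamma S_{\dot c}P_\gamma$, and the $(I-P_\gamma)S_{\dot c}s_v$ piece of $\frac{ds_v}{dr}$, are bounded by polynomials in $\norm{s_v}_\infty$ and $\sigma_\gamma$; the worst summand is of the shape $\sqrt{n}\norm{s_v}_\infty^2\norm{\dot c}_{\gamma} = O(\sqrt{n}(\log n + M_1\step^2))\norm{\dot c}_{\gamma}$, which produces the $\sqrt{n}\,M_1\step^2$ contribution to $R_2$.

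For the $\gamma$-only quantity $\tr M$, the only nontrivial piece is $\tr((A_\gamma^T A_\gamma)^{-1}\nabla^2 f)$, whose derivative along $c$ expands into $-2\tr((A^T A)^{-1}A^T S_{\dot c} A(A^T A)^{-1}\nabla^2 f)$, bounded by $O(\sqrt{n}\,M_2)\norm{\dot c}_{\gamma}$ via $\nabla^2 f \preceq M_2 A^T A$, plus $\tr((A^T A)^{-1}\nabla^3 f[\dot c])$, bounded by $M_3\norm{\dot c}_{\gamma}$ directly---this is precisely why $M_3$ is defined as a bound on $\left|\tr((A^T A)^{-1}\nabla^3 f[v])\right|$. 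The remaining pieces $\tr(P_\gamma S_{\gamma,\mu})$, $\norm{\sigma_\gamma}_2^2$, and $\tr(P_\gamma P_\gamma^{(2)})$ all contribute $O(\sqrt{nM_1})\norm{\dot c}_{\gamma}$ after invoking $\norm{s_{\gamma,\mu}}_2 = O(\sqrt{M_1})$ and routine applications of Fact \ref{fact:calculus}. The main obstacle is the combinatorial bookkeeping: the product rule applied to the quartic expressions in $s_v$ and to the Hadamard square $P_\gamma^{(2)}$ generates on the order of a dozen summands, and each must be matched to the right Hölder pairing among the available $\ell^2$, $\ell^4$, $\ell^\infty$ bounds so that every contribution lies within the claimed $R_2$; the step size assumption $\step^2 \le 1/\sqrt{n}$ is used precisely to keep the $\sqrt{M_1}\step$ contribution in the $\norm{s_v}_\infty$ estimate under control.
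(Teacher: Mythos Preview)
Your plan is correct and matches the paper's proof essentially line for line: expand $\tr\Phi=\tr M-\tr R$ via Lemma~\ref{lem:formulate_M_R} and Lemma~\ref{lem:Riemann_tensor_log}, differentiate using Fact~\ref{fact:calculus} and the identity $s_{c,v'}=A_cD_rv+P_cS_{\dot c}s_v-S_{\dot c}s_v$, then bound the resulting dozen-or-so summands by H\"older pairings against the $\ell^2,\ell^4,\ell^\infty$ estimates of Lemma~\ref{lem:gamma_est} and the constants $M_1,M_2,M_3$. The only cosmetic difference is that the paper separates the $v'$-dependent and $v'$-independent pieces of $\frac{d}{dr}\Ric(v)$ first and bounds the two resulting terms once each (yielding the $M_1^{1/4}/\step$ and $\sqrt{n\log n}/\step$ contributions from two distinct traces rather than from two bounds on the same trace), but the arithmetic and the final $R_2$ are identical.
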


\begin{proof}
We first bound $\tr R(t)$. By Lemma \ref{lem:Riemann_tensor_log},
we know that
\begin{eqnarray*}
\Ric(v(r)) & = & s_{c(r),v(r)}^{T}P_{c(r)}^{(2)}s_{c(r),v(r)}-\sigma_{c(r)}^{T}P_{c(r)}s_{c(r),v(r)}^{2}\\
 & = & \tr(S_{c(r),v(r)}P_{c(r)}S_{c(r),v(r)}P_{c(r)})-\tr(\Diag(P_{c(r)}s_{c(r),v(r)}^{2})P_{c(r)}).
\end{eqnarray*}
For simplicity, we suppress the parameter $r$ and hence, we have
\[
\Ric(v)=\tr(S_{c,v}P_{c}S_{c,v}P_{c})-\tr(\Diag(P_{c}s_{c,v}^{2})P_{c}).
\]
We write $\frac{d}{dr}c=c'$ and $\frac{d}{dr}v=v'$ (in Euclidean
coordinates). Since $\frac{d}{dr}P_{c}=-S_{c'}P_{c}-P_{c}S_{c'}+2P_{c}S_{c'}P_{c}$
and $\frac{d}{dr}S_{c,v}=-S_{c'}S_{c,v}+S_{c,v'}$, we have that
\begin{eqnarray*}
 &  & \frac{d}{dr}\Ric(v)\\
 & = & -2\tr(S_{c,v}S_{c'}P_{c}S_{c,v}P_{c})-2\tr(S_{c,v}P_{c}S_{c'}S_{c,v}P_{c})+4\tr(S_{c,v}P_{c}S_{c'}P_{c}S_{c,v}P_{c})\\
 &  & -2\tr(S_{c'}S_{c,v}P_{c}S_{c,v}P_{c})+2\tr(S_{c,v'}P_{c}S_{c,v}P_{c})\\
 &  & +\tr(\Diag(P_{c}s_{c,v}^{2})S_{c'}P_{c})+\tr(\Diag(P_{c}s_{c,v}^{2})P_{c}S_{c'})-2\tr(\Diag(P_{c}s_{c,v}^{2})P_{c}S_{c'}P_{c})\\
 &  & +\tr(\Diag(P_{c}S_{c'}s_{c,v}^{2})P_{c})+\tr(\Diag(S_{c'}P_{c}s_{c,v}^{2})P_{c})-2\tr(\Diag(P_{c}S_{c'}P_{c}s_{c,v}^{2})P_{c})\\
 &  & +2\tr(\Diag(P_{c}S_{c,v}S_{c'}s_{c,v})P_{c})-2\tr(\Diag(P_{c}S_{c,v}s_{c,v'})P_{c})\\
 & = & -6\tr(S_{c,v}S_{c'}P_{c}S_{c,v}P_{c})+4\tr(S_{c,v}P_{c}S_{c'}P_{c}S_{c,v}P_{c})+2\tr(S_{c,v'}P_{c}S_{c,v}P_{c})\\
 &  & +3\tr(\Diag(P_{c}s_{c,v}^{2})S_{c'}P_{c})-2\tr(\Diag(P_{c}s_{c,v}^{2})P_{c}S_{c'}P_{c})\\
 &  & +3\tr(\Diag(P_{c}S_{c'}s_{c,v}^{2})P_{c})-2\tr(\Diag(P_{c}S_{c'}P_{c}s_{c,v}^{2})P_{c})\\
 &  & -2\tr(\Diag(P_{c}S_{c,v}s_{c,v'})P_{c}).
\end{eqnarray*}
Let $\frac{d}{dr}\Ric(v)=(1)+(2)$ where $(1)$ is the sum of all
terms not involving $v'$ and $(2)$ is the sum of other terms. 

For the first term $(1)$, we have that
\begin{eqnarray*}
\left|(1)\right| & \leq & 6\left|\tr(S_{c,v}S_{c'}P_{c}S_{c,v}P_{c})\right|+4\left|\tr(S_{c,v}P_{c}S_{c'}P_{c}S_{c,v}P_{c})\right|\\
 &  & +3\left|\tr(\Diag(P_{c}s_{c,v}^{2})S_{c'}P_{c})\right|+2\left|\tr(\Diag(P_{c}s_{c,v}^{2})P_{c}S_{c'}P_{c})\right|\\
 &  & +3\left|\tr(\Diag(P_{c}S_{c'}s_{c,v}^{2})P_{c})\right|+2\left|\tr(\Diag(P_{c}S_{c'}P_{c}s_{c,v}^{2})P_{c})\right|\\
 & \leq & 6\norm{s_{c'}}_{\infty}\sqrt{\sum_{i}(s_{c,v})_{i}^{2}}\sqrt{\sum_{i}(s_{c,v})_{i}^{2}}+4\norm{s_{c'}}_{\infty}\left|\tr(P_{c}S_{c,v}P_{c}S_{c,v}P_{c})\right|\\
 &  & +3\sqrt{\sum_{i}(s_{c,v})_{i}^{4}}\norm{S_{c'}}_{2}+2\sqrt{\sum_{i}(s_{c,v})_{i}^{4}}\sqrt{\sum_{i}(P_{c}S_{c'}P_{c})_{ii}^{2}}\\
 &  & +3\norm{s_{c'}}_{\infty}\sqrt{\sum_{i}(s_{c,v})_{i}^{4}}\sqrt{\sum_{i}(P_{c})_{ii}^{2}}+2\norm{s_{c'}}_{\infty}\sqrt{\sum_{i}(s_{c,v})_{i}^{4}}\sqrt{\sum_{i}(P_{c})_{ii}^{2}}\\
 & \leq & 10\norm{s_{c'}}_{\infty}\norm{s_{c,v}}_{2}^{2}+3\norm{s_{c,v}}_{4}^{2}\norm{s_{c'}}_{2}+7\norm{s_{c'}}_{\infty}\norm{s_{c,v}}_{4}^{2}\sqrt{n}\\
 & \leq & 20\norm{s_{c'}}_{2}\norm{s_{c,v}}_{4}^{2}\sqrt{n}.
\end{eqnarray*}
Since $s_{c,v}=s_{\gamma'}$ at $r=0$, we have that $\norm{s_{c,v}}_{4}^{2}=O(M_{1}^{1/2})$
and hence
\[
\left|(1)\right|=O\left(\sqrt{nM_{1}}\right)\norm{s_{c'}}_{2}.
\]

For the second term $(2)$, we have that
\begin{eqnarray*}
\left|(2)\right| & \leq & 2\left|\tr(S_{c,v'}P_{c}S_{c,v}P_{c})\right|+2\left|\tr(\Diag(P_{c}S_{c,v}s_{c,v'})P_{c})\right|\\
 & \leq & 2\norm{s_{c,v'}}_{2}\norm{s_{c,v}}_{2}+2\sqrt{n}\sqrt{\sum_{i}(s_{c,v'}s_{c,v})_{i}^{2}}\\
 & \leq & O\left(n^{1/2}+M_{1}^{1/4}\right)\norm{s_{c,v'}}_{2}+O\left(\sqrt{n\log n}+\sqrt{nM_{1}}\step\right)\norm{s_{c,v'}}_{2}\\
 & = & O\left(M_{1}^{1/4}+\sqrt{n\log n}+\sqrt{nM_{1}}\step\right)\norm{s_{c,v'}}_{2}
\end{eqnarray*}
where we used $\norm{s_{c,v}}_{\infty}=\norm{s_{\gamma'}}_{\infty}=O\left(\sqrt{\log n}+\sqrt{M_{1}}\step\right)$
and at $r=0,$ we have $\norm{s_{c,v}}_{2}=\norm{A_{\gamma(0)}\gamma'(0)}_{2}=\norm{s_{\gamma'}}_{2}=O(n^{1/2}+M_{1}^{1/4})$
in the second-to-last line.

Note that at $r=0$, by Lemma \ref{lem:geo_equ_log}, we have
\begin{align*}
D_{r}v & =\frac{dv}{dr}-\left(A_{c}^{T}A_{c}\right)^{-1}A_{c}^{T}S_{c'}s_{c,v}.
\end{align*}
Therefore, 
\[
s_{c,v'}=A_{c}v'=A_{c}\left(D_{r}v\right)-A_{c}\left(A_{c}^{T}A_{c}\right)^{-1}A_{c}^{T}S_{c'}s_{c,v}
\]
and hence
\begin{eqnarray*}
\norm{s_{c,v'}}_{2} & \leq & \norm{D_{r}v}+\norm{A_{c}\left(A_{c}^{T}A_{c}\right)^{-1}A_{c}^{T}S_{c'}s_{c,v}}_{2}\\
 & \leq & \norm{D_{r}v}+\norm{s_{\gamma'}}_{\infty}\norm{s_{c'}}_{2}
\end{eqnarray*}
Therefore, 
\begin{align*}
\left|(2)\right| & =O\left(M_{1}^{1/4}+\sqrt{n\log n}+\sqrt{nM_{1}}\step\right)\left(\norm{D_{r}v}+\left(\sqrt{\log n}+\sqrt{M_{1}}\step\right)\norm{s_{c'}}_{2}\right).
\end{align*}

Therefore, we have
\begin{align*}
 & \left|\frac{d}{dr}\left.\Ric(v(r))\right|_{s=0}\right|\\
= & O\left(\sqrt{nM_{1}}\right)\norm{s_{c'}}_{2}+O\left(M_{1}^{1/4}+\sqrt{n\log n}+\sqrt{nM_{1}}\step\right)\norm{D_{r}v}\\
 & +O\left(M_{1}^{1/4}+\sqrt{n\log n}+\sqrt{nM_{1}}\step\right)\left(\sqrt{\log n}+\sqrt{M_{1}}\step\right)\norm{s_{c'}}_{2}\\
= & O\left(\sqrt{nM_{1}}+M_{1}^{1/4}\sqrt{\log n}+M_{1}^{1/4}\sqrt{M_{1}}\step+\sqrt{n}\log n+\sqrt{n}M_{1}\step^{2}\right)\norm{s_{c'}}_{2}\\
 & +O\left(\frac{M_{1}^{1/4}}{\step}+\frac{\sqrt{n\log n}}{\step}+\sqrt{nM_{1}}\right)\step\norm{D_{r}v}\\
= & O\left(\sqrt{nM_{1}}+\sqrt{n}M_{1}\step^{2}+\frac{M_{1}^{1/4}}{\step}+\frac{\sqrt{n\log n}}{\step}\right)\left(\norm{s_{c'}}_{2}+\step\norm{D_{r}v}\right).
\end{align*}
where we used $M_{1}^{3/4}\step=O(\sqrt{n}M_{1}\step^{2}+\sqrt{nM_{1}})$
at the last line.

Next, we bound $\tr M(t)$. Lemma \ref{lem:formulate_M_R} shows that
\begin{align*}
\tr M(r)= & \tr((A_{c}^{T}A_{c})^{-1}A_{c}^{T}(S_{c,\mu}-3\Sigma_{c}+2P_{c}^{(2)})A_{c}-(A_{c}^{T}A_{c})^{-1}\nabla^{2}f)\\
= & \tr(P_{c}(S_{c,\mu}-3\Sigma_{c}+2P_{c}^{(2)}))-\tr((A_{c}^{T}A_{c})^{-1}\nabla^{2}f)\\
= & \sigma_{c}^{T}P_{c}\sigma_{c}-\sigma_{c}^{T}A_{c}(A_{c}^{T}A_{c})^{-1}\nabla f-3\tr(\Sigma_{c}^{2})+2\tr(P_{c}^{(3)})-\tr((A_{c}^{T}A_{c})^{-1}\nabla^{2}f).
\end{align*}
where in the last step we used 
\[
S_{c,\mu}=A_{c}\mu=\Diag(P_{c}\sigma_{c}-A_{c}(A_{c}^{T}A_{c})^{-1}\nabla f).
\]
Since $\frac{d}{dr}P_{c}=-S_{c'}P_{c}-P_{c}S_{c'}+2P_{c}S_{c'}P_{c}$
and $\frac{d}{dr}A_{c}=-S_{c'}A_{c}$, we have that
\begin{align*}
 & \frac{d}{dr}\tr M(r)\\
= & -\sigma_{c}^{T}S_{c'}P_{c}\sigma_{c}-\sigma_{c}^{T}P_{c}S_{c'}\sigma_{c}+2\sigma_{c}^{T}P_{c}S_{c'}P_{c}\sigma_{c}\\
 & -4\sigma_{c}^{T}P_{c}S_{c'}\sigma_{c}+4\sigma_{c}^{T}P_{c}\diag(P_{c}S_{c'}P_{c})\\
 & +3\sigma_{c}^{T}S_{c'}A_{c}(A_{c}^{T}A_{c})^{-1}\nabla f-2\diag(P_{c}S_{c'}P_{c})^{T}A_{c}(A_{c}^{T}A_{c})^{-1}\nabla f\\
 & -2\sigma_{c}^{T}A_{c}(A_{c}^{T}A_{c})^{-1}A_{c}^{T}S_{c'}A_{c}(A_{c}^{T}A_{c})^{-1}\nabla f-\sigma_{c}^{T}A_{c}(A_{c}^{T}A_{c})^{-1}\nabla^{2}f\cdot c'\\
 & +12\tr(S_{c'}\Sigma_{c}^{2})-12\tr(\Sigma_{c}\diag(P_{c}S_{c'}P_{c}))\\
 & -6\tr(P_{c}^{(2)}S_{c'}P_{c}+P_{c}^{(2)}P_{c}S_{c'})+12\tr(P_{c}^{(2)}P_{c}S_{c'}P_{c})\\
 & -2\tr((A_{c}^{T}A_{c})^{-1}A_{c}^{T}S_{c'}A_{c}(A_{c}^{T}A_{c})^{-1}\nabla^{2}f)-\tr((A_{c}^{T}A_{c})^{-1}\nabla^{3}f[c']).
\end{align*}
Simplifying it, we have
\begin{align*}
 & \frac{d}{dr}\tr M(r)\\
= & -6\sigma_{c}^{T}S_{c'}P_{c}\sigma_{c}+2\sigma_{c}^{T}P_{c}S_{c'}P_{c}\sigma_{c}+4\sigma_{c}^{T}P_{c}P_{c}^{(2)}s_{c'}\\
 & +3\sigma_{c}^{T}S_{c'}A_{c}(A_{c}^{T}A_{c})^{-1}\nabla f-2s_{c'}^{T}P_{c}^{(2)}A_{c}(A_{c}^{T}A_{c})^{-1}\nabla f\\
 & -2\sigma_{c}^{T}A_{c}(A_{c}^{T}A_{c})^{-1}A_{c}^{T}S_{c'}A_{c}(A_{c}^{T}A_{c})^{-1}\nabla f-\sigma_{c}^{T}A_{c}(A_{c}^{T}A_{c})^{-1}\nabla^{2}f\cdot c'\\
 & +12\tr(S_{c'}\Sigma_{c}^{2})-12\sigma_{c}^{T}P_{c}^{(2)}s_{c'}\\
 & -6\tr(P_{c}^{(2)}S_{c'}P_{c}+P_{c}^{(2)}P_{c}S_{c'})+12\tr(P_{c}^{(2)}P_{c}S_{c'}P_{c})\\
 & -2\tr((A_{c}^{T}A_{c})^{-1}A_{c}^{T}S_{c'}A_{c}(A_{c}^{T}A_{c})^{-1}\nabla^{2}f)-\tr((A_{c}^{T}A_{c})^{-1}\nabla^{3}f[c']).
\end{align*}
Let $\frac{d}{dr}\tr M(r)=(3)+(4)$ where $(3)$ is the sum of all
terms not involving $f$ and $(4)$ is the sum of other terms with
$f$. 

For the first term $(3)$, we have that
\begin{align*}
\left|(3)\right|\leq & 6\left|\sigma_{c}^{T}S_{c'}P_{c}\sigma_{c}\right|+2\left|\sigma_{c}^{T}P_{c}S_{c'}P_{c}\sigma_{c}\right|+4\left|\sigma_{c}^{T}P_{c}P_{c}^{(2)}s_{c'}\right|+12\left|\tr(S_{c'}\Sigma_{c}^{2})\right|+12\left|\sigma_{c}^{T}P_{c}^{(2)}s_{c'}\right|\\
 & +6\left|\tr(P_{c}^{(2)}S_{c'}P_{c}+P_{c}^{(2)}P_{c}S_{c'})\right|+12\left|\tr(P_{c}^{(2)}P_{c}S_{c'}P_{c})\right|\\
\leq & 6\sqrt{\sigma_{c}^{T}S_{c'}^{2}\sigma_{c}}\sqrt{n}+2\norm{s_{c'}}_{\infty}n+4\sqrt{n}\norm{s_{c'}}_{2}+12\norm{s_{c'}}_{2}\sqrt{n}+12\norm{S_{c'}}_{2}\sqrt{n}\\
 & +6\norm{\diag(P_{c}P_{c}^{(2)})}_{2}\norm{s_{c'}}_{2}+6\norm{\diag(P_{c}^{(2)}P_{c})}_{2}\norm{s_{c'}}_{2}+12\norm{\diag(P_{c}P_{c}^{(2)}P_{c})}_{2}\norm{s_{c'}}_{2}\\
\leq & 36n\norm{s_{c'}}_{2}.
\end{align*}

For the second term $(4)$, we have that
\begin{align*}
(4)\leq & 3\left|\sigma_{c}^{T}S_{c'}A_{c}(A_{c}^{T}A_{c})^{-1}\nabla f\right|+2\left|s_{c'}^{T}P_{c}^{(2)}A_{c}(A_{c}^{T}A_{c})^{-1}\nabla f\right|\\
 & +2\left|\sigma_{c}^{T}A_{c}(A_{c}^{T}A_{c})^{-1}A_{c}^{T}S_{c'}A_{c}(A_{c}^{T}A_{c})^{-1}\nabla f\right|+\left|\sigma_{c}^{T}A_{c}(A_{c}^{T}A_{c})^{-1}\nabla^{2}f\cdot c'\right|\\
 & +2\left|\tr((A_{c}^{T}A_{c})^{-1}A_{c}^{T}S_{c'}A_{c}(A_{c}^{T}A_{c})^{-1}\nabla^{2}f)\right|+\left|\tr((A_{c}^{T}A_{c})^{-1}\nabla^{3}f[c'])\right|\\
\leq & 3\sqrt{s_{c'}^{T}\Sigma_{c}s_{c'}}\sqrt{\nabla f^{T}(A_{c}^{T}A_{c})^{-1}\nabla f}+\sqrt{s_{c'}^{T}P_{c}^{(2)}P_{c}P_{c}^{(2)}s_{c'}}\sqrt{\nabla f^{T}(A_{c}^{T}A_{c})^{-1}\nabla f}\\
 & +2\sqrt{\sigma_{c}^{T}P_{c}S_{c'}P_{c}S_{c'}P_{c}\sigma_{c}}\sqrt{\nabla f^{T}(A_{c}^{T}A_{c})^{-1}\nabla f}+\sqrt{\sigma_{c}^{T}P_{c}\sigma_{c}}\sqrt{c'\nabla^{2}f(A_{c}^{T}A_{c})^{-1}\nabla^{2}f\cdot c'}\\
 & +2\norm{\diag(A_{c}(A_{c}^{T}A_{c})^{-1}\nabla^{2}f(A_{c}^{T}A_{c})^{-1}A_{c}^{T})}_{2}\norm{s_{c'}}_{2}+\left|\tr((A_{c}^{T}A_{c})^{-1}\nabla^{3}f[c'])\right|\\
\leq & 4\norm{s_{c'}}_{2}\sqrt{M_{1}}+2\norm{s_{c'}}_{\infty}\sqrt{nM_{1}}+3\sqrt{n}M_{2}\norm{s_{c'}}_{2}+M_{3}\norm{s_{c'}}_{2}\\
\leq & \left(6\sqrt{M_{1}}+3\sqrt{n}M_{2}+M_{3}\right)\norm{s_{c'}}_{2}.
\end{align*}

Therefore, 
\[
\left|\frac{d}{dr}\tr M(s)\right|\leq O\left(n+\sqrt{M_{1}}+\sqrt{n}M_{2}+M_{3}\right)\norm{s_{c'}}_{2}.
\]
\end{proof}
\begin{lem}
\label{lem:log_D2}Let $\gamma$ be a Hamiltonian curve on $\mathcal{M}_{L}$
with $\ell(\gamma)\leq\ell_{0}$. Assume that $\step\leq\frac{1}{36M_{1}^{1/4}}$.
Let $\zeta(t)$ be the parallel transport of the vector $\gamma'(0)$
on $\gamma(t)$. Then,
\[
\sup_{0\leq t\leq\step}\norm{\Phi(t)\zeta(t)}_{\gamma(t)}\leq R_{3}
\]
where 
\[
R_{3}=O\left(M_{1}^{\frac{1}{2}}\sqrt{\log n}+M_{1}^{\frac{3}{4}}n^{\frac{1}{4}}\step+M_{2}n^{\frac{1}{2}}\right).
\]
\end{lem}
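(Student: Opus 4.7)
The approach is to decompose $\Phi(t) = M(t) - R(t)$ via Lemma \ref{lem:formulate_M_R} and bound the two contributions to $\norm{\Phi(t)\zeta(t)}_{\gamma(t)}$ separately. The fundamental observation is that $\zeta(0) = \gamma'(0)$, so the Riemann-tensor antisymmetry $R(v,v)v \equiv 0$ forces $R(0)\zeta(0) = 0$; accordingly, $R(t)\zeta(t)$ vanishes to first order in $t$, which is what lets the estimate in the lemma beat a naive $R_1 \cdot \norm{\zeta}_{\gamma}$ bound of order $\sqrt{nM_1}$.

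\emph{The $M(t)\zeta$ term.} Using the formula for $M(t)$ from Lemma \ref{lem:formulate_M_R}, one has $A_\gamma M(t)\zeta = P_\gamma(S_{\gamma,\mu} - 3\Sigma_\gamma + 2P_\gamma^{(2)})w - A_\gamma(A_\gamma^T A_\gamma)^{-1}\nabla^2 f\cdot\zeta$, where $w = A_\gamma\zeta$. The last summand is bounded by $M_2\norm{\zeta}_{\gamma(t)}$ thanks to $\nabla^2 f \preceq M_2 A_\gamma^T A_\gamma$, and parallel-transport norm preservation gives $\norm{\zeta(t)}_{\gamma(t)} = \norm{\gamma'(0)}_{\gamma(0)} = \norm{s_{\gamma'(0)}}_2 \le O(n^{1/2})$ by Lemma \ref{lem:gamma_est}, producing the $M_2 n^{1/2}$ contribution. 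The remaining terms are controlled straightforwardly using $\norm{s_{\gamma,\mu}}_\infty \le O(\sqrt{M_1})$, $\norm{\sigma_\gamma}_\infty \le 1$, and $\norm{P_\gamma^{(2)}}_{2\to 2}\le 1$, each absorbed into the $M_1^{1/2}\sqrt{\log n}$ or $M_1^{3/4}n^{1/4}\delta$ part of $R_3$.

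\emph{The $R(t)\zeta$ term.} Define $u(t) \defeq A_\gamma\zeta(t) - s_{\gamma'(t)} \in \R^m$, measuring the deviation of the parallel field $\zeta$ from the nonparallel velocity $\gamma'$. Combining the parallel-transport equation (\ref{eq:parallel_log}) with the Hamiltonian ODE in Lemma \ref{lem:geodesic_walk_log_def}, and using Fact \ref{fact:calculus}, one derives $\frac{du}{dt} = -(I-P_\gamma)S_{\gamma'}u - s_{\gamma,\mu}$ with $u(0)=0$, and verifies that $P_\gamma u = u$ is preserved, so $u \in \mathrm{col}(A_\gamma)$ at all times. Writing $u = A_\gamma v$ in the tangent bundle, the equation collapses to $D_t v = -\mu$ with $v(0) = 0$, yielding $\norm{v(t)}_{\gamma(t)} \le \int_0^t \norm{\mu}_{\gamma(s)}\,ds \le O(\delta\sqrt{M_1})$. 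The estimate $|(A_\gamma v)_i| \le \sqrt{\sigma_{\gamma,i}}\norm{v}_{\gamma}$, together with $\sum_i \sigma_{\gamma,i}^2 \le \sum_i \sigma_{\gamma,i} = n$, then upgrades this to the sharp $L^p$ bounds $\norm{u}_\infty \le \norm{v}_{\gamma}$ and $\norm{u}_4 \le n^{1/4}\norm{v}_{\gamma} = O(\delta n^{1/4}\sqrt{M_1})$.

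Substituting $w = s_{\gamma'} + u$ into $A_\gamma R(t)\zeta = P_\gamma\bigl(S_{\gamma'}P_\gamma S_{\gamma'}w - \Diag(P_\gamma s_{\gamma'}^2)w\bigr)$ and using the pointwise identity $\bigl(S_{\gamma'}P_\gamma S_{\gamma'}s_{\gamma'}\bigr)_i = s_{\gamma',i}(P_\gamma s_{\gamma'}^2)_i = \bigl(\Diag(P_\gamma s_{\gamma'}^2)s_{\gamma'}\bigr)_i$, the $s_{\gamma'}$ contribution cancels exactly, leaving $A_\gamma R(t)\zeta = P_\gamma\bigl(S_{\gamma'}P_\gamma S_{\gamma'}u - \Diag(P_\gamma s_{\gamma'}^2)u\bigr)$. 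Applying $\norm{S_{\gamma'}P_\gamma S_{\gamma'}u}_2 \le \norm{s_{\gamma'}}_\infty\norm{s_{\gamma'}}_4\norm{u}_4$ and $\norm{\Diag(P_\gamma s_{\gamma'}^2)u}_2 \le \norm{s_{\gamma'}}_4^2\norm{u}_\infty$, together with $\norm{s_{\gamma'}}_\infty \le O(\sqrt{\log n}+\sqrt{M_1}\delta)$ and $\norm{s_{\gamma'}}_4 \le O(M_1^{1/4})$ from Lemma \ref{lem:gamma_est} and the $L^p$ bounds on $u$ just obtained, produces the three terms in $R_3$. The hard step is the sharp $L^4$ bound on $u$: without exploiting the column-space structure of $u$ (which activates the $\sum \sigma_i^2 \le n$ budget), one is forced to use $\norm{u}_4 \le \norm{u}_2 \le O(\delta\sqrt{M_1})$, and the resulting estimate loses a factor of $(M_1/n)^{1/4}$ in the middle term, reducing back to the weaker $R_1$-style bound.
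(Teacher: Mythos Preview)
Your cancellation idea for $R(t)\zeta$ is correct and elegant: writing $A_\gamma\zeta = s_{\gamma'} + u$ with $u = A_\gamma(\zeta-\gamma')$, the identity $S_{\gamma'}P_\gamma S_{\gamma'}s_{\gamma'} = \Diag(P_\gamma s_{\gamma'}^2)s_{\gamma'}$ does kill the main term. However, the bounds you extract from $\norm{u}_p \le (\sum_i\sigma_i^{p/2})^{1/p}\norm{v}_\gamma$ are not sharp enough to reach the stated $R_3$, and your $M(t)$ estimate has a genuine gap.

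For $M(t)\zeta$: pairing $\norm{s_{\gamma,\mu}}_\infty = O(\sqrt{M_1})$ with $\norm{w}_2 = O(\sqrt{n})$ gives $\norm{P_\gamma S_{\gamma,\mu}w}_2 = O(\sqrt{nM_1})$, which is exactly the naive $R_1$-type bound you are trying to beat and is \emph{not} absorbed into $R_3$. The correct pairing is $\norm{s_{\gamma,\mu}}_2\cdot\norm{w}_\infty$, which requires a sharp bound on $\norm{A_\gamma\zeta}_\infty$. For $R(t)\zeta$: your estimate $\norm{u}_\infty \le \norm{v}_\gamma = O(\delta\sqrt{M_1})$ feeds into $\norm{\Diag(P_\gamma s_{\gamma'}^2)u}_2 \le \norm{s_{\gamma'}}_4^2\norm{u}_\infty = O(M_1\delta)$, and similarly the first term contributes $O(M_1^{5/4}n^{1/4}\delta^2)$. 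When $M_1 \gg n$ (equivalently $\alpha^2 m \gg n$), neither $M_1\delta$ nor $M_1^{5/4}n^{1/4}\delta^2$ is dominated by $M_1^{1/2}\sqrt{\log n} + M_1^{3/4}n^{1/4}\delta + M_2 n^{1/2}$, so the claimed $R_3$ does not follow.

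The paper's route avoids the decomposition entirely and instead tracks $w_p(t)\defeq\norm{A_\gamma\zeta(t)}_p$ directly via Gr\"onwall from the parallel-transport ODE $\frac{d}{dt}(A_\gamma\zeta) = -(I-P_\gamma)S_{\gamma'}A_\gamma\zeta$. The crucial input is that the \emph{initial} values are small: $w_4(0)=\norm{s_{\gamma'(0)}}_4 = O(n^{1/4})$ and $w_\infty(0)=O(\sqrt{\log n})$, from the definition of $\ell(\gamma)\le\ell_0$. The ODE preserves these orders, giving $w_4(t)=O(n^{1/4})$ and $w_\infty(t)=O(\sqrt{\log n}+M_1^{1/4}n^{1/4}\delta)$. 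One then bounds each of the five diagonal pieces of $\Phi(t)$ against the appropriate $\norm{s_{\gamma,\zeta}}_\infty$ or $\norm{s_{\gamma,\zeta}}_4$, and the stated $R_3$ drops out. Your decomposition cannot recover $w_4(t)=O(n^{1/4})$ by triangle inequality, since $\norm{s_{\gamma'(t)}}_4$ is only $O(M_1^{1/4})$ along the curve; the gain comes from the parallel field remembering the sharper $t=0$ statistics of the random initial velocity, which the velocity field $\gamma'(t)$ itself does not.
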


\begin{proof}
By Lemma \ref{lem:formulate_M_R}, we have that
\begin{align*}
\Phi(t) & =\left(A_{\gamma}^{T}A_{\gamma}\right)^{-1}\left(A_{\gamma}^{T}(S_{\gamma,\mu}-3\Sigma_{\gamma}+2P_{\gamma}^{(2)}-S_{\gamma'}P_{\gamma}S_{\gamma'}+\Diag(P_{\gamma}s_{\gamma'}^{2}))A_{\gamma}-\nabla^{2}f\right)\\
 & =(1)+(2)
\end{align*}
where $(2)$ is the last term $-\left(A_{\gamma}^{T}A_{\gamma}\right)^{-1}\nabla^{2}f$.

For the first term, we have that
\begin{align*}
\norm{(1)\zeta}_{\gamma}= & \norm{P_{\gamma}(S_{\gamma,\mu}-3\Sigma_{\gamma}+2P_{\gamma}^{(2)}-S_{\gamma'}P_{\gamma}S_{\gamma'}+\Diag(P_{\gamma}s_{\gamma'}^{2}))s_{\gamma,\zeta}}_{2}\\
\leq & \norm{(S_{\gamma,\mu}-3\Sigma_{\gamma}+2P_{\gamma}^{(2)}-S_{\gamma'}P_{\gamma}S_{\gamma'}+\Diag(P_{\gamma}s_{\gamma'}^{2}))s_{\gamma,\zeta}}_{2}\\
\leq & \norm{S_{\gamma,\mu}s_{\gamma,\zeta}}_{2}+3\norm{\Sigma_{\gamma}s_{\gamma,\zeta}}_{2}+2\norm{P_{\gamma}^{(2)}s_{\gamma,\zeta}}_{2}+\norm{S_{\gamma'}P_{\gamma}S_{\gamma'}s_{\gamma,\zeta}}_{2}+\norm{\Diag(P_{\gamma}s_{\gamma'}^{2})s_{\gamma,\zeta}}_{2}\\
\leq & \norm{s_{\gamma,\zeta}}_{\infty}\left(\norm{s_{\gamma,\mu}}_{2}+\norm{\sigma_{\gamma}}_{2}+\norm{P_{\gamma}s_{\gamma'}^{2}}_{2}\right)+2\norm{P_{\gamma}^{(2)}s_{\gamma,\zeta}}_{2}+\norm{s_{\gamma'}}_{\infty}\norm{S_{\gamma'}s_{\gamma,\zeta}}_{2}\\
\leq & \norm{s_{\gamma,\zeta}}_{\infty}\left(2\sqrt{M_{1}}+\sqrt{n}+\norm{s_{\gamma'}}_{4}^{2}\right)+2\norm{P_{\gamma}^{(2)}s_{\gamma,\zeta}}_{2}+\norm{s_{\gamma'}}_{\infty}\norm{s_{\gamma'}}_{4}\norm{s_{\gamma,\zeta}}_{4}
\end{align*}
where we used that $\norm{s_{\gamma,\mu}}_{2}^{2}=\norm{\mu(x)}_{x}^{2}\leq4M_{1}$,
$\norm{\sigma_{\gamma}}^{2}\leq n$. Now, we note that 
\begin{align*}
\norm{P_{\gamma}^{(2)}s_{\gamma,\zeta}}_{2}^{2} & =\sum_{i}\left(\sum_{j}(P_{\gamma})_{ij}^{2}(s_{\gamma,\zeta})_{j}\right)^{2}\\
 & \leq\norm{s_{\gamma,\zeta}}_{\infty}^{2}\sum_{i}\left(\sum_{j}(P_{\gamma})_{ij}^{2}\right)\\
 & =\norm{s_{\gamma,\zeta}}_{\infty}^{2}\sum_{i}\left(\sigma_{\gamma}^{2}\right)_{i}=n\norm{s_{\gamma,\zeta}}_{\infty}^{2}.
\end{align*}
Using also that $\norm{s_{\gamma'}}_{4}=O(M_{1}^{1/4})$ and $\norm{s_{\gamma'}}_{\infty}=O(\sqrt{\log n}+\sqrt{M_{1}}\step)$,
we have that
\begin{align*}
\norm{(1)\zeta}_{\gamma} & \leq\norm{s_{\gamma,\zeta}}_{\infty}\left(5\sqrt{M_{1}}+\norm{s_{\gamma'}}_{4}^{2}\right)+\norm{s_{\gamma'}}_{\infty}\norm{s_{\gamma'}}_{4}\norm{s_{\gamma,\zeta}}_{4}\\
 & \leq O\left(M_{1}^{1/2}\right)\norm{s_{\gamma,\zeta}}_{\infty}+O(\sqrt{\log n}M_{1}^{1/4}+M_{1}^{3/4}\step)\norm{s_{\gamma,\zeta}}_{4}.
\end{align*}

For the second term, we have that
\[
\norm{(1)\zeta}_{\gamma}=\norm{P_{\gamma}\nabla^{2}f\zeta}_{2}^{2}\leq\zeta^{T}\nabla^{2}f\cdot\zeta\leq M_{2}\norm{s_{\gamma,\zeta}}_{2}.
\]
Combining both terms, we have that
\begin{equation}
\norm{\Phi(t)\zeta}_{\gamma}=O\left(M_{1}^{1/2}\right)\norm{s_{\gamma,\zeta}}_{\infty}+O(\sqrt{\log n}M_{1}^{1/4}+M_{1}^{3/4}\step)\norm{s_{\gamma,\zeta}}_{4}+M_{2}\norm{s_{\gamma,\zeta}}_{2}.\label{eq:phi_gamma_bound}
\end{equation}

Now, we bound $\norm{s_{\gamma,\zeta}}_{2}$, $\norm{s_{\gamma,\zeta}}_{4}$
and $\norm{s_{\gamma,\zeta}}_{\infty}$. (\ref{eq:parallel_log})
shows that
\[
\frac{d}{dt}\zeta(t)=\left(A_{\gamma}^{T}A_{\gamma}\right)^{-1}A_{\gamma}^{T}S_{\gamma'}A_{\gamma}\zeta.
\]

Let $w_{p}(t)=\norm{A_{\gamma}\zeta(t)}_{p}$. Then, we have that
\begin{align*}
\frac{d}{dt}w_{p}(t) & \leq\norm{\frac{d}{dt}A_{\gamma}\zeta(t)}_{p}\\
 & \leq\norm{S_{\gamma'}A_{\gamma}\zeta(t)}_{p}+\norm{A_{\gamma}\frac{d}{dt}\zeta(t)}_{p}\\
 & \leq\norm{S_{\gamma'}A_{\gamma}\zeta(t)}_{p}+\norm{P_{\gamma}S_{\gamma'}A_{\gamma}\zeta}_{p}\\
 & \leq\norm{s_{\gamma'}}_{\infty}w_{p}(t)+\norm{S_{\gamma'}A_{\gamma}\zeta}_{2}.
\end{align*}
For $p=2$, we have that $\frac{d}{dt}w_{2}(t)\leq2\norm{s_{\gamma'}}_{\infty}w_{2}(t)$.
Using that $\norm{s_{\gamma'}}_{\infty}\leq256\left(\sqrt{\log n}+\sqrt{M_{1}}\step\right)$
and that $t\leq\step\leq\frac{1}{36M_{1}^{1/4}}$, we have
\begin{align*}
w_{2}(t) & \leq e^{512\left(\sqrt{\log n}+\sqrt{M_{1}}\step\right)t}w_{2}(0)\\
 & =O(n^{1/2})
\end{align*}
where we used that $\zeta(0)=\gamma'(0)$ and $t\leq\frac{1}{12(v_{4}+M_{1}^{1/4})}$
at the end. Therefore, we have that $\norm{s_{\gamma,\zeta}}_{2}=O(n^{1/2})$.

For $p=4$, we note that 
\begin{align*}
\frac{d}{dt}w_{4}(t) & \leq\norm{s_{\gamma'}}_{\infty}w_{4}(t)+\norm{s_{\gamma'}}_{4}\norm{s_{\gamma,\zeta}}_{4}\\
 & \leq2\norm{s_{\gamma'}}_{4}w_{4}(t)=O(M_{1}^{1/4}w_{4}(t)).
\end{align*}
Since $t\leq\step\leq\frac{1}{36M_{1}^{1/4}}$, we have again that
$w_{4}(t)=O(w_{4}(0))$. Since $w_{4}(0)=\norm{A_{\gamma}\zeta(0)}_{4}=\norm{A_{\gamma}\gamma'(0)}_{4}=O(n^{1/4})$,
we have that
\[
w_{4}(t)=O(n^{1/4}).
\]

For $p=\infty$, we note that
\begin{align*}
\frac{d}{dt}w_{\infty}(t) & \leq\norm{s_{\gamma'}}_{\infty}w_{\infty}(t)+\norm{s_{\gamma'}}_{4}\norm{s_{\gamma,\zeta}}_{4}\\
 & \leq O(M_{1}^{1/4}w_{\infty}(t))+O(M_{1}^{1/4}n^{1/4}).
\end{align*}
Again using $t\leq\step\leq\frac{1}{36M_{1}^{1/4}}$, we have that
$w_{\infty}(t)\leq O(\sqrt{\log n}+M_{1}^{1/4}n^{1/4}\step)$.

Combining our bounds on $w_{2}$, $w_{4}$, $w_{\infty}$ to (\ref{eq:phi_gamma_bound}),
we get
\begin{align*}
\norm{\Phi(t)\zeta}_{\gamma} & =O\left(M_{1}^{\frac{1}{2}}\sqrt{\log n}+M_{1}^{\frac{3}{4}}n^{\frac{1}{4}}\step\right)+O(\sqrt{\log n}M_{1}^{\frac{1}{4}}n^{\frac{1}{4}}+M_{1}^{\frac{3}{4}}n^{\frac{1}{4}}\step)+O\left(M_{2}n^{\frac{1}{2}}\right)\\
 & =O\left(M_{1}^{\frac{1}{2}}\sqrt{\log n}+M_{1}^{\frac{3}{4}}n^{\frac{1}{4}}\step+M_{2}n^{\frac{1}{2}}\right).
\end{align*}
\end{proof}

\subsection{\label{subsec:Stability-of-norm}Stability of $L_{2}+L_{4}+L_{\infty}$
norm ($\ell_{1}$) }
\begin{lem}
\label{lem:log_V1}Given a family of Hamiltonian curves $\gamma_{r}(t)$
on $\mathcal{M}_{L}$ with $\ell(\gamma_{0})\leq\ell_{0}$, for $\step^{2}\leq\frac{1}{\sqrt{M_{1}}+M_{2}\sqrt{n}}$,
we have that
\begin{equation}
\left|\frac{d}{dr}\ell(\gamma_{r})\right|\leq O\left(M_{1}^{1/4}\step+\frac{1}{\step\sqrt{\log n}}\right)\left(\norm{\frac{d}{dr}\gamma_{r}(0)}_{\gamma_{r}(0)}+\step\norm{D_{r}\gamma_{r}'(0)}_{\gamma_{r}(0)}\right).\label{eq:ell_1_bound}
\end{equation}
Hence, $\ell_{1}=O\left(M_{1}^{1/4}\step+\frac{1}{\step\sqrt{\log n}}\right)$.
\end{lem}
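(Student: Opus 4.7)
The plan is to bound $|\tfrac{d}{dr}\ell(\gamma_r)|$ at $r=0$ by controlling, for each $t\in[0,\delta]$ and each exponent $p\in\{2,4,\infty\}$, the $r$-derivative of $\|s_{\gamma_r'(t)}\|_p$, then dividing by the corresponding normalizing factor in the definition of $\ell$ and taking the maximum. Let $\eta(t)=\partial_r\gamma_r(t)|_{r=0}$ be the Jacobi field along $\gamma=\gamma_0$, and write $s_\eta=A_\gamma\eta$ and $S_\eta=\Diag(s_\eta)$. Applying Fact \ref{fact:calculus} in the $r$-variable (since $\partial_r$ and $\partial_t$ commute on Euclidean coordinates) and then Lemma \ref{lem:geo_equ_log} to swap $\eta'$ for the covariant derivative $D_t\eta=D_r\gamma_r'$ gives
\begin{equation*}
\frac{d}{dr}s_{\gamma_r'(t)}\bigg|_{r=0} \;=\; -S_\eta\, s_{\gamma'(t)} \;+\; P_\gamma S_{\gamma'}\, s_\eta \;+\; A_\gamma\, D_t\eta(t).
\end{equation*}

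Next, by Lemma \ref{lem:Jacobi_field} the parallel transport $\bar\eta(t)=\Gamma_t\eta(t)$ satisfies a second-order ODE driven by $\Gamma_t\Phi\Gamma_t^{-1}$ with initial data $\bar\eta(0)=\partial_r\gamma_r(0)$ and $\bar\eta'(0)=D_r\gamma_r'(0)$. By Lemma \ref{lem:total_ricci2} and the step-size restriction $\delta^2\leq 1/(\sqrt{M_1}+M_2\sqrt n)$, we have $\delta^2R_1=O(1)$, so Lemma \ref{lem:matrix_ODE_est_1} yields
\begin{equation*}
\|s_\eta(t)\|_2=\|\eta(t)\|_{\gamma(t)}\leq 2\bigl(\|\bar\eta(0)\|+\delta\|\bar\eta'(0)\|\bigr),\quad \|D_t\eta(t)\|_{\gamma(t)}\leq \tfrac{2}{\delta}\|\bar\eta(0)\|+2\|\bar\eta'(0)\|
\end{equation*}
for all $t\in[0,\delta]$. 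This immediately controls the $L_2$ contribution, and in particular the $\|A_\gamma D_t\eta\|_2$ term in the displayed formula.

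For the other two terms I need $L_4$ and $L_\infty$ bounds on $s_\eta(t)$, which are not captured by the Riemannian norm. Here I mimic the ODE approach of Lemma \ref{lem:geodesic_4}: differentiating,
\begin{equation*}
\tfrac{d}{dt}s_\eta \;=\; -S_{\gamma'}s_\eta + A_\gamma D_t\eta + P_\gamma S_{\gamma'}s_\eta,
\end{equation*}
and applying $L_p$ norms with $\|s_{\gamma'}\|_\infty=O(\sqrt{\log n}+\sqrt{M_1}\delta)$, $\|s_{\gamma'}\|_4=O(M_1^{1/4})$ from Lemma \ref{lem:gamma_est}, together with the already-established $\|D_t\eta\|_2$ bound on the source term, a Gr\"onwall-type integration (whose closing hinges on $\delta^2(\sqrt{M_1}+M_2\sqrt n)\leq 1$) yields
\begin{equation*}
\|s_\eta(t)\|_4 = O\bigl(\|s_\eta(0)\|_4 + n^{1/4}\delta\|\bar\eta'(0)\|\bigr),\quad \|s_\eta(t)\|_\infty = O\bigl(\|s_\eta(0)\|_\infty + \sqrt{\log n}\,\delta\|\bar\eta'(0)\|\bigr).
\end{equation*}

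Plugging these three $L_p$ bounds on $s_\eta$ back into the formula for $\frac{d}{dr}s_{\gamma_r'(t)}$ and dividing each of the six terms in $\ell$ by its denominator, a short arithmetic bookkeeping step shows the dominant contribution is $O\bigl(M_1^{1/4}\delta+(\delta\sqrt{\log n})^{-1}\bigr)$ multiplying $\|\partial_r\gamma_r(0)\|_{\gamma_r(0)}+\delta\|D_r\gamma_r'(0)\|_{\gamma_r(0)}$, which is the claimed bound \eqref{eq:ell_1_bound}. The main obstacle is the middle step: the Jacobi field naturally lives in the Riemannian norm, but the auxiliary function $\ell$ is built from $L_4$ and $L_\infty$ slack norms, so one must carefully propagate these non-Hilbertian norms through the variational ODE without paying a $\sqrt n$ or $\sqrt{M_1}$ factor that the step-size constraint cannot absorb.
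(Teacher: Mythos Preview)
Your overall structure is sound and the first two steps match the paper exactly: the identity
\[
\frac{d}{dr}s_{\gamma_r'(t)}\Big|_{r=0}=-S_\eta s_{\gamma'}+P_\gamma S_{\gamma'}s_\eta+A_\gamma D_t\eta
\]
is correct, and the Jacobi-field bounds $\|\eta(t)\|_{\gamma(t)}=O(\Delta)$, $\|D_t\eta(t)\|_{\gamma(t)}=O(\Delta/\delta)$ via Lemma~\ref{lem:matrix_ODE_est_1} and $\delta^2R_1=O(1)$ are exactly what the paper uses.

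Where you diverge is the third step. You propagate $L_4$ and $L_\infty$ bounds on $s_\eta$ through a separate $t$-ODE, and call this ``the main obstacle.'' The paper avoids this entirely with a one-line observation: for any $p\ge 2$,
\[
\Bigl|\tfrac{d}{dr}\|s_{\gamma_r'}\|_p\Bigr|\le \Bigl\|\tfrac{d}{dr}s_{\gamma_r'}\Bigr\|_p\le \Bigl\|\tfrac{d}{dr}s_{\gamma_r'}\Bigr\|_2,
\]
so it suffices to bound the single quantity $\|\tfrac{d}{dr}s_{\gamma_r'}\|_2$. From your displayed formula and the Jacobi bounds this is immediately $O((\sqrt{M_1}\delta+1/\delta)\Delta)$ for general $t$ (and $O(\Delta/\delta)$ at $t=0$, where $\|s_{\gamma'}\|_\infty=O(\sqrt{\log n})$). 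Dividing by the six denominators in $\ell$ and noting $M_1^{1/4}\ge\sqrt{\log n}$ gives the claimed $O(M_1^{1/4}\delta+(\delta\sqrt{\log n})^{-1})$ directly. So the ``non-Hilbertian propagation'' you flag as the crux is in fact a non-issue.

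Your route would still reach the goal, but two remarks: (i) the specific intermediate bounds you state, $\|s_\eta(t)\|_4=O(\|s_\eta(0)\|_4+n^{1/4}\delta\|\bar\eta'(0)\|)$ and the analogous $L_\infty$ bound, do not come out of the Gr\"onwall argument in that form---the source term $\|A_\gamma D_t\eta\|_p\le\|D_t\eta\|_\gamma=O(\Delta/\delta)$ does not vanish when $\bar\eta'(0)=0$, and there is no natural $n^{1/4}$ factor; the honest bound is simply $\|s_\eta(t)\|_p\le\|s_\eta(t)\|_2=O(\Delta)$, which is all you need anyway; (ii) once you have the $L_2$ bound on $\tfrac{d}{dr}s_{\gamma_r'}$, the $L_p$ estimates on $s_\eta$ itself buy nothing, since every term in your three-term decomposition is already controlled in $L_2$ by $\|s_{\gamma'}\|_\infty\|s_\eta\|_2+\|D_t\eta\|_\gamma$.
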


\begin{proof}
For brevity, all $\frac{d}{dr}$ are evaluated at $r=0$. Since $\frac{d}{dr}s_{\gamma_{r}'}=-S_{\gamma_{r},\frac{d}{dr}\gamma_{r}}s_{\gamma_{r}'}+A_{\gamma_{r}}\frac{d}{dr}\gamma_{r}'$,
we have 
\begin{align*}
\norm{\frac{d}{dr}s_{\gamma_{r}'}}_{2} & \leq\norm{s_{\gamma_{r}'}}_{\infty}\norm{A_{\gamma_{r}}\frac{d}{dr}\gamma_{r}}_{2}+\norm{A_{\gamma_{r}}\frac{d}{dr}\gamma_{r}'}_{2}.
\end{align*}
For the last term, we note that
\begin{align*}
D_{r}\gamma_{r}' & =\frac{d}{dr}\gamma_{r}'-\left(A_{\gamma_{r}}^{T}A_{\gamma_{r}}\right)^{-1}A_{\gamma_{r}}^{T}S_{\gamma_{r},\frac{d}{dr}\gamma_{r}}s_{\gamma_{r}'}.
\end{align*}
Hence, we have
\[
\norm{A_{\gamma_{r}}\frac{d}{dr}\gamma_{r}'}_{2}\leq\norm{D_{r}\gamma_{r}'}_{\gamma_{r}}+\norm{S_{\gamma_{r},\frac{d}{dr}\gamma_{r}}s_{\gamma'}}_{2}\leq\norm{D_{r}\gamma_{r}'}_{\gamma_{r}}+\norm{s_{\gamma_{r}'}}_{\infty}\norm{s_{\gamma_{r},\frac{d}{dr}\gamma_{r}}}_{2}.
\]
Therefore, we have that
\begin{align}
\norm{\frac{d}{dr}s_{\gamma_{r}'}}_{2} & \leq2\norm{s_{\gamma_{r}'}}_{\infty}\norm{\frac{d}{dr}\gamma_{r}}_{\gamma_{r}}+\norm{D_{r}\gamma'}_{\gamma}\nonumber \\
 & =O\left(\sqrt{\log n}+\sqrt{M_{1}}\step\right)\norm{\frac{d}{dr}\gamma_{r}}_{\gamma_{r}}+\norm{D_{r}\gamma_{r}'}.\label{eq:log_d_s_s_gamma}
\end{align}

Since $\gamma_{r}$ is a family of Hamiltonian curves, Lemma \ref{lem:Jacobi_field}
shows that
\[
\overline{\psi}''(t)=\Gamma_{t}\Phi(t)\Gamma_{t}^{-1}\overline{\psi}(t)
\]
where $\overline{\psi}(t)$ is the parallel transport of $\frac{d}{dr}\gamma_{r}(t)$
from $\gamma_{r}(t)$ to $\gamma_{r}(0)$. By Lemma \ref{lem:geo_equ_log},
we have that
\[
\overline{\psi}(t)=\frac{d}{dr}\gamma_{r}(0)+D_{r}\gamma_{r}'(0)t+\int_{0}^{t}(t-r)\Gamma_{r}\Phi(r)\Gamma_{r}^{-1}(\frac{d}{dr}\gamma_{r}(0)+D_{r}\gamma_{r}'(0)r+E(r))dr
\]
with $\norm{E(r)}_{F}\leq O(1)\Delta$ and $\Delta\defeq\norm{\frac{d}{dr}\gamma_{r}(0)}_{\gamma_{r}(0)}+\step\norm{D_{r}\gamma_{r}'(0)}_{\gamma_{r}(0)}$
where we used that $\norm{\Phi(t)}_{F,\gamma}=O(\sqrt{M_{1}}+M_{2}\sqrt{n})$
(Lemma \ref{lem:total_ricci2}) and that $s^{2}\leq\step^{2}\leq\frac{1}{\sqrt{M_{1}}+M_{2}\sqrt{n}}$. 

Therefore, we have that
\begin{align*}
\norm{\frac{d}{dr}\gamma_{r}(t)}_{\gamma_{r}(t)}=\norm{\overline{\psi}(t)}_{\gamma_{r}(0)} & \leq\Delta+O(\Delta)\int_{0}^{t}(t-s)\norm{\Gamma_{r}\Phi(s)\Gamma_{r}^{-1}}_{\gamma_{r}(0)}dr\\
 & \leq O(\Delta)
\end{align*}
where we used again $\norm{\Phi(s)}_{F,\gamma}=O(\sqrt{M_{1}}+M_{2}\sqrt{n})$
and $s^{2}\leq\step^{2}\leq\frac{1}{\sqrt{M_{1}}+M_{2}\sqrt{n}}$.

Similarly, we have that $\norm{D_{r}\gamma_{r}'(t)}_{\gamma_{r}(t)}=\norm{\overline{\psi}'(t)}_{\gamma_{r}(0)}\leq O(\frac{\Delta}{\step}).$

Putting these into (\ref{eq:log_d_s_s_gamma}) and using $h\leq\frac{1}{\sqrt{n}}$,
we have
\begin{align}
\norm{\frac{d}{dr}s_{\gamma_{r}'}}_{2} & =O\left(\sqrt{\log n}+\sqrt{M_{1}}\step\right)\Delta+\frac{\Delta}{\step}\nonumber \\
 & =O\left(\sqrt{M_{1}}\step+\frac{1}{\step}\right)\Delta.\label{eq:ell_1_1}
\end{align}

We write 
\[
\ell(\gamma_{r})=\max_{0\leq t\leq\step}\left(\frac{\norm{s_{\gamma_{r}'(t)}}_{2}}{n^{1/2}+M_{1}^{1/4}}+\frac{\norm{s_{\gamma_{r}'(t)}}_{4}}{M_{1}^{1/4}}+\frac{\norm{s_{\gamma_{r}'(t)}}_{\infty}}{\sqrt{\log n}+\sqrt{M_{1}}\step}+\frac{\norm{s_{\gamma_{r}'(0)}}_{2}}{n^{1/2}}+\frac{\norm{s_{\gamma_{r}'(0)}}_{4}}{n^{1/4}}+\frac{\norm{s_{\gamma_{r}'(0)}}_{\infty}}{\sqrt{\log n}}\right).
\]
According to same calculation as (\ref{eq:log_d_s_s_gamma}), we can
improve the estimate on $\norm{\frac{d}{dr}s_{\gamma_{r}'}}_{2}$
for $t=0$ and get
\begin{align}
\norm{\frac{d}{dr}s_{\gamma_{r}'}(0)}_{2} & \leq\sqrt{\log n}\norm{\frac{d}{dr}\gamma_{r}(0)}_{\gamma_{r}}+\norm{D_{r}\gamma_{r}'(0)}\nonumber \\
 & \leq\frac{\Delta}{\step}.\label{eq:ell_1_2}
\end{align}
Using (\ref{eq:ell_1_1}) and (\ref{eq:ell_1_2}), we have that
\begin{align*}
\left|\frac{d}{dr}\ell(\gamma_{r})\right| & =O\left(\max_{0\leq t\leq\step}\frac{\norm{\frac{d}{dr}s_{\gamma_{r}'}(t)}_{2}}{M_{1}^{1/4}}+\frac{\norm{\frac{d}{dr}s_{\gamma_{r}'}(0)}_{2}}{\sqrt{\log n}}\right)\\
 & =O\left(\frac{\sqrt{M_{1}}\step+\frac{1}{\step}}{M_{1}^{1/4}}+\frac{1}{\step\sqrt{\log n}}\right)\Delta\\
 & =O\left(M_{1}^{1/4}\step+\frac{1}{\step\sqrt{\log n}}\right)\Delta.
\end{align*}
\end{proof}

\subsection{Mixing Time}
\begin{lem}
\label{lem:log-params}If $f(x)=\alpha\cdot\phi(x)$ (logarithmic
barrier), then, we have that $M_{1}=n+\alpha^{2}m$, $M_{2}=\alpha$
and $M_{3}=2\alpha\cdot\sqrt{n}$.
\end{lem}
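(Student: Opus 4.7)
The plan is to verify each of the three bounds in Definition \ref{def:self_concordance} by a direct calculation, using the explicit formulas for the derivatives of $\phi$ given in Definition \ref{def:notation}. Since $f = \alpha\phi$, all derivatives of $f$ are just $\alpha$ times the corresponding derivatives of the logarithmic barrier, so the only substantive work is computing the three quantities for $\phi$ and then using standard properties of the projection matrix $P_x$, namely $P_x^2 = P_x$, $\tr(P_x) = n$, $\sigma_x \le \mathbf{1}$ coordinate-wise, and $\|P_x\|_{2\to 2} \le 1$.

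For $M_1$, I would first compute $\nabla\phi = -A_x^T \mathbf{1}$ from Definition~\ref{def:notation}(4), so that $\nabla f = -\alpha A_x^T\mathbf{1}$. Then
\[
(\nabla f)^T (A_x^T A_x)^{-1}(\nabla f) = \alpha^2 \mathbf{1}^T A_x (A_x^T A_x)^{-1} A_x^T \mathbf{1} = \alpha^2 \mathbf{1}^T P_x \mathbf{1} \le \alpha^2 \|\mathbf{1}\|_2^2 = \alpha^2 m
\]
since $P_x$ is an orthogonal projection. Combining this with the built-in lower bound $M_1 \ge n$ yields $M_1 = n + \alpha^2 m$ (in fact $\max(n,\alpha^2 m)$ suffices, and adding gives a valid choice).

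For $M_2$, the Hessian is precisely $\nabla^2 \phi = A_x^T A_x$ by Definition~\ref{def:notation}(5), so $\nabla^2 f = \alpha A_x^T A_x \preceq \alpha\, A_x^T A_x$ and $M_2 = \alpha$.

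For $M_3$, using Definition~\ref{def:notation}(6), I would observe that $\phi_{ijk} = -2\sum_\ell (A_x)_{\ell i}(A_x)_{\ell j}(A_x)_{\ell k}$, which gives the tidy identity $\nabla^3\phi(x)[v] = -2 A_x^T S_{x,v} A_x$. Therefore
\[
\tr\!\left((A_x^T A_x)^{-1}\nabla^3 f(x)[v]\right) = -2\alpha\,\tr\!\left((A_x^T A_x)^{-1} A_x^T S_{x,v} A_x\right) = -2\alpha\,\tr(P_x S_{x,v}) = -2\alpha\, \sigma_x^T s_{x,v}.
\]
By Cauchy--Schwarz, $|\sigma_x^T s_{x,v}| \le \|\sigma_x\|_2\, \|s_{x,v}\|_2$, and $\|\sigma_x\|_2^2 \le \sum_i (\sigma_x)_i = \tr(P_x) = n$ (using $0\le(\sigma_x)_i\le 1$), while $\|s_{x,v}\|_2 = \|A_x v\|_2 = \|v\|_x$. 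Hence the bound is $2\alpha\sqrt{n}\,\|v\|_x$, giving $M_3 = 2\alpha\sqrt{n}$. None of these steps involves a real obstacle; the only mild care is in recognizing the projection-matrix identities so that the bounds come out cleanly, and in remembering that the rank $n$ of $P_x$ (not the dimension $m$) controls $\|\sigma_x\|_2$.
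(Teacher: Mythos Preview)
Your proof is correct and follows essentially the same route as the paper: the same identification $\nabla\phi=-A_x^T\mathbf{1}$ and bound $\mathbf{1}^TP_x\mathbf{1}\le m$ for $M_1$, the direct observation $\nabla^2 f=\alpha A_x^TA_x$ for $M_2$, and the identity $\nabla^3\phi(x)[v]=-2A_x^TS_{x,v}A_x$ followed by Cauchy--Schwarz with $\|\sigma_x\|_2\le\sqrt{n}$ for $M_3$. If anything, your write-up is slightly more detailed than the paper's (you spell out why $\|\sigma_x\|_2^2\le n$).
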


\begin{proof}
For $M_{1}$, we note that $(\nabla f(x))^{T}\left(A_{x}^{T}A_{x}\right)^{-1}\nabla f(x)=\alpha^{2}1^{T}A_{x}(A_{x}^{T}A_{x})^{-1}A_{x}^{T}1\le\alpha^{2}m$.
Hence, $M_{1}=n+\alpha^{2}m$. 

For $M_{2}$, it directly follows from the definition.

For $M_{3}$, we note that
\begin{align*}
\tr((A_{x}^{T}A_{x})^{-1}\nabla^{3}f(x)[v]) & =-2\alpha\tr((A_{x}^{T}A_{x})^{-1}A_{x}^{T}S_{x,v}A_{x})\\
 & =-2\alpha\sum_{i}\sigma_{x,i}(s_{x,v})_{i}.
\end{align*}
Hence, we have
\[
\left|\tr((A_{x}^{T}A_{x})^{-1}\nabla^{3}f(x)[v])\right|\leq2\alpha\sqrt{\sum_{i}\sigma_{x,i}^{2}}\sqrt{\sum_{i}(s_{x,v})_{i}^{2}}\leq2\alpha\sqrt{n}\norm v_{x}.
\]
\end{proof}
Using Theorem \ref{thm:TV_diff_improved}, we have the following
\begin{lem}
\label{lem:one-step}There is a universal constant $c>0$ such that
if the step size 
\[
\step\le c\cdot\min\left(n^{-\frac{1}{3}},\alpha^{-\frac{1}{3}}m^{-\frac{1}{6}}n^{-\frac{1}{6}},\alpha^{-\frac{1}{2}}m^{-\frac{1}{4}}n^{-\frac{1}{12}}\right),
\]
then, all the $\delta$ conditions for Theorem \ref{thm:gen-convergence}
are satisfied.
\end{lem}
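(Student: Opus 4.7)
The plan is to substitute the smoothness bounds proved in Lemmas \ref{lem:total_ricci2} ($R_1$), \ref{lem:log_R2} ($R_2$), \ref{lem:log_D2} ($R_3$), \ref{lem:V0_bound} ($\ell_0$) and \ref{lem:log_V1} ($\ell_1$) into the three conditions of Theorem \ref{thm:gen-convergence}, then use Lemma \ref{lem:log-params} to replace $M_1,M_2,M_3$ by the explicit values $n+\alpha^{2}m$, $\alpha$, $2\alpha\sqrt{n}$. This reduces the statement to a purely elementary comparison of monomials in $\delta, n, m, \alpha$.

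First I would handle conditions $\delta^{2}\le 1/R_{1}$ and $\delta^{5}\le \ell_{0}/(R_{1}^{2}\ell_{1})$. With $R_{1}=O(\sqrt{n+\alpha^{2}m}+\alpha\sqrt{n})$ and $\ell_{0}=\Theta(1)$, the first condition becomes $\delta=O((n+\alpha^{2}m)^{-1/4})$ together with $\delta=O((\alpha\sqrt{n})^{-1/2})$. Under the stated bound on $\delta$, the second term in $\ell_{1}=O(M_{1}^{1/4}\delta+1/(\delta\sqrt{\log n}))$ dominates, so the second condition reduces to $\delta^{4}=O(1/R_{1}^{2})$, i.e.\ it is equivalent to the first condition up to constants. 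Both are then checked directly against the three candidate bounds on $\delta$.

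Next I turn to the substantive condition $\delta^{3}R_{2}+\delta^{2}R_{3}\le 1$. Expanding $R_{2}$ and $R_{3}$ in terms of $n,m,\alpha$ produces a handful of monomial constraints on $\delta$; the two potentially binding ones are $\delta^{3}\sqrt{n(n+\alpha^{2}m)}\le 1$ (from the first term of $R_{2}$) and $\delta^{3}(n+\alpha^{2}m)^{3/4}n^{1/4}\le 1$ (from the middle term of $R_{3}$). I would then perform a case split on which of $n$ and $\alpha^{2}m$ dominates $M_{1}$: when $\alpha^{2}m\le n$, both constraints collapse to $\delta\le n^{-1/3}$, giving the first bound in the $\min$; when $\alpha^{2}m\ge n$, the first yields $\delta\le\alpha^{-1/3}m^{-1/6}n^{-1/6}$ and the second yields $\delta\le\alpha^{-1/2}m^{-1/4}n^{-1/12}$, giving the other two bounds.

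The main obstacle is the bookkeeping of the remaining monomial terms in $\delta^{3}R_{2}+\delta^{2}R_{3}$ (the $\delta^{5}\sqrt{n}(n+\alpha^{2}m)$, $\delta^{2}(n+\alpha^{2}m)^{1/4}$, $\delta^{2}\sqrt{n\log n}$, $\delta^{3}\alpha\sqrt{n}$, $\delta^{2}(n+\alpha^{2}m)^{1/2}\sqrt{\log n}$ and $\delta^{2}\alpha\sqrt{n}$ contributions): I need to show each one is dominated, in the prescribed $\delta$ regime and up to a logarithmic factor absorbed into the constant $c$, by one of the three binding constraints identified above. This requires no cleverness, only a careful case analysis mirroring the one for the binding terms; once those are dispatched, the statement follows by collecting the three resulting upper bounds on $\delta$ into a single $\min$.
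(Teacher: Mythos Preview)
Your proposal is correct and follows essentially the same route as the paper: substitute the bounds from Lemmas \ref{lem:V0_bound}, \ref{lem:log_V1}, \ref{lem:total_ricci2}, \ref{lem:log_R2}, \ref{lem:log_D2} and \ref{lem:log-params} and then verify the three $\delta$-conditions of Theorem \ref{thm:gen-convergence} by elementary monomial comparison, arriving at the same three binding terms $n\delta^{3}$, $\alpha\sqrt{nm}\,\delta^{3}$ and $\alpha^{3/2}m^{3/4}n^{1/4}\delta^{3}$; the paper merely organizes the bookkeeping a bit differently, first using the consequences $\delta\lesssim n^{-1/3}$, $\delta\lesssim\alpha^{-1/2}m^{-1/4}$, $\delta\lesssim\alpha^{-1/2}n^{-1/3}$ to simplify the $R_i$ rather than case-splitting on which of $n,\alpha^{2}m$ dominates $M_1$. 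One small correction: you cannot literally absorb $\sqrt{\log n}$ into a universal constant $c$; what actually happens is that every log-carrying term (e.g.\ $\sqrt{n\log n}\,\delta^{2}$ or $\alpha\sqrt{m\log n}\,\delta^{2}$) already carries an extra $n^{-1/6}$ of polynomial slack under the stated bound and is therefore $o(1)$, so a fixed $c$ suffices.
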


\begin{proof}
In the previous section, we proved that if $\step\leq\frac{1}{36M_{1}^{1/4}}$
and $n$ is large enough,

\begin{enumerate}
\item $\ell_{0}=256$ (Lemma \ref{lem:V0_bound})
\item $\ell_{1}=O\left(M_{1}^{1/4}\step+\frac{1}{\step\sqrt{\log n}}\right)$
(Lemma \ref{lem:log_V1})
\item $R_{1}=O(\sqrt{M_{1}}+M_{2}\sqrt{n})$ (Lemma \ref{lem:total_ricci2})
\item $R_{2}=O(\sqrt{nM_{1}}+\sqrt{n}M_{1}\step^{2}+\frac{M_{1}^{1/4}}{\step}+\frac{\sqrt{n\log n}}{\step}+\sqrt{n}M_{2}+M_{3})$
(Lemma \ref{lem:log_R2})
\item $R_{3}=O(M_{1}^{\frac{1}{2}}\sqrt{\log n}+M_{1}^{\frac{3}{4}}n^{\frac{1}{4}}\step+M_{2}n^{\frac{1}{2}})$
(Lemma \ref{lem:log_D2})
\end{enumerate}
Substituting the value of $M_{1}$, $M_{2}$ and $M_{3}$ and using
that $\delta\lesssim n^{-\frac{1}{3}}$, $\delta\lesssim\alpha^{-\frac{1}{2}}n^{-\frac{1}{3}}$
and $\delta\lesssim\alpha^{-\frac{1}{2}}m^{-\frac{1}{4}}$, we have
that
\begin{enumerate}
\item $\ell_{1}=O\left(\alpha^{\frac{1}{2}}m^{\frac{1}{4}}\delta+\frac{1}{\step\sqrt{\log n}}\right)$
\item $R_{1}=O(\sqrt{n}+\alpha\sqrt{m})$
\item $R_{2}=O(n+\alpha\sqrt{nm}+\frac{\sqrt{\alpha}m^{\frac{1}{4}}+\sqrt{n\log n}}{\step})$
\item $R_{3}=O(\sqrt{n\log n}+\alpha\sqrt{m\log n}+n\delta+\alpha^{\frac{3}{2}}m^{\frac{3}{4}}n^{\frac{1}{4}}\delta)$
\end{enumerate}
Now, we verify all the $\delta$ conditions for Theorem \ref{thm:gen-convergence}.

Using $\delta\lesssim n^{-\frac{1}{3}}$ and $\delta\lesssim\alpha^{-\frac{1}{2}}m^{-\frac{1}{4}}$,
we have that $\step^{2}\lesssim\frac{1}{R_{1}}$.

Using $\delta\lesssim n^{-\frac{1}{3}}$ and $\delta\lesssim\alpha^{-\frac{1}{2}}m^{-\frac{1}{4}}$,
we have that
\begin{align*}
\delta^{5}R_{1}^{2}\ell_{1} & \lesssim\delta^{5}(n+\alpha^{2}m)\cdot(\alpha^{\frac{1}{2}}m^{\frac{1}{4}}\delta+\frac{1}{\step\sqrt{\log n}})\leq\ell_{0}.
\end{align*}

For the last condition, we note that 
\begin{align*}
\step^{3}R_{2}+\step^{2}R_{3}\lesssim & n\delta^{3}+\alpha\sqrt{nm}\delta^{3}+\sqrt{\alpha}m^{\frac{1}{4}}\delta^{2}+\sqrt{n\log n}\delta^{2}\\
 & +\sqrt{n\log n}\delta^{2}+\alpha\sqrt{m\log n}\delta^{2}+n\delta^{3}+\alpha^{\frac{3}{2}}m^{\frac{3}{4}}n^{\frac{1}{4}}\delta^{3}\\
\lesssim & \sqrt{\alpha}m^{\frac{1}{4}}\delta^{2}+\sqrt{n\log n}\delta^{2}+\alpha\sqrt{m\log n}\delta^{2}+n\delta^{3}+\alpha\sqrt{nm}\delta^{3}+\alpha^{\frac{3}{2}}m^{\frac{3}{4}}n^{\frac{1}{4}}\delta^{3}\\
\lesssim & \sqrt{n\log n}\delta^{2}+\alpha\sqrt{m\log n}\delta^{2}+n\delta^{3}+\alpha\sqrt{nm}\delta^{3}+\alpha^{\frac{3}{2}}m^{\frac{3}{4}}n^{\frac{1}{4}}\delta^{3}
\end{align*}
where we used that $\sqrt{\alpha}m^{\frac{1}{4}}\lesssim(1+\alpha\sqrt{m})$
at the end

Therefore, if 
\[
\delta\leq c\cdot\min\left(n^{-\frac{1}{3}},\alpha^{-\frac{1}{3}}m^{-\frac{1}{6}}n^{-\frac{1}{6}},\alpha^{-\frac{1}{2}}m^{-\frac{1}{4}}n^{-\frac{1}{12}}\right)
\]
 for small enough constant, then all the $\delta$ conditions for
Theorem \ref{thm:gen-convergence} are satisfied.
\end{proof}

\begin{acknowledgement*}
We thank Ben Cousins for helpful discussions. This work was supported
in part by NSF awards CCF-1563838, CCF-1717349 and CCF-1740551.
\end{acknowledgement*}
\bibliographystyle{plain}
\bibliography{acg}

\appendix

\section{\label{sec:Matrix-ODE}Matrix ODE}

In this section, we prove Lemmas (\ref{lem:ODE_upper}) and (\ref{lem:matrix_ODE_est_1})
for the solution of the ODE (\ref{eq:matrix_ODE}), restated below
for convenience.
\begin{align*}
\frac{d^{2}}{dt^{2}}\Psi(t) & =\Phi(t)\Psi(t),\\
\frac{d}{dt}\Psi(0) & =B,\\
\Psi(0) & =A.
\end{align*}
\begin{lem}
Consider the matrix ODE (\ref{eq:matrix_ODE}). Let $\lambda=\max_{0\leq t\leq\ell}\norm{\Phi(t)}_{2}$
. For any $t\geq0$, we have that
\[
\norm{\Psi(t)}_{2}\leq\norm A_{2}\cosh(\sqrt{\lambda}t)+\frac{\norm B_{2}}{\sqrt{\lambda}}\sinh(\sqrt{\lambda}t).
\]
\end{lem}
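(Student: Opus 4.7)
The plan is to reduce the second-order matrix ODE to a scalar comparison problem and then apply a standard Gronwall-type majorization argument. First I would integrate the ODE twice to obtain the equivalent integral equation
\[
\Psi(t) = A + Bt + \int_0^t (t-s)\,\Phi(s)\,\Psi(s)\,ds.
\]
Taking operator norms, using the triangle inequality, submultiplicativity ($\|\Phi(s)\Psi(s)\|_2 \le \|\Phi(s)\|_2 \|\Psi(s)\|_2$), and the hypothesis $\|\Phi(s)\|_2 \le \lambda$ for $s\in[0,\ell]$, I get the scalar inequality
\[
u(t) \;\le\; \|A\|_2 + \|B\|_2\,t + \lambda\int_0^t (t-s)\,u(s)\,ds
\qquad\text{where}\qquad u(t) := \|\Psi(t)\|_2.
\]

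Next I would introduce the candidate majorant $w(t) := \|A\|_2 \cosh(\sqrt{\lambda}\,t) + (\|B\|_2/\sqrt{\lambda})\sinh(\sqrt{\lambda}\,t)$. This is the unique solution of the scalar initial value problem $w''(t) = \lambda w(t)$ with $w(0)=\|A\|_2$, $w'(0)=\|B\|_2$, so integrating twice it satisfies the matching equality
\[
w(t) = \|A\|_2 + \|B\|_2\,t + \lambda\int_0^t (t-s)\,w(s)\,ds.
\]
The goal is then $u(t)\le w(t)$ for all $t\in[0,\ell]$.

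The comparison step is the main (though standard) obstacle, since a naive subtraction leaves a nonnegative kernel integral that has to be controlled. The cleanest route I would take is a small-perturbation argument: for $\varepsilon>0$ let $w_\varepsilon$ solve $w_\varepsilon''=\lambda w_\varepsilon+\varepsilon$ with the same initial data, so $w_\varepsilon>w$ strictly and $w_\varepsilon$ satisfies the analogous integral equation with a strict inequality against $u$'s bound at $t=0$. Suppose, for contradiction, there were a first time $t_\star\in(0,\ell]$ where $u(t_\star)=w_\varepsilon(t_\star)$; then plugging into the two integral (in)equalities and subtracting gives
\[
0 \;\le\; u(t_\star) - w_\varepsilon(t_\star) \;\le\; \lambda\int_0^{t_\star}(t_\star-s)\,(u(s)-w_\varepsilon(s))\,ds - \varepsilon\cdot\tfrac{t_\star^2}{2} \;<\;0,
\]
since $u(s)\le w_\varepsilon(s)$ on $[0,t_\star]$ by definition of $t_\star$. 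This contradiction yields $u<w_\varepsilon$ throughout $[0,\ell]$, and letting $\varepsilon\downarrow 0$ gives $u(t)\le w(t)$, which is exactly the claimed bound. (An equivalent route is Picard iteration from $\Psi_0\equiv 0$: an easy induction shows $\|\Psi_n(t)\|_2$ is bounded by the $n$-th partial sum of the Taylor series of $w(t)$, and passing to the limit recovers the same estimate.)
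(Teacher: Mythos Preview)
Your proposal is correct and follows essentially the same approach as the paper: both derive the integral equation $\Psi(t)=A+Bt+\int_0^t(t-s)\Phi(s)\Psi(s)\,ds$, pass to the scalar inequality for $\|\Psi(t)\|_2$, and compare with the exact solution $\overline a(t)=\|A\|_2\cosh(\sqrt{\lambda}t)+(\|B\|_2/\sqrt{\lambda})\sinh(\sqrt{\lambda}t)$ of the corresponding integral equation. The only difference is that the paper dispatches the comparison step in one sentence (``by induction''), whereas you spell out both the $\varepsilon$-perturbation/first-crossing argument and the Picard-iteration alternative; your treatment of this step is in fact more careful than the paper's.
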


\begin{proof}
Note that
\begin{eqnarray}
\Psi(t) & = & \Psi(0)+t\Psi'(0)+\int_{0}^{t}(t-s)\Psi''(s)ds\nonumber \\
 & = & A+tB+\int_{0}^{t}(t-s)\Phi(s)\Psi(s)ds.\label{eq:second_taylor_psi}
\end{eqnarray}
Let $a(t)=\norm{\Psi(t)}_{2}$, then we have that
\[
a(t)\leq\norm A_{2}+t\norm B_{2}+\lambda\int_{0}^{t}(t-s)a(s)ds.
\]
Let $\overline{a}(t)$ be the solution of the integral equation
\[
\overline{a}(t)=\norm A_{2}+t\norm B_{2}+\lambda\int_{0}^{t}(t-s)\overline{a}(s)ds.
\]
By induction, we have that $a(t)\leq\overline{a}(t)$ for all $t\geq0$.
By taking derivatives on both sides, we have that
\[
\overline{a}''(t)=\lambda\overline{a}(t),\ \overline{a}(0)=\norm A_{2},\ \overline{a}'(0)=\norm B_{2}.
\]
Solving these equations, we have
\[
\norm{\Psi(t)}_{2}=a(t)\leq\overline{a}(t)=\norm A_{2}\cosh(\sqrt{\lambda}t)+\frac{\norm B_{2}}{\sqrt{\lambda}}\sinh(\sqrt{\lambda}t)
\]
for all $t\geq0$.
\end{proof}
\begin{lem}
Consider the matrix ODE (\ref{eq:matrix_ODE}). Let $\lambda=\max_{0\leq t\leq\ell}\norm{\Phi(t)}_{F}$.
For any $0\leq t\leq\frac{1}{\sqrt{\lambda}}$, we have that
\[
\norm{\Psi(t)-A-Bt}_{F}\leq\lambda\left(t^{2}\norm A_{2}+\frac{t^{3}}{5}\norm B_{2}\right).
\]
In particular, this shows that
\[
\Psi(t)=A+Bt+\int_{0}^{t}(t-s)\Phi(s)(A+Bs+E(s))ds
\]
with $\norm{E(s)}_{F}\leq\lambda\left(s^{2}\norm A_{2}+\frac{s^{3}}{5}\norm B_{2}\right)$. 
\end{lem}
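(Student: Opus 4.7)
The plan is to establish the bound by working with the integral reformulation of the ODE and then invoking Lemma \ref{lem:ODE_upper} (the preceding operator-norm bound on $\Psi$) to control the integrand. Twice-integrating the identity $\Psi''(t) = \Phi(t)\Psi(t)$ against the initial conditions gives
\[
\Psi(t) = A + Bt + \int_0^t (t-s)\,\Phi(s)\,\Psi(s)\,ds,
\]
which is exactly the representation appearing in the ``In particular'' clause once we define $E(s) \defeq \Psi(s) - A - Bs$. Thus it suffices to bound $\norm{E(s)}_F$.

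Passing to Frobenius norms and using $\norm{\Phi(s)\Psi(s)}_F \leq \norm{\Phi(s)}_F \norm{\Psi(s)}_2 \leq \lambda \norm{\Psi(s)}_2$, I would write
\[
\norm{\Psi(t) - A - Bt}_F \leq \lambda \int_0^t (t-s) \norm{\Psi(s)}_2\,ds
\]
and then plug in the bound from Lemma \ref{lem:ODE_upper}, $\norm{\Psi(s)}_2 \leq \norm{A}_2 \cosh(\sqrt{\lambda} s) + \tfrac{\norm{B}_2}{\sqrt{\lambda}}\sinh(\sqrt{\lambda}s)$. The two resulting one-dimensional integrals evaluate in closed form via integration by parts:
\[
\int_0^t (t-s)\cosh(\sqrt{\lambda}s)\,ds = \frac{\cosh(\sqrt{\lambda}t) - 1}{\lambda}, \qquad \int_0^t (t-s)\frac{\sinh(\sqrt{\lambda}s)}{\sqrt{\lambda}}\,ds = \frac{\sinh(\sqrt{\lambda}t) - \sqrt{\lambda}t}{\lambda^{3/2}}.
\]
This yields
\[
\norm{\Psi(t)-A-Bt}_F \leq \norm{A}_2 \bigl(\cosh(\sqrt{\lambda}t)-1\bigr) + \frac{\norm{B}_2}{\sqrt{\lambda}}\bigl(\sinh(\sqrt{\lambda}t)-\sqrt{\lambda}t\bigr).
\]

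The final step is to compare the hyperbolic remainders against their leading Taylor terms on the range $x = \sqrt{\lambda}t \in [0,1]$. Setting $x = \sqrt{\lambda}t$, the desired bound reduces to the two scalar inequalities $\cosh(x)-1 \leq x^2$ and $\sinh(x) - x \leq x^3/5$ for $x \in [0,1]$. Both follow from the power series expansions: $\cosh(x)-1 = \tfrac{x^2}{2} + \tfrac{x^4}{24} + \cdots \leq \tfrac{x^2}{2}\cdot\tfrac{1}{1-x^2/12}$, which is at most $6x^2/11 < x^2$ for $x \le 1$; and $\sinh(x) - x = \tfrac{x^3}{6} + \tfrac{x^5}{120} + \cdots \leq \tfrac{x^3}{6}\cdot\tfrac{1}{1-x^2/20}$, which is at most $10 x^3/57 < x^3/5$ for $x \le 1$.

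There is no real obstacle here: the argument is a standard Gronwall-style comparison, with the only care being to keep the mixed-norm bookkeeping straight (Frobenius norm for $\Phi$ and the error, operator norm for $\Psi$ inside the integral), and to extract the constant $\tfrac{1}{5}$ from the Taylor estimate for $\sinh(x)-x$, which has a bit of slack at $x=1$ since $\sinh(1)-1 \approx 0.175 < 0.2$.
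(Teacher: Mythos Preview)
Your proposal is correct and follows essentially the same approach as the paper: both use the integral representation of the ODE, bound $\norm{\Phi(s)\Psi(s)}_F \le \lambda\norm{\Psi(s)}_2$, invoke Lemma~\ref{lem:ODE_upper} for $\norm{\Psi(s)}_2$, evaluate the resulting hyperbolic integrals in closed form, and finish with the Taylor estimates $\cosh x - 1 \le x^2$ and $\sinh x - x \le x^3/5$ on $[0,1]$. Your write-up is in fact slightly more explicit about the integral evaluations and the Taylor constants than the paper's own proof.
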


\begin{proof}
Recall from (\ref{eq:second_taylor_psi}) that
\begin{align}
\Psi(t) & =A+tB+\int_{0}^{t}(t-s)\Phi(s)\Psi(s)ds.\label{eq:psi_2nd}
\end{align}
Let $E(t)=\Psi(t)-(A+tB)$. Using Lemma \ref{lem:ODE_upper}, we have
that
\begin{align*}
\norm{E(t)}_{F} & =\norm{\int_{0}^{t}(t-s)\Phi(s)\Psi(s)ds}_{F}\\
 & \leq\lambda\int_{0}^{t}(t-s)\norm{\Psi(s)}_{2}ds\\
 & \leq\lambda\int_{0}^{t}(t-s)\norm A_{2}\cosh(\sqrt{\lambda}s)+\frac{\norm B_{2}}{\sqrt{\lambda}}\sinh(\sqrt{\lambda}s)ds\\
 & =\lambda\left(\norm A_{2}(\cosh(\sqrt{\lambda}t)-1)+\frac{\norm B_{2}}{\sqrt{\lambda}}(\sinh(\sqrt{\lambda}t)-\sqrt{\lambda}t)\right).
\end{align*}
Since $0\leq t\leq\frac{1}{\sqrt{\lambda}}$, we have that $\left|\cosh(\sqrt{\lambda}t)-1\right|\leq\lambda t^{2}$
and $\left|\sinh(\sqrt{\lambda}t)-\sqrt{\lambda}t\right|\leq\frac{\lambda^{3/2}t^{3}}{5}$.
This gives the result.

The last equality follows again from (\ref{eq:psi_2nd})
\end{proof}
Next, we have an elementary lemma about the determinant.
\begin{lem}
\label{lem:logdet_est}Suppose that $E$ is a matrix (not necessarily
symmetric) with $\norm E_{2}\leq\frac{1}{4}$, we have
\[
\left|\log\det(I+E)-\tr E\right|\leq\norm E_{F}^{2}.
\]
\end{lem}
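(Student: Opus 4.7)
\textbf{Proof proposal for Lemma \ref{lem:logdet_est}.} The plan is to use the standard identity $\log\det(I+E) = \tr\log(I+E)$, expand the matrix logarithm as a power series, and then bound each trace term using a mixture of the Frobenius and operator norms.

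First, since $\norm{E}_2 \le 1/4 < 1$, the series $\log(I+E) = \sum_{k=1}^\infty \frac{(-1)^{k+1}}{k} E^k$ converges in operator norm, and one has
\[
\log\det(I+E) = \tr\log(I+E) = \sum_{k=1}^\infty \frac{(-1)^{k+1}}{k}\tr(E^k).
\]
(For non-symmetric $E$, this follows from the Jordan form or from differentiating $\log\det(I+tE)$ along a path, giving $\frac{d}{dt}\log\det(I+tE) = \tr((I+tE)^{-1}E)$ and integrating.) Subtracting the $k=1$ term, the quantity of interest is
\[
\log\det(I+E) - \tr E = \sum_{k=2}^\infty \frac{(-1)^{k+1}}{k}\tr(E^k).
\]

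Next, I bound each trace by $\norm{E}_F^2$ times a power of the operator norm. For $k\ge 2$, write $\tr(E^k) = \tr(E\cdot E^{k-1})$ and apply the Cauchy–Schwarz inequality $|\tr(AB)|\le \norm{A}_F\norm{B}_F$ together with the submultiplicativity $\norm{AB}_F\le \norm{A}_2\norm{B}_F$. Iterating the latter gives $\norm{E^{k-1}}_F \le \norm{E}_2^{k-2}\norm{E}_F$, so
\[
|\tr(E^k)| \;\le\; \norm{E}_F\cdot\norm{E^{k-1}}_F \;\le\; \norm{E}_F^2\,\norm{E}_2^{k-2}.
\]

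Combining, and using $\norm{E}_2 \le 1/4$,
\[
\bigl|\log\det(I+E) - \tr E\bigr| \;\le\; \norm{E}_F^2 \sum_{k=2}^\infty \frac{\norm{E}_2^{k-2}}{k} \;\le\; \norm{E}_F^2 \sum_{k=2}^\infty \frac{(1/4)^{k-2}}{k}.
\]
The residual series is a convergent constant: $\sum_{k\ge 2}\frac{x^{k-2}}{k} = \frac{-\log(1-x)-x}{x^2}$, which at $x=1/4$ evaluates to roughly $0.60 < 1$. Hence the overall bound is at most $\norm{E}_F^2$, completing the proof.

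There is no real obstacle here; the only mild care is in justifying the $\log\det = \tr\log$ identity for non-symmetric $E$ (hence the remark about the $(I+tE)^{-1}E$ integration trick) and in choosing the $(E)(E^{k-1})$ split rather than, say, $(E^2)(E^{k-2})$, so that the one Frobenius factor is exactly $\norm{E}_F$ and the remaining factor can be estimated in operator norm.
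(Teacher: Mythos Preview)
Your proof is correct. The paper takes a slightly different route: it sets $f(t)=\log\det(I+tE)$, computes $f'(t)=\tr((I+tE)^{-1}E)$ and $f''(t)=-\tr((I+tE)^{-1}E(I+tE)^{-1}E)$ via Jacobi's formula, bounds $|f''(t)|\le \|(I+tE)^{-1}\|_2^2\,\|E\|_F^2 \le (4/3)^2\|E\|_F^2 < 2\|E\|_F^2$ using Cauchy--Schwarz, and then reads off the conclusion from the Taylor remainder $f(1)=\tr E+\int_0^1(1-s)f''(s)\,ds$. Your power-series argument is essentially the same content unrolled term by term: instead of packaging all $k\ge 2$ contributions into a single second-derivative integral, you bound each $\tr(E^k)$ individually and sum the resulting geometric-type series. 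Your approach is a bit more elementary (no need to differentiate the determinant), while the paper's avoids the explicit series summation and the numerical check at the end; neither has any real advantage over the other here.
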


\begin{proof}
Let $f(t)=\log\det(I+tE)$. Then, by Jacobi's formula, we have
\begin{eqnarray*}
f'(t) & = & \tr\left((I+tE)^{-1}E\right),\\
f''(t) & = & -\tr((I+tE)^{-1}E(I+tE)^{-1}E).
\end{eqnarray*}
Since $\norm E_{2}\leq\frac{1}{4}$, we have that $\norm{(I+tE)^{-1}}_{2}\leq\frac{4}{3}$
and hence
\begin{eqnarray*}
\left|f''(t)\right| & = & \left|\tr((I+tE)^{-1}E(I+tE)^{-1}E)\right|\\
 & \leq & \left|\tr(E^{T}\left((I+tE)^{-1}\right)^{T}(I+tE)^{-1}E)\right|\\
 & \leq & 2\left|\tr(E^{T}E)\right|=2\norm E_{F}^{2}.
\end{eqnarray*}
The result follows from 
\begin{eqnarray*}
f(1) & = & f(0)+f'(0)+\int_{0}^{1}(1-s)f''(s)ds\\
 & = & \tr(E)+\int_{0}^{1}(1-s)f''(s)ds.
\end{eqnarray*}
\end{proof}

\section{Concentration}
\begin{lem}[{\cite[Ver 3, Lemma 90]{LeeV16}}]
\label{lem:norm_random_Ax}For $p\geq1$, we have 
\[
P_{x\sim N(0,I)}\left(\norm{Ax}_{p}^{p}\leq\left(\left(\frac{2^{p/2}\Gamma(\frac{p+1}{2})}{\sqrt{\pi}}\sum_{i}\norm{a_{i}}_{2}^{p}\right)^{1/p}+\norm A_{2\rightarrow p}t\right)^{p}\right)\leq1-\exp\left(-\frac{t^{2}}{2}\right).
\]
In particular, we have
\[
P_{x\sim N(0,I)}\left(\norm{Ax}_{4}^{4}\leq\left(\left(3\sum_{i}\norm{a_{i}}_{2}^{4}\right)^{1/4}+\norm A_{2\rightarrow4}t\right)^{4}\right)\leq1-\exp\left(-\frac{t^{2}}{2}\right)
\]
and
\[
P_{x\sim N(0,I)}\left(\norm{Ax}_{2}^{2}\leq\left(\left(\sum_{i}\norm{a_{i}}_{2}^{2}\right)^{1/2}+\norm A_{2\rightarrow2}t\right)^{2}\right)\leq1-\exp\left(-\frac{t^{2}}{2}\right).
\]
\end{lem}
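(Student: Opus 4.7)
The plan is to prove this by combining Gaussian Lipschitz concentration with a moment computation. First, I read the inequality as $\ge 1 - \exp(-t^2/2)$ (i.e.\ the standard high-probability upper tail); this is the natural direction, and the two specialized cases at the end of the statement only give nontrivial information under that reading.

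The first step is to observe that the function $f(x) = \norm{Ax}_p$ is $\norm{A}_{2\to p}$-Lipschitz with respect to the Euclidean norm on $\R^n$. Indeed, by the reverse triangle inequality and the definition of the operator norm,
\[
\bigl|\norm{Ax}_p - \norm{Ay}_p\bigr| \le \norm{A(x-y)}_p \le \norm{A}_{2\to p}\,\norm{x-y}_2.
\]
Then I would invoke the standard Gaussian concentration inequality for Lipschitz functions (Borell--TIS): if $x\sim N(0,I)$ and $f$ is $L$-Lipschitz, then for every $t\ge 0$,
\[
\P\bigl(f(x) \ge \E f(x) + L t\bigr) \le \exp\!\bigl(-t^2/2\bigr).
\]
Applied to our $f$, this yields with probability at least $1-\exp(-t^2/2)$ that $\norm{Ax}_p \le \E\norm{Ax}_p + \norm{A}_{2\to p}\,t$.

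The second step is to upper-bound $\E\norm{Ax}_p$ by the quantity appearing in the statement. By Jensen's inequality applied to the concave map $u\mapsto u^{1/p}$,
\[
\E\norm{Ax}_p \le \bigl(\E\norm{Ax}_p^p\bigr)^{1/p} = \Bigl(\textstyle\sum_i \E\bigl|\langle a_i,x\rangle\bigr|^p\Bigr)^{1/p}.
\]
Since $\langle a_i,x\rangle \sim N(0,\norm{a_i}_2^2)$, we have $\E|\langle a_i,x\rangle|^p = \norm{a_i}_2^p\,\E|Z|^p$ for $Z\sim N(0,1)$, and the standard Gaussian absolute moment evaluates to $\E|Z|^p = \frac{2^{p/2}\Gamma((p+1)/2)}{\sqrt{\pi}}$. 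Combining,
\[
\E\norm{Ax}_p \le \Bigl(\tfrac{2^{p/2}\Gamma((p+1)/2)}{\sqrt{\pi}}\textstyle\sum_i\norm{a_i}_2^p\Bigr)^{1/p}.
\]
Plugging this into the concentration bound from Step~1 and raising the resulting inequality to the $p$-th power gives the main claim.

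Finally, the two displayed specializations are immediate: for $p=4$, $\E|Z|^4 = 3$ gives the constant $3$; for $p=2$, $\E|Z|^2 = 1$ gives the constant $1$. No steps here are particularly delicate---the only mild pitfall is correctly identifying the Lipschitz constant as $\norm{A}_{2\to p}$ (not, e.g., the spectral norm), which the reverse triangle inequality argument pins down cleanly. Everything else is bookkeeping with Gaussian moments.
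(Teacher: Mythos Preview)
Your proof is correct. The paper does not prove this lemma at all---it is merely cited from \cite{LeeV16} without proof---so there is no paper argument to compare against; your approach (Gaussian Lipschitz concentration applied to $x\mapsto\norm{Ax}_p$, combined with Jensen and the absolute Gaussian moment formula) is the standard one and is exactly what one would expect the cited proof to be. Your reading of the inequality direction as $\ge 1-\exp(-t^2/2)$ is also correct; the $\le$ in the paper's statement is a typo, as the lemma is used in Lemma~\ref{lem:V0_bound} precisely as an upper tail bound.
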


\section{\label{sec:Calculus}Calculus}
\begin{proof}[Proof of Fact \ref{fact:calculus}]
 Recall Definition \ref{def:notation} and write

\begin{align*}
\frac{dA_{\gamma}}{dt} & =\frac{dS_{\gamma}^{-1}}{dt}A\\
 & =-S_{\gamma}^{-1}\frac{dS_{\gamma}}{dt}S_{\gamma}^{-1}A\\
 & =-S_{\gamma}^{-1}\Diag\left(\frac{d(A\gamma-b)}{dt}\right)A_{\gamma}\\
 & =-\Diag\left(S_{\gamma}^{-1}A\gamma'\right)A_{\gamma}\\
 & =-\Diag(A_{\gamma}\gamma')A_{\gamma}=-S_{\gamma'}A_{\gamma}.
\end{align*}

For the second, using the first,
\begin{align*}
\frac{dP_{\gamma}}{dt} & =\frac{dA_{\gamma}(A_{\gamma}^{T}A_{\gamma})^{-1}A_{\gamma}^{T}}{dt}\\
 & =\frac{dA_{\gamma}}{dt}(A_{\gamma}^{T}A_{\gamma})^{-1}A_{\gamma}^{T}+A_{\gamma}(A_{\gamma}^{T}A_{\gamma})^{-1}\frac{dA_{\gamma}}{dt}+A_{\gamma}\frac{d(A_{\gamma}^{T}A_{\gamma})^{-1}}{dt}A_{\gamma}^{T}\\
 & =-S_{\gamma'}P_{\gamma}-P_{\gamma}S_{\gamma'}-A_{\gamma}(A_{\gamma}^{T}A_{\gamma})^{-1}\frac{d(A_{\gamma}^{T}A_{\gamma})}{dt}(A_{\gamma}^{T}A_{\gamma})^{-1}A_{\gamma}^{T}\\
 & =-S_{\gamma'}P_{\gamma}-P_{\gamma}S_{\gamma'}+2A_{\gamma}(A_{\gamma}^{T}A_{\gamma})^{-1}\left(A_{\gamma}^{T}S_{\gamma'}A_{\gamma}\right)(A_{\gamma}^{T}A_{\gamma})^{-1}A_{\gamma}^{T}\\
 & =-S_{\gamma'}P_{\gamma}-P_{\gamma}S_{\gamma'}+2P_{\gamma}S_{\gamma'}P_{\gamma}.
\end{align*}

And for the last, 
\begin{align*}
\frac{dS_{\gamma'}}{dt} & =\Diag\left(\frac{dA_{\gamma}}{dt}\gamma'+A_{\gamma}\gamma''\right)\\
 & =\Diag(-S_{\gamma'}A_{\gamma}\gamma'+A_{\gamma}\gamma'')=-S_{\gamma'}^{2}+S_{\gamma''}.
\end{align*}
\end{proof}

\section{Basic definitions of Riemannian geometry \label{sec:RG}}

Here we recall basic notions of Riemannian geometry. One can think
of a manifold $M$ as a $n$-dimensional ``surface'' in $\R^{k}$
for some $k\geq n$. 
\begin{enumerate}
\item Tangent space $T_{p}M$: For any point $p$, the tangent space $T_{p}M$
of $M$ at point $p$ is a linear subspace of $\R^{k}$ of dimension
$n$. Intuitively, $T_{p}M$ is the vector space of possible directions
that are tangential to the manifold at $x$. Equivalently, it can
be thought as the first-order linear approximation of the manifold
$M$ at $p$. For any curve $c$ on $M$, the direction $\frac{d}{dt}c(t)$
is tangent to $M$ and hence lies in $T_{c(t)}M$. When it is clear
from context, we define $c'(t)=\frac{dc}{dt}(t)$. For any open subset
$M$ of $\Rn$, we can identify $T_{p}M$ with $\Rn$ because all
directions can be realized by derivatives of some curves in $\Rn$.
\item Riemannian metric: For any $v,u\in T_{p}M$, the inner product (Riemannian
metric) at $p$ is given by $\left\langle v,u\right\rangle _{p}$
and this allows us to define the norm of a vector $\norm v_{p}=\sqrt{\left\langle v,v\right\rangle _{p}}$.
We call a manifold a Riemannian manifold if it is equipped with a
Riemannian metric. When it is clear from context, we define $\left\langle v,u\right\rangle =\left\langle v,u\right\rangle _{p}$.
In $\Rn$ , $\left\langle v,u\right\rangle _{p}$ is the usual $\ell_{2}$
inner product.
\item Differential (Pushforward) $d$: Given a function $f$ from a manifold
$M$ to a manifold $N$, we define $df(x)$ as the linear map from
$T_{x}M$ to $T_{f(x)}N$ such that
\[
df(x)(c'(0))=(f\circ c)'(0)
\]
for any curve $c$ on $M$ starting at $x=c(0)$. When $M$ and $N$
are Euclidean spaces, $df(x)$ is the Jacobian of $f$ at $x$. We
can think of pushforward as a manifold Jacobian, i.e., the first-order
approximation of a map from a manifold to a manifold.
\item Hessian manifold: We call $M$ a Hessian manifold (induced by $\phi$)
if $M$ is an open subset of $\Rn$ with the Riemannian metric at
any point $p\in M$ defined by
\[
\left\langle v,u\right\rangle _{p}=v^{T}\nabla^{2}\phi(p)u
\]
where $v,u\in T_{p}M$ and $\phi$ is a smooth convex function on
$M$.
\item Length: For any curve $c:[0,1]\rightarrow M$, we define its length
by
\[
L(c)=\int_{0}^{1}\norm{\frac{d}{dt}c(t)}_{c(t)}dt.
\]
\item Distance: For any $x,y\in M$, we define $d(x,y)$ be the infimum
of the lengths of all paths connecting $x$ and $y$. In $\Rn$ ,
$d(x,y)=\norm{x-y}_{2}$.
\item Geodesic: We call a curve $\gamma(t):[a,b]\rightarrow M$ a geodesic
if it satisfies both of the following conditions:

\begin{enumerate}
\item The curve $\gamma(t)$ is parameterized with constant speed. Namely,
$\norm{\frac{d}{dt}\gamma(t)}_{\gamma(t)}$ is constant for $t\in[a,b]$.
\item The curve is the locally shortest length curve between $\gamma(a)$
and $\gamma(b)$. Namely, for any family of curve $c(t,s)$ with $c(t,0)=\gamma(t)$
and $c(a,s)=\gamma(a)$ and $c(b,s)=\gamma(b)$, we have that $\left.\frac{d}{ds}\right|_{s=0}\int_{a}^{b}\norm{\frac{d}{dt}c(t,s)}_{c(t,s)}dt=0$.
\end{enumerate}
Note that, if $\gamma(t)$ is a geodesic, then $\gamma(\alpha t)$
is a geodesic for any $\alpha$. Intuitively, geodesics are local
shortest paths. In $\Rn$, geodesics are straight lines. 
\item Exponential map: The map $\exp_{p}:T_{p}M\rightarrow M$ is defined
as
\[
\exp_{p}(v)=\gamma_{v}(1)
\]
where $\gamma_{v}$ is the unique geodesic starting at $p$ with initial
velocity $\gamma_{v}'(0)$ equal to $v$. The exponential map takes
a straight line $tv\in T_{p}M$ to a geodesic $\gamma_{tv}(1)=\gamma_{v}(t)\in M$.
Note that $\exp_{p}$ maps $v$ and $tv$ to points on the same geodesic.
Intuitively, the exponential map can be thought as point-vector addition
in a manifold. In $\R^{n}$, we have $\exp_{p}(v)=p+v$. 
\item Parallel transport: Given any geodesic $c(t)$ and a vector $v$ such
that $\left\langle v,c'(0)\right\rangle _{c(0)}=0$, we define the
parallel transport $\Gamma$ of $v$ along $c(t)$ by the following
process: Take $h$ to be infinitesimally small and $v_{0}=v$. For
$i=1,2,\cdots,1/h$, we let $v_{ih}$ be the vector orthogonal to
$c'(ih)$ that minimizes the distance on the manifold between $\exp_{c(ih)}(hv_{ih})$
and $\exp_{c((i-1)h)}(hv_{(i-1)h})$. Intuitively, the parallel transport
finds the vectors on the curve such that their end points are closest
to the end points of $v$. For general vector $v\in T_{c'(0)}$, we
write $v=\alpha c'(0)+w$ and we define the parallel transport of
$v$ along $c(t)$ is the sum of $\alpha c'(t)$ and the parallel
transport of $w$ along $c(t)$. For a non-geodesic curve, see the
definition in Fact \ref{fact:basic_RG}. 
\item Orthonormal frame: Given vector fields $v_{1},v_{2},\cdots,v_{n}$
on a subset of $M$, we call $\{v_{i}\}_{i=1}^{n}$ is an orthonormal
frame if $\left\langle v_{i},v_{j}\right\rangle _{x}=\delta_{ij}$
for all $x$. Given a curve $c(t)$ and an orthonormal frame at $c(0)$,
we can extend it on the whole curve by parallel transport and it remains
orthonormal on the whole curve.
\item \label{def:Directional-derivatives}Directional derivatives and the
Levi-Civita connection: For a vector $v\in T_{p}M$ and a vector field
$u$ in a neighborhood of $p$, let $\gamma_{v}$ be the unique geodesic
starting at $p$ with initial velocity $\gamma_{v}'(0)=v$. Define
\[
\nabla_{v}u=\lim_{h\rightarrow0}\frac{u(h)-u(0)}{h}
\]
where $u(h)\in T_{p}M$ is the parallel transport of $u(\gamma(h))$
from $\gamma(h)$ to $\gamma(0)$. Intuitively, Levi-Civita connection
is the directional derivative of $u$ along direction $v$, \emph{taking
the metric into account}. In particular, for $\Rn$, we have $\nabla_{v}u(x)=\frac{d}{dt}u(x+tv)$.
When $u$ is defined on a curve $c$, we define $D_{t}u=\nabla_{c'(t)}u$.
In $\Rn$, we have $D_{t}u(\gamma(t))=\frac{d}{dt}u(\gamma(t))$.
We reserve $\frac{d}{dt}$ for the usual derivative with Euclidean
coordinates. 
\end{enumerate}
We list some basic facts about the definitions introduced above that
are useful for computation and intuition.
\begin{fact}
\label{fact:basic_RG}Given a manifold $M$, a curve $c(t)\in M$,
a vector $v$ and vector fields $u,w$ on $M$, we have the following:

\begin{enumerate}
\item (alternative definition of parallel transport) $v(t)$ is the parallel
transport of $v$ along $c(t)$ if and only if $\nabla_{c'(t)}v(t)=0$.
\item (alternative definition of geodesic) $c$ is a geodesic if and only
if $\nabla_{c'(t)}c'(t)=0$. 
\item (linearity) $\nabla_{v}(u+w)=\nabla_{v}u+\nabla_{v}w$.
\item (product rule) For any scalar-valued function f, $\nabla_{v}(f\cdot u)=\frac{\partial f}{\partial v}u+f\cdot\nabla_{v}u$. 
\item (metric preserving) $\frac{d}{dt}\left\langle u,w\right\rangle _{c(t)}=\left\langle D_{t}u,w\right\rangle _{c(t)}+\left\langle u,D_{t}w\right\rangle _{c(t)}$.
\item (torsion free-ness) For any map $c(t,s)$ from a subset of $\R^{2}$
to $M$, we have that $D_{s}\frac{\partial c}{\partial t}=D_{t}\frac{\partial c}{\partial s}$
where $D_{s}=\nabla_{\frac{\partial c}{\partial s}}$ and $D_{t}=\nabla_{\frac{\partial c}{\partial t}}$.
\item (alternative definition of Levi-Civita connection) $\nabla_{v}u$
is the unique linear mapping from the product of vector and vector
field to vector field that satisfies (3), (4), (5) and (6).
\end{enumerate}
\end{fact}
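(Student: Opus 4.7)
The plan is to verify the seven assertions in turn, working from the explicit definitions in the excerpt: parallel transport as the limiting construction that minimizes displacement of tangent vectors along a curve, and $\nabla_v u$ as the derivative of the parallel-transported-back vector field along the geodesic in direction $v$. I will essentially retrace the derivation of the fundamental theorem of Riemannian geometry, but using the concrete definitions provided rather than the Koszul formula as a starting point.

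First I would establish (1). Writing out the one-step minimization that defines $v_{ih}$ in the parallel transport construction, a first-order expansion in $h$ shows that $(v_{ih}-v_{(i-1)h})/h$ converges to the same object as $\lim_{h\to 0}\bigl(\tilde v(h)-v(0)\bigr)/h$, where $\tilde v(h)$ is the parallel transport of $v(c(h))$ back to $c(0)$ along the short geodesic approximating $c$ near $c(0)$. This is exactly $\nabla_{c'(t)}v$ at $t=0$, so $v(t)$ is parallel along $c$ iff $\nabla_{c'(t)}v(t)=0$. Having (1) in hand, (2) follows by observing that the first-variation of arc-length, combined with the constant-speed reparametrization, forces $c$ to be locally length-minimizing iff $c'$ is parallel along itself, i.e.\ $\nabla_{c'}c'=0$; the converse is the local minimizing property of autoparallel curves, which is a standard consequence of the exponential map being a local diffeomorphism.

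For (3) and (4), I would plug the sums and products directly into the limit definition. Linearity of parallel transport (the transport map $T_{c(h)}M\to T_{c(0)}M$ is a linear isomorphism, being the endpoint of a linear ODE by (1)) yields (3) immediately, and the Leibniz rule (4) comes from the product rule for scalar functions of $h$ applied to $f(c(h))\cdot (\text{transported } u(c(h)))$, splitting into the two terms $\partial_v f\cdot u$ and $f\cdot\nabla_v u$. For (5), the key observation is that parallel transport preserves inner products by construction (the minimization step in the definition of parallel transport can be shown to preserve norms because it selects the infinitesimal displacement whose endpoint is closest, which in a Riemannian manifold forces length preservation to first order). Given this, $\langle u,w\rangle_{c(t)}$ equals $\langle \tilde u(t),\tilde w(t)\rangle_{c(0)}$ where $\tilde{\cdot}$ denotes transport back to $c(0)$, and differentiating gives (5).

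The main obstacle will be (6), the torsion-free property. The difficulty is that $D_s\partial_t c$ and $D_t\partial_s c$ are both second-order objects on a manifold, and one cannot simply interchange partial derivatives because the parallel transport depends on which direction one moves. The strategy is to reduce to local coordinates: choose a chart around $c(t_0,s_0)$, write the connection in terms of Christoffel symbols, and verify $D_s\partial_t c - D_t\partial_s c = \sum_{ijk}(\Gamma^k_{ij}-\Gamma^k_{ji})\partial_s c^i\partial_t c^j\,e_k$. The vanishing of this difference comes from showing the Christoffel symbols determined by our parallel transport and metric-preservation properties are symmetric in the lower indices, which in turn follows because the minimization construction of parallel transport, when restricted to the image of a 2-parameter family, is symmetric in the two parameters at first order. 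Finally, (7) is a uniqueness assertion: given (3)--(6), a direct computation expanding $\partial_v\langle u,w\rangle + \partial_u\langle w,v\rangle - \partial_w\langle u,v\rangle$ using (5) and (6) yields the Koszul formula $2\langle \nabla_v u, w\rangle = \ldots$, which determines $\nabla_v u$ uniquely from the metric.
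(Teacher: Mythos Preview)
The paper does not prove this statement: Fact~\ref{fact:basic_RG} is listed in the appendix as standard background in Riemannian geometry and is stated without proof. These are the defining axioms and elementary consequences of the Levi-Civita connection, and the paper simply assumes them as known.

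Your proposal is therefore not comparable to any argument in the paper, but as a self-contained sketch it is broadly sound and follows the classical development (first-variation formula for geodesics, linearity and Leibniz rule from the limit definition, metric compatibility from the isometry property of parallel transport, torsion-freeness via symmetry of Christoffel symbols, and uniqueness via the Koszul formula). The one place where you should be more careful is the derivation of metric preservation and torsion-freeness directly from the paper's informal ``minimization'' description of parallel transport: that description is heuristic, and extracting the precise first-order conditions (in particular, that the minimizing step preserves norms and that the resulting Christoffel symbols are symmetric) requires either making the construction rigorous or, more practically, taking $\nabla_{c'}v=0$ as the \emph{definition} of parallel transport and verifying (5) and (6) from the Christoffel-symbol formula $\Gamma^k_{ij}=\tfrac12\sum_l g^{kl}(\partial_i g_{jl}+\partial_j g_{il}-\partial_l g_{ij})$, which is what any textbook treatment does. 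Either route is fine, but the paper's intent is clearly to quote these as known facts rather than to derive them from the intuitive descriptions.
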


\subsection{Curvature}

Roughly speaking, curvature measures the amount by which a manifold
deviates from Euclidean space. Given vector $u,v\in T_{p}M$, in this
section, we define $uv$ be the point obtained from moving from $p$
along direction $u$ with distance $\norm u_{p}$ (using geodesic),
then moving along direction ``$v$'' with distance $\norm v_{p}$
where ``$v$'' is the parallel transport of $v$ along the path
$u$. In $\Rn$, $uv$ is exactly $p+u+v$ and hence $uv=vu$, namely,
parallelograms close up. For a manifold, parallelograms almost close
up, namely, $d(uv,vu)=o(\norm u\norm v)$. This property is called
being \emph{torsion-free. }
\begin{enumerate}
\item Riemann curvature tensor: Three-dimensional parallelepipeds might
not close up, and the curvature tensor measures how far they are from
closing up. Given vector $u,v,w\in T_{p}M$, we define $uvw$ as the
point obtained by moving from $uv$ along direction ``$w$'' for
distance $\norm w_{p}$ where ``$w$'' is the parallel transport
of $w$ along the path $uv$. In a manifold, parallelepipeds do not
close up and the Riemann curvature tensor how much $uvw$ deviates
from $vuw$. Formally, for vector fields $v$, $w$, we define $\tau_{v}w$
be the parallel transport of $w$ along the vector field $v$ for
one unit of time. Given vector field $v,w,u$, we define the Riemann
curvature tensor by 
\begin{equation}
R(u,v)w=\left.\frac{d}{ds}\frac{d}{dt}\tau_{su}^{-1}\tau_{tv}^{-1}\tau_{su}\tau_{tv}w\right|_{t,s=0}.\label{eq:Ruvw_explain}
\end{equation}
Riemann curvature tensor is a tensor, namely, $R(u,v)w$ at point
$p$ depends only on $u(p)$, $v(p)$ and $w(p)$. 
\item Ricci curvature: Given a vector $v\in T_{p}M$, the Ricci curvature
$\text{Ric}(v)$ measures if the geodesics starting around $p$ in
direction $v$ converge together. Positive Ricci curvature indicates
the geodesics converge while negative curvature indicates they diverge.
Let $S(0)$ be a small shape around $p$ and $S(t)$ be the set of
point obtained by moving $S(0)$ along geodesics in the direction
$v$ for $t$ units of time. Then, 
\begin{equation}
\text{vol}S(t)=\text{vol}S(0)(1-\frac{t^{2}}{2}\text{Ric}(v)+\text{smaller terms}).\label{eq:Ric_explain}
\end{equation}
Formally, we define
\[
\text{Ric}(v)=\sum_{u_{i}}\left\langle R(v,u_{i})u_{i},v\right\rangle 
\]
where $u_{i}$ is an orthonormal basis of $T_{p}M$. Equivalently,
we have $\text{Ric}(v)=\E_{u\sim N(0,I)}\left\langle R(v,u)u,v\right\rangle $.
For $\R^{n}$, $\text{Ric}(v)=0$. For a sphere in $n+1$ dimension
with radius $r$, $\text{Ric}(v)=\frac{n-1}{r^{2}}\norm v^{2}$.
\end{enumerate}
\begin{fact}[Alternative definition of Riemann curvature tensor]
\label{fact:formula_R}Given any $M$-valued function $c(t,s)$,
we have vector fields $\frac{\partial c}{\partial t}$ and $\frac{\partial c}{\partial s}$
on $M$. Then, for any vector field $z$, 
\[
R(\frac{\partial c}{\partial t},\frac{\partial c}{\partial s})z=\nabla_{\frac{\partial c}{\partial t}}\nabla_{\frac{\partial c}{\partial s}}z-\nabla_{\frac{\partial c}{\partial s}}\nabla_{\frac{\partial c}{\partial t}}z.
\]
Equivalently, we write $R(\partial_{t}c,\partial_{s}c)z=D_{t}D_{s}z-D_{s}D_{t}z$.
\end{fact}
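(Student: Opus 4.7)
The statement identifies the holonomy-based definition of the Riemann tensor given at (\ref{eq:Ruvw_explain}) with the commutator of covariant derivatives along a parametrized surface. The plan is to extend the tangent vectors along the surface to vector fields on a neighborhood of $p := c(0,0)$ and then invoke the standard equivalence of the two definitions of curvature, using crucially that coordinate vector fields on a parametrized surface commute.

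First I would choose a chart near $p$ and extend $U := \partial_t c$, $V := \partial_s c$ from the image of $c$ to vector fields on a neighborhood of $p$, and extend $z$ to a vector field $Z$ in the same neighborhood with $Z|_{c(t,s)} = z(t,s)$. Because $(t,s)$ are genuine parameters of a smooth map and ordinary partials commute on the parameter plane, the extensions can be arranged so that $U$ and $V$ arise locally as coordinate vector fields in a chart adapted to the surface; equivalently, $[U,V]\equiv 0$ there. This is already the infinitesimal content of the torsion-free identity in Fact \ref{fact:basic_RG}(6), namely $D_s(\partial_t c) = D_t(\partial_s c)$.

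The second step is to invoke the classical formula
\[
R(U,V)Z \;=\; \nabla_U \nabla_V Z \;-\; \nabla_V \nabla_U Z \;-\; \nabla_{[U,V]} Z,
\]
so that with $[U,V]=0$ the last term drops and we obtain $R(U,V)Z = \nabla_U\nabla_V Z - \nabla_V\nabla_U Z$. Restricting to the image of $c$ and recalling that $D_t = \nabla_{\partial_t c}$, $D_s = \nabla_{\partial_s c}$ by definition (and that $Z$ restricts to $z$ along the surface), this becomes precisely $R(\partial_t c,\partial_s c)\, z = D_t D_s z - D_s D_t z$.

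The main obstacle is deriving the displayed classical formula from the holonomy definition (\ref{eq:Ruvw_explain}) adopted in the paper. The standard argument is a second-order expansion: pick an orthonormal frame parallel along the flow lines of $U$ and then along those of $V$ emanating from $p$, and expand the composition $\tau_{sU}^{-1}\tau_{tV}^{-1}\tau_{sU}\tau_{tV}$ to order $O(s^2 t^2)$. Collecting terms and differentiating $\tfrac{\partial^2}{\partial s\,\partial t}\bigr|_{s=t=0}$ produces the operator $\nabla_U\nabla_V - \nabla_V\nabla_U$, with an extra correction $-\nabla_{[U,V]}$ that comes from the fact that alternating flows of $sU$ and $tV$ fail to close a loop by $st\,[U,V] + O(st(s+t))$. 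This cocycle-type computation is the technical heart but is entirely standard; in our parametrized-surface specialization the closing-up defect vanishes, so one may in fact bypass the general identity and read off $D_tD_sz - D_sD_tz$ directly as the $\partial_t\partial_s$-coefficient of the holonomy, since the four corners $p$, $c(t,0)$, $c(t,s)$, $c(0,s)$ genuinely form a closed quadrilateral.
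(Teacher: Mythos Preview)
The paper does not prove Fact~\ref{fact:formula_R}; it is listed in the appendix as a standard background fact from Riemannian geometry, alongside the definitions of connection, parallel transport, and curvature. So there is no ``paper's proof'' to compare against.

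Your argument is the standard one and is essentially correct, but the extension step deserves a caveat. Promoting $\partial_t c,\partial_s c$ to \emph{commuting} vector fields on a neighborhood of $p$ (i.e., to coordinate vector fields of a chart adapted to the surface) presupposes that $c$ is an immersion near $(0,0)$, i.e., that $\partial_t c$ and $\partial_s c$ are linearly independent there. When they are dependent (or one vanishes), the extension you describe is not available as stated. The usual fixes are either (i) first prove that both sides are tensorial in $\partial_t c,\partial_s c$ and $z$ at the point, then reduce by linearity to the immersed case, or (ii) bypass extensions entirely and compute $D_tD_sz-D_sD_tz$ in local coordinates on $M$ using Christoffel symbols, obtaining the coordinate expression of $R$ directly. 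Either route closes the gap with no new ideas; your holonomy-to-commutator expansion in the final paragraph is the right justification for the classical formula from the paper's definition~(\ref{eq:Ruvw_explain}).
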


\begin{fact}
\label{fact:sym_cuv}Given vector fields $v,u,w,z$ on $M$,
\[
\left\langle R(v,u)w,z\right\rangle =\left\langle R(w,z)v,u\right\rangle =-\left\langle R(u,v)w,z\right\rangle =-\left\langle R(v,u)z,w\right\rangle .
\]
\end{fact}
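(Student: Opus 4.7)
The plan is to establish the three standard symmetries of the Riemann curvature tensor using the two key tools already available in the appendix: Fact \ref{fact:formula_R} (which identifies $R$ with the commutator of covariant derivatives on a parametrized surface) and Fact \ref{fact:basic_RG}, especially the metric-preserving property (5) and the torsion-free property (6). Since $R$ is a tensor, I am free at each point to extend the vector fields $u,v,w,z$ locally in any convenient way; in particular I will use extensions that mutually commute, so that $\nabla_X Y = \nabla_Y X$ for any two of the extensions by torsion-freeness. This reduces Fact \ref{fact:formula_R} to the clean form $R(u,v)w = \nabla_u\nabla_v w - \nabla_v\nabla_u w$.

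The antisymmetry $\langle R(v,u)w,z\rangle = -\langle R(u,v)w,z\rangle$ is immediate from Fact \ref{fact:formula_R}: swapping $(t,s)$ in the surface $c(t,s)$ with $\partial_t c=u$, $\partial_s c=v$ negates the commutator $D_t D_s - D_s D_t$. For the antisymmetry in the last pair, I consider a vector field $w$ along such a surface and apply $D_t$ and $D_s$ to the scalar $\langle w,w\rangle$. Using property (5) twice,
\[
D_t D_s \langle w,w\rangle = 2\langle D_t D_s w, w\rangle + 2\langle D_s w, D_t w\rangle,
\]
and likewise with $t,s$ exchanged. Subtracting and noting that mixed partial derivatives of a scalar commute yields $\langle R(u,v)w,w\rangle = 0$, and polarizing in $w$ gives $\langle R(u,v)w,z\rangle = -\langle R(u,v)z,w\rangle$, which is the fourth equality of the fact.

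For the remaining pair-interchange symmetry $\langle R(v,u)w,z\rangle = \langle R(w,z)v,u\rangle$, I will go through the first Bianchi identity $R(u,v)w + R(v,w)u + R(w,u)v = 0$. With commuting extensions this expands to
\[
\nabla_u(\nabla_v w - \nabla_w v) + \nabla_v(\nabla_w u - \nabla_u w) + \nabla_w(\nabla_u v - \nabla_v u) = 0,
\]
each parenthesized term vanishing by torsion-freeness. Pair Bianchi's identity with $z$ and cyclically permute $(u,v,w,z)$ to obtain four such equations; forming the alternating sum and applying the two antisymmetries already established causes all terms except $2\langle R(u,v)w,z\rangle$ and $-2\langle R(w,z)u,v\rangle$ to cancel in pairs, giving the pair-interchange identity.

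The main obstacle is the last step: making sure the algebraic combination of the four Bianchi identities really does collapse to the desired equality. It is a standard but easy-to-misread manipulation, and I would execute it by carefully tracking signs via the two antisymmetries. The only other subtlety is the extend-to-commute choice, which is justified because $R(u,v)w$ at a point depends only on the values of $u,v,w$ at that point (tensoriality), a property which follows routinely from the $C^\infty$-linearity of $\nabla$ in its first argument together with the Leibniz rule from Fact \ref{fact:basic_RG}(4).
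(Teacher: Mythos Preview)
Your proposal is correct and follows the standard textbook derivation of the curvature symmetries. The paper, however, does not prove this fact at all: it is stated in the appendix as background from Riemannian geometry, without proof. So there is no ``paper's own proof'' to compare against; your argument is a valid self-contained justification that the paper simply takes for granted.
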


\subsection{Hessian manifolds}

Recall that a manifold is called Hessian if it is a subset of $\Rn$
and its metric is given by $g_{ij}=\frac{\partial^{2}}{\partial x^{i}\partial x^{j}}\phi$
for some smooth convex function $\phi$. We let $g^{ij}$ be entries
of the inverse matrix of $g_{ij}$. For example, we have $\sum_{j}g^{ij}g_{jk}=\delta_{ik}$.
We use $\phi_{ij}$ to denote $\frac{\partial^{2}}{\partial x^{i}\partial x^{j}}\phi$
and $\phi_{ijk}$ to denote $\frac{\partial^{3}}{\partial x^{i}\partial x^{j}\partial x^{k}}\phi$. 

Since a Hessian manifold is a subset of Euclidean space, we identify
tangent spaces $T_{p}M$ by Euclidean coordinates. The following lemma
gives formulas for the Levi-Civita connection and curvature under
Euclidean coordinates. 
\begin{lem}[\cite{totaro2004curvature}]
\label{lem:Hessian_formula}Given a Hessian manifold $M$, vector
fields $v,u,w,z$ on $M$, we have the following:

\begin{enumerate}
\item (Levi-Civita connection) $\nabla_{v}u=\sum_{ik}v_{i}\frac{\partial u_{k}}{\partial x_{i}}e_{k}+\sum_{ijk}v_{i}u_{j}\Gamma_{ij}^{k}e_{k}$
where $e_{k}$ are coordinate vectors and the Christoffel symbol 
\[
\Gamma_{ij}^{k}=\frac{1}{2}\sum_{l}g^{kl}\phi_{ijl}.
\]
\item (Riemann curvature tensor) $\left\langle R(u,v)w,z\right\rangle =\sum_{ijlk}R_{klij}u_{i}v_{j}w_{l}z_{k}$
where 
\[
R_{klij}=\frac{1}{4}\sum_{pq}g^{pq}\left(\phi_{jkp}\phi_{ilq}-\phi_{ikp}\phi_{jlq}\right).
\]
\item (Ricci curvature) $Ric(v)=\frac{1}{4}\sum_{ijlkpq}g^{pq}g^{jl}\left(\phi_{jkp}\phi_{ilq}-\phi_{ikp}\phi_{jlq}\right)v_{i}v_{k}$.
\end{enumerate}
\end{lem}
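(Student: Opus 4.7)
The plan is to derive all three formulas by specializing the standard Riemannian-geometry identities to the Hessian case, exploiting the key identity $\partial_k g_{ij} = \phi_{ijk}$, which is totally symmetric in $(i,j,k)$ by equality of mixed partials.

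\textbf{Part (1): Christoffel symbols.} The general Koszul formula (already invoked in Lemma~\ref{lem:RMCMC}) reads $\Gamma_{ij}^k = \tfrac{1}{2}\sum_l g^{kl}(\partial_j g_{li} + \partial_i g_{lj} - \partial_l g_{ij})$. In the Hessian case the three summands are $\phi_{jli}$, $\phi_{ilj}$ and $\phi_{lij}$, which are all equal to $\phi_{ijl}$, so two add and one subtracts to give $\Gamma_{ij}^k = \tfrac{1}{2}\sum_l g^{kl}\phi_{ijl}$. The displayed formula for $\nabla_v u$ is then just the standard chart-expression $\nabla_v u = \sum_{ik} v_i\partial_i u_k e_k + \sum_{ijk} v_i u_j \Gamma_{ij}^k e_k$ for the Levi-Civita connection.

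\textbf{Part (2): Riemann tensor.} Start from the coordinate expression $R_{\;lij}^{k} = \partial_i\Gamma_{jl}^k - \partial_j\Gamma_{il}^k + \sum_m(\Gamma_{im}^k\Gamma_{jl}^m - \Gamma_{jm}^k\Gamma_{il}^m)$ and lower an index by $R_{klij} = \sum_m g_{km}R_{\;lij}^{m}$. Differentiating $\Gamma_{jl}^k = \tfrac12\sum_p g^{kp}\phi_{jlp}$ via the product rule, and using $\partial_i g^{kp} = -\sum_{qr} g^{kq}\phi_{iqr}g^{rp}$, splits $\partial_i\Gamma_{jl}^k$ into a fourth-derivative piece $\tfrac12 \sum_p g^{kp}\phi_{ijlp}$ and a first-derivative-of-metric piece $-\tfrac12\sum_{pqr} g^{kq}\phi_{iqr}g^{rp}\phi_{jlp}$. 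The fourth-derivative piece is symmetric in $(i,j)$ and so cancels in $\partial_i\Gamma_{jl}^k - \partial_j\Gamma_{il}^k$. The remaining first-derivative-of-metric piece, after lowering the upper index with $g_{km}$ and combining with the quadratic $\Gamma\Gamma$ terms (whose factors are themselves of the form $\tfrac12 g^{\cdot \cdot}\phi_{\cdot\cdot\cdot}$), collapses into the stated
\[
R_{klij} = \tfrac{1}{4}\sum_{pq} g^{pq}\bigl(\phi_{jkp}\phi_{ilq} - \phi_{ikp}\phi_{jlq}\bigr).
\]
Verifying the sign/symmetry pattern is the main obstacle: one must match each surviving term to the correct component after lowering and check the signs using both $\partial g^{-1} = -g^{-1}(\partial g)g^{-1}$ and the antisymmetrization in $(i,j)$. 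A useful sanity check is that the right-hand side already manifestly satisfies the symmetries of Fact~\ref{fact:sym_cuv}: antisymmetry in $(i,j)$, antisymmetry in $(k,l)$, and the pair-swap $R_{klij}=R_{ijkl}$.

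\textbf{Part (3): Ricci.} Apply $\mathrm{Ric}(v) = \tr R(\cdot,v)v = \sum_{jl} g^{jl}\langle R(e_j,v)v, e_l\rangle$, where $\{e_i\}$ are the coordinate vectors. Using part (2) with $(u,w,z) = (e_j,v,e_l)$, this contracts the Riemann tensor against $g^{jl}$ and produces the displayed expression $\mathrm{Ric}(v) = \tfrac{1}{4}\sum_{ijlkpq} g^{pq}g^{jl}(\phi_{jkp}\phi_{ilq} - \phi_{ikp}\phi_{jlq})v_i v_k$, with no additional work beyond index relabeling. Thus the entire lemma reduces to the single nontrivial computation of Part~(2); Parts (1) and (3) are essentially bookkeeping.
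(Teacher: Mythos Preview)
The paper does not prove this lemma; it is stated with a citation to \cite{totaro2004curvature} and no argument. So there is no ``paper's own proof'' to compare against, and your derivation is a genuine addition.

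Your outline is correct. Part~(1) is complete and clean. In Part~(2), the crucial observations you identify---the fourth-derivative terms $\tfrac12\sum_p g^{kp}\phi_{ijlp}$ being symmetric in $(i,j)$ and hence cancelling, and the remaining $\partial g^{-1}$ piece combining with the $\Gamma\Gamma$ terms---are exactly right. To make this airtight rather than a sketch, you should actually carry out the lowering: after contracting with $g_{kn}$, the derivative piece contributes $\tfrac12\sum_{pq}g^{pq}(\phi_{jkp}\phi_{ilq}-\phi_{ikp}\phi_{jlq})$ while the $\Gamma\Gamma$ piece contributes $-\tfrac14$ of the same expression, giving the net coefficient $\tfrac14$. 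Without this one-line check the reader is left to trust that the signs and the $\tfrac12$--$\tfrac14$ arithmetic work out. Part~(3) is fine; your trace formula $\sum_{jl}g^{jl}\langle R(e_j,v)v,e_l\rangle$ agrees with the paper's $\sum_a\langle R(v,u_a)u_a,v\rangle$ over an orthonormal frame by the pair-swap symmetry and the identity $\sum_a (u_a)_j(u_a)_l=g^{jl}$.
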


\end{document}